\providecommand{\U}[1]{\protect\rule{.1in}{.1in}}
\newtheorem{theorem}{Theorem}
\newtheorem{definition}[theorem]{Definition}
\newtheorem{example}[theorem]{Example}
\newtheorem{lemma}[theorem]{Lemma}
\newtheorem{proposition}[theorem]{Proposition}
\newtheorem{remark}[theorem]{Remark}
\newenvironment{proof}[1][Proof]{\noindent\textbf{#1.} }{\ \rule{0.5em}{0.5em}}
\def\mV{\mathcal{V}}
\newcommand{\cF}{{\mathcal{F}}}
\newcommand{\cB}{{\mathcal{B}}}
\newcommand{\cH}{\mathcal{H}}
\newcommand{\N}{\mathbbm{N}}
\newcommand{\nn}{\nonumber}
\newcommand{\RA}{R}
\newcommand{\RC}{R^\star}
\newcommand{\cP}{\mathcal{P}}
\newcommand{\cS}{\mathcal{S}}
\newcommand{\cA}{\mathcal{A}}
\newcommand{\cE}{\mathcal{E}}
\newcommand{\D}{\mathcal{D}}
\newcommand{\cX}{{\cal X}}
\newcommand{\cM}{\mathcal{M}}
\newcommand{\cN}{\mathcal{N}}
\newcommand{\cY}{\mathcal{Y}}
\newcommand{\1}{\mathbbm{1}}
\newcommand{\cD}{\mathcal{D}}
\def\>{{\rangle}}
\def\<{{\langle}}
\newcommand{\be}{\begin{equation}}
\newcommand{\ee}{\end{equation}}
\newcommand{\bea}{\begin{eqnarray}}
\newcommand{\eea}{\end{eqnarray}}
\newcommand{\eps}{\varepsilon}
\def\mX{\mathcal{X}}
\newcommand{\ket}[1]{|#1\rangle} 
\newcommand{\bra}[1]{\langle#1|} 
\newcommand{\kb}[1]{|#1\rangle\!\langle#1|} 
\newcommand{\Tr}{\mathrm{Tr}}
\newcommand{\comment}[1]{}
\newcommand{\lr}{\rangle\!\langle}
\newcommand{\ra}{\rangle}
\newcommand{\cptp}{{\rm CPTP}}
\newcommand{\FO}{{\rm FO}}
\newcommand{\cds}{{\rm CDS}}
\newcommand{\rel}{\operatorname{rel}}
\newcommand{\err}{\operatorname{err}}
\def\mf{\mathfrak{F}}
\def\md{\mathfrak{D}}
\newcommand{\mR}{\mathcal{R}}
\def\mE{\mathcal{E}}
\def\mN{\mathcal{N}}
\def\mP{\mathcal{P}}
\def\mD{\mathcal{D}}
\def\mV{\mathcal{V}}
\newcommand{\eqdef}{\coloneqq}
\newcommand{\tr}{{\rm Tr}}
\numberwithin{equation}{section}
\numberwithin{theorem}{section}
\definecolor{colorthree}{rgb}{0.01,0.51,0.93}
\begin{document}

\title{\textbf{Symmetric distinguishability as a quantum resource}}
{\author{\normalsize Robert Salzmann\thanks{University of Cambridge, Department of Applied
Mathematics and Theoretical Physics, Wilberforce Road, Cambridge CB3 0WA,
United Kingdom}
\and \normalsize Nilanjana Datta\footnotemark[1]
\and \normalsize Gilad Gour\thanks{Department of Mathematics and Statistics, University of
Calgary, Alberta, Canada T2N 1N4} \textsuperscript{,}\thanks{Institute for
Quantum Science and Technology, University of Calgary, Alberta, Canada T2N
1N4}
\and \normalsize  Xin Wang\thanks{Institute for Quantum Computing, Baidu Research, Beijing
100193, China}
\and\normalsize  Mark M.~Wilde\thanks{Hearne Institute for Theoretical Physics, Department of
Physics and Astronomy, and Center for Computation and Technology, Louisiana
State University, Baton Rouge, Louisiana 70803, USA}
\textsuperscript{ ,}\thanks{Stanford Institute for Theoretical Physics,
Stanford University, Stanford, California 94305, USA}}}
\maketitle

\begin{abstract}
We develop a resource theory of symmetric distinguishability, the fundamental objects of which are elementary quantum information sources, i.e., sources that
emit one of two possible quantum states with given prior probabilities. Such a source can be represented by a classical-quantum state of a composite system $XA$, corresponding to an ensemble of two
quantum states, with $X$ being classical and $A$ being quantum. We study the resource theory for two different classes of free operations: $(i)$ ${\rm{CPTP}}_A$, which consists of quantum channels 
acting only on $A$, and $(ii)$ conditional doubly stochastic (CDS) maps acting on $XA$. We introduce the notion of symmetric distinguishability of an elementary source 
and prove that it is
a monotone under both these classes of free operations. We study the tasks of distillation and dilution of symmetric distinguishability, both in the one-shot and asymptotic regimes. We prove that 
in the asymptotic regime, the optimal rate of  converting one elementary source to another is equal to the ratio of their quantum Chernoff divergences, under both these classes of free operations. This imparts a new operational interpretation to the quantum Chernoff divergence. We also obtain interesting operational interpretations of
the Thompson metric, in the context of the dilution of symmetric distinguishability. 


\end{abstract}

\tableofcontents

\section{Introduction}
\label{sec:intro}

Distinguishability plays a central role in all of modern science. The ability to distinguish one possibility from another allows for
making inferences from experimental data and making decisions based on
these inferences or developing new theories. Thus, it is essential to
understand distinguishability from a fundamental perspective. Furthermore,
distinguishability is a resource, in the sense that fewer trials of an
experiment are needed to arrive at conclusions when two different
possibilities are more distinguishable from one another.

In this paper, we adopt a resource-theoretic approach to distinguishability in
quantum mechanics that ultimately is helpful in and enriches our fundamental
understanding of distinguishability. We note here that, more generally, the
resource-theoretic approach to quantum information processing \cite{CG18} has
illuminated not only quantum information science but also other areas of
research in physics and mathematical statistics. Our work differs from prior
developments with a related motivation~\cite{Matsumoto10,Wang2019states,Wang2019channels}, in that
here we focus instead on what we call symmetric distinguishability, or
alternatively, the Bayesian approach to distinguishability. The outcome of our
efforts is a resource theory with a plethora of appealing features, including {\em{asymptotic reversibility}} 
with the optimal conversion rate being given by a ratio of Chernoff divergences.
We explain these concepts in
more detail in what follows. Our theory is similar in spirit to that proposed
previously in \cite{Morris09}, but there are some notable differences and our
conclusions are arguably  stronger than those presented in \cite{Morris09}. We refer to the resource theory
that we propose here as the
resource theory of symmetric distinguishability (RTSD).

\subsection{Overview of the resource theory of symmetric distinguishability}

The basic objects of this resource theory are elementary quantum information sources, which  emit one of 
two quantum states with certain prior probabilities (i.e., the source emits a state $\rho_0$ with probability $p$ or a state $\rho_1$ with probability $1-p$). Such a source 
can be represented by the following 
classical--quantum (c-q)\ state:
\begin{equation}
\label{cq-state}
\rho_{XA}\coloneqq p|0\rangle\!\langle0|\otimes\rho_{0}+\left(  1-p\right)
|1\rangle\!\langle 1|\otimes\rho_{1},
\end{equation}
where $p\in\left[  0,1\right]  $ is a prior probability and $\rho_{0}$ and
$\rho_{1}$ are quantum states\footnote{Throughout this paper we restrict attention for the most part to c-q states of the form in \eqref{cq-state}.}. Note that the classical system is equivalently specified by a random variable that we also denote as $X$. In analogy with the notation used in the resource theory of asymmetric distinguishability~\cite{Wang2019states,Wang2019channels},
it can be equivalently represented by a {\em{quantum box}} given by the triple $(p, \rho_0, \rho_1)$. The nomenclature ``box'' is used here to indicate that a quantum system is prepared in the state $\rho_0$ with probability $p$ and $\rho_1$ with probability $1-p$ and it is not known which is the case (thus, the system is analogous to an unopened box).

An important goal of the resource theory
is to transform 
a state of the above form to the following state 
\begin{equation}
\sigma_{XB}\coloneqq q|0\rangle\!\langle0|\otimes\sigma
_{0}+\left(  1-q\right)  |1\rangle\!\langle1|\otimes\sigma
_{1},\label{eq:target-state}%
\end{equation}
via a chosen set of free operations,
where $q\in\left[  0,1\right]  $ and $\sigma_{0}$ and
$\sigma_{1}$ are quantum states. Note that the target system
$B$\ need not be isomorphic to the initial system $A$. This 
corresponds to the following transformation
between boxes: $(p, \rho_0, \rho_1) \mapsto (q, \sigma_0, \sigma_1)$. Note that in what follows, we often
suppress the subscripts denoting the quantum systems, for notational simplicity.

Given such an elementary quantum source $\rho_{XA}\equiv (p, \rho_0, \rho_1)$, a natural way to study distinguishability is to consider the binary hypothesis testing task of discriminating between the states $\rho_0$ and $\rho_1$. There are two possible errors that can be incurred in the process, namely, the \emph{type~I error} (mistaking $\rho_0$ to be $\rho_1$) and the \emph{type~II error} (mistaking $\rho_1$ to be $\rho_0$). In the setting of asymmetric hypothesis testing, one minimizes the type~II error probability under the constraint that the type~I error probability is below a given threshold. In contrast, in symmetric hypothesis testing, the two error probabilities are considered on the same footing and weighted by the prior distribution. 
 The  latter (also known as Bayesian discrimination) is arguably the first problem
ever considered in the field of quantum information theory and solved in the single-copy case by Helstrom~\cite{Hel67,Hel69} and Holevo~\cite{Hol72}.
The operational quantity in this task is the minimum (average) error probability, which we formally define in \eqref{op-err} and denote as $p_{\operatorname{err}}(\rho_{XA})$.

Let $\rho_{XA}^{(n)}$ denote the c-q state corresponding to a source that emits the state $\rho_0^{\otimes n}$ (resp.~$\rho_1^{\otimes n}$) with probability $p$ (resp.~$(1-p)$). 
It is known that $p_{\operatorname{err}}(\rho_{XA}^{(n)})$ decays exponentially in $n$, with exponent given by the Chernoff divergence
(also known as the quantum Chernoff bound) of the states $\rho_0$ and~$\rho_1$ \cite{nussbaum2009chernoff,ACMBMAV07}. Just as the resource theory of asymmetric distinguishability (RTAD) provides a resource-theoretic
perspective to asymmetric hypothesis testing~\cite{Matsumoto10, Matsumoto11, Wang2019states}, our resource theory (RTSD) provides a resource-theoretic framework for symmetric hypothesis testing.
By the quantum Stein's lemma, the relevant operational quantity in asymmetric hypothesis testing is known to be characterized by the quantum relative 
entropy~\cite{HP91,ON00}, and in the RTAD, the optimal rate of transformation between quantum boxes was proved to be given by a 
ratio of quantum relative entropies~\cite{Wang2019states} (see also \cite{BST19} in this context). In analogy and given the result of \cite{ACMBMAV07}, it is natural to expect that, in the RTSD, the corresponding optimal asymptotic rate of transformation between quantum boxes
is given by a ratio of Chernoff divergences. It is pleasing to see that this is indeed the case. In contrast, in \cite{Morris09}, only one-shot transformations are considered and hence, in contrast to our work, asymptotic transformations are not studied.

We consider the RTSD for two different choices of free operations: $(i)$ local quantum
channels (i.e., linear, completely positive, trace-preserving maps) acting on the system $A$ alone,
and $(ii)$ the more general class of 
conditional doubly stochastic (CDS) maps.  We denote the former class of free operations by ${\rm CPTP}_A$. 
A CDS map acting on a c-q state $\rho_{XA}$ defined through \eqref{cq-state} consists
of quantum operations acting on the system $A$ and associated permutations of the letters $x \in {\mathcal{X}}$. A detailed justification behind the choice of CDS maps as free operations is given in Section~\ref{sec:axioms}, where the RTSD is introduced via an axiomatic approach.

In any quantum resource theory, there are several pertinent questions to
address. What are the conditions for the feasibility of transforming a source
state to a target state? If one cannot perform a transformation exactly, how
well can one do so approximately? What is an appropriate measure for
approximation when converting a source state to a target state? Is there a
\textquotedblleft golden unit\textquotedblright\ resource that one can go
through as an intermediate step when converting a source state to a target
state?\ At what rate can one convert repetitions (i.e., multiple copies) of a source state to
repetitions of a target state, either exactly or approximately? More
specifically, at what rate can one distill repetitions of a source state to
the golden unit resource, either exactly or approximately? Conversely, at what
rate can one dilute the golden unit resource to repetitions of a target state?
Is the resource theory asymptotically reversible? In this paper, we address all of these questions within the context of the RTSD. 

Given an elementary quantum source $\rho_{XA}$, a natural measure of symmetric distinguishability is given by the minimum error 
probability $p_{\operatorname{err}}(\rho_{XA})$ in the context of Bayesian state discrimination (mentioned above).
Then
\begin{equation}
{\rm{SD}}(\rho_{XA})\coloneqq -\log \left(2p_{\operatorname{err}}(\rho_{XA})\right)    
\label{eq:def-SD-measure}
\end{equation}
is a natural measure of the symmetric distinguishability (SD)
contained in $\rho_{XA}$.\footnote{Note, however, that this measure is not unique and it is possible to define other measures. In this paper, logarithms are taken to base $2$.} A justification of this choice arises from the consideration of {\em{free states}} and {\em{infinite-resource states}}. For a detailed discussion of the notion of symmetric distinguishability in a more general setting, see Section~\ref{sec:axioms}. A natural choice of a {\em{free state}} for this resource theory is a c-q state of the form~\eqref{cq-state}, for which $p=1/2$ and $\rho_0$ and $\rho_1$ are identical and hence indistinguishable. For such a state, $p_{\operatorname{err}}(\rho_{XA})=1/2$, which is achieved by random guessing,  and hence
${\rm{SD}}(\rho_{XA}) =0$. Note that the converse is true also (i.e., ${\rm{SD}}(\rho_{XA}) =0$ implies that $\rho_0$ and $\rho_1$ are identical and $p=1/2$). Hence our choice of SD respects the requirement that a state has zero SD if and only if it is free. On the other hand, a natural choice of an {\em{infinite-resource state}} is a c-q state of the form~\eqref{cq-state}, for which $\rho_0$ and $\rho_1$ have mutually orthogonal supports. This corresponds to an elementary quantum information source that emits perfectly distinguishable states. For such a state, $p_{\operatorname{err}}(\rho_{XA})=0$ and hence 
${\rm{SD}}(\rho_{XA})= +\infty$. Thus our choice of SD validates the identification of such states as infinite-resource states.

An infinite-resource state has the desirable property that it can be converted to
any other c-q state via CDS maps. Under ${\rm {CPTP}}_A$, it can be transformed to any other c-q state with the same prior. Moreover, any given c-q state cannot be transformed to an 
infinite-resource state unless it is itself an infinite-resource state.

We coin the word {\em{SD-bit}} to refer to the unit of symmetric distinguishability, the basic currency of this resource theory. In fact, for every positive 
real number $m \ge 0$, it is useful to identify a family of c-q states that have $m$ SD-bits. In Definition~\ref{golden-unit} of Section~\ref{sec:prereq}, we consider
a natural choice for such a family of c-q states. They are parametrized by $M \equiv 2^m$ 
and denoted as $\gamma_{XQ}^{(M)}$. 
Such a state, which we call an $M$-golden unit, has the following key property: $p_{\operatorname{err}}(\gamma_{XQ}^{(M)})=\frac{1}{2M}$ and hence ${\rm{SD}}(\gamma_{XQ}^{(M)})=\log M = m$.

Consideration of an $M$-golden unit also leads naturally
to a clear definition of the fundamental tasks of distillation and dilution in the RTSD as the conversions of a given c-q state to and from, respectively, an $M$-golden unit under
free operations\footnote{
\cite{Morris09} also considers transformations to and from a certain unit of resource. However, instead of the $M$-golden unit that we consider, there the unit of resource is a \emph{dbit}, which is a pair of orthogonal states and therefore corresponds to the case $M=\infty$ of our golden unit. 
The advantage of our choice ($M$-golden unit) is that it yields a more refined analysis also in the case of finite resources.}. The previously mentioned task of transformations between two arbitrary c-q states under free operations can be achieved by first distilling an $M$-golden unit (for the maximal possible $M$)
from the initial state and then diluting the distilled $M$-golden unit to the desired target state. This is discussed in detail in Sections~\ref{sec:distil} and \ref{sec:SD-dilution}.

One fundamental setting of interest is the one-shot setting, with the question being to determine the minimum error in converting an initial source to a target source.
We prove that this minimum error can be calculated by means of a semi-definite program. Thus,
we can efficiently calculate this error in time polynomial in the
dimensions of the $A$ and $A^{\prime}$ systems.


Moving on to the case of asymptotic transformations, we consider the following
fundamental conversion task via free operations:
\begin{align}\label{asymp-conv}
    \rho_{XA}^{(n)} \equiv (p, \rho_0^{\otimes n}, \rho_1^{\otimes n})\mapsto \sigma_{XB}^{(m)}\equiv (q, \sigma_0^{\otimes m}, \sigma_1^{\otimes m}),
\end{align}
where $p, q \in [0,1]$ and, for $i \in \{0,1\}$, $\rho_i$ and $\sigma_i$ are states of quantum systems $A$ and $B$, respectively.
The goal here is, for a fixed $n$, to make $m$ as large as possible, and to evaluate the optimal asymptotic rate $\frac{m}{n}$ of the transformation  in the limit as $n$ becomes
arbitrarily large. 
{We allow for approximations in the transformation and require the approximation error to vanish in the asymptotic limit ($n\to\infty$). This approximation error will be measured with respect to an error measure (defined for c-q states of the form considered in this paper) that we denote by the symbol~$D^\prime$.
The precise definition of $D^\prime$ and its mathematical properties are given in Section~\ref{sec:conv-dist}.

\subsection{Main results}

In this paper, we develop a consistent and systematic resource theory of
symmetric distinguishability that answers the most important questions
associated with a resource theory. In brief, the main contributions of this paper can be summarized as follows. In the following, all c-q states are assumed to be of the form~\eqref{cq-state}, and they hence represent elementary quantum information sources.
\begin{itemize}
    \item We define two new examples of {\em{generalized divergences}}, each of which satisfies the data-processing inequality (DPI). These are denoted as $\xi_{\min}$ and $\xi_{\max}$. \comment{; $\xi_{\min}$ and $\xi_{\max}$ satisfy the DPI under ${\rm{CPTP}}_A$ maps,} Moreover, we define a quantity, denoted as $\xi_{\max}^\star$, on c-q states of the form \eqref{cq-state}, and we prove that it satisfies monotonicity under CDS maps.
    
    \item All of the above quantities are of operational significance in the RTSD:
    
    \begin{itemize}
        \item The one-shot exact distillable-SD of $\rho_{XA}$ under ${\rm{CPTP}}_A$ maps is given by $\xi_{\min}(\rho_{XA})$ (Theorem~\ref{theo-distilCPTP});
        
        \item The one-shot exact SD-cost of $\rho_{XA}$ under ${\rm{CPTP}}_A$ maps is given by $\xi_{\max}(\rho_{XA})$ (Theorem~\ref{theo-diluteCPTP});
        
        \item  The one-shot exact distillable-SD of $\rho_{XA}$ under ${\rm{CDS}}$ maps is given by its symmetric distinguishability, ${\rm{SD}}(\rho_{XA})$ (Theorem~\ref{theo-distilCDS});
        
        \item In addition, the one-shot exact SD-cost of $\rho_{XA}$ under CDS maps is given by $\xi_{\max}^\star(\rho_{XA})$  (Theorem~\ref{theo-diluteCDS}).
    \end{itemize}
    
    \item $\xi_{\max}(\rho_{XA})$ and $\xi_{\max}^\star(\rho_{XA})$ are both defined in terms of the {\em{Thompson metric}} of the state ${\rho_{XA}}$ (see Theorems~\ref{theo-diluteCPTP} and \ref{theo-diluteCDS}), thus providing operational interpretations of the latter in the context of the RTSD\footnote{Even though the Thompson metric has been widely studied in the mathematics literature, to the best of our knowledge, this is the first time an operational meaning has been given to it.}.
    
    \item The optimal asymptotic rate of exact and approximate SD-distillation for a state $\rho_{XA}\equiv (p, \rho_0, \rho_1)$, under both ${\rm{CPTP}}_A$ 
    and CDS maps, is equal to its {\em{quantum Chernoff divergence}}, $\xi(\rho_0,\rho_1)$ (Theorem~\ref{theo-asymp} and \ref{theo-asympAprDistil}), where
    \begin{equation}
\xi(\rho_{0},\rho_{1})\coloneqq \sup_{s\in\left[  0,1\right]  }\left(  -\log\operatorname{Tr}[\rho_{0}^{s}\rho_{1}^{1-s}]\right)  .
\label{eq:chernoff-div-defn}
\end{equation}

\item The optimal asymptotic rate of exact SD-dilution for a state $\rho_{XA}$ is equal to its Thompson metric (see Theorem~\ref{theo-exact-asympDil}). This provides another clear operational interpretation for the latter. 

\item The optimal asymptotic rate of approximate SD-dilution for a state $\rho_{XA}$ is equal to its quantum Chernoff divergence (see Theorem~\ref{theo-asympDil}).
    \item The optimal asymptotic rate of transforming one c-q state to another, under both ${\rm{CPTP}}_A$ and ${\rm{CDS}}$ maps, is equal to the ratio of their quantum Chernoff divergences (see Theorem~\ref{thm:transformationrate}). This result constitutes a novel operational interpretation of the Chernoff
divergence beyond that reported in earlier work on symmetric quantum
hypothesis testing \cite{nussbaum2009chernoff,ACMBMAV07}. It also demonstrates
that the resource theory of symmetric distinguishability (RTSD) is {\em{asymptotically
reversible.}}

\end{itemize}

In the following sections, we develop all
of the above claims in more detail. In particular, in Section~\ref{sec:axioms}, we introduce a general resource theory of symmetric distinguishability, for arbitrary quantum information sources (given by an ensemble $\{p_x, \rho_x\}_{x \in \cX}$ of quantum states), via an axiomatic approach. The resource theory studied in the rest of the paper is a special case of the above, namely, the one for elementary quantum information sources (i.e.,~corresponding to the choice $|\cX| = 2$). Certain necessary ingredients of the RTSD are introduced in this section and in Section~\ref{sec:ingredients}. These include the notion of golden units, which facilitates a study of distillation and dilution of symmetric distinguishability (SD). SD-distillation and SD-dilution are studied in Sections~\ref{sec:distil} and \ref{sec:SD-dilution}, respectively, both in the one-shot and asymptotic regimes. In Section~\ref{sec:examples}, we elucidate the salient features of the RTSD for certain examples of elementary quantum information sources.
 The interesting task of converting one elementary quantum information source to another via free operations is studied in Section~\ref{sec:asymp}. We conclude the main part of the paper with a summary and some open questions for future research. Various relevant quantities of the RTSD can be formulated as semi-definite programs (SDPs). These are stated in Sections~\ref{sec:ingredients} and~\ref{sec:distil}, but some of their proofs appear in the appendices.

\section{An axiomatic approach to the resource theory of symmetric distinguishability }

\label{sec:axioms}

In this section, we introduce an axiomatic approach to a resource theory of symmetric distinguishability, from which the particular resource theory that we study in this paper
arises as a natural special case, corresponding to the choice $|X| \equiv |\mX| = 2$ in what follows.

Consider an ensemble $\{p_x,\rho_x\}_{x \in \mX}$ of  quantum states. Such an ensemble can be described by a c-q state
\be
\label{eq:cqgen}
\rho_{XA}=\sum_{x\in\mX}p_x|x\lr x|\otimes\rho_x\;.
\ee
There are many functions that can be used to quantify the distinguishability of the ensemble of states above. Perhaps the function that is most operationally motivated is the {\em{guessing probability}}:
\begin{align}
\label{eq:pguess}
    p_{\rm{guess}}(X|A) &\coloneqq  \max_{\{\Lambda_x\}_{x \in \cX}}  \sum_{x \in \cX} p_x \Tr (\Lambda_x \rho_x),
\end{align}
where the maximum is over every possible POVM $\{\Lambda_x\}_{x \in \cX}$.

We are interested in a notion of ensemble distinguishability that takes into account the prior distribution $\{p_x\}_{x \in \mX}$, while distinguishing between the states in the ensemble, as opposed to state distinguishability, which is concerned only with the distinguishability of the states in the set $\{\rho_x\}_{x \in \mX}$. 

Motivated by the guessing probability and symmetric hypothesis testing, we identify an ensemble as free in the resource theory of symmetric distinguishability if the guessing probability takes on its minimum value; i.e., if the guessing probability is equal to $\frac{1}{|\mathcal{X}|}$, which is the same value attained by a random guessing strategy. Note that the guessing probability is equal to $\frac{1}{|\mathcal{X}|}$ if and only if all the states of the ensemble $\{p_x, \rho_x\}_{x\in \cX}$ are identical and the prior distribution is uniform.
For the sake of completeness, we include a proof of this fact at the end of this section (see Lemma~\ref{lem:pguess} below).

Thus, for all such ensembles, the symmetric distinguishability  is equal to zero.  The corresponding c-q state $\rho_{XA} = \pi_X \otimes \omega_A$, where $\pi_X \coloneqq  {I_X}/{|\cX|}$ is the completely
mixed state, is then a `free' state of the resource theory of symmetric distinguishability because it has zero SD. This leads us to identify the set of free states in the resource theory of symmetric distinguishability as follows:
\be
\mathfrak{F}(XA)\eqdef\left\{\pi_X\otimes\omega_A : \omega_A\in\mathcal{D}(A)\right\}\;.
\ee
Note that, in the above, we use the notation $\mathcal{D}(A)$ to denote the set of quantum states (i.e., density matrices) of the quantum system $A$.

To make the notion of symmetric distinguishability (SD) precise, we introduce an axiomatic approach. Here, we  define a preorder relation $\prec$ on the set of all c-q states, which satisfies Axioms~I-V below (see Definition~\ref{def-PreorderSD}). We say a c-q state $\rho_{XA}$ has less symmetric distinguishability than $\sigma_{X'A'}$ if  $\rho_{XA}\prec\sigma_{X'A'}$. In this approach, SD is a property of a composite physical system shared between two parties, say, Xiao and Alice, with Xiao possessing classical systems (denoted by $X$, $X'$, etc.) and Alice possessing quantum systems denoted by $A$, $A'$, etc.
The word ``symmetric" refers to the fact that the distinguishability is symmetric with respect to the ordering in the ensemble; i.e., for every permutation $\pi$, the ensemble $\{p_x,\rho_x\}_{x}$ has the same SD as the ensemble $\{p_{\pi(x)},\rho_{\pi(x)}\}_x$ because the latter is just a relabeling of the former. We write this equivalence as the following relation:
\begin{flalign}
\label{ax1}
\qquad\qquad \textbf{Axiom I} & \quad\rho_{XA}\sim \mP_{X\to X}\!\left(\rho_{XA}\right)\quad\quad\forall\;\mP-\text{permutation channel}. &&
\end{flalign}
Similarly, every isometric channel $\mV\in\cptp(A\to A')$ that acts on the states $\{\rho_x\}_x$ does not change their ``overlap" (i.e., their Hilbert--Schmidt inner product), and this leads to the next axiom:
\begin{flalign}
\label{ax2}
\qquad\qquad \textbf{Axiom II} & \quad\rho_{XA}\sim \mV_{A\to A'}\!\left(\rho_{XA}\right)\quad\quad\forall\;\mV-\text{isometric channel}. &&
\end{flalign}

\smallskip

\noindent
The fact that states with zero SD cannot add SD leads to the next axiom:
for every classical system~$X'$ on Xiao's side and every state $\omega_{A'}$ on Alice's side,
\begin{flalign}
\label{ax3}
\qquad\qquad \textbf{Axiom III} & \quad\rho_{XA}\sim \rho_{XA}\otimes\left(\pi_{X'}\otimes\omega_{A'}\right) . &&
\end{flalign}
The next axiom concerns two c-q states $\rho_{XAA'},\sigma_{XAA'}\in {\mathcal{D}}(XAA')$ that have the form
\be
\rho_{XAA'}=\sum_{y\in \cY}q_y\rho_{XA}^y\otimes |y\lr y|_{A'}\quad\text{and}\quad\sigma_{XAA'}=\sum_{y\in \cY}q_y\sigma_{XA}^y\otimes |y\lr y|_{A'}\;,
\ee
with $\cY$  a finite alphabet, $q_y \in [0,1]$ the elements of a probability distribution, and $\{\ket{y}\}_{y \in \cY}$  an orthonormal basis. \comment{The interpretation of for example $\rho_{XAA^\prime}$ is that with probability $q_y$ the system is in an ensemble corresponding to the c-q state $\rho^y_{XA}$. Since the labels $y$ in such states are distinguishable by Alice, we assume that} Since the label $y$ is distinguishable in such states, we assume that 
\begin{flalign}
\label{ax4}
\qquad\qquad \textbf{Axiom IV} & \quad 
 \rho_{XA}^y\sim\sigma_{XA}^y\quad\forall\;y \in \cY\quad\Rightarrow\quad \rho_{XAA'}\sim\sigma_{XAA'} .&&
\end{flalign}
The motivation behind Axiom~IV is the following: Since the label $y$ can be perfectly inferred by a measurement of system $A^\prime$, the symmetric distinguishability of the states $\rho_{XAA^\prime}$ and $\sigma_{XAA^\prime}$ should be fully determined by the symmetric distinguishability of the individual states $\rho_{XA}^y$ and $\sigma_{XA}^y$. Hence, if $\rho_{XA}^y\sim\sigma_{XA}^y$ for all $y$, then their mixtures $\rho_{XAA^\prime}$ and $\sigma_{XAA^\prime}$ should also be equivalent.

\medskip

\noindent The last axiom is the following natural assumption: The SD of a c-q state does not increase by discarding subsystems; i.e.,
\begin{flalign}
\label{ax5}
\qquad\qquad \textbf{Axiom V} & \quad \Tr_{X'A'}(\rho_{XX'AA'}) \prec \rho_{XX'AA'.} &&
\end{flalign}

\smallskip

\noindent
The above axioms now lead to the formal definition of the preorder of SD:

\begin{definition}[Preorder of SD]
\label{def-PreorderSD}
Let $\md(XA)$ denote the set of c-q states on a classical system~$X$ and a quantum system~$A$, and let
$$\md_{cq}\coloneqq\cup_{X,A}\md(XA)$$
be the union over all finite-dimensional systems $X$ and $A$.
The preorder of SD is the smallest preorder relation on $\md_{cq}$ that satisfies Axioms I-V.\footnote{Note that an arbitrary preorder relation $\prec$ on $\md_{cq}$ can be interpreted as a set $R\subset \md_{cq}\times\md_{cq}$, such that $\rho_{XA}\prec \sigma_{X'A'}$ if and only if $(\rho_{XA},\sigma_{X'A'})\in R$. The smallest preorder relation satisfying Axioms I-V is then given by the intersection $\cap_{R\in \mathcal{I}}R$, with the index set $\mathcal{I}$ defined as $\mathcal{I}=\{R\,|\,R\text{ preorder relation on $\md_{cq}$ satisfying Axioms I-V}\}$. As the intersection of preorder relations is again a preorder relation, we see that $\cap_{R\in \mathcal{I}}R$ gives a well-defined preorder on~$\md_{cq}$.}
\end{definition}

In Appendix~\ref{sec:AppendAxiom}, we discuss a number of consequences of Axioms I-V and the preorder of SD. There, we also define general resource measures to quantify symmetric distinguishability, and we provide several examples.

We now use this preorder to define the set of free operations.

\begin{definition}[Free operations of the RTSD]
\label{def-FO}  A map $\cN\in\cptp(XA\to X'A')$ is said to be a free operation if 
\begin{align}
\cN(\rho_{XA})\prec\rho_{XA}
\end{align}
for every c-q state $\rho_{XA}$. We denote the set of such free operations by $\mf(XA\to X'A')$.
\end{definition}

The above definition of the free operations defines the resource theory of symmetric distinguishability (SD).
Specifically, we identify SD as a property of a composite physical system, shared between two parties, Xiao and Alice (with Xiao's systems being classical and Alice's being quantum), that can neither be generated nor increased by the set $\mathfrak{F}$ of free operations.


There is an important class of operations that play a central role in the resource theory of symmetric distinguishability. These are referred to as {\em{conditionally doubly stochastic (CDS) maps}} and were first introduced in~\cite{GGH+2018}. In fact, when the classical input and output systems of the resource theory are the same (i.e., $X=X'$), then the set of free operations, defined above, reduces to the set of CDS maps. This is stated in Lemma~\ref{FO-CDS} below. Before proceeding to the lemma, we  introduce the notion of CDS maps in the next section.

	\subsection{Conditional doubly stochastic (CDS) maps}
	
	Consider the following problem: Xiao picks a state $\rho_x\in\{\rho_1,...,\rho_{|\cX|}\}$ at random, with prior probability
	$p_x$, and then sends the state $\rho_x$ to Alice through a noiseless quantum channel. Alice knows the probability distribution $\{p_x\}_x$ from which Xiao sampled the state $\rho_x$, but she does not know the value of $x$. Therefore, the overall state can be represented as a classical-quantum state of the form:
	\be\label{cq}
	\rho_{XA}=\sum_{x \in \mathcal{X}} p_x |x\lr x|\otimes\rho_{x} ,
	\ee
	where $X$ is a classical register to which Alice does not have access and $A$ is a quantum register to which she does have access. What are the ways in which Alice can manipulate the state
	$\rho_{XA}$? 	There are two basic operations that she can perform: 
	\begin{enumerate}
	    \item Alice can perform a generalized measurement on her system $A$.
	    \item Alice can partially lose her knowledge of the distribution $p_x$, by performing random relabeling on the alphabet of the classical system $\cX \coloneqq  \{1,2, \ldots, |\cX|\}$. 
	\end{enumerate}
That is, Alice can perform a generalized measurement on the quantum system $A$, and based on the outcome, say $j$, apply a random relabeling map $\mD^{(j)}$ on the classical system $X$. Hence, the most general operation that Alice can perform is a CPTP map 
 $\mN\in\cptp(XA\to XA') $ of the form
	\be\label{mainform}
	\mN=\sum_j \mD^{j}_{X\to X}\otimes\mE^j_{A\to A'} ,
	\ee
	where each $\mathcal{D}^{j}$ is a classical doubly stochastic channel and each $\mE^j$ is a  completely positive (CP) map such that 
	$\sum_j\mE^j$ is CPTP. Alternatively, since every doubly stochastic matrix can be expressed as a convex combination of permutation matrices, there exist conditional probabilities $t_{z|j}$ such that
	\be
	\mD^{j}=\sum_{z=1}^{|\mathcal{X}|!}t_{z|j}\mathcal{P}^{z}, \qquad t_{z|j}\geq 0 \ \forall z,j, \qquad\sum_{z=1}^{|\mathcal{X}|!}t_{z|j}=1 \ \forall j\;.
	\ee
	where $\mathcal{P}^z$ is the permutation channel defined by $\mathcal{P}^z(|x\lr x|)=|\pi_{z}(x)\lr\pi_{z}(x)|$, with $\pi_z$ being one  of the $|\mathcal{X}|!$ permutations. We therefore conclude that
	\be
	\mN=\sum_{z=1}^{|\mathcal{X}|!} \mathcal{P}^{z}_{X\to X}\otimes\tilde{\mE}^z_{A\to A'},
	\ee
	where $\tilde{\mE}^z\equiv\sum_{j}t_{z|j}\mE^j$.
Hence, we can assume, without loss of generality, that in~\eqref{mainform} the map $\mD^{z}=\mathcal{P}^{z}$, with $z=1,...,
|\mathcal{X}|!$, so that the maps that Alice can perform are given by
	\be\label{mainform2}
	\mN=\sum_{z=1}^{|\mathcal{X}|!} \mathcal{P}^z_{X\to X}\otimes\mE^z_{A\to A'}\;.
	\ee
We call such CPTP maps \emph{conditional doubly stochastic} (CDS) maps.

The following lemma states that, when the input and output classical systems are the same (i.e., $X=X'$),  the free operations for the resource theory of symmetric distinguishability (RTSD), introduced in Definition~\ref{def-FO}, are given by CDS maps, denoted by $\cds(XA\to XA')$.

\smallskip

\begin{lemma}\label{FO-CDS}
For a classical system $X$ and quantum systems $A$ and $A'$, the following set equivalence holds
\be
\cds(XA\to XA')=\mf(XA\to XA')\;.
\ee
\end{lemma}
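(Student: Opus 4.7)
The asserted equivalence will be established by proving both inclusions.

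For the forward inclusion $\cds(XA\to XA')\subseteq\mf(XA\to XA')$, consider an arbitrary CDS map $\mathcal{N}=\sum_z \mathcal{P}^z_{X\to X}\otimes \mathcal{E}^z_{A\to A'}$ acting on a c-q state $\rho_{XA}=\sum_x p_x |x\rangle\!\langle x|_X \otimes \rho_x$. I plan to exhibit a chain of transformations from $\rho_{XA}$ to $\mathcal{N}(\rho_{XA})$ whose successive steps are each justified by one of Axioms I-V. The crucial step uses a Naimark-type dilation of the quantum instrument $\{\mathcal{E}^z\}_z$: there is an isometry $V: A \to A'EZ$ of the form $V=\sum_z V_z\otimes |z\rangle_Z$ with $\Tr_E[V_z \tau V_z^\dagger]=\mathcal{E}^z(\tau)$. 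Applying $V$ on $A$ preserves SD by Axiom II; dephasing $Z$ into a classical register (realized as CNOT onto a fresh pure ancilla followed by partial trace, combining Axioms II, III, and V) produces a state block-diagonal in $Z$; tracing out $E$ by Axiom V then yields a state of the form $\sum_z q_z\, \sigma^z_{XA'}\otimes|z\rangle\!\langle z|_Z$ with each $\sigma^z_{XA'}$ a normalized c-q state on $XA'$. For every $z$, Axiom I gives $\sigma^z_{XA'}\sim \mathcal{P}^z(\sigma^z_{XA'})$, so Axiom IV lifts these branchwise equivalences to the $Z$-mixture, and a final partial trace over $Z$ (Axiom V) produces exactly $\mathcal{N}(\rho_{XA})$.

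For the reverse inclusion $\mf(XA\to XA')\subseteq\cds(XA\to XA')$, I plan to exploit that $\prec$ is defined as the \emph{smallest} preorder satisfying Axioms I-V. Consequently, any uniform relation $\mathcal{N}(\rho)\prec\rho$ for every c-q state $\rho$ forces $\mathcal{N}$ to admit a realization as a composition of the axiomatic primitives: isometries on $A$, permutations on the classical register (possibly enlarged by classical ancillas), appending or discarding free states, and flagged conditional operations. The goal is then to show that the most general composition of these primitives whose input and output classical registers both have size $|\mathcal{X}|$ must collapse to the CDS form. Intuitively, the quantum-side primitives combine via Stinespring to give CP maps $\mathcal{E}^z$ on $A$ indexed by a flag $z$ furnished by Axiom IV, while the classical-side primitives on $X$ combine to give permutations $\mathcal{P}^z$ indexed by the same flag; discarding the flag then realizes the CDS decomposition $\sum_z \mathcal{P}^z_X\otimes \mathcal{E}^z_{A\to A'}$.

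The hard part is this reverse direction, which amounts to a classical-quantum analogue of Birkhoff's theorem: the only CPTP maps from $XA$ to $XA'$ buildable from the axiomatic primitives while preserving the size of the classical register are convex combinations of permutations on $X$ tensored with CP maps on $A$. I expect the argument to proceed by induction on the depth of the axiomatic composition, carefully tracking how ancillas introduced via Axioms II and III are ultimately discarded via Axiom V, and verifying that --- up to the Axiom IV flag --- the classical action on $X$ must reduce to a permutation rather than an arbitrary stochastic map.
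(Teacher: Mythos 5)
Your forward inclusion is, in substance, the paper's own argument: the paper likewise observes that an arbitrary quantum instrument on Alice's side is built from isometries and partial traces (Axioms~II and~V, with the flag register supplied as in your Naimark dilation), applies Axiom~I branchwise to the normalized post-instrument states, lifts the branchwise equivalences with Axiom~IV, and discards the flag with Axiom~V. Your added detail about dephasing the $Z$ register via a CNOT onto a fresh ancilla is a legitimate way to make the instrument realization explicit, and this half of your proposal is sound.

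The reverse inclusion is where your proposal has a genuine gap: you label it ``the hard part'' and then only announce a plan (``I expect the argument to proceed by induction on the depth of the axiomatic composition''), without carrying it out. The step you are missing is in fact short once the general free map is put in canonical form. Every primitive generated by Axioms~I, II, IV, and~V with the classical register held fixed is already a CDS map, so the only genuinely new ingredient is Axiom~III, namely tensoring with $\pi_{X'}\otimes\omega_{A'}$. A general element of $\mf(XA\to XA')$ therefore takes the form
\be
\mN(\rho_{XA})=\sum_{y}(\tr_{X'}\circ\,\mP^y_{XX'}\otimes\mE^y_{A\to A'})\left(\rho_{XA}\otimes\pi_{X'}\right),
\ee
with $\{\mE^y\}_y$ a quantum instrument and $\mP^y_{XX'}$ joint permutation channels on the enlarged classical system. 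The decisive observation---absent from your sketch---is that the induced classical channel $\omega_X\mapsto(\tr_{X'}\circ\,\mP^y_{XX'})(\omega_X\otimes\pi_{X'})$ is a \emph{classical doubly stochastic} map (a noisy operation), since permutations preserve the uniform distribution on $XX'$; the ordinary Birkhoff--von Neumann theorem then writes it as a convex combination of permutations of $X$, which collapses $\mN$ to the CDS form $\sum_z\mP^z_X\otimes\tilde{\mE}^z_{A\to A'}$. No new ``classical--quantum analogue of Birkhoff's theorem'' and no induction on composition depth are needed: the quantum side already accommodates arbitrary CP maps and imposes no constraint, and the classical side reduces verbatim to the classical Birkhoff theorem. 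As written, your reverse direction does not close.
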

\begin{proof}
We first prove that
\be
\cds(XA\to XA')\subseteq\mf(XA\to XA')\;.
\label{eq:cdsinfree}
\ee
Since partial trace and isometries acting on Alice's systems are free, it follows that every quantum instrument on Alice's side is free. Let
\be
\mE_{A\to A'A''}(\omega_A)\eqdef\sum_{y\in \cY}\mE^y_{A\to A'}(\omega_A)\otimes|y\lr y|_{A''}
\ee
be a quantum instrument on Alice's side, and let $\rho_{XA}\in {\mathfrak{D}}(XA)$. Then, the action of the quantum instrument on $\rho_{XA}$ yields the state
\be\label{223}
\sum_{y\in \cY}\mE^y_{A\to A'}(\rho_{XA})\otimes|y\lr y|_{A''} ,
\ee
 and from Axiom~I, we have, for all $y\in\cY$, the equivalence
\be
\frac{\mE^y_{A\to A'}(\rho_{XA})}{\tr\!\left[\mE^y_{A\to A'}(\rho_{XA})\right]}\sim\frac{\cP_{X\to X}^y\otimes\mE^y_{A\to A'}(\rho_{XA})}{\tr\!\left[\mE^y_{A\to A'}(\rho_{XA})\right]},
\ee
where each $\cP_{X\to X}^y$ is a permutation channel.
Combining this with Axiom~IV, we conclude that the c-q state in~\eqref{223} can further be transformed to
\be
\sum_{y\in \cY}(\mP^y_{X\to X}\otimes\mE^y_{A\to A'})(\rho_{XA})\otimes|y\lr y|_{A''}\;.
\ee
Finally, tracing system $A''$ yields the overall transformation
\be
\rho_{XA}\mapsto\sum_{y\in \cY}(\mP^y_{X\to X}\otimes\mE^y_{A\to A'})(\rho_{XA})\;,
\ee
which is the general form of a CDS map.
This completes the proof of \eqref{eq:cdsinfree}.

Conversely, note that only Axiom~III is not covered by CDS maps. Therefore, the most general transformation $\mN_{XA\to XA'}\in\mf(XA\to XA')$ has the form
\be
\mN_{XA\to XA'}\left(\rho_{XA}\right) = \sum_{y\in \cY}(\tr_{X'}\circ\mP^y_{XX'}\otimes \mE^y_{A\to A'}) \left(\rho_{XA}\otimes\pi_{X'}\right) ,
\ee
where $\pi_{X'}$ is the maximally mixed state, $\{\mE_y\}_y$ is a quantum instrument, and $\mP_{XX'}^y\in\cptp(XX'\to XX')$ are joint permutation channels.
However, observe that 
\be
\mD_{X\to X}^y(\omega_X)\eqdef (\tr_{X'}\circ\mP^y_{XX'}) \left(\omega_X\otimes\pi_{X'}\right)\quad\forall\omega_X\in {\mathfrak{D}}(X)\;,
\ee
is a classical doubly stochastic channel and therefore can be expressed as a convex combination of permutation channels. We therefore conclude that $\mN_{XA\to XA'}\in\cds(XA\to XA')$.
\end{proof}
\smallskip

 Lemma~\ref{FO-CDS} above demonstrates that if the dimension of the classical system is fixed, then the free operations in the resource theory of SD are CDS maps. However, we point out that the lemma above can also be used to characterize $\mathfrak{F}(XA\to X'A')$ where $|X|\neq |X'|$. In particular, observe that
\be
\rho_{XA}\xrightarrow{\mathfrak{F}}\sigma_{X'A'}\quad\iff\quad\pi_{X'}\otimes\rho_{XA}\xrightarrow{\cds}\pi_{X}\otimes\sigma_{X'A'}
\ee
 for all $\rho_{XA},\sigma_{X'A'}\in\mathfrak{D}_{cq}$
because the maximally mixed states $\pi_X$ and $\pi_{X'}$ are free. In the general case, when $|X|\neq |X'|$, the set $\mathfrak{F}(AX\to A'X')$ can be viewed as a special subset of conditional thermal operations~\cite{NG17}, corresponding to a thermodynamical system with a completely degenerate Hamiltonian. We therefore call it the set of \emph{conditional noisy operations}.

In the rest of the paper, we focus on the case in which the input classical system $X$ has the same dimension as the output classical system $X'$, and furthermore, we constrain both of them to have dimension equal to two.  Thus, according to Lemma~\ref{FO-CDS}, the set of free operations reduces to CDS maps in all of our discussions that follow.

Moreover, in addition to CDS maps, we will also consider the set
\begin{align}
\cptp_A \coloneqq \Big\{{\rm{id}}\otimes\cE \, \Big| \, \cE\, \operatorname{CPTP} \text{on system } A\Big\}
\end{align}
as a possible set of transformations. Note that $\cptp_A$ has the clear physical interpretation of applying a fixed quantum channel onto the quantum part of the c-q state without changing its classical probability distribution. Furthermore, it is clear from the definition of CDS maps that we have the inclusion
\begin{align}
    \cptp_A \subset \cds.
\end{align}

The following lemma shows that the minimum error probability $p_{\err}(\rho_{XA})$ can only be increased by application of a CDS map.
\begin{lemma}[Monotonicity of minimum error probability under CDS maps]
    \label{lem:MonoErrProbCDS}
    
    Let $\cN\in\cds(XA\to XA')$ and $\rho_{XA}$ be a c-q state. Then
    \begin{align}
    p_{\err}(\cN(\rho_{XA})) \ge p_{\err}(\rho_{XA}),
    \end{align}
    where $p_{\operatorname{err}}(\rho_{XA})\coloneqq 1 - p_{\rm{guess}}(X|A)$, with $p_{\rm{guess}}(X|A)$ defined in \eqref{eq:pguess}.
\end{lemma}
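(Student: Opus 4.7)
My plan is to prove this via a direct "pullback" argument: given any POVM that attempts to guess the (possibly permuted) classical label on $\mathcal{N}(\rho_{XA})$, I will construct a POVM on the original state that achieves at least the same success probability. Monotonicity of $p_{\mathrm{guess}}$ then yields monotonicity of $p_{\mathrm{err}}=1-p_{\mathrm{guess}}$.

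Concretely, I would write the CDS map in the form $\mathcal{N}=\sum_{z}\mathcal{P}^{z}_{X\to X}\otimes\mathcal{E}^{z}_{A\to A'}$ with $\sum_{z}\mathcal{E}^{z}$ being CPTP and each $\mathcal{P}^{z}(\lvert x\rangle\!\langle x\rvert)=\lvert\pi_{z}(x)\rangle\!\langle\pi_{z}(x)\rvert$. Expanding $\rho_{XA}=\sum_{x}p_{x}\lvert x\rangle\!\langle x\rvert\otimes\rho_{x}$ gives
\begin{equation}
\mathcal{N}(\rho_{XA})=\sum_{x'}\lvert x'\rangle\!\langle x'\rvert\otimes\Big(\sum_{z}p_{\pi_{z}^{-1}(x')}\,\mathcal{E}^{z}(\rho_{\pi_{z}^{-1}(x')})\Big),
\end{equation}
so that for any POVM $\{\Lambda_{x'}\}_{x'}$ on $A'$ the guessing probability on the output equals
$\sum_{x'}\mathrm{Tr}\big[\Lambda_{x'}\sum_{z}p_{\pi_{z}^{-1}(x')}\mathcal{E}^{z}(\rho_{\pi_{z}^{-1}(x')})\big]$.

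Now, given such $\{\Lambda_{x'}\}$, I define the candidate pulled-back POVM on $A$ by
\begin{equation}
M_{x}\coloneqq \sum_{z}(\mathcal{E}^{z})^{*}(\Lambda_{\pi_{z}(x)}),
\end{equation}
where $(\mathcal{E}^{z})^{*}$ is the Heisenberg-picture adjoint. The key check is that $\{M_{x}\}_{x}$ is a valid POVM: each $M_{x}\geq 0$ because each $(\mathcal{E}^{z})^{*}$ is positive, and
$\sum_{x}M_{x}=\sum_{z}(\mathcal{E}^{z})^{*}\!\big(\sum_{x}\Lambda_{\pi_{z}(x)}\big)=\sum_{z}(\mathcal{E}^{z})^{*}(I)=I$,
where the last equality uses that $\sum_{z}\mathcal{E}^{z}$ is trace preserving, hence its adjoint is unital. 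The change of variable $x'=\pi_{z}(x)$ then yields
\begin{equation}
\sum_{x}p_{x}\,\mathrm{Tr}[M_{x}\rho_{x}]=\sum_{x,z}p_{x}\,\mathrm{Tr}\big[\Lambda_{\pi_{z}(x)}\mathcal{E}^{z}(\rho_{x})\big]=\sum_{x'}\mathrm{Tr}\Big[\Lambda_{x'}\sum_{z}p_{\pi_{z}^{-1}(x')}\mathcal{E}^{z}(\rho_{\pi_{z}^{-1}(x')})\Big],
\end{equation}
so the output guessing probability for any $\{\Lambda_{x'}\}$ is realized by the input POVM $\{M_{x}\}$. Taking the maximum over $\{\Lambda_{x'}\}$ gives $p_{\mathrm{guess}}(X'|A')\leq p_{\mathrm{guess}}(X|A)$, which is equivalent to the desired inequality $p_{\mathrm{err}}(\mathcal{N}(\rho_{XA}))\geq p_{\mathrm{err}}(\rho_{XA})$.

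The only subtle step is ensuring that the pulled-back operators $M_{x}$ form a POVM; this is where it matters that the permutations $\mathcal{P}^{z}$ on the classical register are paired with CP maps $\mathcal{E}^{z}$ whose sum is trace-preserving, exactly the defining feature of CDS maps. Everything else is bookkeeping and a change of summation index, so I do not expect any real obstacle beyond writing the indices carefully.
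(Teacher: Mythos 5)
Your proof is correct, and it takes a genuinely different route from the paper's. The paper proves this lemma by invoking the identity $p_{\rm{guess}}(X|A) = 2^{\inf_{\omega_A} D_{\max}(\rho_{XA}\| I_X\otimes\omega_A)}$ from \cite{KRS08}, observing that a CDS map sends $I_X\otimes\omega_A$ to an operator of the same form $I_X\otimes\sigma_{A'}$, and then applying the data-processing inequality for $D_{\max}$. You instead give a direct operational argument: you pull back an arbitrary output POVM $\{\Lambda_{x'}\}$ through the adjoints $(\mathcal{E}^z)^*$, permuted according to the $\pi_z$, and verify that $\{M_x\}$ with $M_x = \sum_z (\mathcal{E}^z)^*(\Lambda_{\pi_z(x)})$ is a valid POVM on the input achieving exactly the same guessing probability. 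All the steps check out: positivity of each $M_x$ follows from positivity of the adjoints, normalization follows from unitality of $(\sum_z\mathcal{E}^z)^*$, and the index change $x'=\pi_z(x)$ correctly matches the two expressions. Your approach is more elementary and self-contained --- it needs neither the max-relative-entropy characterization of the guessing probability nor the DPI for $D_{\max}$ --- and it makes transparent exactly which structural features of CDS maps (permutations on $X$ paired with CP maps summing to a channel) are responsible for the monotonicity. What the paper's route buys in exchange is brevity and a connection to $D_{\max}$ as a distance to the free set, a perspective it reuses elsewhere (e.g., in Lemma~\ref{lem:pguess} and the remark following it).
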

\begin{proof}
 The guessing probability $p_{\rm{guess}}(X|A) $ can be written as follows \cite{KRS08}:
\begin{align*}
     p_{\rm{guess}}(X|A) &= 2^{\inf_{\omega_A} D_{\max} (\rho_{XA}\| I_X \otimes \omega_A)},
\end{align*}
where $D_{\max}$ denotes the max-relative entropy, which for a state $\rho$ and a 
    positive semi-definite operator $\sigma$ is defined as follows \cite{D09}: 
    \begin{align}
        D_{\max}(\rho\|\sigma) \coloneqq  \inf \{ \lambda \,:\, \rho \le 2^\lambda \sigma\}.
    \end{align}
Thus, we conclude that
\begin{align*}
p_{\err}(\cN(\rho_{XA})) &= 1- 2^{\inf_{\sigma_{A^\prime}} D_{\max} (\cN(\rho_{XA})\| I_X \otimes \sigma_{A^\prime})} \ge 1 - 2^{\inf_{\omega_A} D_{\max} (\cN(\rho_{XA})\| \cN(I_X \otimes \omega_A))} \\&\ge1- 2^{\inf_{\omega_A} D_{\max} (\rho_{XA}\| I_X \otimes \omega_A)} = p_{\err}(\rho_{XA}).
\end{align*}
The first inequality follows from the fact that for each quantum state $\omega_{A}$ on system $A$, the image under the CDS map $\cN$ can be written as $\cN(I_X\otimes\omega_A) =I_X\otimes\sigma_{A^\prime}$ with $\sigma_{A^\prime}$ a state on system~$A^\prime$ (which is evident from the definition of CDS maps). The second inequality follows from the data-processing inequality for the max-relative entropy~\cite{D09}.
\end{proof}

\bigskip
 The above lemma immediately leads to a natural choice of a measure of SD for the particular case of the RTSD that we study in this paper, namely, one in which the dimension
 of the classical system $X$ is fixed to $|X|=2$. This measure was mentioned in \eqref{eq:def-SD-measure}, and we recall its definition here:
 \begin{definition}\label{def:SD}
For a fixed dimension $|X|=2$, we define 
\be
{\rm SD}(\rho_{XA})\eqdef -\log\big(2p_{\operatorname{err}}(\rho_{XA})\big)\;.
\ee
This function is equal to zero on free states and behaves monotonically under CDS maps, as is evident from Lemma~\ref{lem:MonoErrProbCDS} above.

\end{definition}

\begin{lemma}\label{lem:pguess} 
For an ensemble $\{p_x, \rho_A^x\}_{x \in \cX}$, the following lower bound holds for the guessing probability:
\begin{equation}
 p_{\rm{guess}}(X|A) \geq  \frac{1}{|\cX|}   ,
 \label{eq:lower-bnd-guess-prob}
\end{equation}
with $p_{\rm{guess}}(X|A)$ defined in~\eqref{eq:pguess}.
This lower bound is saturated (i.e., $p_{\rm{guess}}(X|A)= \frac{1}{|\cX|}$) if and only if
\begin{align}\label{eq:A}
\exists \sigma_A \in \cD(A) \,\, \forall\, x \in \cX, \quad \rho_x = \sigma_A \,\,\,  , \quad {\hbox{and}} \quad p_x=\frac{1}{|\cX|}.
\end{align}
\end{lemma}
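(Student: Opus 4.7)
The plan is to establish the inequality \eqref{eq:lower-bnd-guess-prob} by exhibiting an explicit feasible POVM, and then analyze equality in two separate steps: first deduce that the prior must be uniform, and second deduce that all states must coincide.

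For the inequality, I would simply substitute the trivial POVM $\Lambda_x = I_A/|\cX|$ for every $x \in \cX$, which is feasible since $\sum_x \Lambda_x = I_A$. This gives
\be
p_{\rm guess}(X|A) \ge \sum_{x\in\cX} p_x \cdot \tfrac{1}{|\cX|}\tr(\rho_x) = \tfrac{1}{|\cX|}\,.
\ee
The ``if'' direction of the equality statement is equally direct: if all $\rho_x$ equal a common state $\sigma_A$ and $p_x = 1/|\cX|$, then for \emph{every} POVM $\{\Lambda_x\}$, $\sum_x p_x \tr(\Lambda_x \rho_x) = \tfrac{1}{|\cX|}\tr(\sigma_A)=\tfrac{1}{|\cX|}$, so the maximum saturates.

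For the nontrivial ``only if'' direction, my first step is to show that equality forces $p_x=1/|\cX|$ for every $x$. To see this, fix any index $x^\star \in \cX$ and consider the ``always guess $x^\star$'' POVM defined by $\Lambda_{x^\star}=I_A$ and $\Lambda_x=0$ for $x \ne x^\star$. This gives $p_{\rm guess}(X|A) \ge p_{x^\star}$, so equality $p_{\rm guess}(X|A)=1/|\cX|$ forces $p_{x^\star} \le 1/|\cX|$ for every $x^\star$; since the $p_x$ sum to $1$, each must equal $1/|\cX|$.

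Having reduced to the uniform prior, my second step is to show by contrapositive that any difference between two states $\rho_{x_0}$ and $\rho_{x_1}$ produces $p_{\rm guess}(X|A)>1/|\cX|$. Assume $\rho_{x_0}\neq \rho_{x_1}$, and let $\Pi$ be the projector onto the positive part of $\rho_{x_0}-\rho_{x_1}$, so that $\tr\!\left(\Pi(\rho_{x_0}-\rho_{x_1})\right) = \tfrac{1}{2}\|\rho_{x_0}-\rho_{x_1}\|_1 > 0$. Now build a POVM by setting $\Lambda_{x_0}=\Pi$, $\Lambda_{x_1}=I_A-\Pi$, and $\Lambda_x=0$ for $x \notin\{x_0,x_1\}$. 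A short calculation gives
\be
\sum_{x\in\cX}\tr(\Lambda_x \rho_x) \;=\; 1 + \tr\!\left(\Pi(\rho_{x_0}-\rho_{x_1})\right) \;>\; 1,
\ee
so with the uniform prior, $p_{\rm guess}(X|A) > 1/|\cX|$, contradicting equality. Thus all $\rho_x$ must coincide, completing the proof. I do not anticipate any serious obstacle here; the only point requiring mild care is constructing the two-outcome POVM above and verifying it is a valid (sub-)POVM that extends to one on the full alphabet $\cX$ by zero-padding.
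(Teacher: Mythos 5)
Your proof is correct, and it takes a genuinely different route from the one in the paper. The paper invokes the operational identity $p_{\rm{guess}}(X|A) = 2^{\inf_{\omega_A} D_{\max}(\rho_{XA}\|I_X\otimes\omega_A)} = \frac{1}{|\cX|}2^{\inf_{\omega_A} D_{\max}(\rho_{XA}\|\pi_X\otimes\omega_A)}$ from \cite{KRS08}: the lower bound then follows from nonnegativity of $D_{\max}$, and the equality case from the attainment of the infimum together with the faithfulness of $D_{\max}$ (namely $D_{\max}(\rho\|\sigma)=0$ iff $\rho=\sigma$), which directly yields $\rho_{XA}=\pi_X\otimes\sigma_A$. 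Your argument is entirely elementary and self-contained: the uniform POVM for the lower bound, the ``always guess $x^\star$'' strategy to force a uniform prior, and the Helstrom-type projector $\Pi$ onto the positive part of $\rho_{x_0}-\rho_{x_1}$ (zero-padded to a full POVM) to force all states to coincide — the key identity $\tr(\Pi(\rho_{x_0}-\rho_{x_1}))=\tfrac{1}{2}\|\rho_{x_0}-\rho_{x_1}\|_1>0$ holds because the difference is traceless, so your strict inequality is sound. What the paper's approach buys is a one-line proof plus a direct link to the subsequent remark, which reinterprets $\log(|\cX|p_{\rm{guess}}(X|A))$ as the max-relative-entropy distance to the set of free states; what yours buys is independence from the \cite{KRS08} characterization and an explicit witness POVM demonstrating strictness whenever the ensemble is not free.
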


\begin{proof}
    It is known that~\cite{KRS08} 
    \begin{align}
     p_{\rm{guess}}(X|A) &= 2^{\inf_{\omega_A} D_{\max} (\rho_{XA}\| I_X \otimes \omega_A)}
     =  \frac{1}{|\cX|} 2^{\inf_{\omega_A} D_{\max} (\rho_{XA}\| \pi_X \otimes \omega_A)},
     \label{eq:KRS-result}
    \end{align}
    where the infimum is over every state $\omega_A$.
    In the above, $\rho_{XA}$ is the c-q state corresponding to the ensemble and is defined in~\eqref{eq:cqgen}, and $\pi_X = I_X/|\cX|$ is the completely mixed state.
    The lower bound in \eqref{eq:lower-bnd-guess-prob} is then a direct consequence of \eqref{eq:KRS-result} and the fact that $D_{\max}(\rho\|\sigma) \geq 0$ for states $\rho$ and~$\sigma$.
      As a consequence of \eqref{eq:KRS-result}, we also conclude that
\begin{align}
\label{eq:equivalencepgues}
    p_{\rm{guess}}(X|A)= \frac{1}{|\cX|} \, & \iff   \,\inf_{\omega_A} D_{\max} (\rho_{XA}\| \pi_X \otimes \omega_A) =0.  
\end{align}
Now note that the infimum on the right-hand side of \eqref{eq:equivalencepgues} is actually a minimum (due to the finiteness of  $\inf_{\omega_A}D_{\max} (\rho_{XA}\| \pi_X \otimes \omega_A)$ and continuity of $2^{\lambda}\sigma$ in $\lambda$). Therefore,
\begin{align}
p_{\rm{guess}}(X|A)= \frac{1}{|\cX|} \,  \iff \exists \,\sigma_A\text{ s.t. } \rho_{XA} = \pi_X\otimes\sigma_A,
\end{align}
where we also employed the property of max-relative entropy that, for states $\rho$ and $\sigma$, $D_{\max}(\rho \| \sigma) = 0 $ if and only if $\rho = \sigma$. 
\end{proof}

\begin{remark}
By rearranging \eqref{eq:KRS-result}, we find that
\begin{equation}
    \log (|\cX|p_{\rm{guess}}(X|A)) = \inf_{\omega_A} D_{\max} (\rho_{XA}\| \pi_X \otimes \omega_A).
\end{equation}
This quantity is a measure of symmetric distinguishability alternative to ${\rm{SD}}(\rho_{XA})$, as defined in~\eqref{eq:def-SD-measure}. Indeed, this measure of symmetric distinguishability has the appealing feature that it is the max-relative entropy from the state $\rho_{XA}$ of interest to the set of free states. As is common in quantum resource theories \cite{CG18}, one could then define an infinite number of SD measures based on the generalized divergence of the state of interest with the set of free states.
\end{remark}

\comment{
\subsection{Axiomatic approach to monotones in the resource theory of symmetric distinguishability}

\begin{definition}
Let
\be
f:\bigcup_{X,B}\mathfrak(XB)\to\mathbb{R} \;,
\ee
be a function defined on all finite dimensional cq-states (i.e. the union above is over all classical systems $X$ with $|X|<\infty$ and all quantum systems $B$ with $|B|<\infty$.
We say that $f$ is a measure of SD if it satisfies the following conditions:
\begin{enumerate}
\item Normalization; for $|X|=|B|=1$
\be
f(1)=0\;.
\ee
\item Monotonicity; for any $\rho\in\md(XB)$ and any $\mN\in\cds(XB\to XB')$
\be
f\left(\mN^{XB\to XB'}\left(\rho^{XB}\right)\right)\leq f\left(\rho^{XB}\right)\;.
\ee
\item Invariance; for all $\rho\in\md(XB)$ 
\be
f\left(\rho^{XB}\otimes\pi^{X'} \right)=f\left(\rho^{XB}\right)\;,
\ee
where $\pi^{X'}$ is the maximally mixed state on any classical system $X'$ on Alice's side.
\end{enumerate}
\end{definition}

\begin{lemma}
Let $f$ be a measure of symmetric distinguishability. Then, $f(\rho^{XB})\geq 0$ for all $\rho\in\md(XB)$ and $f(\rho^{XB})= 0$ for all free state $\rho\in\mf(XB)$.
\end{lemma}
\begin{proof}
First, observe that from the invariance property with $|X|=|B|=1$ we get that $f(\pi^{X'})=f(1)=0$ for any classical system $X'$. Now, let $\rho\in\md(XA)$  and observe that 
the map $\mR^{X\to X}\otimes\tr_B\in\cds(XB\to X)$ (i.e. Bob's system at the output is trivial), where $\mR$ is the completely randomizing (i.e. completely depolarizing) classical channel. Hence,
\be
f(\rho^{XB})\geq f\left(\mR^{X\to X}\otimes\tr_B(\rho^{XB})\right)=f(\pi^{X})=0\;.
\ee
Hence $f$ is non-negative. On the other hand, let $\mR_\omega^{1\to B}\in\cds(1\to B)$ be a preparation channel from the trivial system 1 outputting the fixed state $\omega\in\md(B)$. Note that 1 stands for the trivial (one-dimensional) system (and the Alice classical system is also considered trivial in $\cds(1\to B)$).
Then, for any non-trivial classical system $X$
 \be
 0=f(\pi^{X})\geq f\left(\mR_\omega^{1\to B}\left(\pi^{X1}\right)\right)=f\left(\pi^X\otimes\omega^B\right)\;.
 \ee
 But since $f$ is non-negative $f\left(\pi^X\otimes\omega^B\right)=0$.
\end{proof}

\begin{example}
Let $H(X|B)_\rho$ be the conditional entropy on $\rho\in\md(XB)$. Then, the function
\be
f(\rho^{XB})\eqdef\log|X|-H(X|B)
\ee
is a measure of SD.
\end{example}
}

\section{Some key ingredients of the resource theory of symmetric distinguishability}

\label{sec:ingredients}

\subsection{Infinite-resource states and golden units}

\label{sec:prereq}

In the following sections, we study the information-theoretic tasks of distillation and dilution within the framework of the resource theory of symmetric distinguishability (SD), with respect to the golden unit of Definition~\ref{golden-unit} below. We consider two different choices of free operations: $(i)$ CPTP maps on the quantum system $A$ and the identity channel on the classical system, which we denote as ${\rm{CPTP}}_A$, and $(ii)$ CDS maps, which were introduced in the previous section. The reason for considering both of these is that they lead to novel and interesting results. Also, in some cases, a proof for one choice of free operations follows as a simple corollary from that for the other choice. We refer to these tasks as SD-distillation and SD-dilution, respectively. We study the exact and approximate one-shot cases, as well as the asymptotic case for both of these tasks.

In this section, we prove certain key results that serve as prerequisites for the above study; they involve {\em{infinite-resource states}} and {\em{golden units}}---notions that were 
introduced in Section~\ref{sec:intro}. We recall their definitions before stating the relevant results. 
As mentioned in the Introduction, the basic objects of the RTSD are elementary quantum information sources represented by 
\begin{equation}
\rho_{XA}\coloneqq p|0\rangle\!\langle0|\otimes\rho_{0}+\left(  1-p\right)
|1\rangle\!\langle 1|\otimes\rho_{1},
\label{eq:basic-cq-state}
\end{equation}
where $p\in\left[  0,1\right]  $ is a prior probability and $\rho_{0}$ and
$\rho_{1}$ are quantum states.

The main operational quantity associated with such a state, in the context of symmetric hypothesis testing, is the minimum error probability of Bayesian state discrimination of the states
$\rho_0$ and $\rho_1$:
\begin{align}
\label{op-err}
p_{\operatorname{err}}(\rho_{XA})= \min_{0\le\Lambda\le\1}\Big(p\Tr\!\left(\Lambda\rho_0\right) + (1-p) \Tr\!\left((\1 -\Lambda)\rho_1\right)\Big).
\end{align}
\begin{remark}
  Since the c-q state $\rho_{XA}$ is also represented by the quantum box $(p, \rho_0, \rho_1)$ (as mentioned in the Introduction), we sometimes use the notation
 $p_{\operatorname{err}}(p, \rho_0, \rho_1)$ instead of $p_{\operatorname{err}}(\rho_{XA})$ in the following. 
\end{remark}
The following well-known theorem gives an explicit expression for this minimum error probability \cite{Hel67,Hel69,Hol72}.
\begin{theorem}[Helstrom--Holevo Theorem]
\label{thm:HH-state-disc}
For a c-q state $\rho_{XA}$ of the form in \eqref{eq:basic-cq-state}, the following equality holds
\begin{align}
p_{\operatorname{err}}(\rho_{XA}) = \frac{1}{2}\Big(1- \left\|p\rho_0- (1-p)\rho_1\right\|_1\Big).
\end{align}
\end{theorem}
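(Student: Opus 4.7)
The plan is to reduce the optimization to the spectral decomposition of the Hermitian operator $T \coloneqq p\rho_0 - (1-p)\rho_1$. First, using $\Tr(\rho_1) = 1$, the objective function can be rewritten as
\begin{equation*}
p\Tr(\Lambda\rho_0) + (1-p)\Tr((\1-\Lambda)\rho_1) = (1-p) + \Tr(\Lambda T),
\end{equation*}
so that computing $p_{\operatorname{err}}(\rho_{XA})$ amounts to minimizing $\Tr(\Lambda T)$ subject to $0 \le \Lambda \le \1$, and then adding $(1-p)$.

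Next I would carry out this minimization via the Jordan decomposition $T = T_+ - T_-$, where $T_+, T_- \ge 0$ have mutually orthogonal supports. Let $\Pi_-$ denote the projector onto the support of $T_-$. For every $0 \le \Lambda \le \1$, one has $\Tr(\Lambda T_+) \ge 0$ and $\Tr(\Lambda T_-) \le \Tr(T_-)$, and hence
\begin{equation*}
\Tr(\Lambda T) = \Tr(\Lambda T_+) - \Tr(\Lambda T_-) \ge -\Tr(T_-),
\end{equation*}
with equality attained at $\Lambda^\star = \Pi_-$. Operationally, this is the familiar prescription of guessing $\rho_1$ exactly on the subspace where $(1-p)\rho_1$ dominates $p\rho_0$.

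Finally, I would convert $\Tr(T_-)$ into the trace norm of $T$ using the two linear identities $\|T\|_1 = \Tr(T_+) + \Tr(T_-)$ and $\Tr(T) = \Tr(T_+) - \Tr(T_-) = 2p - 1$. Solving gives $\Tr(T_-) = \tfrac{1}{2}(\|T\|_1 - (2p-1))$, and substituting back yields
\begin{equation*}
p_{\operatorname{err}}(\rho_{XA}) = (1-p) - \tfrac{1}{2}\bigl(\|T\|_1 - (2p-1)\bigr) = \tfrac{1}{2}\bigl(1 - \|p\rho_0 - (1-p)\rho_1\|_1\bigr),
\end{equation*}
as claimed. There is no real obstacle here; the only step that requires any care is verifying that $\Lambda^\star = \Pi_-$ is an admissible POVM element (which is immediate, since $\Pi_-$ is a projector), and that it simultaneously saturates both inequalities on $\Tr(\Lambda T_\pm)$, which follows from the orthogonality of the supports of $T_+$ and $T_-$.
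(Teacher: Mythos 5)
Your proof is correct, and it is the standard argument for the Helstrom--Holevo theorem: rewrite the objective as $(1-p)+\Tr(\Lambda T)$ with $T=p\rho_0-(1-p)\rho_1$, minimize via the Jordan decomposition with optimal test $\Lambda^\star=\Pi_-$, and convert $\Tr(T_-)$ to $\tfrac{1}{2}\left(\|T\|_1-\Tr(T)\right)$. The paper does not prove this statement at all; it cites it as a well-known result of Helstrom and Holevo, so there is no in-paper argument to compare against, but your derivation is complete and each step (including the admissibility of $\Pi_-$ and the simultaneous saturation of both bounds by orthogonality of the supports of $T_\pm$) checks out.
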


Recall from Section~\ref{sec:intro} that a c-q state $\rho_{XA}$ is said to be an \emph{infinite-resource state} if $p_{\operatorname{err}}(\rho_{XA})=0,$ which is equivalent to the quantum states $\rho_0$ and $\rho_1$ having mutually orthogonal support. Hence, the symmetric distinguishability given by ${\rm{SD}}(\rho_{XA})=-\log\left(2 p_{\operatorname{err}}(\rho_{XA})\right)$ is infinite in this case.

The following lemma shows that we can transform any infinite-resource state to any other c-q state of the form in \eqref{eq:basic-cq-state} via CDS maps.
\begin{lemma}
    \label{lem:InfToInf}
    Let $\omega_{XA}$ be an infinite-resource state, and let $\sigma_{XB}$ be a general c-q state. Then there exists a CDS map $\cN: XA \to XB$ such that
    \begin{align}
    \cN(\omega_{XA}) = \sigma_{XB}.
    \end{align}
\end{lemma}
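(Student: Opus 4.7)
The strategy is to construct the required CDS map explicitly. Write the infinite-resource state as $\omega_{XA} = p|0\rangle\!\langle0|\otimes \omega_0 + (1-p)|1\rangle\!\langle 1|\otimes\omega_1$, where $\omega_0$ and $\omega_1$ have mutually orthogonal supports, and the target as $\sigma_{XB} = q|0\rangle\!\langle 0|\otimes\sigma_0 + (1-q)|1\rangle\!\langle 1|\otimes \sigma_1$. Because the supports of $\omega_0$ and $\omega_1$ are orthogonal, there exist orthogonal projectors $P_0, P_1$ on $A$ with $P_0+P_1=I_A$ and $\mathrm{Tr}(P_i \omega_j)=\delta_{ij}$; in other words, $\{P_0,P_1\}$ is a two-outcome POVM that perfectly discriminates $\omega_0$ from $\omega_1$.

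Since $|X|=2$, the only two permutations of $X$ are the identity $\mathcal{P}^{\mathrm{id}}$ and the swap $\mathcal{P}^{\mathrm{sw}}$, so every CDS map from $XA$ to $XB$ is of the form $\mN = \mathcal{P}^{\mathrm{id}}_{X\to X}\otimes \mE^{\mathrm{id}}_{A\to B} + \mathcal{P}^{\mathrm{sw}}_{X\to X}\otimes \mE^{\mathrm{sw}}_{A\to B}$ with CP maps $\mE^{\mathrm{id}},\mE^{\mathrm{sw}}$ whose sum is trace-preserving. I would define these as the measure-and-prepare (hence manifestly CP) maps
\begin{align}
\mE^{\mathrm{id}}(\rho) &\coloneqq q\,\mathrm{Tr}(P_0\rho)\,\sigma_0 + (1-q)\,\mathrm{Tr}(P_1\rho)\,\sigma_1, \\
\mE^{\mathrm{sw}}(\rho) &\coloneqq (1-q)\,\mathrm{Tr}(P_0\rho)\,\sigma_1 + q\,\mathrm{Tr}(P_1\rho)\,\sigma_0.
\end{align}
Then $(\mE^{\mathrm{id}}+\mE^{\mathrm{sw}})(\rho) = [\mathrm{Tr}(P_0\rho)+\mathrm{Tr}(P_1\rho)]\,[q\sigma_0+(1-q)\sigma_1] = \mathrm{Tr}(\rho)\,[q\sigma_0+(1-q)\sigma_1]$, which is CPTP, so $\mN$ is a bona fide CDS map.

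The verification that $\mN(\omega_{XA}) = \sigma_{XB}$ is then a short calculation: using $\mathrm{Tr}(P_i\omega_j)=\delta_{ij}$, we obtain $\mE^{\mathrm{id}}(\omega_0)=q\sigma_0$, $\mE^{\mathrm{id}}(\omega_1)=(1-q)\sigma_1$, $\mE^{\mathrm{sw}}(\omega_0)=(1-q)\sigma_1$, $\mE^{\mathrm{sw}}(\omega_1)=q\sigma_0$, and substituting into the CDS expansion gives $|0\rangle\!\langle 0|\otimes q\sigma_0[p+(1-p)] + |1\rangle\!\langle 1|\otimes (1-q)\sigma_1[p+(1-p)] = \sigma_{XB}$.

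The only real design decision—and the step I would call out as the heart of the argument—is choosing $\mE^{\mathrm{id}}$ and $\mE^{\mathrm{sw}}$ so that the two permutation branches add coherently to produce the target prior $q$ independently of the source prior $p$. The symmetric choice above makes the $p$-dependence cancel because the coefficient of $\sigma_0$ in the $|0\rangle\!\langle 0|$ sector receives contribution $q$ from both branches (once via $\mE^{\mathrm{id}}(\omega_0)$ with weight $p$, once via $\mE^{\mathrm{sw}}(\omega_1)$ with weight $1-p$), and similarly for $\sigma_1$; the orthogonality of the supports is what enables this clean decoupling. No other obstacles arise, so this construction completes the proof.
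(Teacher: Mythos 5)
Your proof is correct and is essentially the same construction as the paper's: the paper also perfectly discriminates $\omega_0$ and $\omega_1$ with a two-outcome POVM (your $P_0,P_1$ are its $\Lambda,\1-\Lambda$) and defines exactly the same pair of measure-and-prepare CP maps $\cE_0,\cE_1$ assembled into $\cN = \mathrm{id}_X\otimes\cE_0 + \cF_X\otimes\cE_1$. No differences worth noting.
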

\begin{proof}
We write the c-q states explicitly as
\begin{align*}
\omega_{XA} &= p\kb{0}\otimes\omega_0 + (1-p)\kb{1}\otimes\omega_1 ,
\\\sigma_{XB} &= q\kb{0}\otimes\sigma_0 + (1-q)\kb{1}\otimes\sigma_1,
\end{align*}
for some $p,q\in [0,1]$ and quantum states $\omega_0$, $\omega_1$, $\sigma_0$, and $\sigma_1$.
As $\omega_{XA}$ is an infinite-resource state, and hence $\omega_0$ and $\omega_1$ have mutually orthogonal supports, we can pick a POVM $\{\Lambda,\1-\Lambda\}$ such that $\Tr(\Lambda\omega_0) =\Tr((\1-\Lambda)\omega_1) = 1$ and consequently $\Tr(\Lambda\omega_1) =\Tr((\1-\Lambda)\omega_0) =0.$
Consider a pair $(\cE_0, \cE_1)$ of quantum operations, i.e., completely positive, trace non-increasing linear maps that sum to a CPTP map, defined as follows: 
\begin{align*}
\cE_0(\cdot) = q\Tr(\Lambda\cdot)\sigma_0 + (1-q)\Tr((\1-\Lambda)\cdot)\sigma_1,\quad\quad\cE_1(\cdot) = q\Tr((\1-\Lambda)\cdot)\sigma_0 + (1-q)\Tr( \Lambda\cdot)\sigma_1,
\end{align*}
and consider the corresponding CDS map
\begin{align*}
\cN = {\rm{id}}_X \otimes \cE_0 + \cF_X\otimes \cE_1,
\end{align*}
where $\cF_X$ denotes the flip channel on the classical system $X$. We then immediately get
\begin{align*}
\cN(\omega_{XA}) = \sigma_{XB},
\end{align*}
which concludes the proof.
\end{proof}
\smallskip

As mentioned in the Introduction, it is useful to consider a particular class of c-q states that lead  naturally
to a clear definition of the fundamental tasks of distillation and dilution in the RTSD. These states are parametrized by $M \in [1, \infty]$ and $q \in (0,1)$, and for $M$ large enough have SD equal to $\log M$. 
We refer to such a state as an $(M, q)$-golden unit. It is defined as follows:
	\begin{definition}[Golden unit]\label{golden-unit}
	We choose the following class of classical-quantum (c-q) states of a composite system $XQ$, where $Q$ is a qubit. Each state is labelled by a parameter $M\in [1, \infty]$ and a probability $q \in (0,1)$ and is defined as follows:
	\begin{align}
	\gamma_{XQ}^{(M,q)} \coloneqq q\kb{0}_X\otimes\pi_M + (1-q)\kb{1}\otimes\sigma^{(1)}\pi_M\sigma^{(1)},
	\end{align}
	where
	$$\pi_M \coloneqq  \left(1-\frac{1}{2M}\right)\kb{0} + \frac{1}{2M}\kb{1}$$
	is a state of a qubit $Q$ and $\sigma^{(1)}$ denotes the Pauli-$x$ matrix. We call the state $\gamma_{XQ}^{(M,q)}$ an $(M, q)$-golden unit. Note that for $M=\infty$, we have $\pi_\infty = \kb{0}$ and hence the golden unit reduces to an infinite-resource state.
	\end{definition}

The goodness of this choice of the golden unit lies in the fact that its SD has a useful scaling property, as stated in the following lemma.
\begin{lemma}
\label{lem:goldenunitscaling}
    For all  $M\in [1, \infty)$ such that $2M \ge \max\{1/q,\,1/(1-q)\}$ and $q \in (0,1)$,
	\begin{align}
	p_{\operatorname{err}}(\gamma_{XQ}^{(M,q)}) = \frac{1}{2M},
	\end{align}
	and hence its symmetric distinguishability is given by ${\rm{SD}}(\gamma_{XQ}^{(M,q)})= \log M$.
%
For $M=\infty$, and hence $p_{\operatorname{err}}(\gamma_{XQ}^{(M,q)})=0$, the SD of $\gamma_{XQ}^{(M,q)}$ is infinite. 
\end{lemma}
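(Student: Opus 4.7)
The plan is to invoke the Helstrom--Holevo Theorem (Theorem~\ref{thm:HH-state-disc}) and reduce the problem to computing the trace norm of a diagonal qubit operator, where the assumed inequality $2M \ge \max\{1/q, 1/(1-q)\}$ exactly ensures that the signs of the two diagonal entries work out cleanly.

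First I would write out the two components of $\gamma_{XQ}^{(M,q)}$ explicitly: the state on $Q$ conditioned on $X=0$ is $\pi_M = (1-\tfrac{1}{2M})\kb{0} + \tfrac{1}{2M}\kb{1}$, weighted by $q$, and the state conditioned on $X=1$ is $\sigma^{(1)}\pi_M\sigma^{(1)} = \tfrac{1}{2M}\kb{0} + (1-\tfrac{1}{2M})\kb{1}$, weighted by $1-q$. The weighted difference is therefore the diagonal operator
\begin{equation*}
q\pi_M - (1-q)\sigma^{(1)}\pi_M\sigma^{(1)} = \left(q-\tfrac{1}{2M}\right)\kb{0} + \left(\tfrac{1}{2M}-(1-q)\right)\kb{1}.
\end{equation*}

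Next I would take the trace norm by summing the absolute values of the two diagonal entries. Here is where the assumption enters: $2M \ge 1/q$ gives $\tfrac{1}{2M} \le q$, while $2M \ge 1/(1-q)$ gives $\tfrac{1}{2M}\le 1-q$. Both differences therefore have determined sign, so the trace norm equals $(q-\tfrac{1}{2M}) + ((1-q)-\tfrac{1}{2M}) = 1 - \tfrac{1}{M}$. Plugging this into Theorem~\ref{thm:HH-state-disc} yields
\begin{equation*}
p_{\operatorname{err}}(\gamma_{XQ}^{(M,q)}) = \tfrac{1}{2}\bigl(1-(1-\tfrac{1}{M})\bigr) = \tfrac{1}{2M},
\end{equation*}
and the SD value follows immediately from Definition~\ref{def:SD}: $\operatorname{SD}(\gamma_{XQ}^{(M,q)}) = -\log(2\cdot \tfrac{1}{2M}) = \log M$.

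Finally, for the $M=\infty$ case, I would simply note that $\pi_\infty = \kb{0}$, so the two conditional states $\pi_\infty$ and $\sigma^{(1)}\pi_\infty\sigma^{(1)} = \kb{1}$ have orthogonal supports; the optimal POVM element $\Lambda = \kb{0}$ in \eqref{op-err} achieves zero error, so $p_{\operatorname{err}} = 0$ and hence $\operatorname{SD} = +\infty$. There is no real obstacle in this proof---it is essentially a one-step calculation---and the only thing to be careful about is verifying that the hypothesis on $M$ is precisely what removes the absolute values; without it one would get a piecewise formula $p_{\operatorname{err}} = \min\{q, 1-q, \tfrac{1}{2M}\}$ style expression instead of the clean $\tfrac{1}{2M}$.
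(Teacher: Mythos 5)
Your proof is correct and follows essentially the same route as the paper's: both invoke the Helstrom--Holevo Theorem, reduce to the trace norm of the diagonal operator $\left(q-\tfrac{1}{2M}\right)\kb{0}+\left(\tfrac{1}{2M}-(1-q)\right)\kb{1}$, and use the hypothesis $2M\ge\max\{1/q,1/(1-q)\}$ to resolve the absolute values, with the same orthogonality argument for $M=\infty$. No gaps.
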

\begin{proof}
    In the case $M=\infty$, we trivially have $p_{\operatorname{err}}(\gamma_{XQ}^{(M,q)})=0$ as $\pi_\infty=\kb{0}$ and $\sigma^{(1)}\pi_\infty\sigma^{(1)}=\kb{1}$ are orthogonal.
    For the case $M\in[1,\infty)$, we use the Helstrom--Holevo Theorem to conclude that
	\begin{align}
	\nn  p_{\operatorname{err}}(\gamma_{XQ}^{(M,q)}) & = \frac{1}{2}\left(1- \left\|q\,\pi_M - (1-q)\sigma^{(1)}\pi_M\sigma^{(1)}\right\|_1\right)\\
	& = \frac{1}{2}\left(1- \left\|\left(q-\frac{1}{2M}\right)\kb{0} -\left(1-q - \frac{1}{2M}\right)\kb{1}\right\|_1\right)\nn\\
	& = \frac{1}{2}\left(1- \left|q-\frac{1}{2M}\right|-\left|1-q-\frac{1}{2M}\right|\right).
	\label{eq:simplify-p-err-golden-unit}
	\end{align} 
	Hence, for $2M \ge \max\{1/q,\,1/(1-q)\}$, we get
	\begin{align}
	p_{\operatorname{err}}(\gamma_{XQ}^{(M,q)}) = \frac{1}{2M},
	\end{align}
	and hence ${\rm{SD}}(\gamma_{XQ}^{(M,q)}) =  -\log\left(2 p_{\operatorname{err}}(\gamma_{XQ}^{(M,q)})\right)= \log M$. 
\end{proof}

Furthermore, we note that for $q_1,q_2\in[0,1]$ such that the corresponding distribution $\vec{q}_1\coloneqq(q_1,1-q_1)$ is majorised by $\vec{q}_2\coloneqq(q_2,1-q_2)$ (i.e., $\vec{q}_1\prec\vec{q}_2$), the golden unit $\gamma^{(M,q_1)}_{XQ}$ is dominated by $\gamma^{(M,q_2)}_{XQ}$ in the preorder of SD (see Definition~\ref{def-PreorderSD}). This is the statement of the following lemma.
\begin{lemma}
\label{lem:GoldenUnitPreorder}
    Let $M\in[1,\infty]$ and $q_1,q_2\in[0,1]$ be such that the corresponding distribution vectors satisfy $\vec{q}_1\prec\vec{q}_2$ in majorisation order.
    Then 
    \begin{align}
    \gamma_{XQ}^{(M,q_1)} \prec \gamma_{XQ}^{(M,q_2)},
    \end{align}
    and hence $\gamma_{XQ}^{(M,q_2)}$ can be transformed to $\gamma_{XQ}^{(M,q_1)}$ via a CDS map.
\end{lemma}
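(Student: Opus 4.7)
The plan is to exhibit an explicit CDS map that converts $\gamma_{XQ}^{(M,q_2)}$ into $\gamma_{XQ}^{(M,q_1)}$; since CDS maps coincide with the free operations of the RTSD in this setting (Lemma~\ref{FO-CDS}), the preorder relation $\gamma_{XQ}^{(M,q_1)} \prec \gamma_{XQ}^{(M,q_2)}$ follows at once from the construction.

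The starting point is the elementary characterisation of majorisation for two-outcome probability vectors: $\vec{q}_1 \prec \vec{q}_2$ if and only if there exists $\lambda \in [0,1]$ with $(q_1, 1-q_1) = \lambda (q_2, 1-q_2) + (1-\lambda)(1-q_2, q_2)$. Next I would note the $\mathbb{Z}_2$ symmetry of the golden unit: with $\cV_Q(\cdot) \coloneqq \sigma^{(1)}(\cdot)\sigma^{(1)}$ denoting conjugation by the Pauli-$x$ unitary on the qubit and $\cF_X$ the bit-flip permutation channel on the classical register, a direct computation using $\sigma^{(1)}\sigma^{(1)} = \1$ shows that
$$
(\cF_X \otimes \cV_Q)\!\left(\gamma_{XQ}^{(M,q_2)}\right) = \gamma_{XQ}^{(M, 1-q_2)},
$$
i.e., the joint flip simply swaps the two branches of the golden unit, for every $M \in [1,\infty]$.

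With these two ingredients in hand, I would define the map
$$
\cN \coloneqq \lambda\,{\rm{id}}_X \otimes {\rm{id}}_Q + (1-\lambda)\,\cF_X \otimes \cV_Q .
$$
This is manifestly of the CDS form~\eqref{mainform2}: the classical factors ${\rm{id}}_X$ and $\cF_X$ exhaust the two permutation channels on a bit, while the associated CP maps $\lambda\,{\rm{id}}_Q$ and $(1-\lambda)\cV_Q$ sum to the CPTP channel $\lambda\,{\rm{id}}_Q + (1-\lambda)\cV_Q$, a convex combination of unitary channels. Applying $\cN$ to $\gamma_{XQ}^{(M,q_2)}$ and invoking the symmetry from the previous step yields
$$
\cN\!\left(\gamma_{XQ}^{(M,q_2)}\right) = \lambda\,\gamma_{XQ}^{(M,q_2)} + (1-\lambda)\,\gamma_{XQ}^{(M,1-q_2)} = \gamma_{XQ}^{(M,q_1)},
$$
where the second equality is immediate from the majorisation decomposition of $q_1$ on each of the two classical branches separately.

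I do not anticipate a real obstacle: the result is essentially the pairing of the two-outcome majorisation fact with the symmetry of the golden unit under the joint action $\cF_X \otimes \cV_Q$. The argument is uniform in $M$, covering the $M = \infty$ infinite-resource case as well, since $\cV_Q$ interchanges $\pi_M$ and $\sigma^{(1)}\pi_M \sigma^{(1)}$ regardless of the value of $M$.
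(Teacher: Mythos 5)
Your proposal is correct and is essentially identical to the paper's own proof: the paper uses exactly the map $\cN = \lambda\,{\rm{id}}_X\otimes{\rm{id}}_Q + (1-\lambda)\,\cF_X\otimes\cF_Q$ (your $\cV_Q$ is the paper's $\cF_Q$) together with the same two-outcome majorisation decomposition. The only cosmetic difference is that you isolate the symmetry $(\cF_X\otimes\cV_Q)(\gamma_{XQ}^{(M,q_2)})=\gamma_{XQ}^{(M,1-q_2)}$ as an intermediate step, whereas the paper carries out the computation in one pass.
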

\begin{proof}
    As $\vec{q}_1\prec\vec{q}_2$, there exists a $\lambda\in[0,1]$ such that $\lambda q_2+(1-\lambda)(1-q_2) = q_1$. Now consider the CDS map
    \begin{align*}
    \cN = \lambda{\rm{id}}_X\otimes{\rm{id}}_Q + (1-\lambda)\cF_X\otimes\cF_Q,
    \end{align*}
    where $\cF_X$ and $\cF_Q$ denotes the flip channel on systems $X$ and $Q$, respectively. This directly gives
    \begin{align*}
    \cN\left(\gamma^{(M,q_2)}_{XQ}\right) &= \kb{0}\otimes\Big(\lambda q_2\,\pi_M+ (1-\lambda)(1-q_2)\,\cF_Q(\sigma^{(1)}\pi_M\sigma^{(1)})\Big)\\&\quad+\kb{1}\otimes\Big((1-\lambda) q_2\,\cF_Q(\pi_M)+ \lambda(1-q_2)\,\sigma^{(1)}\pi_M\sigma^{(1)})\Big)\\
    &=q_1 \kb{0}\otimes\pi_M + (1-q_1)\kb{1}\otimes\sigma^{(1)}\pi_M\sigma^{(1)} = \gamma^{(M,q_1)}_{XQ},
    \end{align*}
    which finishes the proof.
\end{proof}

When we consider CDS maps as free operations, it suffices to focus on the case $q=1/2$. We denote the corresponding golden unit simply as $\gamma_{XQ}^{(M)}$, and call it the $M$-{\em{golden unit}}. For future reference, we write it out explicitly:
\begin{align}\label{golden1/2}
	\gamma_{XQ}^{(M)} \equiv \gamma^{(M, 1/2)}_{XQ} = \frac{1}{2}\kb{0}\otimes\pi_M + \frac{1}{2}\kb{1}\otimes\sigma^{(1)}\pi_M\sigma^{(1)},
	\end{align}
	where
	$$
	\pi_M = \left(1-\frac{1}{2M}\right)\kb{0} + \frac{1}{2M}\kb{1}.
	$$
  \begin{remark}
Note that the golden unit $\gamma_{XQ}^{(M)}$ is equivalent under ${\rm{CDS}}$ maps to $\pi_M\otimes\omega$, for an arbitrary quantum state $\omega.$ To see that, consider first the CDS map
\begin{equation}
\cN = {\rm{id}}_X\otimes\bra{0}\cdot\ket{0}\omega + \cF_X\otimes\bra{1}\cdot\ket{1}\omega.
\end{equation}
and note that $\cN(\gamma_{XQ}^{(M)}) = \pi_M\otimes\omega$. To see the other direction, consider the ${\rm{CDS}}$ map
\begin{equation}
\cM = \frac{1}{2}\Big({\rm{id}}_X\otimes\Tr(\cdot)\kb{0} + \cF_X\otimes\Tr(\cdot)\kb{1}\Big) ,
\end{equation}
which gives 
$\cM(\pi_M\otimes\omega) = \frac{1}{2}\Big(\pi_M\otimes\kb{0} + \sigma^{(1)}\pi_M\sigma^{(1)}\otimes\kb{1}\Big)=\gamma_{XQ}^{(M)}$.
\end{remark} 

\subsection{A suitable error measure for approximate transformation tasks}
\label{sec:conv-dist}

In this section, we introduce the notion of a {\em{minimum conversion error}} for the transformation of one c-q state to another (both of the form defined in~\eqref{eq:basic-cq-state}) via free operations. For the 
transformation $\rho_{XA}\mapsto\sigma_{XB}$, we denote this quantity as $d'_{\operatorname{FO}}(\rho_{XA}\mapsto\sigma_{XB})$. The latter is defined in terms of a {\em{scaled trace distance}}, which we denote as
$D^\prime(\rho_{XA}, \sigma_{XB})$.
We also discuss some of the properties of the above quantities. These quantities are then used to define the \emph{one-shot approximate distillable-SD} and the \emph{one-shot approximate SD-cost} in the following sections. 
\begin{definition}
\label{def:err-divergence}
For general c-q states $\rho_{XA}$ and $\sigma_{XA}$, define the {\em{scaled trace distance}} 
  \begin{align}
  \label{eq:ErrorDivergence}
  D^\prime(\rho_{XA}, \sigma_{XA}) =\begin{cases} \frac{\frac{1}{2}\left\|\rho_{XA} - \sigma_{XA}\right\|_1}{p_{\err}(\sigma_{XA})},&\text{if }p_{\err}(\sigma_{XA}) >0,\\0,&\text{if }p_{\err}(\sigma_{XA}) =0\text{ and }\rho_{XA}=\sigma_{XA},\\\infty,&\text{if }p_{\err}(\sigma_{XA}) =0\text{ and }\rho_{XA}\neq\sigma_{XA}.\end{cases}
  \end{align}
 \end{definition}
 
 \begin{remark}
 Note that the scaling factor in the definition of the scaled trace distance $D^\prime(\rho_{XA}, \sigma_{XA})$ depends on the state
 $\sigma_{XA}$ in the second slot.
 \end{remark}

  \begin{definition}
  \label{def:conversiondistance}
   For a set of free operations denoted by $\operatorname{FO}$, we define the minimum conversion error corresponding to the scaled trace distance $D'(\cdot, \cdot)$ as follows:
   \begin{align}
  d'_{\operatorname{FO}}(\rho_{XA}\mapsto\sigma_{XB}) = \min_{\cA\in\operatorname{FO}}D'(\cA(\rho_{XA}), \sigma_{XB}),
  \end{align}
  with $\rho_{XA}$ and $\sigma_{XB}$ being general c-q states on the classical system $X$ and the quantum systems $A$ and $B$, respectively.
  \end{definition}
  
  \begin{proposition}
      Let $\rho_{XA}\equiv(p,\rho_0,\rho_1)$ and $\sigma_{XB}\equiv(q,\sigma_0,\sigma_1)$ be c-q states.
      The minimum conversion error $d'_{\operatorname{FO}}(\rho_{XA}\mapsto\sigma_{XB})$ for $\FO \in\{\cds,\cptp_A\}$ can be written as 
      \begin{align}
      d'_{\operatorname{FO}}(\rho_{XA}\mapsto\sigma_{XB}) & = \frac{\min_{\mathcal{M}\in\operatorname{FO}}\left\Vert
\mathcal{M}(\rho_{XA})-\sigma_{XB}\right\Vert _{1}}%
{\min_{\mathcal{N}\in\operatorname{FO}}\left\Vert \mathcal{N}(\sigma_{XB})-\gamma^{(\infty,q)}_{XQ}\right\Vert _{1}},
      \end{align}
      where $\gamma^{(\infty,q)}_{XQ} = q \kb{0}_X \otimes \kb{0}_Q + (1-q) \kb{1}_X \otimes \kb{1}_Q$ is an infinite resource state.
  \end{proposition}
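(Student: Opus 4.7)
The plan is to reduce the claimed equality to the identity
\begin{equation*}
\min_{\mathcal{N}\in\FO}\|\mathcal{N}(\sigma_{XB})-\gamma^{(\infty,q)}_{XQ}\|_1 = 2\,p_{\err}(\sigma_{XB}),
\end{equation*}
valid for $\FO\in\{\cds,\cptp_A\}$. Once this is in hand, Definitions~\ref{def:err-divergence} and~\ref{def:conversiondistance} immediately yield the proposition, since the scalar factor $1/(2p_{\err}(\sigma_{XB}))$ can be pulled outside the infimum over $\mathcal{M}$. I would restrict the main argument to the generic case $p_{\err}(\sigma_{XB})>0$; when $\sigma_{XB}$ is itself an infinite-resource state, both the numerator and the denominator of the proposed ratio vanish, and this case can be handled separately using Lemma~\ref{lem:InfToInf}.

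For the upper bound, I would exhibit an explicit element of $\cptp_A\subset\cds$ achieving $2p_{\err}(\sigma_{XB})$. Let $\Lambda$ attain the minimum in the Helstrom--Holevo formula for $p_{\err}(\sigma_{XB})$ and define the measure-and-prepare channel $\mathcal{E}:B\to Q$ by $\mathcal{E}(\cdot)\coloneqq\Tr[(I-\Lambda)\cdot]\,\kb{0}+\Tr[\Lambda\cdot]\,\kb{1}$, together with $\mathcal{N}=\id_X\otimes\mathcal{E}$. Both $\mathcal{N}(\sigma_{XB})$ and $\gamma^{(\infty,q)}_{XQ}$ are block-diagonal across the classical register $X$, so the trace norm of their difference splits additively into two $X$-blocks. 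A short calculation in each block yields contributions $2q\Tr(\Lambda\sigma_0)$ and $2(1-q)\Tr((I-\Lambda)\sigma_1)$, which sum to exactly $2p_{\err}(\sigma_{XB})$.

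For the matching lower bound, which must apply to every $\mathcal{N}\in\cds$ (and hence to every $\mathcal{N}\in\cptp_A$), I would invoke the variational representation $\tfrac12\|\rho-\sigma\|_1\ge\Tr[T(\rho-\sigma)]$ for any $0\le T\le I$, with the specific choice
\begin{equation*}
T\coloneqq\kb{0}_X\otimes\kb{1}_Q + \kb{1}_X\otimes\kb{0}_Q.
\end{equation*}
This $T$ satisfies $\Tr(T\gamma^{(\infty,q)}_{XQ})=0$, so writing $\mathcal{N}(\sigma_{XB})=\kb{0}_X\otimes\widetilde{\tau}_0+\kb{1}_X\otimes\widetilde{\tau}_1$, the variational bound gives $\tfrac12\|\mathcal{N}(\sigma_{XB})-\gamma^{(\infty,q)}_{XQ}\|_1\ge\Tr[T\,\mathcal{N}(\sigma_{XB})]$. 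A direct calculation shows that $\Tr[T\,\mathcal{N}(\sigma_{XB})]$ is precisely the Bayesian error probability of $\mathcal{N}(\sigma_{XB})$ obtained via the computational-basis POVM $\{\kb{0}_Q,\kb{1}_Q\}$ on $Q$; this is at least $p_{\err}(\mathcal{N}(\sigma_{XB}))$ since $p_{\err}$ is an infimum over POVMs, and the latter is at least $p_{\err}(\sigma_{XB})$ by the monotonicity established in Lemma~\ref{lem:MonoErrProbCDS}.

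The key subtlety is that a CDS map may alter the prior of $\mathcal{N}(\sigma_{XB})$, so the lower-bound argument must be insensitive to this shift; the test $T$ above is precisely tailored to this end, as its overlap with $\gamma^{(\infty,q)}_{XQ}$ vanishes regardless of the output prior and the induced test on $\mathcal{N}(\sigma_{XB})$ only references quantities intrinsic to the corresponding hypothesis-testing problem. Beyond this observation, the proof reduces to the explicit computations sketched above, and I do not expect any significant obstacle.
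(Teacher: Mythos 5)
Your proof is correct, and the reduction to the single identity $\min_{\mathcal{N}\in\FO}\lVert\mathcal{N}(\sigma_{XB})-\gamma^{(\infty,q)}_{XQ}\rVert_1 = 2\,p_{\err}(\sigma_{XB})$ is exactly the reduction the paper makes (its Eq.~\eqref{eq:err-prob-TD-to-inf-resource}, proved as Lemmas~\ref{thm:CDS-dist-max-res-err-prob} and \ref{thm:CPTP-A-dist-max-res-err-prob}). Your achievability step is essentially identical to the paper's: the same measure-and-prepare channel built from the Helstrom-optimal POVM, with the same block-diagonal trace-norm computation. Where you genuinely diverge is the converse. The paper first derives a full primal/dual SDP for the trace-distance conversion error (Proposition~\ref{prop:CDS-conv-dist-SDPs}), invokes weak duality with the feasible choice $W_{B'}=\kb{1}$, $Z_{B'}=\kb{0}$, and then identifies the resulting value $\max\{\Tr[Y_B]: Y_B\le p\rho_B^0,\ Y_B\le(1-p)\rho_B^1\}$ with $p_{\err}$ via the greatest-lower-bound-operator characterization imported from the literature; it also needs a separate lemma (Lemma~\ref{lem:max-resource-conv}) to show the quantity is independent of the target prior $q$ before fixing $q=1/2$. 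Your argument replaces all of this with the single test operator $T=\kb{0}_X\otimes\kb{1}_Q+\kb{1}_X\otimes\kb{0}_Q$, the elementary variational bound $\tfrac12\lVert A\rVert_1\ge\Tr(TA)$ for traceless $A$, and the already-proved monotonicity of $p_{\err}$ under CDS maps (Lemma~\ref{lem:MonoErrProbCDS}); the fact that $\Tr(T\gamma^{(\infty,q)}_{XQ})=0$ for every $q$ makes the prior-independence automatic rather than a separate lemma. Your route is shorter and more self-contained; what the paper's route buys is the SDP formulation itself, which it reuses for numerics and for the other SDP results in the appendices. One small caveat: in the degenerate case $p_{\err}(\sigma_{XB})=0$ you assert that numerator and denominator both vanish, but the numerator vanishes only when an exact free conversion exists; since you explicitly defer that case and it is handled by the same case analysis as in Definition~\ref{def:err-divergence}, this does not affect the substance of the proof.
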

  
  \begin{proof}
      By inspecting the definition of $d'_{\operatorname{FO}}(\rho_{XA}\mapsto\sigma_{XB})$, we see that it remains to prove the following:
      \begin{equation}
          p_{\err}(\sigma_{XA}) = 
          \min_{\mathcal{N}\in\operatorname{FO}}\left\Vert \mathcal{N}(\sigma_{XB})-\gamma^{(\infty,q)}_{XQ}\right\Vert _{1},
          \label{eq:err-prob-TD-to-inf-resource}
      \end{equation}
      where $\operatorname{FO}$ is either CPTP$_A$ or CDS. We leave the proof of the equality above to Appendix~\ref{app:proof-of-err-ratio-interpretation} (see Lemmas~\ref{thm:CDS-dist-max-res-err-prob} and \ref{thm:CPTP-A-dist-max-res-err-prob} therein).
  \end{proof}
	
	\begin{remark}
	The reason for considering the scaled trace distance $D'(\cdot, \cdot)$ instead of the usual trace distance 
	as an error measure for transformations in the RTSD is that using the latter would allow for the unreasonable possibility of a finite-resource state being arbitrarily close to an infinite-resource state.
	In particular, as any infinite-resource state can be transformed to any other c-q state via CDS maps (see Lemma~\ref{lem:InfToInf}), this would imply that, for any finite allowed error measured in trace norm, the transformation
	\begin{align}
	\rho^{(n)}_{XA}\equiv \left(p,\rho_0^{\otimes n},\rho_1^{\otimes n}\right) \mapsto \sigma^{(m)}_{XB}\equiv \left(q,\sigma_0^{\otimes m},\sigma_1^{\otimes m}\right)
	\end{align}
	would be possible at an infinite rate (as long as $\rho_0\neq\rho_1$).
	To see this, for all $n\in\N$, pick  a POVM $\{\Lambda_n,\1-\Lambda_n\}$ on the composite system of $n$ copies of system $A$ such that both type~I and II error probabilities corresponding to the source $(p,\rho_0^{\otimes n},\rho_1^{\otimes n})$ vanish asymptotically, i.e.
	\begin{align}
	\lim_{n\to\infty}\Tr\!\left((\1-\Lambda_n)\rho_0^{\otimes n}\right) = \lim_{n\to\infty}\Tr\!\left(\Lambda_n\rho_1^{\otimes n}\right) =0.
	\end{align}
	This is possible because $\rho_0\neq\rho_1$.\comment{because the minimum error probability $p_{\err}(p,\rho_0^{\otimes n},\rho^{\otimes n}_{1})$ converges to zero for $n\to\infty$ at an exponential rate equal to the Chernoff divergence $\xi(\rho_0,\rho_1)= -\log(\min_{0\le s\le1}\Tr(\rho_0^s\rho_1^{1-s}))>0$ \cite{ACMBMAV07,nussbaum2009chernoff}.} Hence, considering the infinite-resource state 
	$$\omega_{XQ} = p\kb{0}\otimes\kb{0} + (1-p)\kb{1}\otimes\kb{1}, $$ 
	with the quantum system being a qubit $Q$, and the measure-prepare channel $\cE_n(\cdot) = \Tr\!\left(\Lambda_n\cdot\right)\kb{0} + \Tr\!\left((\1-\Lambda_n)\cdot\right)\kb{1}$, we see that
	\begin{align*}
	\lim_{n\to\infty}\frac{1}{2}\left\|({\rm{id}}\otimes\cE_n)(\rho^{(n)}_{XA}) - \omega_{XQ}\right\|_1 =\lim_{n\to\infty}\Big( p\Tr\!\left((\1-\Lambda_n)\rho_0^{\otimes n}\right)+(1-p)\Tr\!\left(\Lambda_n\rho_1^{\otimes n})\right)\Big) =0.
	\end{align*}
	Therefore, as the infinite-resource state $\omega_{XQ}$ can be transformed to any other c-q state without error, we see that for every $\eps>0$, we can pick $n\in\N$ large enough such that for every c-q state $\sigma_{XB}\equiv(q,\sigma_0,\sigma_1)$ and $m\in\N$ there exists a CDS map $\cN$ such that
	\begin{align}
	\frac{1}{2}\left\|\cN(\rho_{XA}^{(n)})-\sigma_{XB}^{(m)}\right\|_1 \le \eps.
	\end{align} 
	Hence, the transformation $\rho^{(n)}_{XA}\equiv \left(p,\rho_0^{\otimes n},\rho_1^{\otimes n}\right) \mapsto \sigma^{(m)}_{XB}\equiv \left(q,\sigma_0^{\otimes m},\sigma_1^{\otimes m}\right)$ is possible at an infinite rate and with arbitrarily small error measured in trace distance. For $q=p$, this transformation can also be performed at an infinite rate by just using $\cptp_A$ operations.
	
    Note that $D'(\cdot, \cdot)$ is exactly defined in such a way that for every infinite-resource state $\omega_{XA}$ and $\rho_{XA}\neq\omega_{XA}$, we have $D'(\rho_{XA}, \omega_{XA})=\infty$. Therefore, the problem discussed above, in which we obtain unreasonable infinite rates in the transformations in the RTSD, does not occur when we choose $D'(\cdot, \cdot)$ as the error measure. Moreover, in the following, we see that $D'(\cdot, \cdot)$ has many desirable properties that lead to reasonable asymptotic rates in SD distillation, SD dilution, and the transformation of general elementary quantum sources.
	\end{remark} 
	
  The scaled trace distance $D'$ satisfies the data-processing inequality under CDS maps:
  \begin{lemma}[DPI for $D'$ under CDS maps]
  \label{lem:DPI-D-prime}
  Let $\rho_{XA},\sigma_{XA}$ be two c-q states and $\cN$ a CDS map. Then
  \begin{align}
  \label{eq:DPID'}
   D'(\cN(\rho_{XA}), \cN(\sigma_{XA})) \le D'(\rho_{XA}, \sigma_{XA}). 
  \end{align}
  \end{lemma}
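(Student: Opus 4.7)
The plan is to prove the DPI by separately analyzing how the numerator and the denominator of $D'$ behave under the CDS map $\cN$, and then combining these to conclude. The numerator $\tfrac{1}{2}\|\cdot-\cdot\|_1$ is just the trace distance, which satisfies the standard DPI under any CPTP map; since CDS maps are in particular CPTP, we get $\tfrac{1}{2}\|\cN(\rho_{XA})-\cN(\sigma_{XA})\|_1 \le \tfrac{1}{2}\|\rho_{XA}-\sigma_{XA}\|_1$. For the denominator, the key input is Lemma~\ref{lem:MonoErrProbCDS}, which states that the minimum error probability is monotone non-decreasing under CDS maps, i.e.\ $p_{\err}(\cN(\sigma_{XA})) \ge p_{\err}(\sigma_{XA})$.

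Therefore, whenever $p_{\err}(\sigma_{XA})>0$, both $p_{\err}(\sigma_{XA})$ and $p_{\err}(\cN(\sigma_{XA}))$ are strictly positive, so $D'$ is computed by the ratio formula on both sides of \eqref{eq:DPID'}. The numerator decreases (or stays equal) under $\cN$, and the denominator increases (or stays equal) under $\cN$; combining yields the desired inequality
\begin{equation*}
D'(\cN(\rho_{XA}),\cN(\sigma_{XA})) = \frac{\tfrac{1}{2}\|\cN(\rho_{XA})-\cN(\sigma_{XA})\|_1}{p_{\err}(\cN(\sigma_{XA}))} \le \frac{\tfrac{1}{2}\|\rho_{XA}-\sigma_{XA}\|_1}{p_{\err}(\sigma_{XA})} = D'(\rho_{XA},\sigma_{XA}).
\end{equation*}

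It then remains to handle the two boundary cases in Definition~\ref{def:err-divergence} where $p_{\err}(\sigma_{XA})=0$. If $p_{\err}(\sigma_{XA})=0$ and $\rho_{XA}\neq\sigma_{XA}$, the right-hand side of \eqref{eq:DPID'} is $+\infty$ and the inequality is vacuous. If instead $p_{\err}(\sigma_{XA})=0$ and $\rho_{XA}=\sigma_{XA}$, then $\cN(\rho_{XA})=\cN(\sigma_{XA})$, so the numerator on the left vanishes; in both sub-cases $p_{\err}(\cN(\sigma_{XA}))>0$ (giving left-hand side equal to $0$) and $p_{\err}(\cN(\sigma_{XA}))=0$ with $\cN(\rho_{XA})=\cN(\sigma_{XA})$ (giving left-hand side equal to $0$ by the second clause of the definition), the inequality holds.

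No genuine obstacle is expected: the argument is essentially a repackaging of two already-established facts (trace-norm contraction and monotonicity of $p_{\err}$). The only mild subtlety is ensuring the denominator never causes an issue, which is precisely why Definition~\ref{def:err-divergence} was set up in a piecewise manner; the case analysis above dispatches this cleanly.
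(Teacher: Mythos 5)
Your proof is correct and follows essentially the same route as the paper's: the paper also combines the trace-distance data-processing inequality for the numerator with Lemma~\ref{lem:MonoErrProbCDS} for the denominator. The only difference is that you additionally spell out the boundary cases where $p_{\err}(\sigma_{XA})=0$, which the paper leaves implicit; this is a harmless (and slightly more careful) addition.
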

\begin{proof}
 The statement directly follows from the data-processing inequality for the trace distance under general CPTP maps, i.e.,
 \begin{align}\frac{1}{2}\Big\|\cN(\rho_{XA}) -\cN(\sigma_{XA})\Big\|_1 \le\frac{1}{2}\Big\|\rho_{XA} - \sigma_{XA}\Big\|_1
 \end{align} 
 and the monotonicity of the minimum error probability under CDS maps proven in Lemma~\ref{lem:MonoErrProbCDS}, i.e.
 \begin{align}
 p_{\err}(\cN(\sigma_{XA})) \ge  p_{\err}(\sigma_{XA}).
 \end{align}
 This concludes the proof.
 \end{proof}
 
 The following lemma now establishes a bound relating the minimum error probabilities of two c-q states $\rho_{XA}$ and $\sigma_{XA}$, involving a multiplicative term related to their scaled trace distance $D^\prime$. This lemma is the key ingredient for proving all converses in approximate asymptotic SD-distillation, SD-dilution, and the transformation of general elementary quantum sources.

 \begin{lemma}
  	\label{lem:PerrDivBound}
  	Let $\rho_{XA}$ and $\sigma_{XA}$ be c-q states such that $D'(\rho_{XA}, \sigma_{XA})$ is finite. Then
  	\begin{align}
  	p_{\err}(\rho_{XA}) \le \Big(D'(\rho_{XA}, \sigma_{XA})+1\Big)p_{\err}(\sigma_{XA}).
  	\end{align}
  \end{lemma}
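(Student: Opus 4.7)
The plan is to exploit the fact that $p_{\err}$ can be written as the minimum of a family of linear functionals of the state, so that a single near-optimal measurement for $\sigma_{XA}$ provides a valid upper bound on $p_{\err}(\rho_{XA})$ that differs by a trace-distance term.

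First I would dispose of the case $p_{\err}(\sigma_{XA}) = 0$. By hypothesis $D'(\rho_{XA},\sigma_{XA})$ is finite, so the definition forces $\rho_{XA} = \sigma_{XA}$, whence both sides of the claimed inequality are zero. From here on, I will assume $p_{\err}(\sigma_{XA}) > 0$, so that the inequality to prove is equivalent to
\begin{equation}
p_{\err}(\rho_{XA}) \;\le\; p_{\err}(\sigma_{XA}) + \tfrac{1}{2}\|\rho_{XA} - \sigma_{XA}\|_1.
\end{equation}

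Next I would rewrite $p_{\err}$ in a more convenient form. For a c-q state $\tau_{XA} = p\kb{0}\otimes\tau_0 + (1-p)\kb{1}\otimes\tau_1$ and a POVM $\{M_0,M_1\}$ with $M_0 + M_1 = \1$, the associated average error probability equals $\Tr(\Pi\,\tau_{XA})$ where $\Pi \coloneqq \kb{0}\otimes M_1 + \kb{1}\otimes M_0$. Thus
\begin{equation}
p_{\err}(\tau_{XA}) \;=\; \min_{\Pi}\;\Tr(\Pi\,\tau_{XA}),
\end{equation}
where the minimum ranges over operators of the form above; in particular, every such $\Pi$ satisfies $0 \le \Pi \le \1$. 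Let $\Pi^\star$ be an optimizer for $\sigma_{XA}$, so that $p_{\err}(\sigma_{XA}) = \Tr(\Pi^\star\sigma_{XA})$. Using $\Pi^\star$ as a (possibly suboptimal) choice for $\rho_{XA}$ gives
\begin{equation}
p_{\err}(\rho_{XA}) \;\le\; \Tr(\Pi^\star\rho_{XA}) \;=\; p_{\err}(\sigma_{XA}) + \Tr\!\big(\Pi^\star(\rho_{XA}-\sigma_{XA})\big).
\end{equation}

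It then remains to bound the cross term by $\tfrac{1}{2}\|\rho_{XA}-\sigma_{XA}\|_1$. Setting $\Delta \coloneqq \rho_{XA} - \sigma_{XA}$, I would use that $\Delta$ is Hermitian with $\Tr\Delta = 0$, and that $0 \le \Pi^\star \le \1$. Writing $\Delta = \Delta_+ - \Delta_-$ in its Jordan decomposition gives $\Tr\Delta_+ = \Tr\Delta_- = \tfrac{1}{2}\|\Delta\|_1$, and hence
\begin{equation}
\Tr(\Pi^\star\Delta) \;\le\; \Tr(\Pi^\star\Delta_+) \;\le\; \Tr(\Delta_+) \;=\; \tfrac{1}{2}\|\rho_{XA}-\sigma_{XA}\|_1.
\end{equation}
Combining this with the previous display and invoking the definition of $D'$ yields the claimed bound. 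The only remotely delicate point is ensuring one uses the sharp bound $\tfrac{1}{2}\|\Delta\|_1$ rather than the naive $\|\Pi^\star\|_\infty\|\Delta\|_1 = \|\Delta\|_1$, which hinges crucially on the tracelessness of $\Delta$; apart from that the argument is essentially a one-line linearization of the error probability.
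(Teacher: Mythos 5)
Your proposal is correct and follows essentially the same route as the paper's proof: both use the optimal measurement for $\sigma_{XA}$ as a suboptimal one for $\rho_{XA}$, identify the cross term as $\Tr\big(\Pi^\star(\rho_{XA}-\sigma_{XA})\big)$ with $0\le\Pi^\star\le\1$, and bound it by $\tfrac{1}{2}\|\rho_{XA}-\sigma_{XA}\|_1$ via the tracelessness of the difference. Your explicit handling of the degenerate case $p_{\err}(\sigma_{XA})=0$ is a minor tidying that the paper leaves implicit in the definition of $D'$.
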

  
  \begin{proof}
  	First, it is helpful to note that by writing the c-q states explicitly as
  	\begin{align}
  	&\rho_{XA} = p\kb{0}\otimes\rho_0 + (1-p)\kb{1}\otimes\rho_1,\label{rxa}\\
  	&\sigma_{XA} = q\kb{0}\otimes\sigma_0 + (1-q)\kb{1}\otimes\sigma_1,
  	\end{align}
  	they can be block-diagonalised in the same basis and hence we can write
  	\begin{align}
  	\frac{1}{2}\left\|\rho_{XA} - \sigma_{XA}\right\|_1 = \frac{1}{2}\Big(\left\|p\rho_0 - q\sigma_0\right\|_1 +\left\|(1-p)\rho_1 - (1-q)\sigma_1\right\|_1\Big).
  	\end{align}
  	We can hence bound the minimum error probability of $\rho_{XA}$ as
  	\begin{align}
  	\nn p_{\err}(\rho_{XA}) &= \min_{0\le\Lambda\le \1}\Big(p\Tr\!\left(\Lambda\rho_0\right) + (1-p)\Tr\!\left((\1-\Lambda)\rho_1\right)\Big)\\
  	& = \nn\min_{0\le\Lambda\le \1}\Big(q\Tr\!\left(\Lambda\sigma_0\right) + (1-q)\Tr\!\left((\1-\Lambda)\sigma_1\right)\\&\nn\quad\quad+\Tr\!\left(\Lambda (p\rho_0-q\sigma_0)\right) + \Tr\!\left((\1-\Lambda) ((1-p)\rho_1-(1-q)\sigma_1)\right)
  	\Big)\\
  	& = \nn\min_{0\le\Lambda\le \1}\Big(q\Tr\!\left(\Lambda\sigma_0\right) + (1-q)\Tr\!\left((\1-\Lambda)\sigma_1\right) \\ 
  	  &\nn\quad\quad+\Tr\!\left((\kb{0} \otimes \Lambda + \kb{1}\otimes (\1-\Lambda) ) (\rho_{XA}-\sigma_{XA})\right)
  	\Big)\\
  	& \nn \le \min_{0\le\Lambda\le \1}\Big(q\Tr\!\left(\Lambda\sigma_0\right) + (1-q)\Tr\!\left((\1-\Lambda)\sigma_1\right)\Big)+
  	\frac{1}{2}\left\|\rho_{XA} - \sigma_{XA}\right\|_1 \\
  	&=p_{\err}(\sigma_{XA}) + \frac{1}{2}\left\|\rho_{XA} - \sigma_{XA}\right\|_1 \nn
  	\\&= \Big(D'(\rho_{XA}, \sigma_{XA})+1\Big)p_{\err}(\sigma_{XA}).
  	\end{align}	
  	This concludes the proof.
  \end{proof}
 
 \comment{ \begin{remark}
  Recall that $D'(\cdot, \cdot)$ is a scaled trace distance, where the scaling factor is given by the inverse of the minimum error probability of the target state. This means that for a particular target state $\sigma_{XA}$, a ball of radius $\eps\ge 0$ with respect to $D'$, centered at this state, which is given by
  \begin{align}
\cB'_\eps(\sigma_{XA}) = \left\{\rho_{XA} \Big| D'(\rho_{XA}, \sigma_{XA}) \le p_{\err}(\sigma_{XA})\right\},
  \end{align} 
  is equal to the corresponding ball with respect to trace distance $\frac{1}{2}\|\cdot -\cdot\|_1$ with radius $p_{\err}(\sigma_{XA})\eps$ denoted by $\cB^1_{p_{\err}(\sigma_{XA})\eps}(\sigma_{XA})$, i.e.
  \begin{align}
    \cB'_\eps(\sigma_{XA}) = \cB^1_{p_{\err}(\sigma_{XA})\eps}(\sigma_{XA}).
  \end{align}

 Hence, we see that the higher the resource of the  target states $\sigma_{XA}$,  i.e.~the smaller $p_{\err}(\sigma_{XA})$, the more discrete does the topology around $\sigma_{XA}$ become. In particular, for an infinite resource target state, i.e.~a state $\sigma_{XA}$ for which $p_{\err}(\sigma_{XA})=0$, there is no other state which is at a finite $D'$ distance from it.

  Moreover, Lemma~\ref{lem:PerrDivBound} establishes that the amount of resource of the states within this $D'$ ball with a small radius $\eps$  cannot be significantly lower than the amount of resource of the target state $\sigma_{XA}$ itself.
  
  
  \end{remark}
  }
  
\subsubsection{Semi-definite program for the scaled trace distance $D'$}

We now prove that the scaled trace distance $D^\prime(\cdot, \cdot)$ can be calculated by means of a semi-definite program (SDP). SDPs can be computed efficiently by numerical solvers~\cite{Vandenberghe1996}. As semi-definite programming is a powerful theoretical and numerical tool for quantum information theory, with a plethora of applications, we expect that the following SDP characterizations of $D^\prime(\cdot, \cdot)$ may be useful for a further understanding of this quantity.

\begin{proposition}
\label{prop:SDP-div-err}
For general c-q states $\rho_{XA}$ and $\sigma_{XA}$ with $p_{\err}(\sigma_{XA}) >0$, the scaled trace distance, $D^\prime(\rho_{XA}, \sigma_{XA})$, in Definition~\ref{def:err-divergence} is given by the following semi-definite program:
\begin{equation}
\label{eq:SDP-primal-div-err}
\begin{split}
\max \ & t \\
\text{s.t.} \ & -I_{XA}\le L_{XA}\le I_{XA},\\
& -tI_A\le P_A\le t I_A,\\
&t-\tr P_A(q\sigma_{0}-(1-q)\sigma_{1})=\tr L_{XA}(\rho_{XA}-\sigma_{XA}).
\end{split}
\end{equation}
The dual SDP is as follows:
\begin{equation}
\label{eq:SDP-dual-div-err}
\begin{split}
\min \ & \tr(B_{XA}+C_{XA}) \\
\text{s.t.} \ & B_{XA}, C_{XA}, D_A, E_A \ge 0, s\in \mathbb{R},\\
& B_{XA}-C_{XA} = s(\rho_{XA}-\sigma_{XA}),\\
& D_A-E_A = s(q\sigma_0-(1-q)\sigma_1),\\
& \tr (D_A+E_A)\le s-1.
\end{split}
\end{equation}
If $p_{\err}(\sigma_{XA}) >0$, then strong duality holds, so that the optimal value in \eqref{eq:SDP-primal-div-err} is equal to the optimal value in \eqref{eq:SDP-dual-div-err}.
\end{proposition}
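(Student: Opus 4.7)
My plan is to verify the proposition in three stages: confirm the primal SDP value, derive the dual via Lagrangian duality, and check strong duality via Slater's condition.

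First, I would invoke Theorem~\ref{thm:HH-state-disc} to write $2p_{\err}(\sigma_{XA}) = 1 - \|q\sigma_{0}-(1-q)\sigma_{1}\|_{1}$, so that
\begin{equation*}
D'(\rho_{XA},\sigma_{XA}) = \frac{\|\rho_{XA}-\sigma_{XA}\|_{1}}{1 - \|q\sigma_{0}-(1-q)\sigma_{1}\|_{1}}.
\end{equation*}
Combining this with the standard variational formula $\|Y\|_{1} = \max_{-I \le M \le I}\tr(MY)$ for Hermitian $Y$, any primal feasible triple $(L_{XA},P_{A},t)$ satisfies $\tr(L_{XA}(\rho_{XA}-\sigma_{XA})) \le \|\rho_{XA}-\sigma_{XA}\|_{1}$, while $-tI \le P_{A} \le tI$ yields $\tr(P_{A}(q\sigma_{0}-(1-q)\sigma_{1})) \le t\|q\sigma_{0}-(1-q)\sigma_{1}\|_{1}$. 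Substituting into the affine equality constraint gives $t(1-\|q\sigma_{0}-(1-q)\sigma_{1}\|_{1}) \le \|\rho_{XA}-\sigma_{XA}\|_{1}$, so $t \le D'(\rho_{XA},\sigma_{XA})$. The matching lower bound follows by choosing $L_{XA}$ to be the Hermitian contraction that attains the trace-norm maximum for $\rho_{XA}-\sigma_{XA}$ and $P_{A}$ to be $t$ times the Hermitian contraction attaining the trace-norm maximum for $q\sigma_{0}-(1-q)\sigma_{1}$, with $t$ set to $D'(\rho_{XA},\sigma_{XA})$.

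Next, to derive the dual, I would introduce Lagrange multipliers $B_{XA},C_{XA} \ge 0$ for the constraints $L_{XA}\le I$ and $-L_{XA}\le I$, multipliers $D_{A},E_{A} \ge 0$ for $P_{A}\le tI$ and $-P_{A}\le tI$, and $s\in\mathbb{R}$ for the affine equality. Collecting the coefficients of $L_{XA}$, $P_{A}$, and $t$ in the Lagrangian and forcing them to vanish (so that the supremum over these free variables is finite) yields precisely $B_{XA}-C_{XA}=s(\rho_{XA}-\sigma_{XA})$, $D_{A}-E_{A}=s(q\sigma_{0}-(1-q)\sigma_{1})$, and $\tr(D_{A}+E_{A})=s-1$, while reducing the dual objective to $\tr(B_{XA}+C_{XA})$. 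The inequality form $\tr(D_{A}+E_{A})\le s-1$ stated in \eqref{eq:SDP-dual-div-err} is equivalent to the equality form emerging from the Lagrangian, because any slack can be absorbed by adding the same positive multiple of $I$ to both $D_{A}$ and $E_{A}$, which preserves $D_{A}-E_{A}$ and leaves $\tr(B_{XA}+C_{XA})$ untouched.

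Finally, for strong duality I would verify Slater's condition for the dual. The hypothesis $p_{\err}(\sigma_{XA})>0$ is equivalent to $\|q\sigma_{0}-(1-q)\sigma_{1}\|_{1}<1$. Writing $q\sigma_{0}-(1-q)\sigma_{1}=Y_{+}-Y_{-}$ as its Jordan decomposition, I would set $D_{A}=sY_{+}+\epsilon I$, $E_{A}=sY_{-}+\epsilon I$, and pad $B_{XA},C_{XA}$ analogously to make them strictly positive while satisfying $B_{XA}-C_{XA}=s(\rho_{XA}-\sigma_{XA})$. The strict inequality $\tr(D_{A}+E_{A})<s-1$ then reads $s\|q\sigma_{0}-(1-q)\sigma_{1}\|_{1}+2\epsilon\dim A<s-1$, equivalently $2sp_{\err}(\sigma_{XA})>2\epsilon\dim A+1$, which is satisfied by first picking $s$ large and then $\epsilon$ small; the resulting strict dual feasibility delivers strong duality. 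I expect the main obstacle to be the sign bookkeeping in the Lagrangian derivation and the verification that the inequality constraint on $\tr(D_{A}+E_{A})$ is harmless; the remainder reduces to standard trace-norm variational principles plus the Slater argument, which is enabled crucially by the hypothesis $p_{\err}(\sigma_{XA})>0$.
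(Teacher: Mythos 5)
Your proposal is correct, and the first two stages essentially mirror the paper's argument: the paper likewise rewrites $D^\prime$ via the Helstrom--Holevo theorem and the variational formula $\|Y\|_1=\max_{-I\le L\le I}\tr(LY)$, absorbs the bilinear term $tP_A$ into a single variable to obtain \eqref{eq:SDP-primal-div-err}, and then derives \eqref{eq:SDP-dual-div-err} by casting the primal in standard form and computing the adjoint map --- the same computation your Lagrangian performs in different notation. Your observation that the slack in $\tr(D_A+E_A)\le s-1$ can be absorbed by adding a multiple of the identity to both $D_A$ and $E_A$ is a correct resolution of the equality-versus-inequality discrepancy (the paper sidesteps it because the implicit constraint $t\ge0$ directly yields the inequality form). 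Where you genuinely diverge is the strong-duality step: the paper exhibits a \emph{strictly feasible primal point} (built from the projections onto the nonnegative and negative eigenspaces of $\rho_{XA}-\sigma_{XA}$ with $t=\tfrac12\|\rho_{XA}-\sigma_{XA}\|_1$) together with a feasible dual point, which forces a separate treatment of the degenerate case $\rho_{XA}=\sigma_{XA}$ since there $t=0$ and the constraint $-tI_A\le P_A\le tI_A$ cannot be strict. Your route via Slater's condition for the \emph{dual} --- padding the Jordan decompositions with $\epsilon I$ and taking $s$ large, which is enabled precisely by $p_{\err}(\sigma_{XA})>0$ --- avoids that case split entirely and is arguably cleaner; just make sure to state explicitly that the primal is feasible (e.g., $t=0$, $L_{XA}=0$, $P_A=0$), which is what the standard SDP duality theorem requires alongside strict dual feasibility to conclude that the optimal values coincide.
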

The proof of Proposion~\ref{prop:SDP-div-err} can be found in Appendix~\ref{app:dual-derivation}.

\subsubsection{Semi-definite program for the minimum conversion error}

The one-shot transformation task from a source $\rho_{XA}\coloneqq p|0\rangle\!
\langle0|\otimes\rho_{0}+\left(  1-p\right)  |1\rangle\!\langle1|\otimes\rho
_{1}$\ to a target $\sigma_{XA^{\prime}}\coloneqq q|0\rangle\!\langle0|\otimes\sigma
_{0}+\left(  1-q\right)  |1\rangle\!\langle1|\otimes\sigma_{1}$ using $\cds$ as the set of free operations can be phrased
as the following optimization task:
\begin{equation}
d'_{CDS}(\rho_{XA}\mapsto\sigma_{XA'}) =\min_{\mathcal{N}_{XA\rightarrow XA^{\prime}}\in\text{CDS}}  D^{\prime
}(\mathcal{N}_{XA\rightarrow XA^{\prime}}(\rho_{XA}), \sigma_{XA^{\prime}
})  .\label{eq:opt-conv-error}
\end{equation}
We now prove that the minimum conversion error in \eqref{eq:opt-conv-error} can be calculated by means of a semi-definite program.

\begin{proposition}
\label{prop:SDP-conv-err}
The minimum conversion error in \eqref{eq:opt-conv-error} can be evaluated by
the following semi-definite program:%
\begin{equation}
\min_{\substack{B_{XA^{\prime}},C_{XA^{\prime}},D_{A^{\prime}},E_{A^{\prime}%
}\geq0,\\\Omega_{AA^{\prime}}^{0},\Omega_{AA^{\prime}}^{1}\geq0,s\geq
1}}\left\{
\begin{array}
[c]{c}%
\operatorname{Tr}[B_{XA^{\prime}}+C_{XA^{\prime}}]:\\
B_{XA^{\prime}}-C_{XA^{\prime}}=\tau_{XA^{\prime}}-s\sigma_{XA^{\prime}},\\
D_{A^{\prime}}-E_{A^{\prime}}=s(q\sigma_{0}-\left(  1-q\right)  \sigma_{1}),\\
\operatorname{Tr}[D_{A^{\prime}}+E_{A^{\prime}}]\leq s-1,\\
\operatorname{Tr}_{A^{\prime}}[\Omega_{AA^{\prime}}^{0}+\Omega_{AA^{\prime}%
}^{1}]=sI_{A},\\
\langle0|_{X}\tau_{XA^{\prime}}|0\rangle_{X}=\operatorname{Tr}_{A}[(p\rho
_{A}^{0})^{T}\Omega_{AA^{\prime}}^{0}]+\operatorname{Tr}_{A}[(\left(
1-p\right)  \rho_{A}^{1})^{T}\Omega_{AA^{\prime}}^{1}],\\
\langle1|_{X}\tau_{XA^{\prime}}|1\rangle_{X}=\operatorname{Tr}_{A}[(p\rho
_{A}^{0})^{T}\Omega_{AA^{\prime}}^{1}]+\operatorname{Tr}_{A}[(\left(
1-p\right)  \rho_{A}^{1})^{T}\Omega_{AA^{\prime}}^{0}],\\
\langle1|_{X}\tau_{XA^{\prime}}|0\rangle_{X}=\langle0|_{X}\tau_{XA^{\prime}%
}|1\rangle_{X}=0
\end{array}
\right\}.
\end{equation}

\end{proposition}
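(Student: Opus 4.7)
The plan is to combine the dual SDP for the scaled trace distance $D'(\cdot,\cdot)$ from Proposition~\ref{prop:SDP-div-err} with a Choi parameterization of CDS maps acting on c-q states with $|X|=2$, absorbing the normalization of the CDS map into the same scaling variable $s$ that already appears in the dual SDP for $D'$.

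First I would parameterize the CDS maps. Since $|X|=2$, the only permutations of $X$ are the identity and the flip $\mathcal{F}_X$, so every $\mathcal{N}\in\cds(XA\to XA')$ has the form $\mathcal{N}=\mathrm{id}_X\otimes\mathcal{E}^0+\mathcal{F}_X\otimes\mathcal{E}^1$, where $\mathcal{E}^0,\mathcal{E}^1$ are CP and $\mathcal{E}^0+\mathcal{E}^1$ is CPTP. Let $\Omega^0_{AA'},\Omega^1_{AA'}\ge 0$ be the Choi operators of $\mathcal{E}^0,\mathcal{E}^1$ in the convention fixed by the excerpt; then the CPTP condition becomes $\operatorname{Tr}_{A'}[\Omega^0_{AA'}+\Omega^1_{AA'}]=I_A$, and a direct block-by-block computation gives
\begin{align*}
\langle 0|_X\mathcal{N}(\rho_{XA})|0\rangle_X&=\operatorname{Tr}_A[(p\rho_0)^T\Omega^0_{AA'}]+\operatorname{Tr}_A[((1-p)\rho_1)^T\Omega^1_{AA'}],\\
\langle 1|_X\mathcal{N}(\rho_{XA})|1\rangle_X&=\operatorname{Tr}_A[(p\rho_0)^T\Omega^1_{AA'}]+\operatorname{Tr}_A[((1-p)\rho_1)^T\Omega^0_{AA'}],
\end{align*}
with vanishing off-diagonal blocks in $X$. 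This matches the last four constraints of the proposed SDP once I perform the rescaling below.

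Next I would substitute Proposition~\ref{prop:SDP-div-err}'s dual SDP for $D'(\mathcal{N}(\rho_{XA}),\sigma_{XA'})$ into the minimization over $\mathcal{N}\in\cds$, producing a joint optimization whose variables are $B,C,D,E\ge 0$, $s\in\mathbb{R}$, and the Choi matrices $\Omega^0_{AA'},\Omega^1_{AA'}\ge 0$. The key step is the rescaling $\tilde\Omega^z_{AA'}\coloneqq s\,\Omega^z_{AA'}$ and $\tau_{XA'}\coloneqq s\,\mathcal{N}(\rho_{XA})$. Since $D,E\ge 0$ and $\operatorname{Tr}[D+E]\le s-1$ already force $s\ge 1>0$, this rescaling is a bijection between the CPTP Choi set and $\{(\tilde\Omega^0,\tilde\Omega^1)\ge 0:\operatorname{Tr}_{A'}[\tilde\Omega^0+\tilde\Omega^1]=sI_A\}$. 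After the substitution, the constraint $B-C=s(\mathcal{N}(\rho_{XA})-\sigma_{XA'})$ becomes $B-C=\tau_{XA'}-s\sigma_{XA'}$; the blocks of $\tau_{XA'}$ are expressed linearly in $\tilde\Omega^0,\tilde\Omega^1$ exactly as in the statement; and every remaining constraint of the dual of $D'$ is carried over unchanged, producing precisely the SDP displayed in the proposition.

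The main obstacle I expect is bookkeeping around strong duality and the $\min$-$\min$ exchange: after inserting the inner dual SDP of $D'$ inside the outer minimization over CDS maps, one must check that the resulting joint program still computes the minimum conversion error rather than some lower bound. Strong duality for Proposition~\ref{prop:SDP-div-err}'s SDP is granted whenever $p_{\err}(\sigma_{XA'})>0$, and a strictly feasible point for the joint program can be constructed by choosing $\mathcal{N}$ to be any CDS map whose output is full rank (e.g.~a tensor-product replacer), combined with the Slater-type interior point used for $D'$. Once strong duality and the bijectivity of the rescaling are in hand, the remainder is a term-by-term identification of constraints with those in the statement.
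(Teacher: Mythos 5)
Your proposal is correct and follows essentially the same route as the paper's proof in Appendix~\ref{app:dual-derivation}: substitute the dual SDP for $D'$ from Proposition~\ref{prop:SDP-div-err} into the minimization over CDS maps parameterized by their Choi operators, observe that the constraints force $s\geq 1$, and linearize the bilinear term by absorbing $s$ into the Choi variables via $\Omega^{z}_{AA'} = s\,\Gamma^{\mathcal{N}^{z}}_{AA'}$. Your worry about a min--min exchange is harmless since both optimizations are minimizations and strong duality for the inner SDP is already guaranteed by Proposition~\ref{prop:SDP-div-err} whenever $p_{\err}(\sigma_{XA'})>0$.
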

The proof of Proposion~\ref{prop:SDP-conv-err} can be found in Appendix~\ref{app:dual-derivation}.

\section{SD-distillation}

\label{sec:distil}

In this section, we study the fundamental task of distillation of symmetric distinguishability (SD-distillation), both in the one-shot and asymptotic settings.

	\subsection{One-shot exact SD-distillation}
	
	One-shot exact SD-distillation of a given c-q state 
	\begin{equation}\label{rho}
	\rho_{XA} \coloneqq p|0\rangle\!\langle 0|\otimes\rho_0+\left(  1-p\right)
	|1\rangle\!\langle 1|\otimes\rho_1,
	\end{equation}
	with $p\in [0,1]$ and $\rho_0, \rho_1$  states of a quantum system $A$, is the task of converting a single copy of it to an $M$-golden unit
	via free operations. The maximal value of $\log M$ for which this conversion is possible is equal to the \emph{one-shot exact distillable-SD} for the chosen set of free operations. This is defined formally as follows:
	
\begin{definition}
\label{def:exact-distillable-SD}
For a set of free operations  denoted by $FO$ and $q\in[0,1]$,
the one-shot exact distillable-SD of the c-q state $\rho_{XA}$ defined in~\eqref{rho} is given by
	\begin{align}
	\label{eq:DistillMistill}
	\xi_d^{\FO,q}(\rho_{XA}) \coloneqq  \log\Big(\sup\left\{M \, \Big| \, \cA\left(\rho_{XA}\right) = \gamma_{XQ}^{(M,q)},\,\cA \in  {\hbox{free operations (\FO)}}\, \right\}\Big).
	\end{align}
For the choice 
	 \begin{align}
	 \FO \equiv \Big\{{\rm{id}}\otimes\cE \, \Big| \, \cE\, \operatorname{CPTP} \text{on system A}\Big\}\equiv {\rm{CPTP}}_A ,
	 \end{align}
	 the only sensible choice in \eqref{eq:DistillMistill} is $q=p$, as free operations of the form $\rm{id}\otimes\cE$ cannot change the prior in the c-q state. In that case, the above quantity is called the {\em{one-shot exact distillable-SD under ${\rm{CPTP}}_A$ maps}} and we simply write
  	\begin{align}
  	  \xi_{d}(\rho_{XA}) \equiv \xi^{{\rm{CPTP}}_A,p}_{d}(\rho_{XA}).
  	\end{align}
  	Whereas for the choice ${\FO}\equiv {\hbox{CDS}}$ and $q=1/2$, the above quantity is called the {\em{one-shot exact distillable-SD under CDS maps}} and we use the notation
  	\begin{align}
  	    \xi^{\star}_d(\rho_{XA})\equiv\xi^{{\rm{CDS}},1/2}_{d}(\rho_{XA}).
  	\end{align}
  	Explicitly, for a c-q state $\rho_{XA}$ given by \eqref{rho}
  	we then have

\begin{align}
\label{eq:CPTP-A-Distill}
\xi_d(\rho_{XA})   = \log\Big(\sup\left\{M\Big| ({\rm{id}} \otimes \cE)\left(\rho_{XA}\right) = \gamma_{XQ}^{(M,p)},\,\cE \in \operatorname{CPTP}\, \right\}\Big)
\end{align}
and
\begin{align}
\label{eq:CDSDistill}
\xi_d^\star(\rho_{XA})  = \log\Big(\sup\left\{M\Big| \cN\left(\rho_{XA}\right) = \gamma_{XQ}^{(M)},\,\cN \in {\rm{CDS}}\, \right\}\Big),
\end{align}
where $\gamma_{XQ}^{(M)} \equiv \gamma_{XQ}^{(M, 1/2)}$ as stated previously.
\end{definition}

\begin{remark}
\label{rem:CPTPSDDistl}
  Note that in the case of prior $p\in(0,1)$ and the free operations being ${\rm{CPTP}}_A$, the distillable-SD is, by definition, independent of $p$. In fact, in that case it can be equivalently written as
  \begin{align}
   \xi_d(\rho_{XA})\equiv\xi_d(\rho_0,\rho_1) = \log\Big(\sup\left\{M\Big| \cE(\rho_0) = \pi_M\text{ and } \cE(\rho_1) = \sigma^{(1)}\pi_M\sigma^{(1)},\,\cE \in \operatorname{CPTP}\, \right\}\Big).  
  \end{align}
  For $p\in\{0,1\}$ one easily sees that
  \begin{align}
     \xi_d(\rho_{XA}) = \infty.
  \end{align}
Therefore, we restrict to the non-singular case $p\in(0,1)$ in the following Theorem~\ref{theo-distilCPTP}.
\end{remark}
\bigskip
\noindent	We state the following theorems now.

\begin{theorem}
\label{theo-distilCPTP}
The one-shot exact distillable-SD under ${\rm{CPTP}}_A$ maps of a c-q state $\rho_{XA}$, defined through~\eqref{rho}, with $p\in(0,1)$ is given by
    \begin{align}
        \xi_d(\rho_{XA})=  \xi_{\min}(\rho_{XA}) ,
    \end{align} 
where \begin{align}
	\label{eq:mindiv}
	\xi_{\min}(\rho_{XA}) \equiv \xi_{\min}(\rho_0,\rho_1)= -\log Q_{\min}(\rho_0,\rho_1) ,
	\end{align}
	and $Q_{\min}(\rho_0,\rho_1)$ is given by the following SDP:
	\begin{align}
	\label{eq:Qmin}
	Q_{\min}(\rho_0,\rho_1) \coloneqq 2\min\left\{\Tr\!\left(\Lambda\rho_0\right)\Big| \Tr\!\left(\Lambda(\rho_0+\rho_1)\right)= 1,\,\,0\le\Lambda\le\1\right\}.
	\end{align}
\end{theorem}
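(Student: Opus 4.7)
The plan is to establish the equality $\xi_d(\rho_{XA}) = -\log Q_{\min}(\rho_0,\rho_1)$ by showing two matching inequalities: an achievability direction (using a feasible $\Lambda$ to construct a CPTP map producing an $M$-golden unit), and a converse direction (reducing any successful CPTP distillation protocol to a POVM that feasibly attains the corresponding value in the SDP~\eqref{eq:Qmin}).

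For the achievability, I would take any feasible $\Lambda$ for the SDP with $\Tr(\Lambda\rho_0)=\alpha$, so that necessarily $\Tr(\Lambda\rho_1)=1-\alpha$. Without loss of generality $\alpha\le 1/2$ (otherwise replace $\Lambda$ by $\openone-\Lambda$, which is still feasible and improves the objective). Setting $M=1/(2\alpha)$, I would define the measure-and-prepare CPTP map
\begin{equation}
\cE(\cdot) \coloneqq \Tr\!\left[(\openone-\Lambda)(\cdot)\right]\kb{0} + \Tr\!\left[\Lambda(\cdot)\right]\kb{1},
\end{equation}
and verify directly that $\cE(\rho_0)=\pi_M$ and $\cE(\rho_1)=\sigma^{(1)}\pi_M\sigma^{(1)}$, so that $({\rm id}_X\otimes \cE)(\rho_{XA})=\gamma_{XQ}^{(M,p)}$. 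Optimizing over $\Lambda$ yields $\xi_d(\rho_{XA})\ge -\log Q_{\min}(\rho_0,\rho_1)$.

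For the converse, I would start with an arbitrary CPTP map $\cE$ achieving $({\rm id}_X\otimes \cE)(\rho_{XA})=\gamma_{XQ}^{(M,p)}$ for some $M\ge 1$. Since the outputs $\pi_M$ and $\sigma^{(1)}\pi_M\sigma^{(1)}$ are both diagonal in the computational basis, composing $\cE$ with the completely dephasing channel $\cD$ on $Q$ does not change the outputs, so I can assume $\cE$ is of measurement form, specified by a POVM $\{M_0,M_1\}$. Setting $\Lambda\coloneqq M_1$, the output equations force $\Tr(\Lambda\rho_0)=1/(2M)$ and $\Tr(\Lambda\rho_1)=1-1/(2M)$, hence $\Tr(\Lambda(\rho_0+\rho_1))=1$, so $\Lambda$ is feasible in~\eqref{eq:Qmin} with objective value $2\Tr(\Lambda\rho_0)=1/M$. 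Therefore $Q_{\min}(\rho_0,\rho_1)\le 1/M$ for every achievable $M$, giving $\xi_d(\rho_{XA})\le -\log Q_{\min}(\rho_0,\rho_1)$.

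Combining the two directions yields the claimed equality. The proof has no serious obstacle; the only subtle step is the reduction to measurement form via dephasing, which relies on the specific diagonal structure of the $M$-golden unit. One should also note that the argument is manifestly $p$-independent, consistent with Remark~\ref{rem:CPTPSDDistl}, since the $\mathrm{CPTP}_A$ freedom cannot alter the prior, and both the attainability and converse constructions involve only the quantum states $\rho_0,\rho_1$.
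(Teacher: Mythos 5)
Your proof is correct and takes essentially the same route as the paper: the achievability construction via the measure-and-prepare channel built from a feasible $\Lambda$ is identical, and your converse — dephasing the output and pulling the measurement $\{M_0,M_1\}$ back to a feasible operator for $Q_{\min}$ — is just the paper's argument (data-processing for $\xi_{\min}$ via $\cE^*$, combined with the evaluation $Q_{\min}(\pi_M,\sigma^{(1)}\pi_M\sigma^{(1)})\le 1/M$ at $\Lambda=\kb{1}$) with the DPI lemma inlined rather than invoked. The only cosmetic difference is that the paper factors the converse through Lemma~\ref{lem:DPI}, which it reuses elsewhere, whereas you carry out the pullback explicitly.
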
	

\begin{remark}
      An alternative way of writing $Q_{\min}(\rho_0,\rho_1)$ is as follows:
      \begin{align}
      \label{eq:QminConstraint}
	\nn Q_{\min}(\rho_0,\rho_1) &= 2\min\left\{\Tr\!\left(\Lambda\rho_0\right)\Big| \Tr(\Lambda \rho_0)= \Tr((\1 - \Lambda)\rho_1) ,\,\,0\le\Lambda\le\1\right\}\\ &=2\min\left\{\Tr\!\left(\Lambda\rho_0\right)\Big| \Tr(\Lambda \rho_0)\ge \Tr((\1 - \Lambda)\rho_1) ,\,\,0\le\Lambda\le\1\right\}.
	\end{align}
	\comment{where the second equality follows from the argument around \eqref{eq:QminEquality} below.}
	This clarifies that the optimization is over all POVMs $\{\Lambda, \1 - \Lambda\}$ such that the Type~I error probability $\Tr(\Lambda \rho_0)$ is (greater than or) equal to the Type~II error probability $\Tr((\1 - \Lambda)\rho_1)$.
	
	To see the second equality in \eqref{eq:QminConstraint}, note first that the last line is trivially smaller than or equal to the right-hand side of the first line, since we are minimising over a larger set. To arrive at the other inequality, let $\tilde\Lambda_{\min}$ be a minimiser of the last line of \eqref{eq:QminConstraint} and $c\coloneqq \Tr(\tilde\Lambda_{\min}(\rho_0+\rho_1)) \ge 1$. Let $\Lambda_{\min} = \tilde\Lambda_{\min}/c$. Clearly $0\le \Lambda_{\min} \le \tilde \Lambda_{\min} \le \1$ and $\Tr(\Lambda_{\min}(\rho_0+\rho_1)) = 1$. Moreover,
	$$Q_{\min}(\rho_0,\rho_1)\le2\Tr\!\left(\Lambda_{\min}\rho_0\right) = \frac{2\Tr\!\left(\tilde\Lambda_{\min}\rho_0\right)}{c} \le 2\Tr\!\left(\tilde\Lambda_{\min}\rho_0\right) ,$$ from which we conclude the equality in \eqref{eq:QminConstraint}.
\end{remark}

\begin{theorem}\label{theo-distilCDS}
	The one-shot exact distillable-SD under CDS maps of a c-q state $\rho_{XA}$, defined through~\eqref{rho}, is given by 
	\begin{align}
	\label{eq:DistillCDSExact}
	\xi_d^\star(\rho_{XA}) = -\log(2p_{\operatorname{err}}(\rho_{XA}))={\rm{SD}}(\rho_{XA}).
	\end{align}
\end{theorem}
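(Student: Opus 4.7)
The plan is to prove matching upper and lower bounds on $\xi_d^\star(\rho_{XA})$. For the converse $\xi_d^\star(\rho_{XA}) \leq {\rm{SD}}(\rho_{XA})$, I would invoke the monotonicity of $p_{\err}$ under CDS maps (Lemma~\ref{lem:MonoErrProbCDS}) together with the formula $p_{\err}(\gamma_{XQ}^{(M)}) = 1/(2M)$ from Lemma~\ref{lem:goldenunitscaling}. If a CDS map $\cN$ satisfies $\cN(\rho_{XA}) = \gamma_{XQ}^{(M)}$, these two facts combine to give $1/(2M) \geq p_{\err}(\rho_{XA})$, which rearranges directly into $\log M \leq -\log(2 p_{\err}(\rho_{XA})) = {\rm{SD}}(\rho_{XA})$, as desired.

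For the achievability direction, the case $p_{\err}(\rho_{XA}) = 0$ (infinite-resource state, including $p \in \{0,1\}$) is handled immediately by Lemma~\ref{lem:InfToInf}, which produces a CDS map into any target, in particular into $\gamma_{XQ}^{(\infty)}$, giving $\xi_d^\star(\rho_{XA}) = \infty = {\rm{SD}}(\rho_{XA})$. When $p_{\err}(\rho_{XA}) > 0$, I would let $\{\Lambda_0,\Lambda_1\}$ be an optimal POVM in the Helstrom--Holevo theorem and consider the CDS map
\begin{equation*}
\cN \coloneqq {\rm{id}}_X \otimes \cE^1 \,+\, \cF_X \otimes \cE^2,
\end{equation*}
where $\cF_X$ is the classical flip channel on $X$ and
\begin{align*}
\cE^1(\omega) &\coloneqq \tfrac{1}{2}\bigl(\Tr(\Lambda_0\omega)\kb{0} + \Tr(\Lambda_1\omega)\kb{1}\bigr), \\
\cE^2(\omega) &\coloneqq \tfrac{1}{2}\bigl(\Tr(\Lambda_1\omega)\kb{0} + \Tr(\Lambda_0\omega)\kb{1}\bigr).
\end{align*}
The sum $\cE^1+\cE^2$ is the trace-and-prepare channel to the maximally mixed qubit state, hence CPTP, so $\cN$ is a valid CDS map. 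A direct computation---regrouping $\cN(\rho_{XA})$ by classical register and using $p_{\err}(\rho_{XA}) = p\Tr(\Lambda_1\rho_0) + (1-p)\Tr(\Lambda_0\rho_1)$ on the ``error'' slots together with its complement on the ``success'' slots---shows that the classical-$0$ and classical-$1$ blocks of $\cN(\rho_{XA})$ are $\tfrac{1}{2}\pi_M$ and $\tfrac{1}{2}\sigma^{(1)}\pi_M\sigma^{(1)}$ respectively, with $M = 1/(2 p_{\err}(\rho_{XA}))$. Hence $\cN(\rho_{XA}) = \gamma_{XQ}^{(M)}$, yielding $\xi_d^\star(\rho_{XA}) \geq \log M = {\rm{SD}}(\rho_{XA})$.

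The main conceptual step is this achievability construction, since unlike ${\rm{CPTP}}_A$ the set of CDS maps can adjust the prior $p$ to $1/2$ via the $\cF_X$ summand. The key structural idea is the swap $\Lambda_0 \leftrightarrow \Lambda_1$ between $\cE^1$ and $\cE^2$: it arranges for the ``successful'' hypothesis-test outcomes on each $\rho_i$---whether routed through the identity or through the flip on $X$---to collect on the correct qubit basis slot with total weight $1-p_{\err}$, while the ``errors'' collect on the opposite slot with weight $p_{\err}$, exactly matching the $M$-golden unit at $M = 1/(2 p_{\err})$. The remaining edge check---that $p_{\err}(\rho_{XA}) \leq 1/2$ always, so that Lemma~\ref{lem:goldenunitscaling} actually applies with $q=1/2$ and the resulting $M \geq 1$---follows at once from the fact that random guessing achieves error $\min(p, 1-p) \leq 1/2$.
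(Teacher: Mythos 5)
Your proof is correct and follows essentially the same route as the paper: the converse is the identical monotonicity argument via Lemma~\ref{lem:MonoErrProbCDS} and $p_{\operatorname{err}}(\gamma_{XQ}^{(M)})=1/(2M)$, and your achievability construction is, up to relabeling ($\Lambda_0 = \1-\Lambda_{\min}$, $\Lambda_1=\Lambda_{\min}$), exactly the paper's CDS map $\cN = {\rm{id}}_X\otimes\cE_0 + \cF_X\otimes\cE_1$ built from the optimal Helstrom measurement. The only (harmless) additions are your explicit treatment of the $p_{\operatorname{err}}=0$ case via Lemma~\ref{lem:InfToInf} and the check that $M\ge 1$, both of which the paper leaves implicit.
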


\begin{remark}
As a consequence of Theorem~\ref{theo-distilCPTP}, it follows that the one-shot exact distillable-SD under ${\rm{CPTP}}_A$ maps can be calculated by means of a semi-definite program, due to the form of $Q_{\min}(\rho_0,\rho_1)$ in \eqref{eq:Qmin}. As a consequence of Theorem~\ref{theo-distilCDS}, it follows that the one-shot exact distillable-SD under CDS maps can  be calculated by means of a semi-definite program, due to the expression for $p_{\operatorname{err}}(\rho_{XA})$ in \eqref{op-err}.
\end{remark}

Proofs of the above theorems are given in Sections~\ref{proof:distilCPTP} and~\ref{proof:distilCDS}, respectively. The quantities $\xi_{\min}$ and $Q_{\min}$ appearing in Theorem~\ref{theo-distilCPTP} have several useful and interesting properties, which are given in Section~\ref{sec:props}.

\medskip

\subsection{Optimal asymptotic rate of exact SD-distillation}

Consider the c-q state
$$
\rho_{XA}^{(n)}\coloneqq  p \kb{0} \otimes \rho_0^{\otimes n} + (1-p) \kb{1} \otimes \rho_1^{\otimes n},
$$
and let $\xi_d(\rho_{XB}^{(n)})$ and $\xi_d^\star(\rho_{XB}^{(n)})$ denote its one-shot exact distillable-SD under CPTP$_A$ maps and CDS maps, respectively. Then the {\em{optimal asymptotic rates of exact SD-distillation}} under CPTP$_A$ maps and CDS maps are defined by the following two quantities, respectively:
\begin{align}
\label{eq:Asbla}
\liminf_{n\to\infty}\frac{\xi_{d}(\rho_{XA}^{(n)})}{n}\quad ; \quad \liminf_{n\to\infty}\frac{\xi_{d}^\star(\rho_{XA}^{(n)})}{n}.
\end{align}

The next theorem asserts that both limits in \eqref{eq:Asbla} actually exist and are equal to the well-known quantum Chernoff divergence~\cite{ACMBMAV07,nussbaum2009chernoff}:
\begin{theorem}[Optimal asymptotic rate of exact SD-distillation]
\label{theo-asymp}
For $p\in(0,1)$, the optimal asymptotic rates of exact SD-distillation under ${\rm{CPTP}}_A$ and CDS maps are given by the following expression:
\begin{align}
	\label{eq_AsympResultDist}
	\lim_{n\to\infty}\frac{\xi_{d}(\rho_{XA}^{(n)})}{n}
	 =\lim_{n\to\infty}\frac{\xi_{d}^\star(\rho_{XA}^{(n)})}{n} 
		=\lim_{n\to\infty}\frac{\xi_{\min}(\rho_0^{\otimes n},\rho_1^{\otimes n})}{n} = \xi(\rho_0,\rho_1),
	\end{align}
	where $\xi(\rho_0,\rho_1) \coloneqq -\log\min_{0\le s\le 1}\Tr(\rho_0^s\rho_1^{1-s})$ denotes the quantum Chernoff divergence.
\end{theorem}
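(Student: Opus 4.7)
The plan is to reduce the three claimed limits to the quantum Chernoff theorem~\cite{ACMBMAV07,nussbaum2009chernoff}, which states that for any prior $p\in(0,1)$,
\begin{equation}
\lim_{n\to\infty}-\frac{1}{n}\log p_{\operatorname{err}}(p,\rho_0^{\otimes n},\rho_1^{\otimes n})=\xi(\rho_0,\rho_1).
\label{eq:chernoff-plan}
\end{equation}
Since $\cptp_A\subset\cds$, we automatically have $\xi_d(\rho_{XA}^{(n)})\le\xi_d^\star(\rho_{XA}^{(n)})$ for every $n$; combined with Theorems~\ref{theo-distilCPTP} and~\ref{theo-distilCDS}, this rewrites the three quantities in~\eqref{eq_AsympResultDist} as $\xi_{\min}(\rho_0^{\otimes n},\rho_1^{\otimes n})/n$, itself (via $\xi_d=\xi_{\min}$), and $-\log(2p_{\operatorname{err}}(p,\rho_0^{\otimes n},\rho_1^{\otimes n}))/n$, respectively. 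The CDS case is then immediate: Theorem~\ref{theo-distilCDS} combined with~\eqref{eq:chernoff-plan} at prior $p$ directly yields $\lim_n\xi_d^\star(\rho_{XA}^{(n)})/n=\xi(\rho_0,\rho_1)$.

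The substantive step is to show $\lim_n\xi_{\min}(\rho_0^{\otimes n},\rho_1^{\otimes n})/n=\xi(\rho_0,\rho_1)$, for which I would prove the following two-sided sandwich at the single-copy level,
\begin{equation}
2p_{\operatorname{err}}(1/2,\tau_0,\tau_1)\le Q_{\min}(\tau_0,\tau_1)\le 4p_{\operatorname{err}}(1/2,\tau_0,\tau_1),
\label{eq:sandwich-plan}
\end{equation}
valid for arbitrary states $\tau_0,\tau_1$. The lower bound is immediate from the definitions: a balanced POVM $\{\Lambda^\star,\1-\Lambda^\star\}$ achieving $Q_{\min}$ has matched Type-I and Type-II errors each equal to $Q_{\min}/2$, so evaluating the Bayesian risk at the uniform prior witnesses $p_{\operatorname{err}}(1/2,\tau_0,\tau_1)\le Q_{\min}/2$. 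For the upper bound, let $\{\Lambda,\1-\Lambda\}$ be the Helstrom POVM optimal for $p_{\operatorname{err}}(1/2,\tau_0,\tau_1)$ and set $a\coloneqq\Tr(\Lambda\tau_0)$, $b\coloneqq\Tr((\1-\Lambda)\tau_1)$, so that $a+b=2p_{\operatorname{err}}(1/2,\tau_0,\tau_1)$. Assuming $a\le b$ without loss of generality, the one-parameter family $\tilde\Lambda_\theta=\theta\1+(1-\theta)\Lambda$ has Type-I error $\theta+(1-\theta)a$ sweeping up from $a$ to $1$ and Type-II error $(1-\theta)b$ sweeping down from $b$ to $0$; by continuity some $\theta^\star\in[0,1]$ balances them, with common value $b/(1+b-a)\le b\le a+b$. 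This provides a feasible witness for $Q_{\min}/2\le a+b=2p_{\operatorname{err}}(1/2,\tau_0,\tau_1)$; the case $a>b$ is handled symmetrically using $\tilde\Lambda_\theta=(1-\theta)\Lambda$.

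Applying~\eqref{eq:sandwich-plan} to $(\rho_0^{\otimes n},\rho_1^{\otimes n})$, taking $-\log(\cdot)/n$, and invoking~\eqref{eq:chernoff-plan} at prior $1/2$, both sides of the resulting chain converge to $\xi(\rho_0,\rho_1)$, so $\lim_n\xi_{\min}(\rho_0^{\otimes n},\rho_1^{\otimes n})/n=\xi(\rho_0,\rho_1)$; via Theorem~\ref{theo-distilCPTP} this also identifies $\lim_n\xi_d(\rho_{XA}^{(n)})/n$ with $\xi(\rho_0,\rho_1)$, closing the chain of equalities in~\eqref{eq_AsympResultDist}. The only mildly nontrivial ingredient is the mixing argument that converts the unbalanced Helstrom POVM into a balanced one at a constant-factor cost in error, which is forced on us because $Q_{\min}$ optimizes over POVMs with matched Type-I and Type-II errors whereas $p_{\operatorname{err}}(1/2)$ does not; everything else is just combining the one-shot characterizations with the Chernoff theorem.
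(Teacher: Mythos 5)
Your proof is correct, and its skeleton matches the paper's: reduce the three quantities to $\xi_{\min}$ and $-\log(2p_{\operatorname{err}})$ via Theorems~\ref{theo-distilCPTP} and~\ref{theo-distilCDS}, then control $Q_{\min}$ by constant multiples of the minimum error probability and invoke the quantum Chernoff theorem. The genuine difference is in how the two halves of your sandwich are established. For the upper bound $Q_{\min}(\tau_0,\tau_1)\le 4\,p_{\operatorname{err}}(1/2,\tau_0,\tau_1)$, the paper's Lemma~\ref{lem:BoundsQ} uses the pretty good measurement (which satisfies the balance constraint $\Tr(\Lambda(\tau_0+\tau_1))=1$ exactly) together with the Barnum--Knill quadratic bound; your interpolation of the Helstrom POVM towards $\1$ (respectively towards $0$) until the two error types match is an elementary substitute that produces the same constant while avoiding both of those ingredients, and your balancing computation is sound in both cases, since the common error value at the crossing point is at most $\max\{a,b\}\le a+b$ and the balanced operator is feasible for the $Q_{\min}$ optimization. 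For the lower bound on the limit, the paper does not route through $p_{\operatorname{err}}(1/2)$ at all: it uses the single-copy inequality $Q_{\min}\le 2-\left\|\rho_0-\rho_1\right\|_1\le 2Q(\rho_0,\rho_1)$ of \eqref{eq:lowerboundQCB} plus additivity of the Chernoff divergence, yielding the finite-$n$ estimate $\xi_{\min}(\rho_0^{\otimes n},\rho_1^{\otimes n})\ge n\,\xi(\rho_0,\rho_1)-1$, whereas you feed your upper bound on $Q_{\min}$ into the asymptotic achievability half of the Chernoff theorem. Both routes ultimately rest on the same inequality from \cite{ACMBMAV07}; the paper's version buys a clean non-asymptotic bound, while yours keeps the whole argument inside a single two-sided squeeze with only the black-box Chernoff limit. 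The CDS case and the upper bound $\xi_{\min}\le -\log(2p_{\operatorname{err}})$ are handled identically in both arguments.
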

Here, the restriction to $p\in(0,1)$ is sensible as for $p\in\{0,1\}$ we directly get $\xi_{d}(\rho_{XA}^{(n)})=\xi^{\star}_{d}(\rho_{XA}^{(n)})=\infty$ for all $n\in\N.$

A proof of the above theorem is given in Section~\ref{proof-asymp}.

	\subsection{Properties of $Q_{{\min}}$ and $\xi_{{\min}}$}\label{sec:props}

	In this section, we establish some basic properties of the distinguishability measures $Q_{{\min}}$ and $\xi_{{\min}}$.
	\comment{ First of all, note that we can have a slightly changed constraint in the optimisation problem defining $Q_{\min}$ without changing the minimum, i.e.,
	\begin{align}
	\label{eq:QminEquality}
	\nn Q_{\min}(\rho_0,\rho_1) &\coloneqq  2\min\left\{\Tr\!\left(\Lambda\rho_0\right)\Big| \Tr\!\left(\Lambda(\rho_0+\rho_1)\right)= 1,\,\,0\le\Lambda\le\1\right\} \\
	&=2\min\left\{\Tr\!\left(\Lambda\rho_0\right)\Big| \Tr\!\left(\Lambda(\rho_0+\rho_1)\right)\ge 1,\,\,0\le\Lambda\le\1\right\}.
	\end{align}
	To see this, note first that the last line is trivially smaller than or equal to the right-hand side of the first line, since we are minimising over a larger set. To arrive at the other inequality, let $\tilde\Lambda_{\min}$ be a minimiser of the last  line of \eqref{eq:QminEquality} and $c\coloneqq \Tr(\tilde\Lambda_{\min}(\rho_0+\rho_1)) \ge 1$. Let $\Lambda_{\min} = \tilde\Lambda_{\min}/c$. Clearly $0\le \Lambda_{\min} \le \tilde \Lambda_{\min} \le \1$ and $\Tr(\Lambda_{\min}(\rho_0+\rho_1)) = 1$. Moreover,
	$$Q_{\min}(\rho_0,\rho_1)\le2\Tr\!\left(\Lambda_{\min}\rho_0\right) = \frac{2\Tr\!\left(\tilde\Lambda_{\min}\rho_0\right)}{c} \le 2\Tr\!\left(\tilde\Lambda_{\min}\rho_0\right) ,$$ from which we conclude the equality \eqref{eq:QminEquality}.}
	
	\begin{lemma}\label{symm}
	The distinguishability measures 
	   $Q_{\min}(\rho_0,\rho_1)$ and  $\xi_{\min} (\rho_0,\rho_1)$ are symmetric in their arguments. 
	\end{lemma}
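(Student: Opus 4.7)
The plan is to prove symmetry of $Q_{\min}$ directly from its SDP characterization, and then deduce symmetry of $\xi_{\min}$ from the relation $\xi_{\min} = -\log Q_{\min}$. The key observation is that the constraint $\Tr(\Lambda(\rho_0+\rho_1))=1$ together with the fact that $\rho_0,\rho_1$ are states implies that under the substitution $\Lambda \mapsto \1-\Lambda$, the value of $\Tr(\Lambda\rho_0)$ becomes $\Tr((\1-\Lambda)\rho_1)$ while feasibility is preserved.

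Concretely, I would first show that the substitution $\Lambda' \coloneqq \1-\Lambda$ maps the feasible set of the SDP for $Q_{\min}(\rho_0,\rho_1)$ bijectively onto the feasible set of the SDP for $Q_{\min}(\rho_1,\rho_0)$: indeed, $0\le\Lambda\le\1$ is equivalent to $0\le\1-\Lambda\le\1$, and $\Tr(\Lambda'(\rho_0+\rho_1)) = 2-\Tr(\Lambda(\rho_0+\rho_1)) = 2-1 = 1$. Next, under the constraint $\Tr(\Lambda(\rho_0+\rho_1))=1$, the objective transforms as
\begin{equation*}
\Tr(\Lambda'\rho_1) = 1-\Tr(\Lambda\rho_1) = \Tr(\Lambda\rho_0),
\end{equation*}
where the last equality uses $\Tr(\Lambda\rho_0)+\Tr(\Lambda\rho_1)=1$. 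Thus every feasible point for $Q_{\min}(\rho_0,\rho_1)$ yields a feasible point for $Q_{\min}(\rho_1,\rho_0)$ with the same objective value, so $Q_{\min}(\rho_1,\rho_0)\le Q_{\min}(\rho_0,\rho_1)$; the reverse inequality is obtained by applying the identical argument with the roles of $\rho_0$ and $\rho_1$ interchanged.

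Finally, since $\xi_{\min}(\rho_0,\rho_1) = -\log Q_{\min}(\rho_0,\rho_1)$, the symmetry of $\xi_{\min}$ is immediate from the symmetry of $Q_{\min}$.

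There is essentially no obstacle: the proof is a short involution argument on the feasible set, exploiting the affine structure of the constraints and the normalization of quantum states. The only mild care required is confirming that the involution $\Lambda \mapsto \1-\Lambda$ genuinely swaps the two objectives, which is exactly where the constraint $\Tr(\Lambda(\rho_0+\rho_1))=1$ is used.
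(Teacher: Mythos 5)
Your proof is correct and follows essentially the same route as the paper's: both rest on the involution $\Lambda \mapsto \1-\Lambda$, which preserves feasibility (using $\Tr(\rho_0+\rho_1)=2$) and converts the objective $\Tr(\Lambda\rho_0)$ into $\Tr((\1-\Lambda)\rho_1)$ via the constraint $\Tr(\Lambda(\rho_0+\rho_1))=1$. The only cosmetic difference is that the paper phrases the argument in terms of a minimiser $\Lambda_{\min}$ rather than a bijection of feasible sets; the content is identical.
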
  
	
	\begin{proof}
	  To see this, note that if $\Lambda_{\min}$ is a minimiser for $Q_{\min}(\rho_0,\rho_1)$, i.e., satisfying $2\Tr(\Lambda_{\min}\rho_0) = Q_{\min}(\rho_0, \rho_1)$,  $0\le \Lambda_{\min}\le \1 $, and $\Tr(\Lambda_{\min}(\rho_0+\rho_1))= 1$, then $\tilde\Lambda_{\min} = \1 - \Lambda_{\min}$ also satisfies $0\le\tilde{\Lambda}_{\min}\le\1$ and $\Tr(\tilde\Lambda_{\min}(\rho_0+\rho_1))= 1$. Moreover, $2\Tr(\tilde\Lambda_{\min}\rho_1) = 2\Tr(\Lambda_{\min}\rho_0) = Q_{\min}(\rho_0,\rho_1)$. From that we see $$Q_{\min}(\rho_1,\rho_0)\le Q_{\min}(\rho_0,\rho_1).$$ The reversed inequality can be obtained by symmetry, which yields\begin{equation}
	Q_{\min}(\rho_0,\rho_1) = Q_{\min}(\rho_1,\rho_0).
	\end{equation}  
	By a straightforward consequence of the definition of $ \xi_{\min}(\rho_0,\rho_1)$ in terms of  $Q_{\min}(\rho_0,\rho_1)$, it follows that $ \xi_{\min}(\rho_0,\rho_1)$ is symmetric in its arguments.
	\end{proof}

\bigskip 
	The quantity $\xi_{\min}$ also satisfies a data-processing inequality (DPI) under CPTP maps, which is the statement of the following lemma:
	
	\begin{lemma}[DPI for $\xi_{\min}$ under CPTP maps]\label{lem:DPI}
		Let $\cE$ be a quantum channel. Then
		\begin{align}
		\label{eq:DPI}
		\xi_{\min}(\cE(\rho_0),\cE(\rho_1)) \leq \xi_{\min}(\rho_0,\rho_1).
		\end{align}
	\end{lemma}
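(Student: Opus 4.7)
The plan is to prove the equivalent statement at the level of $Q_{\min}$, namely that $Q_{\min}(\cE(\rho_0),\cE(\rho_1)) \ge Q_{\min}(\rho_0,\rho_1)$, from which the claim follows immediately by taking $-\log(\cdot)$ (which is order-reversing). So I would work directly with the SDP characterisation in \eqref{eq:Qmin} and pull back an optimiser through the adjoint channel.

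Concretely, let $\Lambda'$ be an optimal POVM element for the problem defining $Q_{\min}(\cE(\rho_0),\cE(\rho_1))$: thus $0 \le \Lambda' \le \1$, the normalisation $\Tr(\Lambda'(\cE(\rho_0)+\cE(\rho_1))) = 1$ holds, and $Q_{\min}(\cE(\rho_0),\cE(\rho_1)) = 2\Tr(\Lambda' \cE(\rho_0))$. I would then set $\Lambda \coloneqq \cE^{\dagger}(\Lambda')$, where $\cE^\dagger$ is the Hilbert--Schmidt adjoint of $\cE$. Since $\cE$ is CPTP, its adjoint is CP and unital, hence positivity and unitality give $0 \le \cE^\dagger(\Lambda') \le \cE^\dagger(\1) = \1$, so $\Lambda$ is a valid effect operator.

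Next I would verify that $\Lambda$ is feasible for the minimisation defining $Q_{\min}(\rho_0,\rho_1)$: by the adjoint relation $\Tr(\cE^\dagger(\Lambda')\sigma) = \Tr(\Lambda' \cE(\sigma))$ applied with $\sigma = \rho_0 + \rho_1$, we get
\begin{equation}
\Tr(\Lambda(\rho_0+\rho_1)) = \Tr(\Lambda'(\cE(\rho_0)+\cE(\rho_1))) = 1.
\end{equation}
The same identity with $\sigma = \rho_0$ yields $2\Tr(\Lambda \rho_0) = 2\Tr(\Lambda'\cE(\rho_0)) = Q_{\min}(\cE(\rho_0),\cE(\rho_1))$. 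Feasibility of $\Lambda$ then forces
\begin{equation}
Q_{\min}(\rho_0,\rho_1) \le 2\Tr(\Lambda \rho_0) = Q_{\min}(\cE(\rho_0),\cE(\rho_1)),
\end{equation}
which after taking $-\log$ gives exactly \eqref{eq:DPI}.

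There is no real obstacle here: the only thing to be mildly careful about is the existence of an optimiser for $Q_{\min}(\cE(\rho_0),\cE(\rho_1))$, but this is standard since the feasible set is a compact (nonempty) slice of the operator interval $[0,\1]$ and the objective is continuous. One could alternatively bypass existence by working with an $\varepsilon$-optimal $\Lambda'$ and letting $\varepsilon \to 0$. Overall the argument is essentially the usual proof of DPI for error probabilities in hypothesis testing, adapted to the extra linear constraint $\Tr(\Lambda(\rho_0+\rho_1)) = 1$, which is preserved under pullback by $\cE^\dagger$ precisely because $\cE$ is trace-preserving.
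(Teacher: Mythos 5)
Your proposal is correct and follows essentially the same route as the paper's proof: both pull the optimal (or feasible) effect operator back through the adjoint channel $\cE^\dagger$, use unitality of the adjoint to get $0\le\cE^\dagger(\Lambda')\le\1$, and observe that the constraint $\Tr(\Lambda(\rho_0+\rho_1))=1$ is preserved, so that $Q_{\min}(\cE(\rho_0),\cE(\rho_1))\ge Q_{\min}(\rho_0,\rho_1)$. The paper phrases this as the pulled-back operators forming a smaller feasible set for the original minimisation, which is just a slight repackaging of your argument.
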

	\begin{proof}
	Consider
	\begin{align}
	Q_{\min}(\cE(\rho_0),\cE(\rho_1)) &=  2\min\left\{\Tr\!\left(\Lambda\cE(\rho_0)\right)\Big| \nn\Tr\!\left(\Lambda\cE(\rho_0+\rho_1)\right)= 1,\,\,0\le\Lambda\le\1\right\} \\&
	 =2\min\left\{\Tr\!\left(\cE^*(\Lambda)\rho_0\right)\Big| \nn\Tr\!\left(\cE^*(\Lambda)(\rho_0+\rho_1)\right)= 1,\,\,0\le\Lambda\le\1\right\} \\
	& \nn\ge2\min\left\{\Tr\!\left(\Lambda\rho_0\right)\Big| \Tr\!\left(\Lambda(\rho_0+\rho_1)\right)= 1,\,\,0\le\Lambda\le\1 \right\}\\
	& =Q_{\min}(\rho_0, \rho_1).
	\end{align}
 In the last inequality,  we have used  that $0\le \cE^*(\Lambda)\le \1$ for each $0\le\Lambda\le \1$, implying that we are effectively minimising over a smaller set.
 Hence, directly from the definition of $\xi_{\min}$ in \eqref{eq:mindiv}, we conclude the data-processing inequality \eqref{eq:DPI}.
	\end{proof}
	
	\bigskip
The next lemma gives upper and lower bounds on $Q_{\min}$, which turn out to be the key ingredients for proving the asymptotic result in \eqref{eq_AsympResultDist}.
\begin{lemma}
	\label{lem:BoundsQ}
	For all $q\in[0,1]$, we have
	\begin{align}
	\label{eq:UpperLoweQBound}
	p_{\operatorname{err}}(q,\rho_0,\rho_1) \le \frac{1}{2} Q_{\min}(\rho_0,\rho_1) \le 2p_{\operatorname{err}}(1/2,\rho_0,\rho_1) .
	\end{align}
\end{lemma}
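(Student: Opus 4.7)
The plan is to prove the two inequalities separately, exploiting the two equivalent characterizations of $Q_{\min}$ recorded in the remark following Theorem~\ref{theo-distilCPTP} (the one with equality constraint, and the one with inequality constraint), together with the symmetry property from Lemma~\ref{symm}.

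For the lower bound $p_{\operatorname{err}}(q,\rho_0,\rho_1) \leq \tfrac{1}{2} Q_{\min}(\rho_0,\rho_1)$, I will use the equality-form of the constraint. Let $\Lambda^{\star}$ be an optimizer for $Q_{\min}(\rho_0,\rho_1)$ in the form that imposes $\Tr(\Lambda\rho_0)=\Tr((\1-\Lambda)\rho_1)$, and denote the common value by $c \coloneqq \tfrac{1}{2}Q_{\min}(\rho_0,\rho_1)$. Since $\Lambda^{\star}$ is a feasible POVM element in the optimization defining $p_{\operatorname{err}}(q,\rho_0,\rho_1)$, we obtain the bound
\begin{equation*}
p_{\operatorname{err}}(q,\rho_0,\rho_1) \leq q\Tr(\Lambda^{\star}\rho_0)+(1-q)\Tr((\1-\Lambda^{\star})\rho_1) = qc+(1-q)c = c,
\end{equation*}
which holds for every $q\in[0,1]$.

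For the upper bound $\tfrac{1}{2}Q_{\min}(\rho_0,\rho_1)\leq 2p_{\operatorname{err}}(1/2,\rho_0,\rho_1)$, I will use instead the inequality form of the constraint. Let $\Lambda^{\star}$ attain $p_{\operatorname{err}}(1/2,\rho_0,\rho_1)$ and put $a\coloneqq \Tr(\Lambda^{\star}\rho_0)$ and $b\coloneqq \Tr((\1-\Lambda^{\star})\rho_1)$, so that $a+b=2p_{\operatorname{err}}(1/2,\rho_0,\rho_1)$. If $a\geq b$, then $\Lambda^{\star}$ is feasible for the $\geq$-form of the program defining $Q_{\min}(\rho_0,\rho_1)$, giving $\tfrac{1}{2}Q_{\min}(\rho_0,\rho_1) \leq a \leq a+b$. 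If $a<b$, then by Lemma~\ref{symm} we have $Q_{\min}(\rho_0,\rho_1)=Q_{\min}(\rho_1,\rho_0)$, and the operator $\1-\Lambda^{\star}$ is feasible for the $\geq$-form of $Q_{\min}(\rho_1,\rho_0)$ (its Type~I and Type~II errors in that problem are $b$ and $a$, respectively), yielding $\tfrac{1}{2}Q_{\min}(\rho_0,\rho_1)\leq b \leq a+b$. In either case $\tfrac{1}{2}Q_{\min}(\rho_0,\rho_1)\leq a+b=2p_{\operatorname{err}}(1/2,\rho_0,\rho_1)$.

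Neither step poses a substantial obstacle; the only nontrivial move is recognizing that the upper bound is naturally proved with the $\geq$-form of $Q_{\min}$ (so that the Helstrom optimizer is automatically feasible up to swapping $\rho_0\leftrightarrow\rho_1$), rather than with the original equality form, and that Lemma~\ref{symm} is what makes the second case work.
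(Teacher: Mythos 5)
Your proof is correct. The lower bound is essentially identical to the paper's argument: take the optimizer of the equality-constrained program, note that its Type~I and Type~II errors coincide, and observe that it is feasible for the $p_{\operatorname{err}}(q,\rho_0,\rho_1)$ minimization. For the upper bound, however, you take a genuinely different route. The paper exhibits an explicit feasible point for $Q_{\min}$, namely the pretty good measurement $\Lambda=(\rho_0+\rho_1)^{-1/2}\rho_1(\rho_0+\rho_1)^{-1/2}$ (which automatically satisfies $\Tr(\Lambda(\rho_0+\rho_1))=1$), and then invokes the Barnum--Knill bound $p_{\operatorname{guess}}(1/2,\rho_0,\rho_1)^2\le p^{\operatorname{pgm}}_{\operatorname{guess}}(1/2,\rho_0,\rho_1)$ to control its error by $2p_{\operatorname{err}}(1/2,\rho_0,\rho_1)$. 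You instead feed the Helstrom-optimal measurement into the $\geq$-constrained reformulation of $Q_{\min}$ from the remark after Theorem~\ref{theo-distilCPTP}, using Lemma~\ref{symm} to swap $\rho_0\leftrightarrow\rho_1$ when the Type~I error is the smaller of the two. Your argument is more self-contained (no appeal to the pretty-good-measurement literature) and in fact yields the marginally sharper intermediate bound $\tfrac{1}{2}Q_{\min}(\rho_0,\rho_1)\le\max\{a,b\}$; the paper's argument, in exchange, produces a concrete measurement achieving the bound, which it also reuses to show the feasible set of the $Q_{\min}$ program is nonempty. Both are valid, and your case analysis (including the feasibility check for $\1-\Lambda^{\star}$ in the swapped program) is sound.
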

\begin{proof}
Let $\Lambda_{\min}$ be a minimiser of $Q_{\min}(\rho_0,\rho_1)$. As $\Tr(\Lambda_{\min}(\rho_0+\rho_1))= 1$, we have $\Tr(\Lambda_{\min}\rho_0) = \Tr((\1-\Lambda_{\min})\rho_1)$. Hence, for each $q\in[0,1]$
\begin{align}Q_{\min}(\rho_0,\rho_1)/2 &= \Tr(\Lambda_{\min}\rho_0) = q \Tr(\Lambda_{\min}\rho_0) +(1-q)\Tr((\1-\Lambda_{\min})\rho_1)\nn\\&\ge \min_{0\le\Lambda\le\1}\Big( q \Tr(\Lambda\rho_0) +(1-q)\Tr((\1-\Lambda)\rho_1)\Big) = p_{\operatorname{err}}(q,\rho_0,\rho_1).
\end{align}	
For the remaining inequality in \eqref{eq:UpperLoweQBound}, we use the specific choice of $$\Lambda = \left(\rho_0+\rho_1\right)^{-1/2}\rho_1\left(\rho_0+\rho_1\right)^{-1/2},$$
with $\left(\rho_0+\rho_1\right)^{-1/2}$ being defined via the pseudo inverse. Note that $\left\{\Lambda,\1-\Lambda\right\}$ forms the so-called \emph{pretty good measurement} \cite{Belavkin75,Belavkin75a,HolPGM78,Hausladen93bach,hausladen94}. 
As $$\left(\rho_0+\rho_1\right)^{-1}\left(\rho_0+\rho_1\right) = \Pi_{\rho_0+\rho_1},$$ with $\Pi_{\rho_0+\rho_1}$ being the projector onto the support of the operator $\rho_0+\rho_1$, we 
see that
 \begin{align}
 \label{eq:Achievbar}\Tr\!\left(\Lambda(\rho_0+\rho_1)\right) = \Tr(\rho_1\Pi_{\rho_0+\rho_1}) =1.
 \end{align} 
 In the above, we have used that the null spaces satisfy $$N(\rho_0+\rho_1) = N(\rho_0)\cap N(\rho_1),$$
 which follows by the positive semi-definiteness of $\rho_0$ and $\rho_1$. Moreover, by \eqref{eq:Achievbar} we already get $\Tr((\1-\Lambda)\rho_1) = \Tr(\Lambda\rho_0)$. 
 Hence, we see 
 \begin{align}
 Q_{\min}/2 \le \Tr(\Lambda\rho_0) = \frac{1}{2}\left(\Tr\!\left(\Lambda\rho_0\right)+ \Tr\!\left((\1 - \Lambda)\rho_1\right)\right) .
 \end{align}
Using the well known fact that the error probability of the pretty good measurement is upper bounded by twice the minimum error probability \cite{BK02} (also compare \cite{HaWi06,CaTaMaABa07}), i.e.
 \begin{align}
 \label{eq:PGMerrorbound}
 p^{\operatorname{pgm}}_{\operatorname{err}}(1/2,\rho_0,\rho_1) =\frac{1}{2}\left(\Tr\!\left(\Lambda\rho_0\right)+ \Tr\!\left(\1 - \Lambda)\rho_1\right)\right) \le 2p_{\operatorname{err}}(1/2,\rho_0,\rho_1),
 \end{align}
this completes the proof.

Note that in order to see \eqref{eq:PGMerrorbound},  \cite[Eq.~(13)]{BK02} gives the following lower bound on the guessing probability of the pretty good measurement:
\begin{align}
p_{\operatorname{guess}}(1/2,\rho_0,\rho_1)^2 \le p^{\operatorname{pgm}}_{\operatorname{guess}} (1/2,\rho_0,\rho_1),
\end{align}
where $p_{\operatorname{guess}}(1/2,\rho_0,\rho_1)$ is the optimal guessing probability in the discrimination task. This bound immediately implies \eqref{eq:PGMerrorbound} since $(1-x)^2 \ge 1-2x$ for every real number $x$.\comment{
\begin{align*}
1- 2p_{\err}(1/2,\rho_0,\rho_1) &\le 1- 2p_{\err}(1/2,\rho_0,\rho_1) + p_{\err}(1/2,\rho_0,\rho_1)^2 = (1-p_{\err}(1/2,\rho_0,\rho_1))^2 \\&\le 1-p_{\err}^{\operatorname{pgm}}(1/2,\rho_0,\rho_1).
\end{align*}}
\end{proof}

\begin{lemma}
For states $\rho_0$ and $\rho_1$, the following bound holds
 \begin{align}
\label{eq:lowerboundQCB}
\xi_{\min}(\rho_0,\rho_1) \ge \xi(\rho_0,\rho_1) -  1,
\end{align}   
where $\xi_{\min}(\rho_0,\rho_1)$ is defined in \eqref{eq:mindiv} and $\xi(\rho_0,\rho_1)$ in \eqref{eq:chernoff-div-defn}.
\end{lemma}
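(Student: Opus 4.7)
The plan is to chain together the two inequalities already available: Lemma~\ref{lem:BoundsQ} gives an upper bound on $Q_{\min}$ in terms of the minimum error probability with uniform prior, while the (single-copy) quantum Chernoff bound gives an upper bound on that minimum error probability in terms of $\min_{s\in[0,1]}\Tr(\rho_0^s\rho_1^{1-s})$. After taking logarithms, the constant factor picked up along the way produces precisely the additive ``$-1$'' in the claim.

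Concretely, I would first apply the right-hand inequality of \eqref{eq:UpperLoweQBound} with $q=1/2$, which gives
\begin{equation}
Q_{\min}(\rho_0,\rho_1) \;\le\; 4\, p_{\operatorname{err}}(1/2,\rho_0,\rho_1).
\end{equation}
Next, I would invoke the well-known upper bound $p_{\operatorname{err}}(1/2,\rho_0,\rho_1) \le \tfrac12\Tr(\rho_0^s\rho_1^{1-s})$ valid for every $s\in[0,1]$ (this follows from the operator inequality $\tfrac12(\rho_0+\rho_1)-\tfrac12|\rho_0-\rho_1|\le \rho_0^s\rho_1^{1-s}$ applied after using the Helstrom--Holevo theorem). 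Minimising over $s$,
\begin{equation}
p_{\operatorname{err}}(1/2,\rho_0,\rho_1) \;\le\; \tfrac12\min_{s\in[0,1]}\Tr(\rho_0^s\rho_1^{1-s}).
\end{equation}

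Combining the two displayed inequalities yields
\begin{equation}
Q_{\min}(\rho_0,\rho_1) \;\le\; 2\min_{s\in[0,1]}\Tr(\rho_0^s\rho_1^{1-s}),
\end{equation}
and taking $-\log$ of both sides gives $\xi_{\min}(\rho_0,\rho_1)\ge \xi(\rho_0,\rho_1)-1$, as desired.

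There is no real obstacle here: both ingredients are already in place. The only non-trivial step one might wish to write out explicitly is the single-copy Chernoff-type bound on $p_{\operatorname{err}}(1/2,\rho_0,\rho_1)$; if I wanted to avoid citing it, I would prove it directly via the operator inequality $A+B-|A-B|\le 2 A^s B^{1-s}$ for $A,B\ge 0$ and $s\in[0,1]$ (equivalently, $\min(a,b)\le a^s b^{1-s}$ in the commuting case, extended to the non-commuting case by Audenaert et al.), applied to $A=\tfrac12\rho_0$ and $B=\tfrac12\rho_1$ together with the Helstrom--Holevo identity $p_{\operatorname{err}}(1/2,\rho_0,\rho_1)=\tfrac12(1-\tfrac12\|\rho_0-\rho_1\|_1)$.
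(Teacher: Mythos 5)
Your proposal is correct and follows essentially the same route as the paper: the paper likewise combines the right-hand inequality of Lemma~\ref{lem:BoundsQ} (rewritten via the Helstrom--Holevo theorem as $Q_{\min}(\rho_0,\rho_1)\le 2-\|\rho_0-\rho_1\|_1$) with the single-copy quantum Chernoff bound of \cite{ACMBMAV07} to get $Q_{\min}(\rho_0,\rho_1)\le 2Q(\rho_0,\rho_1)$, and then takes $-\log$. Your intermediate bookkeeping through $p_{\operatorname{err}}(1/2,\rho_0,\rho_1)$ rather than the trace-norm expression is only a cosmetic difference, and your optional derivation of the single-copy bound from the operator inequality is simply an inlined version of what the paper cites.
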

\begin{proof}
  From the Helstrom--Holevo Theorem and Lemma~\ref{lem:BoundsQ}, we conclude the upper bound 
\begin{align*}
Q_{\min}(\rho_0,\rho_1) \le 2- \|\rho_0-\rho_1\|_1.
\end{align*}
Using  Theorem~1 in \cite{ACMBMAV07}	(also compare Eq.~(6) therein) we see 
\begin{align*}
Q_{\min}(\rho_0,\rho_1) \le 2- \|\rho_0-\rho_1\|_1\le 2Q(\rho_0,\rho_1),
\end{align*}
with $Q(\rho_0,\rho_1) = \min_{0\le s\le1}\Tr(\rho_0^s\rho_1^{1-s})$.
This directly yields the desired lower bound on $\xi_{\min}$.
\end{proof}


	\begin{remark}
	  Using Lemma~\ref{symm}, $\xi_{\min}$ can be written as
	  \begin{align}
	   \xi_{\min}(\rho_0,\rho_1) &\coloneqq  -\log\!\left( \min\left\{\Tr\!\left(\Lambda\rho_1\right)\Big| \Tr\!\left(\Lambda(\rho_0+\rho_1)\right)= 1,\,\,0\le\Lambda\le\1\right\}\right)-1.
	   \end{align}
	  The use of the subscript in $\xi_{\min}$ is motivated by the similarity of the above expression (modulo the additive constant) with $D_{\min}$:
	\begin{align}
	D_{\min}(\rho_0\|\rho_1) &= -\log\min\left\{\Tr\!\left(\Lambda\rho_1\right)\Big| \Tr\!\left(\Lambda\rho_0\right)= 1,\,\,0\le\Lambda\le\1\right\}.
	\end{align}	  
	The notation $\xi_{\min}$ is further motivated by analogy with the resource theory of asymmetric distinguishability~\cite{Wang2019states}, where the quantity analogous to $\xi_{\min}$ is the min-relative entropy $D_{\min}$ \cite{D09}.
	\end{remark}
	
	\subsection{Proof of Theorem~\ref{theo-distilCPTP} --- One-shot exact distillable-SD under  ${\rm{CPTP}}_A$ maps}\label{proof:distilCPTP}
\begin{proof}
    We first prove the achievability, i.e., the lower bound $\xi_d(\rho_{XA}) \ge \xi_{\min}(\rho_{XA}) \equiv \xi_{\min}(\rho_0,\rho_1)$.\\
    
Let $\Lambda_{\min}$ be a minimiser of the optimisation problem corresponding to $Q_{\min}$ given in \eqref{eq:Qmin}, i.e., $2\Tr\!\left(\Lambda_{\min}\rho_0\right) = Q_{\min}(\rho_0,\rho_1)$. Let $\cE$ be the following measure-and-prepare channel:
\begin{align}
\cE(\rho) = \Tr((\1-\Lambda_{\min})\rho)\kb{0} + \Tr(\Lambda_{\min}\rho)\kb{1}.
\end{align}
Hence, using $\Tr(\Lambda_{\min}\rho_0) = \Tr((\1-\Lambda_{\min})\rho_1)$ we get
\begin{align}
(\1\otimes\cE)\left(\rho_{XA}\right) = p\kb{0}\otimes\pi_M +(1-p) \kb{1}\otimes\sigma^{(1)}\pi_M \sigma^{(1)},
\end{align}
with $M = {1}/({2\Tr(\Lambda_{\min}\rho_0))}$.
This implies that 
\begin{align}
\xi_{\min}(\rho_0,\rho_1) \coloneqq  -\log Q_{\min}(\rho_0,\rho_1) = -\log(2\Tr(\Lambda_{\min}\rho_0))  \le \xi_{d}(\rho_{XA}).
\end{align}

To obtain the converse, i.e., the reverse inequality $\xi_d(\rho_0,\rho_1) \le \xi_{\min}(\rho_0,\rho_1)$, we first note that for all $M\ge 0$ we have $\pi_M + \sigma^{(1)}\pi_M\sigma^{(1)}=\1$ and hence by picking $\Lambda = \kb{1}$ 
\begin{align*}
	Q_{\min}(\pi_M,\sigma^{(1)}\pi_M\sigma^{(1)}) &= 2\min\left\{\Tr\!\left(\Lambda\pi_M\right)\Big| \Tr\!\left(\Lambda\right)= 1,\,\,0\le\Lambda\le\1\right\} \\&\le 2\langle 1|\pi_M| 1\rangle = \frac{1}{M},
\end{align*}
and therefore
\begin{align*}
\xi_{\min}(\pi_M,\sigma^{(1)}\pi_M\sigma^{(1)}) \ge \log M.
\end{align*}
Now considering $\cE$ to be an arbitrary CPTP map such that 
\begin{align*}
(\1\otimes\cE)\left(\rho_{XA}\right)= p\kb{0}\otimes\pi_{M} +(1-p) \kb{1}\otimes\sigma^{(1)}\pi_{M} \sigma^{(1)}, 
\end{align*}
for some $M\ge1$, we see by the data processing inequality in Lemma~\ref{lem:DPI} that
\begin{align*}
\xi_{\min}(\rho_0,\rho_1) \ge \xi_{\min}(\cE(\rho_0),\cE(\rho_1))= \xi_{\min}(\pi_M,\sigma^{(1)}\pi_M\sigma^{(1)}) \ge \log M .
\end{align*}
As $\cE$ is an arbitrary CPTP map satisfying the constraint in \eqref{eq:DistillMistill}, we get
$$\xi_{\min}{(\rho_0,\rho_1)} \ge \xi_d{(\rho_{XA})},$$ and hence $\xi_{\min}(\rho_{XA}) \equiv \xi_{\min}(\rho_0,\rho_1)=\xi_d(\rho_{XA})$.
\end{proof}

\subsection{Proof of Theorem~\ref{theo-distilCDS} --- One-shot exact distillable-SD under CDS maps}

\label{proof:distilCDS}

\begin{proof}
We start with the achievability part, i.e., the lower bound 
$$\xi^\star_{d}(\rho_{XA}) \ge-\log(2p_{\operatorname{err}}(\rho_{XA})).$$
Let $\Lambda_{\min}$ be the minimiser of
\begin{align}
p_{\operatorname{err}}(\rho_{XA})  = \min_{0\le\Lambda\le\1}\Big(p\Tr\!\left(\Lambda\rho_0 \right)+ (1-p)\Tr\!\left((\1-\Lambda)\rho_1\right)\Big).
\end{align}
Define the quantum operations, i.e., completely positive, trace non-increasing maps
\begin{align}
\cE_0(\rho) = \frac{1}{2}\Big(\Tr((\1-\Lambda_{\min})\rho)\kb{0} + \Tr(\Lambda_{\min}\rho)\kb{1}\Big), \nn\\
\cE_1(\rho) = \frac{1}{2}\Big(\Tr((\1-\Lambda_{\min})\rho)\kb{1} + \Tr(\Lambda_{\min}\rho)\kb{0}\Big), 
\end{align}
and note that $\cE_0$ and $\cE_1$ sum to a CPTP map. Hence, we can define the corresponding CDS map as
\begin{align}
\cN = {\rm id}_X\otimes\cE_0 + {\cF}_X\otimes\cE_1,
\end{align}
 where ${\rm id}_X$ and $\cF_X$ denote the identity and flip channel on the the classical system $X$, respectively.
Noting that
\begin{align}
p\,\cE_0(\rho_0) + (1-p)\cE_1(\rho_1) \nn&= \frac{1}{2}\Big( p \Tr((\1-\Lambda_{\min})\rho_0) + (1-p)\Tr(\Lambda_{\min}\rho_1)\Big)\kb{0} \\
&\qquad +\frac{1}{2}\Big( p \Tr(\Lambda_{\min}\rho_0) + (1-p)\Tr\!\left((\1-\Lambda_{\min})\rho_1\right)\Big)\kb{1} \nn\\ &= \frac{1}{2}\Big((1- p_{\operatorname{err}}(\rho_{XA}))\kb{0} + p_{\operatorname{err}}(\rho_{XA})\kb{1}\Big) \nn\\&=\frac{1}{2} \pi_{{1}/{2p_{\operatorname{err}}(\rho_{XA})}},
\end{align}
and by symmetry
\begin{align}
p\,\cE_1(\rho_0) + (1-p)\cE_0(\rho_1) &= \frac{1}{2}\Big((1- p_{\operatorname{err}})\kb{1} + p_{\operatorname{err}}\kb{0}\Big)\nn\\& = \frac{1}{2}\sigma^{(1)}\pi_{{1}/{2p_{\operatorname{err}}(\rho_{XA})}}
\sigma^{(1)}.
\end{align}
Hence,
\begin{align}
\cN(\rho_{XA}) \nn&=\nn \kb{0}\otimes\left(p\cE_0(\rho_0) + (1-p)\cE_1(\rho_1)\right) 
+ \kb{1}\otimes\left(p\cE_1(\rho_0) + (1-p)\cE_0(\rho_1)\right)\nn\\
&= \frac{1}{2}\Big(\kb{0}\otimes\pi_{1/2p_{\operatorname{err}}(\rho_{XA})} + \kb{1}\otimes\sigma^{(1)}\pi_{1/2p_{\operatorname{err}}(\rho_{XA})}\sigma^{(1)}\Big) \nn\\&=
	\gamma_{XQ}^{(1/2p_{\operatorname{err}}(\rho_{XA}))} ,
\end{align}
which implies 
\begin{align}
\xi^\star_d(\rho_{XA}) \ge -\log(2p_{\operatorname{err}}(\rho_{XA})).
\end{align}

To obtain the upper bound in \eqref{eq:DistillCDSExact}, we use monotonicity of the minimum error probability under CDS maps. More precisely, let $M\geq 1$  satisfy the constraint in \eqref{eq:CDSDistill}; i.e., there exists a CDS map $\cN$ such that
\begin{align}
\cN(\rho_{XA}) = \gamma_{XQ}^{(M)}.
\end{align}
Using the monotonicity of the minimum error probability $p_{\operatorname{err}}$ under CDS maps, we obtain
\begin{align}
-\log\!\left(2p_{\operatorname{err}}(\rho_{XA})\right)  &\nn\ge -\log\big(2p_{\operatorname{err}}(\cN(\rho_{XA}))\big)  = -\log\!\left(2p_{\operatorname{err}}(\gamma_{XQ}^{(M)})\right) \\&= \log M  .
\end{align}
As $M$ is arbitrary under the constraints in \eqref{eq:CDSDistill}, we have shown that
\begin{align}
 -\log(2p_{\operatorname{err}}(\rho_{XA})) \ge\xi^\star_d(\rho_{XA}),
\end{align}
which finishes the proof.
\end{proof}

\subsection{Proof of Theorem~\ref{theo-asymp} --- Optimal asymptotic rate of exact SD-distillation}

\label{proof-asymp}

\begin{proof}
We first prove the result in the case of free operations being ${\rm{CPTP}}_A$ maps.
As a consequence of Theorem~\ref{theo-distilCPTP}, we have the equality
\begin{align}
\frac{\xi_d(\rho^{(n)}_{XA})}{n} = \frac{\xi_{\min}(\rho_0^{\otimes n},\rho_1^{\otimes n})}{n},
\end{align}
and so it suffices to prove the asymptopic result for $\xi_{\min}$.
Using \eqref{eq:lowerboundQCB} and the fact that the quantum Chernoff divergence $\xi$ is additive, we get
\begin{align*}
\liminf_{n\to\infty} \frac{\xi_{\min}(\rho_0^{\otimes n},\rho_1^{\otimes n})}{n} \ge \xi(\rho_0,\rho_1).
\end{align*}
To get the other inequality, we use the lower bound in \eqref{eq:UpperLoweQBound} to see that for every $q\in(0,1)$
\begin{align}
\frac{\xi_{\min}(\rho_0^{\otimes n},\rho_1^{\otimes n})}{n} \le \frac{-\log(2p_{\operatorname{err}}(q,\rho_0^{\otimes n},\rho_1^{\otimes n}))}{n}.
\end{align}
Hence, by using the main result in \cite{ACMBMAV07}, we conclude that
\begin{align}
\limsup_{n\to\infty}\frac{\xi_{\min}(\rho_0^{\otimes n},\rho_1^{\otimes n})}{n} \le \lim_{n\to\infty}\frac{-\log(p_{\operatorname{err}}(q,\rho_0^{\otimes n},\rho_1^{\otimes n})) -1}{n} = \xi(\rho_0,\rho_1).
\end{align}
In the case of free operations being CDS maps, the equality
\begin{align}
\lim_{n\to\infty}\frac{\xi^\star_{d}(\rho^{(n)}_{XA})}{n} = \lim_{n\to\infty}\frac{-\log(p_{\operatorname{err}}(\rho^{(n)}_{XA})) -1}{n} = \xi(\rho_0,\rho_1) 
\end{align}
directly follows from Theorem~\ref{theo-distilCDS}, together with the main result of \cite{ACMBMAV07}, which finishes the proof.
\end{proof}
  
  \subsection{Approximate SD-distillation}
  
  We now define the \emph{one-shot approximate distillable-SD} for a general c-q state
	 \begin{align}
	  \rho_{XA} = p\kb{0}\otimes\rho_0 + (1-p)\kb{1}\otimes\rho_1.
	 \end{align}
  \begin{definition}
  For $\eps\ge0$ and golden unit $\gamma_{XQ}^{(M,q)}$, the \emph{one-shot approximate distillable-SD} of the c-q state $\rho_{XA}$ is given by
  \begin{align}
  \label{eq:oneApproxDistill}\nn
\xi^{\FO,q,\eps}_{d}(\rho_{XA}) &\coloneqq \log\Big(\sup\left\{ M\Big| d^\prime_{\operatorname{FO}}( \rho_{XA}\mapsto \gamma_{XQ}^{(M,q)})\le \eps \right\}\Big) \\
&=\log\Big(\sup\left\{ M\Big| D^\prime\!\left(\cA\left(\rho_{XA}\right), \gamma_{XQ}^{(M,q)}\right)\le \eps,\,\,\cA\in {\rm{FO}}\right\}\Big),
  \end{align}
  where the minimum conversion error $d^\prime_{\operatorname{FO}}$ is defined in Definition~\ref{def:conversiondistance}.
  	For the choice  \begin{align}{\hbox{\FO}}\equiv \Big\{{\rm{id}} \otimes\cE\Big| \cE \, \operatorname{CPTP} \text{  on system A}\Big\}\equiv {\rm{CPTP}}_A,\end{align}
  	the only sensible choice in \eqref{eq:oneApproxDistill} is $q=p$, as free operations of the form $\rm{id}\otimes\cE$ cannot change the prior in the c-q state. In that case, we simply write
  	\begin{align}
  	  \xi^{\eps}_{d}(\rho_{XA}) \equiv \xi^{{\rm{CPTP}}_A,p,\eps}_{d}(\rho_{XA}).
  	\end{align}
  	Whereas for the choice ${\hbox{\FO}}\equiv {\rm{CDS}}$ and $q=1/2$, we use the notation
  	\begin{align}
  	    \xi^{\star,\eps}_d(\rho_{XA})\equiv\xi^{{\rm{CDS}},1/2,\eps}_{d}(\rho_{XA}).
  	\end{align}
    \end{definition}
    
    \subsubsection{One-shot approximate distillable-SD as a semi-definite program}
    
    In this section, we prove that the one-shot approximate distillable-SD\ under $\cptp_A$ can be
evaluated by means of a semi-definite program and comment on SDP formulations of the one-shot approximate distillable-SD\ under $\cds$. Under $\cptp_A$ this quantity is defined as follows: \comment{Consider that this quantity is
defined for CPTP$_{A}$ maps as follows:}%
\begin{equation}
\xi_{d}^{\varepsilon}(\rho_{XA}) \coloneqq \log   \sup_{\mathcal{A}%
\in\text{CPTP}_{A}}\left\{  M\ \middle|\ D^{\prime}(\mathcal{A}(\rho_{XA}%
),\gamma_{XQ}^{(M,p)})\leq\varepsilon\right\}
,\label{eq:start-point-SDP-distill-CPTP-A}%
\end{equation}
and for CDS maps as
\begin{equation}
\xi_{d}^{\star,\varepsilon}(\rho_{XA}) \coloneqq 
\log  \sup_{\mathcal{A}%
\in\text{CDS}}\left\{  M\ \middle |\ D^{\prime}(\mathcal{A}(\rho_{XA}),\gamma
_{XQ}^{(M,1/2)})\leq\varepsilon\right\}    .
\end{equation}

We begin with the following:

\begin{proposition}
For all $\varepsilon\geq0$ and $p\in(0,1)$ and for every elementary source
described by the c-q state $\rho_{XA}$, the one-shot approximate distillable-SD under $\cptp_A$ maps can be evaluated by the following semi-definite program:
\begin{equation}
\xi_{d}^{\varepsilon}(\rho_{XA})=-\log\left[  2\inf_{\substack{0\le\Lambda\le\1,\\r\in\left[
0,1/2\right]  }}\Big\{
\begin{array}
[c]{c}%
r\Big|\,p\left\vert 1-r-\operatorname{Tr}[\Lambda\rho_{0}]\right\vert +\left(
1-p\right)  \left\vert r-\operatorname{Tr}[\Lambda\rho_{1}]\right\vert 
\leq\varepsilon\min\left\{  r,p,1-p\right\} 
\end{array}
\Big\}  \right]  .\label{eq:SDP-for-1-shot-distill-SD}%
\end{equation}

\end{proposition}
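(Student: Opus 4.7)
The plan is to start from the definition
\begin{equation}
\xi_{d}^{\varepsilon}(\rho_{XA})=\log\sup_{\mathcal{E}\in\cptp(A\to Q)}\left\{M\,\middle|\,D'\bigl((\id\otimes\mathcal{E})(\rho_{XA}),\gamma_{XQ}^{(M,p)}\bigr)\le\varepsilon\right\},
\end{equation}
to reparametrize the golden-unit parameter by $r\coloneqq 1/(2M)\in(0,1/2]$ (so $\log M=-\log(2r)$ and the $\sup$ over $M$ becomes an $\inf$ over $r$), and then to reduce the optimization over $\mathcal{E}$ to an optimization over a single measurement effect $\Lambda$ on $A$.

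The first reduction is to argue that, without loss of generality, $\mathcal{E}(\rho_0)$ and $\mathcal{E}(\rho_1)$ may be assumed to be diagonal in the computational basis of $Q$. This follows by post-composing $\mathcal{E}$ with the completely dephasing channel $\Delta_Q$: since $\pi_M$ and $\sigma^{(1)}\pi_M\sigma^{(1)}$ are already diagonal, the target $\gamma_{XQ}^{(M,p)}$ is invariant under $\id\otimes\Delta_Q$, and since $\id\otimes\Delta_Q\in\cptp_A\subset\cds$, the data-processing inequality for $D'$ (Lemma~\ref{lem:DPI-D-prime}) shows that the conversion error can only decrease. Every channel with diagonal output on $\rho_0,\rho_1$ is characterized by $\Lambda\coloneqq\mathcal{E}^{*}(|1\rangle\!\langle 1|)\in[0,\1]$, giving $\mathcal{E}(\rho_i)=(1-\Tr[\Lambda\rho_i])|0\rangle\!\langle 0|+\Tr[\Lambda\rho_i]|1\rangle\!\langle 1|$, and conversely every such $\Lambda$ defines a valid CPTP map realized as a measure-and-prepare channel.

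Next, using the block structure of classical-quantum states sharing the same classical prior, the numerator of $D'$ decomposes as $\tfrac{p}{2}\|\mathcal{E}(\rho_0)-\pi_M\|_1+\tfrac{1-p}{2}\|\mathcal{E}(\rho_1)-\sigma^{(1)}\pi_M\sigma^{(1)}\|_1$, which, after substituting the diagonal parametrization and writing $\pi_M=(1-r)|0\rangle\!\langle 0|+r|1\rangle\!\langle 1|$, simplifies directly to
\begin{equation}
p\bigl|1-r-\Tr[\Lambda\rho_0]\bigr|+(1-p)\bigl|r-\Tr[\Lambda\rho_1]\bigr|.
\end{equation}
For the denominator, the Helstrom--Holevo theorem gives $p_{\err}(\gamma_{XQ}^{(M,p)})=\tfrac{1}{2}(1-|p-r|-|1-p-r|)$, and a short case analysis on the signs of $p-r$ and $1-p-r$ shows this to equal $\min\{r,p,1-p\}$ in all regimes. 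This piecewise identification is the main (mildly) technical step; everything else is bookkeeping.

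Combining these pieces, the constraint $D'\le\varepsilon$ becomes exactly the inequality displayed in~\eqref{eq:SDP-for-1-shot-distill-SD}, and minimizing $r$ subject to this constraint and $0\le\Lambda\le\1$ yields the stated formula. Finally, to justify calling this a semi-definite program, one introduces standard slack variables $u\ge\pm(1-r-\Tr[\Lambda\rho_0])$, $v\ge\pm(r-\Tr[\Lambda\rho_1])$ and $t\le r,\,t\le p,\,t\le 1-p$, and replaces the nonlinear constraint by the single linear inequality $pu+(1-p)v\le\varepsilon t$; the resulting problem in variables $(\Lambda,r,u,v,t)$ is a linear SDP (in fact an LP with the operator inequality $0\le\Lambda\le\1$).
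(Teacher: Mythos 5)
Your proposal is correct and follows essentially the same route as the paper's proof: dephase the $Q$ output to reduce to a measurement channel via the DPI for $D'$, block-decompose the trace distance, and evaluate $p_{\err}(\gamma_{XQ}^{(M,p)})=\min\{r,p,1-p\}$ with $r=1/(2M)$ (the paper works with $M=1/r$ and substitutes $r\to 2r$ at the very end, which is the same reparametrization, and likewise leaves the slack-variable LP reformulation implicit). The only slip is that with your convention $\Lambda=\mathcal{E}^{*}(|1\rangle\!\langle 1|)$ the numerator actually comes out as $p\left\vert r-\Tr[\Lambda\rho_{0}]\right\vert+(1-p)\left\vert 1-r-\Tr[\Lambda\rho_{1}]\right\vert$; taking instead $\Lambda=\mathcal{E}^{*}(|0\rangle\!\langle 0|)$ (i.e., relabeling $\Lambda\mapsto\1-\Lambda$) yields the displayed expression, and since the feasible set $0\le\Lambda\le\1$ is invariant under this relabeling the final formula is unaffected.
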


\begin{proof}
By definition the quantity $\xi_{d}^{\varepsilon}(\rho_{XA})$ is
equal to the negative logarithm of%
\begin{equation}
\inf_{\mathcal{A}\in\text{CPTP}_{A},r\in\left[  0,1\right]  }\left\{
r\ |\ D^{\prime}\left(\mathcal{A}(\rho_{XA}),\gamma_{XQ}^{(1/r,p)}\right)\leq
\varepsilon\right\}  .
\end{equation}
Let $\mathcal{A}\in$CPTP$_{A}$. Then by applying a completely dephasing
channel to the $Q$ system, the state $\gamma_{XQ}^{(1/r,p)}$ does not change,
whereas the local channel becomes a measurement channel of the following form:%
\begin{equation}
\mathcal{M}(\omega) \coloneqq \operatorname{Tr}[\Lambda\omega]|0\rangle\!\langle
0|_{Q}+\operatorname{Tr}[\left(  I-\Lambda\right)  \omega]|1\rangle\!\langle1|_Q.
\end{equation}
So we find that the optimal value is given by%
\begin{equation}
\inf_{\mathcal{M}}\left\{  r\ |\ D^{\prime}\left(\mathcal{M}(\rho_{XA}),\gamma
_{XQ}^{(1/r,p)}\right)\leq\varepsilon\right\}  ,
\end{equation}
as a consequence of the data-processing inequality for $D'$ given in Lemma~\ref{lem:DPI-D-prime} and with the optimization over every measurement channel $\mathcal{M}$. Now
consider that%
\begin{multline}
\mathcal{M}(\rho_{XA})=p|0\rangle\!\langle0|_{X}\otimes\operatorname{Tr}%
[\Lambda\rho_{0}]|0\rangle\!\langle0|_{Q}+p|0\rangle\!\langle0|_{X}\otimes
\operatorname{Tr}[\left(  I-\Lambda\right)  \rho_{0}]|1\rangle\!\langle1|_{Q}\\
+\left(  1-p\right)  |1\rangle\!\langle1|_{X}\otimes\operatorname{Tr}%
[\Lambda\rho_{1}]|0\rangle\!\langle0|_{Q}+\left(  1-p\right)  |1\rangle
\langle1|_{X}\otimes\operatorname{Tr}[\left(  I-\Lambda\right)  \rho
_{1}]|1\rangle\!\langle1|_{Q}%
\end{multline}
and%
\begin{equation}
D^{\prime}\left(\mathcal{M}(\rho_{XA}),\gamma_{XQ}^{(1/r,p)}\right)=\frac{\frac{1}%
{2}\left\Vert \mathcal{M}(\rho_{XA})-\gamma_{XQ}^{(1/r,p)}\right\Vert _{1}%
}{p_{\text{err}}(\gamma_{XQ}^{(1/r,p)})}.
\end{equation}
By applying \eqref{eq:simplify-p-err-golden-unit}, we know that%
\begin{align}
p_{\text{err}}(\gamma_{XQ}^{(1/r,p)})  & =\frac{1}{2}\left(  1-\left\vert
p-\frac{r}{2}\right\vert -\left\vert 1-p-\frac{r}{2}\right\vert \right)  \\
& =\min\left\{  \frac{r}{2},p,1-p\right\}  ,
\end{align}
where the last equality follows from a simplification that holds for
$p\in\left[  0,1\right]  $ and $r\in\left[  0,1\right]  $. Now consider that%
\begin{multline}
\gamma_{XQ}^{(1/r,p)}=p|0\rangle\!\langle0|_{X}\otimes\left(  1-\frac{r}%
{2}\right)  |0\rangle\!\langle0|_{Q}+p|0\rangle\!\langle0|_{X}\otimes\frac{r}%
{2}|1\rangle\!\langle1|_{Q}\\
+\left(  1-p\right)  |1\rangle\!\langle1|_{X}\otimes\frac{r}{2}|0\rangle\!
\langle0|_{Q}+\left(  1-p\right)  |1\rangle\!\langle1|_{X}\otimes\left(
1-\frac{r}{2}\right)  |1\rangle\!\langle1|_{Q}.
\end{multline}
So then we find that%
\begin{align}
  \left\Vert \mathcal{M}(\rho_{XA})-\gamma_{XQ}^{(1/r,p)}\right\Vert
_{1}
&  =p\left\vert 1-\frac{r}{2}-\operatorname{Tr}[\Lambda\rho_{0}]\right\vert
+p\left\vert \frac{r}{2}-\operatorname{Tr}[\left(  I-\Lambda\right)  \rho
_{0}]\right\vert \nonumber\\
&  \qquad+\left(  1-p\right)  \left\vert \frac{r}{2}-\operatorname{Tr}%
[\Lambda\rho_{1}]\right\vert +\left(  1-p\right)  \left\vert \left(
1-\frac{r}{2}\right)  -\operatorname{Tr}[\left(  I-\Lambda\right)  \rho
_{1}]\right\vert \\
&  =2p\left\vert 1-\frac{r}{2}-\operatorname{Tr}[\Lambda\rho_{0}]\right\vert
+2\left(  1-p\right)  \left\vert \frac{r}{2}-\operatorname{Tr}[\Lambda\rho
_{1}]\right\vert .
\end{align}
So the optimization problem is equivalent to the following:%
\begin{equation}
\inf_{\Lambda\geq0,r\in\left[  0,1\right]  }\left\{
\begin{array}
[c]{c}%
r:p\left\vert 1-\frac{r}{2}-\operatorname{Tr}[\Lambda\rho_{0}]\right\vert
+\left(  1-p\right)  \left\vert \frac{r}{2}-\operatorname{Tr}[\Lambda\rho
_{1}]\right\vert \\
\qquad\leq\varepsilon\min\left\{  \frac{r}{2},p,1-p\right\}  ,\ \Lambda\leq I
\end{array}
\right\}  .
\end{equation}
Let us make the substitution $r\rightarrow2r$, and the above becomes the
following:%
\begin{equation}
2\inf_{\Lambda\geq0,r\in\left[  0,1/2\right]  }\left\{
\begin{array}
[c]{c}%
r:p\left\vert 1-r-\operatorname{Tr}[\Lambda\rho_{0}]\right\vert +\left(
1-p\right)  \left\vert r-\operatorname{Tr}[\Lambda\rho_{1}]\right\vert \\
\qquad\leq\varepsilon\min\left\{  r,p,1-p\right\}  ,\ \Lambda\leq I
\end{array}
\right\}  .
\end{equation}
This concludes the proof.
\end{proof}

\begin{remark}
We can see that if $\varepsilon=0$ on the right-hand side of
\eqref{eq:SDP-for-1-shot-distill-SD}, this expression reduces to the $\xi_{\min}$ quantity
defined in \eqref{eq:mindiv}.
\end{remark}

\comment{\begin{remark} An alternate SDP formulation for
the one-shot approximate distillable-SD under $\cptp_A$, stemming from an SDP formulation for the minimum conversion error, is also possible. Similarly, the one-shot approximate distillable-SD under CDS maps can also be formulated as an SDP. We do not include details of these SDPs here, but note that they are available in the arXiv posting of our paper \cite{RTSDarxiv}.
\end{remark}}

We now provide an alternate characterization of the semi-definite program for
the one-shot approximate distillable-SD in Proposition~\ref{prop:alt-SDP-one-shot-distillable-SD} below. The proof is given
in Appendix~\ref{app:proof-one-shot-distillable-SD-SDP} and starts from the SDP for the minimum conversion error in \eqref{prop:SDP-div-err}.

\begin{proposition}
\label{prop:alt-SDP-one-shot-distillable-SD}
The approximate one-shot distillable-SD\ under CPTP$_{A}$ maps can be
calculated by means of the following semi-definite program:%
\begin{equation}
\xi_{d}^{\varepsilon}(\rho_{XA})=-\log \inf_{\substack{c^{0,0}%
,c^{0,1},c^{1,0},c^{1,1}\geq0,\\e^{0},e^{1}\geq0,\Lambda\geq0,r\in\left[
0,1\right]  }}\left\{
\begin{array}
[c]{c}
r:\\
2\sum_{i,j\in\left\{  0,1\right\}  }c^{i,j}\leq\varepsilon\left(  1-p-\left(
e^{0}+e^{1}\right)  \right)  ,\\
c^{0,0}\geq-p\left(  \operatorname{Tr}[\Lambda\rho_{0}]-\left(  1-\frac{r}%
{2}\right)  \right)  ,\\
c^{0,1}\geq p\left(  \operatorname{Tr}[\Lambda\rho_{0}]-\left(  1-\frac{r}%
{2}\right)  \right)  ,\\
c^{1,0}\geq-\left(  1-p\right)  \left(  \operatorname{Tr}[\Lambda\rho
_{1}]-\frac{r}{2}\right)  ,\\
c^{1,1}\geq\left(  1-p\right)  \left(  \operatorname{Tr}[\Lambda\rho
_{1}]-\frac{r}{2}\right)  ,\\
e^{0}\geq\frac{r}{2}-p,\quad e^{1}\geq\left(  1-p\right)  -\frac{r}{2}%
,\quad\Lambda\leq I
\end{array}
\right\}  .
\end{equation}

\end{proposition}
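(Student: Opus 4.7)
The plan is to combine the dual SDP for the scaled trace distance $D'$ from Proposition~\ref{prop:SDP-div-err} with the same measurement-channel reduction used in the proof of the preceding SDP characterization of $\xi_d^\varepsilon$. Starting from $\xi_d^\varepsilon(\rho_{XA}) = -\log\inf\{r\in[0,1] : \exists\,\mathcal{A}\in\cptp_A\text{ with }D'(\mathcal{A}(\rho_{XA}),\gamma_{XQ}^{(1/r,p)})\leq\varepsilon\}$, I first apply a completely dephasing channel on the output qubit $Q$. This dephasing leaves $\gamma_{XQ}^{(1/r,p)}$ invariant, and because $D'$ is monotone under $\cptp_A$ maps (by Lemma~\ref{lem:DPI-D-prime}, using $\cptp_A \subset \cds$), it suffices to optimize only over measurement-prepare channels of the form $\mathcal{M}(\omega) = \operatorname{Tr}[\Lambda\omega]\kb{0}_Q + \operatorname{Tr}[(I-\Lambda)\omega]\kb{1}_Q$ with $0 \leq \Lambda \leq I$.

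Next, I substitute the dual SDP of Proposition~\ref{prop:SDP-div-err} to encode the constraint $D'(\mathcal{M}(\rho_{XA}),\gamma_{XQ}^{(1/r,p)}) \leq \varepsilon$ as the existence of positive semi-definite $B_{XQ}, C_{XQ}, D_Q, E_Q$ and $s \in \mathbb{R}$ satisfying $B - C = s(\mathcal{M}(\rho_{XA}) - \gamma_{XQ}^{(1/r,p)})$, $D - E = s(p\sigma_0 - (1-p)\sigma_1)$, $\operatorname{Tr}(D+E) \leq s - 1$, and $\operatorname{Tr}(B+C) \leq \varepsilon$. Since both $\mathcal{M}(\rho_{XA}) - \gamma_{XQ}^{(1/r,p)}$ and $p\sigma_0 - (1-p)\sigma_1$ are diagonal in the product computational basis, the dual variables can be taken diagonal in that same basis without loss of generality, reducing the SDP to scalar constraints on the four diagonal entries of $B, C$ and the two diagonal entries of $D, E$. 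The four slack variables $c^{i,j}$ are then introduced to bound the positive and negative parts of each of the four diagonal entries of $\mathcal{M}(\rho_{XA}) - \gamma_{XQ}^{(1/r,p)}$ (after absorbing $s$), yielding the lower-bound constraints $c^{i,j} \geq \pm\, p(\operatorname{Tr}[\Lambda\rho_0] - (1-r/2))$ and $c^{i,j} \geq \pm(1-p)(\operatorname{Tr}[\Lambda\rho_1] - r/2)$ stated in the proposition; the pair $e^0, e^1$ plays the analogous role for the two diagonal entries of $p\sigma_0 - (1-p)\sigma_1$.

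The main obstacle is to absorb the auxiliary SDP parameter $s$ into these rescaled slack variables, and to combine the constraint $\operatorname{Tr}(D+E) \leq s - 1$ with the objective $\operatorname{Tr}(B+C) \leq \varepsilon$ into the single aggregated inequality $2\sum_{i,j}c^{i,j} \leq \varepsilon(1 - p - (e^0 + e^1))$ appearing in the statement. This requires a case analysis over $r \in [0,1]$ to verify that, when $e^0, e^1$ are at their minimum feasible values subject to $e^0 \geq r/2 - p$ and $e^1 \geq (1-p) - r/2$, the quantity $1 - p - (e^0 + e^1)$ equals $p_{\operatorname{err}}(\gamma_{XQ}^{(1/r,p)}) = \min\{r/2, p, 1-p\}$; one must also carefully track the factors of two arising from the block structure of the classical-quantum states. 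Finally, the restriction $r \in [0,1]$ corresponds to the physically meaningful range $M = 1/r \geq 1$ of the golden-unit parameter, so that no extra constraint is lost by this parameterization.
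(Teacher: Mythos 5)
Your proposal is correct and follows essentially the same route as the paper: the paper starts from the SDP for the minimum conversion error (Proposition~\ref{prop:SDP-conv-err}, which is itself the dual SDP of Proposition~\ref{prop:SDP-div-err} composed with the Choi-operator parameterization of the channel), linearizes the bilinear terms by rescaling the dual variables with $s$, restricts to measurement channels, diagonalizes, and eliminates the $b^{i,j}$ and $d^{i}$ variables exactly as you describe. Performing the dephasing/measurement-channel reduction before rather than after invoking the $D'$ dual SDP is only a reordering of the same steps, and your observation that $1-p-(e^{0}+e^{1})$ evaluates to $\min\{r/2,p,1-p\}=p_{\operatorname{err}}(\gamma_{XQ}^{(1/r,p)})$ at the minimal feasible slacks correctly captures the content of the final elimination.
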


We find that the one-shot approximate distillable-SD under CDS maps can be evaluated by a semi-definite program as well, and the proof of Proposition~\ref{prop:one-shot-distillable-SD-CDS-SDP} below is given in Appendix~\ref{app:approx-one-shot-distill-CDS}:

\begin{proposition}
\label{prop:one-shot-distillable-SD-CDS-SDP}
The approximate one-shot distillable-SD\ under CDS maps can be calculated by
means of the following semi-definite program:%
\begin{equation}
\xi_{d}^{\star,\varepsilon}(\rho_{XA})=
-\log \inf_{\substack{c^{i,j},\Lambda^{i,j}\geq0,\\r\in\left[
0,1\right]  }}\left\{
\begin{array}
[c]{c}%
r:\\
2\sum_{i,j\in\left\{  0,1\right\}  }c^{i,j}\leq\varepsilon r,\\
p\operatorname{Tr}[\Lambda^{0,0}\rho_{0}]+\left(  1-p\right)
\operatorname{Tr}[\Lambda^{1,0}\rho_{1}]-\frac{1}{2}\left(  1-\frac{r}%
{2}\right)  +c^{0,0}\geq0,\\
p\operatorname{Tr}[\Lambda^{0,1}\rho_{0}]+\left(  1-p\right)
\operatorname{Tr}[\Lambda^{1,1}\rho_{1}]-\frac{r}{4}+c^{0,1}\geq0,\\
p\operatorname{Tr}[\Lambda^{1,0}\rho_{0}]+\left(  1-p\right)
\operatorname{Tr}[\Lambda^{0,0}\rho_{1}]-\frac{r}{4}+c^{1,0}\geq0,\\
p\operatorname{Tr}[\Lambda^{1,1}\rho_{0}]+\left(  1-p\right)
\operatorname{Tr}[\Lambda^{0,1}\rho_{1}]-\frac{1}{2}\left(  1-\frac{r}%
{2}\right)  +c^{1,1}\geq0,\\
I=\Lambda^{0,0}+\Lambda^{0,1}+\Lambda^{1,0}+\Lambda^{1,1}%
\end{array}
\right\}  .
\end{equation}

\end{proposition}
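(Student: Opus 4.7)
The plan is to follow the same line of attack as the proof of the analogous SDP formulation for $\cptp_A$ maps (Proposition~\ref{prop:alt-SDP-one-shot-distillable-SD}), with modifications to account for the additional permutation freedom of CDS maps on the two-dimensional classical system $X$. The two guiding observations are that (i) by the data-processing inequality for $D'$ under CDS maps (Lemma~\ref{lem:DPI-D-prime}), we may without loss of generality restrict to CDS maps whose quantum output is classical in the basis $\{|0\rangle_Q,|1\rangle_Q\}$, since the golden unit $\gamma_{XQ}^{(M,1/2)}$ is invariant under the completely dephasing channel on $Q$; and (ii) because $|X|=2$, every CDS map takes the explicit form $\cN=\id_X\otimes\cE_0+\cF_X\otimes\cE_1$ with $\cE_0+\cE_1$ CPTP.

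After precomposing the output-dephasing channel with the optimization variable, each $\cE_k$ becomes a measurement channel on $Q$, which I parameterize as $\cE_k(\omega)=\Tr(\Lambda^{k,0}\omega)\kb{0}+\Tr(\Lambda^{k,1}\omega)\kb{1}$ with $\Lambda^{k,j}\ge 0$. The trace-preservation constraint $\cE_0+\cE_1\in\cptp$ becomes exactly $\Lambda^{0,0}+\Lambda^{0,1}+\Lambda^{1,0}+\Lambda^{1,1}=I$, which is the last line in the target SDP. A short direct computation then shows that the four diagonal entries $d_{i,j}\coloneqq \langle i|_X\langle j|_Q\,\cN(\rho_{XA})\,|i\rangle_X|j\rangle_Q$ are precisely the bilinear expressions in $\Lambda$ appearing on the left-hand sides of the four inequality constraints in the proposition statement. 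With $r\coloneqq 1/M$, the diagonal entries $g_{i,j}$ of $\gamma_{XQ}^{(1/r,1/2)}$ are $g_{0,0}=g_{1,1}=\tfrac12(1-r/2)$ and $g_{0,1}=g_{1,0}=r/4$, and formula \eqref{eq:simplify-p-err-golden-unit} gives $p_{\err}(\gamma_{XQ}^{(1/r,1/2)})=r/2$.

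Because both $\cN(\rho_{XA})$ and $\gamma_{XQ}^{(1/r,1/2)}$ are now diagonal in the same four-dimensional basis, the condition $D'(\cN(\rho_{XA}),\gamma_{XQ}^{(1/r,1/2)})\le\eps$ reduces to the scalar inequality $\sum_{i,j\in\{0,1\}}|d_{i,j}-g_{i,j}|\le\eps r$. The key step is to linearize this $\ell^{1}$ constraint by introducing slack variables $c^{i,j}\ge 0$ satisfying $c^{i,j}\ge g_{i,j}-d_{i,j}$; these are exactly the four inequalities in the claimed SDP together with the nonnegativity of $c^{i,j}$. Since $\{d_{i,j}\}$ and $\{g_{i,j}\}$ are both probability distributions on the same index set, $\sum_{i,j}(d_{i,j}-g_{i,j})=0$, which gives the identity $\sum_{i,j}|d_{i,j}-g_{i,j}|=2\sum_{i,j}(g_{i,j}-d_{i,j})^{+}$. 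Consequently, at any optimizer the slack variables saturate to $c^{i,j}=(g_{i,j}-d_{i,j})^{+}$, the budget constraint $2\sum_{i,j}c^{i,j}\le\eps r$ is equivalent to $\|\cN(\rho_{XA})-\gamma_{XQ}^{(1/r,1/2)}\|_{1}\le\eps r$, and taking $-\log$ of the minimum $r=1/M$ converts the minimization into the maximization of $\log M$, yielding $\xi_d^{\star,\eps}(\rho_{XA})$.

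The main subtlety I expect is the careful justification that relaxing the definitional equality $c^{i,j}=(g_{i,j}-d_{i,j})^{+}$ (which is not jointly convex in $(c,\Lambda)$) to the one-sided inequalities $c^{i,j}\ge 0$ and $c^{i,j}\ge g_{i,j}-d_{i,j}$ preserves the minimum value of $r$. This is resolved by noting that $r$ is the sole objective, that the slack variables enter only through the monotone constraint $2\sum_{i,j}c^{i,j}\le\eps r$, and that the sum-to-one identity forces tightness of this budget at any optimizer; hence the inequality formulation yields the same optimum as the equality formulation and produces the claimed semi-definite program.
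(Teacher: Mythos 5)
Your proposal is correct and follows essentially the same route as the paper's proof: reduce to measurement channels by dephasing the $Q$ output (justified by the invariance of $\gamma_{XQ}^{(M,1/2)}$ under dephasing and the data-processing inequality for $D'$), parameterize the CDS map by four POVM elements $\Lambda^{i,j}$ summing to $I$, identify the four diagonal entries, and linearize the resulting $\ell^1$ constraint with slack variables $c^{i,j}$. The only (harmless) difference is that you evaluate $p_{\operatorname{err}}(\gamma_{XQ}^{(1/r,1/2)})=r/2$ and the trace distance between the two commuting states directly, whereas the paper reaches the same constraints by starting from the general SDP for the minimum conversion error and eliminating the auxiliary variables $B_{XQ},C_{XQ},D_Q,E_Q$ using $r\in[0,1]$.
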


    \subsubsection{Optimal asymptotic rate of approximate SD-distillation}
 
We now consider the asymptotic case of approximate SD-distillation. Theorem~\ref{theo-asymp} already established that in the exact case the optimal rates for free operations being $\cds$ or $\cptp_A$ are given by the quantum Chernoff divergence.  
The following theorem shows that also when an error is allowed in the transformation, the corresponding asymptotic rates are still given by the quantum Chernoff divergence.

 \begin{theorem}[Optimal asymptotic rate of approximate SD-distillation]\label{theo-asympAprDistil}
 For all $\eps\ge0$ and $p\in(0,1)$, the optimal asymptotic rates of approximate SD-distillation under $\cptp_A$and $\cds$ maps are given by
 \begin{align}
	\label{eq_AsympResult}
	\lim_{n\to\infty}\frac{\xi^{\eps}_{d}(\rho_{XA}^{(n)})}{n}
	 =\lim_{n\to\infty}\frac{\xi^{\star,\eps}_{d}(\rho^{(n)}_{XA})}{n} = \xi(\rho_0,\rho_1),
	\end{align}
	where $\xi(\rho_0,\rho_1) = -\log\!\left(\min_{0\le s\le1}\Tr\!\left(\rho_0^s\rho_1^{1-s}\right)\right)$ denotes the quantum Chernoff divergence.
 \end{theorem}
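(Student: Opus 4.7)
The plan is to sandwich both asymptotic rates between $\xi(\rho_0,\rho_1)$, exploiting the fact that $\cptp_A \subset \cds$ (so $\xi_d^\eps \le \xi_d^{\star,\eps}$) together with the observation that the exact feasible set sits inside the approximate one (so $\xi_d^\eps \ge \xi_d$ and $\xi_d^{\star,\eps} \ge \xi_d^\star$). The achievability bound
$$\liminf_{n\to\infty} \frac{\xi_d^\eps(\rho_{XA}^{(n)})}{n} \ge \xi(\rho_0,\rho_1)$$
then follows immediately from Theorem~\ref{theo-asymp}, and the task reduces to proving the single converse bound
\begin{equation}
\limsup_{n \to \infty} \frac{\xi_d^{\star,\eps}(\rho_{XA}^{(n)})}{n} \le \xi(\rho_0,\rho_1),
\end{equation}
since the $\cptp_A$ converse then follows a fortiori from $\xi_d^\eps \le \xi_d^{\star,\eps}$.

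For the converse, I would fix any $M$ in the feasible set defining $\xi_d^{\star,\eps}(\rho_{XA}^{(n)})$; by definition there exists a CDS map $\cN$ with $D'(\cN(\rho_{XA}^{(n)}), \gamma_{XQ}^{(M,1/2)}) \le \eps$. Applying Lemma~\ref{lem:PerrDivBound} and then Lemma~\ref{lem:goldenunitscaling} gives
\begin{equation}
p_{\err}(\cN(\rho_{XA}^{(n)})) \le (\eps+1)\, p_{\err}(\gamma_{XQ}^{(M,1/2)}) = \frac{\eps+1}{2M},
\end{equation}
while monotonicity of $p_{\err}$ under CDS maps (Lemma~\ref{lem:MonoErrProbCDS}) yields $p_{\err}(\rho_{XA}^{(n)}) \le p_{\err}(\cN(\rho_{XA}^{(n)}))$. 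Combining these gives the inequality $M \le (\eps+1)/(2 p_{\err}(\rho_{XA}^{(n)}))$, and taking logarithms followed by a supremum over all feasible $M$ produces
\begin{equation}
\xi_d^{\star,\eps}(\rho_{XA}^{(n)}) \le \log(\eps+1) - \log\!\big(2 p_{\err}(\rho_{XA}^{(n)})\big).
\end{equation}

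Dividing by $n$ and letting $n \to \infty$, the first term vanishes, and the quantum Chernoff bound \cite{ACMBMAV07,nussbaum2009chernoff} identifies the limit of the second with $\xi(\rho_0,\rho_1)$, establishing the desired converse. The squeeze then closes everything to equality for both the $\cptp_A$ and $\cds$ rates. I do not anticipate a substantive obstacle; the only delicate point is that Lemma~\ref{lem:PerrDivBound} was designed precisely so that the multiplicative correction $\eps+1$ appears rather than an $n$-dependent factor, ensuring that the additive $\log(\eps+1)$ term is sublinear in $n$ and hence does not perturb the asymptotic rate---exactly the mechanism that allows approximate distillation to match the exact distillation rate.
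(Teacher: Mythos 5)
Your proposal follows essentially the same route as the paper: the lower bound comes from monotonicity of the feasible set together with Theorem~\ref{theo-asymp}, and the converse is exactly the paper's chain of Lemma~\ref{lem:MonoErrProbCDS}, Lemma~\ref{lem:PerrDivBound}, Lemma~\ref{lem:goldenunitscaling}, and the quantum Chernoff bound, yielding $\xi_d^{\star,\eps}(\rho_{XA}^{(n)}) \le -\log\big(2p_{\err}(\rho_{XA}^{(n)})\big) + \log(\eps+1)$.

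The one step whose justification is incomplete is your reduction of the $\cptp_A$ converse to the CDS one via the claim that $\xi_d^{\eps} \le \xi_d^{\star,\eps}$ ``from $\cptp_A \subset \cds$.'' The inclusion of the classes of maps alone does not give this, because the two quantities target \emph{different} golden units ($\gamma_{XQ}^{(M,p)}$ versus $\gamma_{XQ}^{(M,1/2)}$), so the feasible set for $\xi_d^{\eps}$ is not literally contained in that for $\xi_d^{\star,\eps}$. The inequality is nonetheless true: given $\cA \in \cptp_A$ with $D'(\cA(\rho_{XA}^{(n)}), \gamma_{XQ}^{(M,p)}) \le \eps$, compose with the CDS map of Lemma~\ref{lem:GoldenUnitPreorder} taking $\gamma_{XQ}^{(M,p)}$ to $\gamma_{XQ}^{(M,1/2)}$ (valid since $(1/2,1/2)$ is majorized by every $(p,1-p)$) and invoke the data-processing inequality for $D'$ (Lemma~\ref{lem:DPI-D-prime}). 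Alternatively, do what the paper does and simply run the identical converse argument directly with $\gamma_{XQ}^{(M,p)}$, whose error probability is also $\frac{1}{2M}$ for $M$ large enough; either repair closes the gap without changing anything else.
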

 Here, the restriction to $p\in(0,1)$ is sensible, as for $p\in\{0,1\}$, we directly get $\xi^{\eps}_{d}(\rho_{XA}^{(n)})=\xi^{\star,\eps}_{d}(\rho_{XA}^{(n)})=\infty$ for all $n\in\N.$
 \begin{remark}
 Note that Theorem~\ref{theo-asympAprDistil} establishes the strong converse property for the task of asymptotic distillation in the RTSD. Another way of interpreting this statement is as follows: for a sequence of SD distillation protocols with rate above the asymptotic distillable-SD, the error necessarily converges to $\infty$ in the limit as $n$ becomes large.
 \end{remark}
 
 \begin{proof}[Proof of Theorem~\ref{theo-asympAprDistil}]
  As we have, for every c-q state $\rho_{XA}$,
  \begin{align}
  &\xi^{\eps}_d(\rho_{XA}) \ge\xi_d(\rho_{XA}),\\
  &\xi^{\star,\eps}_d(\rho_{XA}) \ge\xi^\star_d(\rho_{XA}),
  \end{align}
  we immediately get the lower bounds
  \begin{align}
\liminf_{n\to\infty}\frac{\xi^{\eps}_{d}(\rho_{XA}^{(n)})}{n} & \ge \xi(\rho_0,\rho_1),\\
\liminf_{n\to\infty}\frac{\xi^{\star,\eps}_{d}(\rho_{XA}^{(n)})}{n} & \ge \xi(\rho_0,\rho_1),
  \end{align}
  from Theorem~\ref{theo-asymp}.
  
  To establish the upper bounds, we use Lemma~\ref{lem:PerrDivBound}. We only prove the upper bound for $\xi^{\star,\eps}_d$, as the one for $\xi^\eps_d$ exactly follows the same lines.
  Let $M$ be such that it satisfies the constraint in the following optimization:
  \begin{align}
  \label{eq:xistardeltainproof}
  \xi^{\star,\eps}_d(\rho_{XA}^{(n)}) = \log\!\left(\sup\left\{ M \Big|\,d^\prime_{\operatorname{CDS}}(\rho^{(n)}_{XA}\mapsto\gamma_{XQ}^{(M)})\le\eps\right\}\right).
  \end{align}
  Hence, there exists a CDS map $\cN$ such that$$
D^\prime\!\left(\cN(\rho_{XA}^{(n)}),\gamma_{XQ}^{(M)}\right)\le\eps.$$ Then, by the monotonicity of the minimum error probability under CDS maps and the bound in Lemma~\ref{lem:PerrDivBound}, we see
\begin{align}
 p_{\operatorname{err}}\!\left(\rho^{(n)}_{XA}\right) \le p_{\operatorname{err}}\!\left(\cN(\rho^{(n)}_{XA})\right) \le \left(\eps+1\right)p_{\operatorname{err}}\!\left(\gamma_{XQ}^{(M)}\right)= \frac{\eps+1}{ 2M }.
\end{align}
As $M$ is arbitrary under the constraint in \eqref{eq:xistardeltainproof}, we get
\begin{align}
\xi^{\star,\eps}_{d}(\rho_{XA}^{(n)}) \le \left(-\log(p_{\operatorname{err}}(\rho_{XA}^{(n)})) - 1\right) + \log(\eps+1),
\end{align}
and hence
\begin{align}
\limsup_{n\to\infty}\frac{\xi^{\star,\eps}_{d}(\rho_{XA}^{(n)})}{n} \le \lim_{n\to\infty} \frac{\left(-\log(p_{\operatorname{err}}(\rho_{XA}^{(n)})) - 1\right) + \log(\eps+1)}{n} = \xi(\rho_0,\rho_1),
\end{align}
which finishes the proof of Theorem~\ref{theo-asympAprDistil}.
 \end{proof}
 
\section{SD-dilution}

\label{sec:SD-dilution}

We now turn to the case of dilution of symmetric distinguishability. We begin with the exact one-shot case in Section~\ref{sec:one-shot-exact-dilution}, establish some properties of relevant divergences in Section~\ref{prop-max}, provide proofs in Sections~\ref{sec:proof-OneshotdiluteCPTP} and \ref{sec:proofOneshotdiluteCDS}, consider the one-shot approximate case in Section~\ref{sec:one-shot-approx-dilution}, and evaluate asymptotic quantities in Sections~\ref{sec:asymptotic-dilution} and \ref{sec:proofs-asymptotic-dilution}.

	\subsection{One-shot exact SD-dilution}
	
	\label{sec:one-shot-exact-dilution}
	
		One-shot exact SD-dilution of a given c-q state 
	\begin{equation}\label{rho1}
	\rho_{XA}\coloneqq p|0\rangle\!\langle 0|\otimes\rho_0+\left(  1-p\right)
	|1\rangle\!\langle 1|\otimes\rho_1,
	\end{equation}
	with $p\in [0,1]$, is the task of converting an $M$-golden unit to the target state $\rho_{XA}$
	via free operations. The minimal value of $\log M$ for which this conversion is possible is equal to the \emph{one-shot exact SD-cost} for the chosen set of free operations. This is formally defined as follows:
	
	\begin{definition}
	\label{def:exact-SD-cost}
	For a set of free operations denoted by $\FO$ and $q\in[0,1]$, the one-shot exact SD-cost of the c-q state $\rho_{XA}$ defined in~\eqref{rho1} is given by
	\end{definition}
	\begin{align}
\label{eq:DiluteMilute}
\xi_c^{\operatorname{FO},q}(\rho_{XA}) \coloneqq  \log\Big(\inf\left\{M\Big| \cA\left(\gamma_{XQ}^{(M,q)}\right) = \rho_{XA},\,\cA \in {\hbox{free operations (FO)}}\, \right\}\Big).
\end{align}

For the choice
\begin{align}
	{\hbox{\FO}}\equiv \Big\{{\rm{id}} \otimes\cE \, \Big| \,  \cE\text{ CPTP on system } A\Big\}\equiv {\rm{CPTP}}_A,
\end{align}
the only sensible choice in \eqref{eq:DiluteMilute} is $q=p$, as free operations of the form $\rm{id}\otimes\cE$ cannot change the prior in the c-q state. In that case, the above quantity is called the {\em{one-shot exact SD-cost under ${\rm{CPTP}}_A$ maps}} and we simply write
  	\begin{align}
  	\label{eq:CostCPTPA}
  	  \xi_{c}(\rho_{XA}) \equiv \xi^{{\rm{CPTP}}_A,p}_{c}(\rho_{XA}).
  	\end{align}
  	Whereas for the choice ${\hbox{FO}}\equiv {\hbox{CDS}}$ and $q=1/2$, the above quantity is called the {\em{one-shot exact SD-cost under CDS maps}} and we use the notation
  	\begin{align}
  	    \xi^{\star}_c(\rho_{XA})\equiv\xi^{{\rm{CDS}},1/2}_{c}(\rho_{XA}).
  	\end{align}
  	Explicitly, for a c-q state $\rho_{XA}$ given by \eqref{rho1},
  	we then have
 \begin{align}
\label{eq:DiluteMiluteCPTPA}
\xi_c(\rho_{XA}) = \log\Big(\inf\left\{M\Big| ({\rm{id}}\otimes\cE)\left(\gamma_{XQ}^{(M,p)}\right) = \rho_{XA},\,\cE \in {\rm{CPTP}}\, \right\}\Big)
\end{align}	
and
 \begin{align}
\label{eq:DiluteMiluteCDS}
\xi^\star_c(\rho_{XA}) = \log\Big(\inf\left\{M\Big| \cN\left(\gamma_{XQ}^{(M)}\right) = \rho_{XA},\,\cN \in {\rm{CDS}}\, \right\}\Big),
\end{align}	
with $\gamma_{XQ}^{(M)} \equiv \gamma_{XQ}^{(M, 1/2)}$ as defined previously.
\begin{remark}
\label{rem:CPTPSDDil}
  Note that in the case of the free operations being ${\rm{CPTP}}_A$ and prior $p\in(0,1)$ the SD-cost, just as the distillable-SD (see Remark~\ref{rem:CPTPSDDistl}), is independent of $p$ by definition. In fact, in that case, it can be equivalently written as
  \begin{align}
   \xi_c(\rho_{XA})\equiv\xi_c(\rho_0,\rho_1) = \log\Big(\inf\left\{M\Big| \cE(\pi_M) = \rho_0\text{ and } \cE(\sigma^{(1)}\pi_M\sigma^{(1)}) = \rho_1,\,\cE\in \operatorname{CPTP} \, \right\}\Big).  
  \end{align}
  For $p\in \{0,1\}$, one easily sees that
  \begin{align}
     \xi_c(\rho_{XA}) = 0.
  \end{align}
Therefore, we restrict to the non-singular case $p\in(0,1)$ in the following Theorem~\ref{theo-diluteCPTP}.
\end{remark}

		
\begin{theorem}\label{theo-diluteCPTP}
The one-shot exact SD-cost under ${\rm{CPTP}}_A$ maps of a c-q state $\rho_{XA}$ with $p\in(0,1)$, as defined through~\eqref{rho1},  is given by
    \begin{align}
        \xi_c(\rho_{XA}) \equiv \xi_c(\rho_0,\rho_1) = \xi_{\max}(\rho_0,\rho_1),
    \end{align} 
where 
\begin{align}
\label{eq:maxdiv}
\xi_{\max}(\rho_0,\rho_1)\coloneqq  \log Q_{\max}(\rho_0,\rho_1),
\end{align}
and
\begin{align}
\label{eq:Qmax}
\nn Q_{\max}(\rho_0,\rho_1) &\coloneqq  \inf\left\{M\Big| \rho_0\le (2M-1)\rho_1,\,\, \rho_1\le (2M-1)\rho_0\right\} \\&=
\frac{1}{2}\left(2^{d_T(\rho_0,\rho_1)} + 1\right)\ge 1.
\end{align}
Here, for two positive semi-definite operators $\omega_1$, $\omega_2$,
\begin{align}\label{Thompson}
  d_T(\omega_1, \omega_2) \coloneqq    \max\Big\{D_{\max}(\omega_1\|\omega_2),D_{\max}(\omega_2\|\omega_1)\Big\} 
\end{align}
denotes the Thompson metric \cite{Thomp63}.
\end{theorem}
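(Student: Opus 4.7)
The plan is to reduce the dilution task to a two-state interpolation problem on a qubit and then bound the required $M$ by a direct data-processing argument in one direction and an explicit measure-and-prepare construction in the other. Since $p\in(0,1)$ and the free operations have the form $\id_X\otimes\cE$, the equation $(\id_X\otimes\cE)(\gamma_{XQ}^{(M,p)})=\rho_{XA}$ decouples into the pair of conditions $\cE(\pi_M)=\rho_0$ and $\cE(\sigma^{(1)}\pi_M\sigma^{(1)})=\rho_1$. Hence I must determine the smallest $M$ for which a single CPTP map simultaneously sends the two diagonal qubit states $\pi_M$ and $\sigma^{(1)}\pi_M\sigma^{(1)}$ to the prescribed $\rho_0$ and $\rho_1$.

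For the converse (lower bound $M\ge Q_{\max}$), I would apply the data-processing inequality for $D_{\max}$ in both directions. A direct computation from $\pi_M=(1-1/(2M))|0\rangle\langle 0|+(1/(2M))|1\rangle\langle 1|$ gives
$$D_{\max}(\pi_M\|\sigma^{(1)}\pi_M\sigma^{(1)})=D_{\max}(\sigma^{(1)}\pi_M\sigma^{(1)}\|\pi_M)=\log(2M-1),$$
and DPI then forces $\max\{D_{\max}(\rho_0\|\rho_1),D_{\max}(\rho_1\|\rho_0)\}=d_T(\rho_0,\rho_1)\le\log(2M-1)$, which is exactly the operator pair $\rho_0\le(2M-1)\rho_1$ and $\rho_1\le(2M-1)\rho_0$, i.e., $M\ge Q_{\max}(\rho_0,\rho_1)$. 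The same rewrite yields the stated identity $Q_{\max}=\tfrac{1}{2}(2^{d_T(\rho_0,\rho_1)}+1)$.

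For achievability, I would exhibit an explicit measure-and-prepare channel at $M=Q_{\max}(\rho_0,\rho_1)$. Assuming $M>1$ and writing $a\coloneqq 1-1/(2M)$ and $b\coloneqq 1/(2M)$ (so $a+b=1$ and $a/b=2M-1$), the $Q_{\max}$ operator inequalities rewrite as $a\rho_0-b\rho_1\ge 0$ and $a\rho_1-b\rho_0\ge 0$. Define
$$\omega_0\coloneqq\frac{a\rho_0-b\rho_1}{a-b},\qquad \omega_1\coloneqq\frac{a\rho_1-b\rho_0}{a-b};$$
then $\omega_0,\omega_1\ge 0$ with $\Tr\omega_0=\Tr\omega_1=1$, so the map $\cE(\omega)\coloneqq\langle 0|\omega|0\rangle\,\omega_0+\langle 1|\omega|1\rangle\,\omega_1$ is CPTP. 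By linearity, $\cE(\pi_M)=a\omega_0+b\omega_1=\rho_0$ and $\cE(\sigma^{(1)}\pi_M\sigma^{(1)})=b\omega_0+a\omega_1=\rho_1$, as required. The degenerate case $M=Q_{\max}=1$ forces $\rho_0=\rho_1$ (since $d_T=0$) and is handled by the trivial preparation channel $\omega\mapsto\rho_0\,\Tr\omega$.

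The main obstacle—really the key insight—is spotting the correct measure-and-prepare channel: once the $2\times 2$ diagonal system is inverted to yield $\omega_0$ and $\omega_1$, their positivity coincides \emph{exactly} with the $Q_{\max}$ operator inequalities produced by the DPI converse, so both bounds meet. The remaining ingredients (normalisation of $\omega_0,\omega_1$, CPTP-ness of $\cE$, and the algebraic rewrite $Q_{\max}=\tfrac{1}{2}(2^{d_T}+1)$) are routine.
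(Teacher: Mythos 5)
Your proposal is correct and follows essentially the same route as the paper: the converse via the computation $Q_{\max}(\pi_M,\sigma^{(1)}\pi_M\sigma^{(1)})=M$ combined with data processing for $D_{\max}$, and achievability via the same measure-and-prepare channel (your $\omega_0,\omega_1$ are exactly the paper's $\widetilde\rho_0=\frac{(2M-1)\rho_0-\rho_1}{2M-2}$ and $\widetilde\rho_1$, just written with $a=1-\frac{1}{2M}$, $b=\frac{1}{2M}$), including the same treatment of the degenerate case $Q_{\max}=1$. The only cosmetic omission is the remark that the case $\xi_{\max}=\infty$ is trivial for the achievability direction.
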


\begin{remark}
	  Note that $\xi_{\max}$ can be written as
	  \begin{align}
	   \xi_{\max}(\rho_0,\rho_1) &=  \max\left\{\log\!\left( 2^{D_{\max}(\rho_0,\rho_1)}+1\right),\,\log\!\left(2^{D_{\max}(\rho_1,\rho_0)}+1\right)\right\}-1.
	   \end{align}
	  The use of the subscript in $\xi_{\max}$ is motivated by the fact that it is a divergence that is essentially a symmetrized version of $D_{\max}$.

	The notation $\xi_{\max}$ is further motivated by analogy with the resource theory of asymmetric distinguishability~\cite{Wang2019states}, in which the quantity analogous to $\xi_{\max}$, arising as the cost of exact dilution of asymmetric distinguishability, is the max-relative entropy $D_{\max}$ \cite{D09}.
	
	We also note that Theorem~\ref{theo-diluteCPTP} can be infered from \cite[Lemma 3.1]{BST19}, but we include a self-contained proof below for completeness.
	\end{remark}

\begin{theorem}
\label{theo-diluteCDS}
The one-shot exact SD-cost under CDS maps of a c-q state $\rho_{XA}$, defined through~\eqref{rho1}, is given by
    \begin{align}
        \xi_c^\star(\rho_{XA}) = \xi_{\max}^\star(\rho_{XA}),
    \end{align} 
where 
\begin{align}
\label{eq:maxdivCDS}
\xi^\star_{\max}\left(\rho_{XA}\right)= \log Q^\star_{\max}\left(\rho_{XA}\right),
\end{align}
and 
\begin{align}
\label{eq:QmaxCDS1}
\nn Q^\star_{\max}(\rho_{XA}) &\coloneqq  \inf\left\{M\Big| p\rho_0\le (2M-1)(1-p)\rho_1,\,\, (1-p)\rho_1\le (2M-1)p\rho_0\right\} \\
&= \frac{1}{2}\Big( 2^{d_T(p\rho_0, (1-p)\rho_1)} + 1\Big)\ge \frac{1}{2}\max\left\{1/p,1/(1-p)\right\}.
\end{align}
\end{theorem}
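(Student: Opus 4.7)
The plan is to mirror the proof of Theorem~\ref{theo-diluteCPTP}, with the key new wrinkle being that CDS maps can alter the prior probability, so one must work with the weighted operators $p\rho_0$ and $(1-p)\rho_1$ throughout (this is precisely why $Q^\star_{\max}$ involves $d_T(p\rho_0,(1-p)\rho_1)$ rather than $d_T(\rho_0,\rho_1)$). The closed form $Q^\star_{\max}(\rho_{XA}) = \tfrac{1}{2}(2^{d_T(p\rho_0,(1-p)\rho_1)}+1)$ follows immediately from unpacking the constraints in \eqref{eq:QmaxCDS1} via the definition of $D_{\max}$: they read $2M-1\ge 2^{D_{\max}(p\rho_0\|(1-p)\rho_1)}$ together with the symmetric inequality, and taking the infimum over $M$ gives the formula. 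The lower bound $\tfrac{1}{2}\max\{1/p,1/(1-p)\}$ follows from the elementary fact $D_{\max}(X\|Y)\ge\log(\tr X/\tr Y)$ for $X,Y\ge 0$ (obtained by taking the trace in $X\le 2^\lambda Y$), which gives $d_T(p\rho_0,(1-p)\rho_1)\ge|\log(p/(1-p))|$.

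For achievability, set $M \coloneqq Q^\star_{\max}(\rho_{XA})$ (the infimum is attained because the defining inequalities are closed). Every CDS map on $|X|=2$ has the form $\cN=\id_X\otimes\cE^0+\cF_X\otimes\cE^1$ with $\cE^0,\cE^1$ completely positive and $\cE^0+\cE^1$ CPTP. Because $\pi_M$ and $\sigma^{(1)}\pi_M\sigma^{(1)}$ are diagonal in the computational basis, it suffices to take each component in measure--prepare form $\cE^j(\rho)=\bra{0}\rho\ket{0}\,S_j+\bra{1}\rho\ket{1}\,T_j$ with $S_j,T_j\ge 0$. Imposing $\cN(\gamma_{XQ}^{(M)})=\rho_{XA}$ and comparing the $\kb{0}_X$ and $\kb{1}_X$ blocks leads, in terms of $U\coloneqq S_0+T_1$ and $V\coloneqq T_0+S_1$, to
\begin{align*}
U=\tfrac{1}{M-1}\bigl[(2M-1)p\rho_0-(1-p)\rho_1\bigr],\qquad V=\tfrac{1}{M-1}\bigl[(2M-1)(1-p)\rho_1-p\rho_0\bigr],
\end{align*}
which are both positive semi-definite exactly under the $Q^\star_{\max}$ constraints. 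The decomposition of $U,V$ back into $\{S_j,T_j\}$ is underdetermined; I exploit this freedom via the symmetric split $S_0=T_1=U/2$ and $T_0=S_1=V/2$, for which $S_0+S_1=T_0+T_1=p\rho_0+(1-p)\rho_1$ is a valid state, so $\cE^0+\cE^1$ is CPTP. This yields $\xi_c^\star(\rho_{XA})\le\log Q^\star_{\max}(\rho_{XA})$.

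For the converse I show that $Q^\star_{\max}$ is monotone under CDS maps. Computing $\cN(\rho_{XA}) = q\kb{0}\otimes\sigma_0+(1-q)\kb{1}\otimes\sigma_1$ for $\cN=\id_X\otimes\cE^0+\cF_X\otimes\cE^1$ yields $q\sigma_0=p\cE^0(\rho_0)+(1-p)\cE^1(\rho_1)$ and $(1-q)\sigma_1=(1-p)\cE^0(\rho_1)+p\cE^1(\rho_0)$. Since CP maps are positivity-preserving, the hypotheses $p\rho_0\le(2M-1)(1-p)\rho_1$ and $(1-p)\rho_1\le(2M-1)p\rho_0$ propagate through both $\cE^0$ and $\cE^1$ and, upon adding, produce the analogous inequalities with $(q,\sigma_0,\sigma_1)$ in place of $(p,\rho_0,\rho_1)$. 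Hence $Q^\star_{\max}(\cN(\rho_{XA}))\le Q^\star_{\max}(\rho_{XA})$. A direct diagonal computation gives $D_{\max}(\pi_M\|\sigma^{(1)}\pi_M\sigma^{(1)})=\log(2M-1)$ and its symmetric counterpart, whence $Q^\star_{\max}(\gamma_{XQ}^{(M)})=M$; so any CDS dilution $\cN(\gamma_{XQ}^{(M)})=\rho_{XA}$ forces $M\ge Q^\star_{\max}(\rho_{XA})$, completing the proof.

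The delicate step is the achievability construction: even after the reduction to measure--prepare components (made natural by the diagonality of $\pi_M$), the four operators $\{S_j,T_j\}$ form an underdetermined system, and the trick is to recognize that the symmetric split $S_0=T_1$, $T_0=S_1$ is exactly what makes $\cE^0+\cE^1$ a legitimate CPTP map, specifically the constant-output channel preparing the average state $p\rho_0+(1-p)\rho_1$. The converse, by contrast, follows cleanly from linearity and CP-positivity once the action of a generic CDS map on a c-q state with nontrivial prior has been unrolled.
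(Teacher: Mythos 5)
Your proof is correct, and the achievability construction is in fact \emph{identical} to the paper's map, just parametrized differently: the paper writes $\cE_0(\cdot)=q\bra{0}\cdot\ket{0}\widetilde\rho_0+(1-q)\bra{1}\cdot\ket{1}\widetilde\rho_1$ with normalized states $\widetilde\rho_0,\widetilde\rho_1$ and weight $q=\frac{2Mp-1}{2M-2}$, and one checks that $q\widetilde\rho_0=U/2$ and $(1-q)\widetilde\rho_1=V/2$ in your notation. Your unnormalized parametrization is arguably cleaner, since the paper's normalization divides by $2Mp-1$ and $2M(1-p)-1$ and therefore forces a separate treatment of the boundary case $Q^\star_{\max}=\frac12\max\{1/p,1/(1-p)\}$ (i.e.\ $\rho_0=\rho_1$), whereas your formulas remain valid there; the only case you miss is the fully degenerate $M=1$ (i.e.\ $p=1/2$ and $\rho_0=\rho_1$), where $U,V$ involve $0/0$ — but then both $\rho_{XA}$ and $\gamma^{(1)}_{XQ}$ are free states and the map $\id_X\otimes(\Tr(\cdot)\rho_0)$ does the job, so this is a one-line patch. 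Where you genuinely diverge is the converse: the paper establishes monotonicity of $\xi^\star_{\max}$ under CDS maps (Lemma~\ref{lem:xistarmon}) via the identity $Q^\star_{\max}(\rho_{XA})=\frac12\big(2^{D_{\max}(\rho_{XA}\|(\cF_X\otimes\id_A)(\rho_{XA}))}+1\big)$, the observation that CDS maps commute with $\cF_X\otimes\id_A$, and the data-processing inequality for $D_{\max}$; you instead push the two operator inequalities through the CP components $\cE^0,\cE^1$ and add. Your argument is more elementary and self-contained, while the paper's reveals the structural fact that $Q^\star_{\max}$ is a $D_{\max}$ of the state against its flipped version — both are valid, and the remaining steps (computing $Q^\star_{\max}(\gamma^{(M)}_{XQ})=M$ and the lower bound via $D_{\max}(X\|Y)\ge\log(\Tr X/\Tr Y)$) match the paper.
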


\begin{remark}
Note that Theorems~\ref{theo-diluteCPTP} and \ref{theo-diluteCDS} give an operational interpretation to the Thompson metric in the context of exact one-shot dilution in the resource theory of symmetric distinguishability. To the best of our knowledge, this is the first time an operational meaning in quantum information has been given to the Thompson metric.
\end{remark}

\begin{remark}
As a consequence of Theorem~\ref{theo-diluteCPTP}, it follows that the one-shot exact SD-cost under ${\rm{CPTP}}_A$ maps can be calculated by means of a semi-definite program, due to the expression in \eqref{eq:Qmax}. As a consequence of Theorem~\ref{theo-diluteCDS}, it follows that the one-shot exact SD-cost under CDS maps can  be calculated by means of a semi-definite program, due to the expression in \eqref{eq:QmaxCDS1}.
\end{remark}

Properties of $\xi_{\max}$,  $Q_{\max}$, $\xi^\star_{\max}$, and $Q^\star_{\max}$ are discussed in Section~\ref{prop-max}. Proofs of the above theorems are given in Sections~\ref{sec:proof-OneshotdiluteCPTP} and \ref{sec:proofOneshotdiluteCDS}.

\subsection{Properties of $\xi_{\max}$, $Q_{\max}$, $\xi_{\max}^\star$, and $Q^\star_{\max}$}

\label{prop-max}

The quantity $\xi_{\max}$ satisfies the data-processing inequality under CPTP maps:
\begin{lemma}[DPI for $\xi_{\max}$]
	\label{lem:DPImax}
	Let $\cE$ be a CPTP map. Then
	\begin{align}
	\label{eq:DPImax}
	\xi_{\max}(\cE(\rho_0),\cE(\rho_1)) \le \xi_{\max}(\rho_0,\rho_1).
	\end{align}
\end{lemma}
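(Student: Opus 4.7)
The plan is to exploit the explicit characterization of $Q_{\max}$ given in \eqref{eq:Qmax} via operator inequalities and to use the positivity and linearity of $\cE$. Since $\xi_{\max} = \log Q_{\max}$ and $\log$ is monotone increasing, it suffices to show $Q_{\max}(\cE(\rho_0), \cE(\rho_1)) \le Q_{\max}(\rho_0, \rho_1)$.

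Concretely, I would argue as follows. Suppose $M$ is feasible for the infimum defining $Q_{\max}(\rho_0,\rho_1)$, i.e.
\begin{equation}
\rho_0 \le (2M-1)\rho_1 \quad \text{and} \quad \rho_1 \le (2M-1)\rho_0.
\end{equation}
Applying the CPTP map $\cE$ to both sides of each inequality, and using that $\cE$ is linear and positivity-preserving (so operator inequalities are preserved), yields
\begin{equation}
\cE(\rho_0) \le (2M-1)\cE(\rho_1) \quad \text{and} \quad \cE(\rho_1) \le (2M-1)\cE(\rho_0).
\end{equation}
Thus $M$ is also feasible for the infimum defining $Q_{\max}(\cE(\rho_0), \cE(\rho_1))$, and hence $Q_{\max}(\cE(\rho_0), \cE(\rho_1)) \le M$. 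Taking the infimum over all feasible $M$ on the right-hand side gives $Q_{\max}(\cE(\rho_0), \cE(\rho_1)) \le Q_{\max}(\rho_0, \rho_1)$, from which \eqref{eq:DPImax} follows.

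Alternatively, one can appeal directly to the formula $Q_{\max}(\rho_0,\rho_1) = \tfrac{1}{2}(2^{d_T(\rho_0,\rho_1)}+1)$ from \eqref{eq:Qmax} together with the well-known data-processing inequality for the max-relative entropy $D_{\max}$ (see~\cite{D09}). Since the Thompson metric $d_T$ in \eqref{Thompson} is the maximum of the two $D_{\max}$ quantities, both of which are non-increasing under CPTP maps, $d_T$ itself is non-increasing, and the monotonicity of $x \mapsto \tfrac{1}{2}(2^x+1)$ then yields the claim. There is no real obstacle here; the only mild subtlety is that the infimum in \eqref{eq:Qmax} is attained (which is immediate from the closedness of the feasible set and continuity in $M$), so the argument above does not require any additional limiting step.
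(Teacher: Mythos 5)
Your proposal is correct, and your ``alternative'' argument is precisely the paper's proof: the authors dispose of Lemma~\ref{lem:DPImax} in one line by citing the data-processing inequality for $D_{\max}$, which gives monotonicity of $d_T$ and hence of $Q_{\max}=\tfrac{1}{2}(2^{d_T}+1)$. Your primary argument---applying $\cE$ to the feasibility constraints $\rho_0\le(2M-1)\rho_1$ and $\rho_1\le(2M-1)\rho_0$ and noting that positive linear maps preserve operator inequalities---is a correct, self-contained unfolding of the same fact; it amounts to re-proving the DPI for $D_{\max}$ in this setting, which buys you independence from the cited result at the cost of a few extra lines. One small simplification: your closing remark about attainment of the infimum is unnecessary, since the feasible set for the output pair contains that for the input pair, and taking the infimum over the larger set can only decrease the value regardless of whether either infimum is attained.
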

\begin{proof}
Follows immediately from the data-processing inequality for $D_{\max}$ under CPTP maps \cite{D09}. 
\end{proof}

\bigskip
We now prove that $\xi^\star_{\max}$ is decreasing under CDS maps. For that, we find that $\xi^\star_{\max}$ can also be expressed in the following ways, as a consequence of the definition \eqref{eq:QmaxCDS1} of $Q^\star_{\max}$:
\begin{align}\label{eq:QmaxCDS}
Q^\star_{\max}(\rho_{XA}) &= \frac{1}{2}\Big(\max\left\{2^{D_{\max}(p\rho_0\|(1-p)\rho_1)},2^{D_{\max}((1-p)\rho_1\|p\rho_0)}\right\} + 1\Big) \nn\\&\nn
=\frac{1}{2}\Big(\max\left\{2^{D_{\max}(\rho_0\|\rho_1)+\log\left(p/(1-p)\right)},2^{D_{\max}(\rho_1\|\rho_0)+\log\left((1-p)/p\right)}\right\} + 1 \Big)\\&= \frac{1}{2}\Big(2^{D_{\max}(\rho_{XA}\|\left(\cF_X\otimes\rm{id}_A\right)\left(\rho_{XA}\right))}+1\Big),
\end{align}
where $\cF_X(\cdot) = \sigma^{(1)}\cdot\sigma^{(1)}$ denotes the flip channel on the system $X$.

\begin{lemma}
\label{lem:xistarmon}
	Let $\cN \in {\rm{CDS}}$. Then
	\begin{align}
	\label{eq:DPICDS}
	\xi^\star_{\max}(\cN(\rho_{XA})) \le \xi^\star_{\max}(\rho_{XA}).
	\end{align}
\end{lemma}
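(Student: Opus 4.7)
The plan is to exploit the rewriting of $Q^\star_{\max}$ in equation~\eqref{eq:QmaxCDS} as a max-relative entropy between $\rho_{XA}$ and $(\cF_X \otimes \mathrm{id}_A)(\rho_{XA})$, and then reduce the claim to the data-processing inequality for $D_{\max}$ under general CPTP maps, which is already available from \cite{D09}.

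The key observation I will use is that every CDS map commutes with the channel $\cF_X \otimes \mathrm{id}_A$ in the following precise sense: writing $\cN = \sum_z \cP_X^z \otimes \cE_A^z$ with each $\cP_X^z$ a permutation channel on the binary alphabet, one has $\cP_X^z \cF_X = \cF_X \cP_X^z$ for each $z$, simply because the group of permutations of a two-element set is abelian (each $\cP_X^z$ is either $\mathrm{id}_X$ or $\cF_X$). Summing over $z$ gives
\begin{equation}
(\cF_X \otimes \mathrm{id}_A) \circ \cN \;=\; \cN \circ (\cF_X \otimes \mathrm{id}_A).
\end{equation}

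Given this, the proof is almost immediate. First, since $\cN$ is a CPTP map, the data-processing inequality for $D_{\max}$ yields
\begin{equation}
D_{\max}\bigl(\cN(\rho_{XA}) \,\big\|\, \cN\bigl((\cF_X\otimes\mathrm{id}_A)(\rho_{XA})\bigr)\bigr) \;\le\; D_{\max}\bigl(\rho_{XA} \,\big\|\, (\cF_X\otimes\mathrm{id}_A)(\rho_{XA})\bigr).
\end{equation}
Using the commutation identity, the left-hand side equals
\begin{equation}
D_{\max}\bigl(\cN(\rho_{XA}) \,\big\|\, (\cF_X\otimes\mathrm{id}_A)\cN(\rho_{XA})\bigr),
\end{equation}
so we obtain the desired inequality for the relevant max-relative entropy. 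Invoking \eqref{eq:QmaxCDS} and the fact that $x \mapsto \tfrac{1}{2}(2^x+1)$ is monotonically increasing, together with the monotonicity of $\log$, then gives $Q^\star_{\max}(\cN(\rho_{XA})) \le Q^\star_{\max}(\rho_{XA})$ and hence the claim.

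There is no real obstacle here; the only thing to be a little careful about is the commutation step, which crucially uses $|X|=2$ (so all permutations commute). For larger classical alphabets this argument would fail, and one would have to rework the definition of $Q^\star_{\max}$ accordingly — but in the present binary setting the argument is clean and short.
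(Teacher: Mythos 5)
Your proof is correct and follows essentially the same route as the paper, which likewise observes that every CDS channel commutes with $\cF_X\otimes\operatorname{id}$ (since for $|X|=2$ each permutation is $\operatorname{id}_X$ or $\cF_X$) and then invokes the data-processing inequality for $D_{\max}$ applied to the representation \eqref{eq:QmaxCDS}. Your write-up merely spells out the commutation and the final monotonicity steps in more detail than the paper does.
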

\begin{proof}
Noting that every CDS channel $\cN$ commutes with the channel $\cF_X\otimes{\rm{id}}_A$, \eqref{eq:DPICDS} follows directly from the data-processing inequality for $D_{\max}$.
\end{proof}


\subsection{Proof of Theorem~\ref{theo-diluteCPTP} --- One-shot exact SD-cost under ${\rm{CPTP}}_A$ maps}

\label{sec:proof-OneshotdiluteCPTP}

\begin{proof}
 We  prove that $\xi_c(\rho_{XA})= \xi_{\max}(\rho_0,\rho_1)$ and start  with  the  achievability part,  i.e., the  upper  bound $\xi_c(\rho_{XA})\le \xi_{\max}(\rho_0,\rho_1)$.   Without  loss  of  generality, suppose that $\xi_{\max}(\rho_0,\rho_1)$ is finite because, otherwise, the upper bound is trivially satisfied.
  
  Let us first consider the case $Q_{\max}(\rho_0,\rho_1)= 1$ (and hence $\xi_{\max} = 0$) for which necessarily $\rho_0=\rho_1\equiv\rho$.  Let us choose the measure-and-prepare channel $$\cE(\cdot) = \Tr(\cdot)\rho.$$ Note that
  \begin{align*}
     \cE(\pi_M) = \cE(\sigma^{(1)} \pi_M \sigma^{(1)}) = \rho \quad \forall M \ge 1.
  \end{align*}
  Hence, $$ ({\rm{id}} \otimes \cE)(\gamma_{XQ}^{(1,p)}) = \rho_{XA},$$
  which in turn implies that
   $\xi_c(\rho_{XA})\le 0 = \xi_{\max}(\rho_0,\rho_1)$. As $\xi_c(\rho_{XA})\ge 0$ by definition, this also gives $\xi_c(\rho_{XA})= \xi_{\max}(\rho_0,\rho_1)$. 
 
 Next consider the case in which $\rho_{XA}$ is a state for which $M=Q_{\max}(\rho_0,\rho_1)>1$ and hence $\xi_{\max}(\rho_0,\rho_1) = \log M>0$. We first prove the achievability bound $\xi_c(\rho_{XA}) \le \xi_{\max} (\rho_0, \rho_1)$. To do this, let us define
\begin{align}
&\widetilde\rho_0 = \frac{1}{2M-2}\Big((2M-1)\rho_0 -\rho_1\Big), \\
&\widetilde\rho_1 = \frac{1}{2M-2}\Big((2M-1)\rho_1 -\rho_0\Big).
\end{align} 
By assumption on $M$, we have that $\widetilde \rho_0$ and $\widetilde\rho_1$ are quantum states.
Consider now the measure-and-prepare channel $\cE$ given by
\begin{align}
\cE(\rho )= \bra{0}\rho\ket{0}\,\widetilde\rho_0 + \bra{1}\rho\ket{1}\,\widetilde\rho_1. 
\end{align}
Then 
\begin{align}
\cE(\pi_M) &= \left(1-\frac{1}{2M}\right)\widetilde\rho_0 + \frac{1}{2M} \widetilde\rho_1\nn\\
&=\frac{2M-1}{4M^2 -4M}\Big((2M-1)\rho_0 - \rho_1\Big) +\frac{1}{4M^2 -4M}\Big((2M-1)\rho_1 - \rho_0\Big) \nn\\
&= \frac{1}{4M^2-4M}\big((2M-1)^2-1\big)\rho_0 = \rho_0. 
\end{align}
By symmetry, we also get
\begin{align}
\cE(\sigma^{(1)}\pi_M\sigma^{(1)}) = \rho_1.
\end{align}
This implies that $({\rm{id}} \otimes \cE) (\gamma_{XQ}^{(M,p)}) = \rho_{XA}$, which in turn implies that
\begin{align}
\xi_c(\rho_0,\rho_1) \le \log M  = \xi_{\max}(\rho_0,\rho_1).
\end{align}

To show the reverse inequality, we first note that for all $M\ge 1 $,
\begin{align}
\label{eq:ximaxUnit}
\xi_{\max}(\pi_M,\sigma^{(1)}\pi_M\sigma^{(1)}) = \log M .
\end{align}
To see this, note that 
\begin{align}
\pi_M \le (2\widetilde M-1) \sigma^{(1)}\pi_M\sigma^{(1)}
\end{align}
if and only if $M\le\widetilde M$. The reversed constraint in \eqref{eq:Qmax} is then also satisfied, which establishes that $$
Q_{\max}(\pi_M,\sigma^{(1)}\pi_M\sigma^{(1)}) = M,
$$ and hence \eqref{eq:ximaxUnit}. Let $M\ge1$ satisfy the constraint in the definition of $\xi_c(\rho_{XA})$ in \eqref{eq:DiluteMiluteCPTPA}. By definition there exists a CPTP map $\cE$ such that 
\begin{align}
\left(\rm{id}\otimes\cE\right)\left(\gamma_{XQ}^{(M, p)}\right) = \rho_{XA}.
\end{align}
By the data-processing inequality for $\xi_{\max}$, we get
\begin{align}
\xi_{\max}(\rho_0,\rho_1) \le \xi_{\max}(\pi_M,\,\sigma^{(1)}\pi_M\sigma^{(1)}) = \log M .
\end{align}
And hence, as $M$ can be chosen arbitrarily under the constraint in \eqref{eq:DiluteMilute}, we see that
\begin{align}
\xi_{\max}(\rho_0,\rho_1)\le \xi_c(\rho_{XA}),
\end{align}
and therefore in total
\begin{align}
\xi_{c}(\rho_{XA}) =  \xi_{\max}(\rho_0,\rho_1),
\end{align}
concluding the proof.
\end{proof}

\subsection{Proof of Theorem~\ref{theo-diluteCDS} --- One-shot exact SD-cost under CDS maps}

\label{sec:proofOneshotdiluteCDS}

\begin{proof}
We first prove the achievability part, i.e., $\xi^\star_c(\rho_{XA})\le\xi^\star_{\max}(\rho_{XA})$.
Without loss of generality, suppose that $\xi^\star_{\max}(\rho_{XA})$ is finite because, otherwise, the upper bound is trivally satisfied.
Moreover, note that 
\begin{align}
Q^\star_{\max}(\rho_{XA}) &=\frac{1}{2}\Big( \max\left\{2^{D_{\max}(\rho_0\|\rho_1)+\log\left(p/(1-p)\right)},2^{D_{\max}(\rho_1\|\rho_0)+\log\left((1-p)/p\right)}\right\} + 1\Big) \nn\\
&\ge \frac{1}{2}\Big(2^{\max\{\log\left(p/(1-p\right),\log\left((1-p)/p\right)\}} + 1\Big) \nn= \frac{1}{2}\Big(\max\left\{\frac{p}{1-p},\frac{1-p}{p}\right\}+1\Big) \\&=\frac{1}{2}\max\{1/p,1/(1-p)\}.
\end{align}
We first treat the case $Q^\star_{\max}(\rho_{XA})=\frac{1}{2}\max\{1/p,1/(1-p)\}$ for which necessarily $\rho_0=\rho_1\equiv\rho$. Assume without loss of generality $p\le 1-p$ (otherwise just flip the classical system $X$ of $\rho_{XA}$). Let now 
\begin{align}
  \cE_0(\cdot) = \bra{1}\cdot\ket{1}\rho,\quad\quad  \cE_1(\cdot) = \bra{0}\cdot\ket{0}\rho,
\end{align}
which are quantum operations, i.e., completely positive and trace non-increasing maps that sum to a trace-preserving map. Define the corresponding CDS map
\begin{align}
\cN = {\rm id}_{X}\otimes\cE_0 + \cF_X\otimes\cE_1.
\end{align}
Writing now $M= Q^\star_{\max}(\rho_{XA})= \frac{1}{2p}$, we see
\begin{align}
\nn\cN(\gamma^{(M)}_{XQ}) & =
\kb{0}\otimes\frac{1}{2}\left(\cE_0(\pi_{M})+ \cE_1(\sigma^{(1)} \pi_M\sigma^{(1)})\right) +\kb{1}\otimes\frac{1}{2}\left(\cE_1(\pi_M)+ \cE_0(\sigma^{(1)} \pi_M\sigma^{(1)})\right) \\ &=p\kb{0}\otimes\rho + (1-p)\kb{1}\otimes\rho = \rho_{XA}.
\end{align}
Hence, this shows  $\xi^\star_c(\rho_{XA})\le \log M  = \xi^\star_{\max}(\rho_{XA}).$

Let now $Q^\star_{\max}(\rho_{XA})>\frac{1}{2}\max\{1/p,1/(1-p)\}\ge1$ and define for $M = Q^\star_{\max}(\rho_{XA})$ the operators
\begin{align}
&\widetilde\rho_0 = \frac{1}{2Mp-1}\Big(p(2M-1)\rho_0 -(1-p)\rho_1\Big), \\
&\widetilde\rho_1 = \frac{1}{2M(1-p)-1}\Big((1-p)(2M-1)\rho_1 -p\rho_0\Big).
\end{align}
By assumption on $M$ we have that $\widetilde \rho_0$ and $\widetilde\rho_1$ are states.
Moreover, let
\begin{align}
q = \frac{2Mp -1}{2M-2} ,
\end{align}
and hence 
\begin{align}
1-q = \frac{2M(1-p) -1}{2M-2}.
\end{align}
Consider now the quantum operations
\begin{align}
&\cE_0(\rho )= q\bra{0}\rho\ket{0}\,\widetilde\rho_0 + (1-q)\bra{1}\rho\ket{1}\,\widetilde\rho_1, \\
&\cE_1(\rho )= (1-q)\bra{0}\rho\ket{0}\,\widetilde\rho_1 + q\bra{1}\rho\ket{1}\,\widetilde\rho_0.
\end{align}
Note that $\cE_0 + \cE_1$ is trace preserving.
The corresponding CDS map is given by 
\begin{align}
\cN = {\rm id}_X\otimes\cE_0 + \cF_X\otimes\cE_1.
\end{align}
Then 
\begin{align}
\cN(\gamma_{XQ}^{(M)}) = \kb{0}\otimes\frac{1}{2}\Big(\cE_0(\pi_M) +\cE_1(\sigma^{(1)}\pi_M\sigma^{(1)})\Big) + \kb{1}\otimes\frac{1}{2}\Big(\cE_1(\pi_M) +\cE_0(\sigma^{(1)}\pi_M\sigma^{(1)})\Big).
\end{align}
Moreover, we see
\begin{align}
\frac{1}{2}\Big(\cE_0(\pi_M) +\cE_1(\sigma^{(1)}\pi_M\sigma^{(1)})\Big) & = q\left(1-\frac{1}{2M}\right)\widetilde\rho_0 + \frac{1-q}{2M} \widetilde\rho_1\nn\\
&=\frac{2M-1}{4M^2 -4M}\Big(p(2M-1)\rho_0 - (1-p)\rho_1\Big)\nn\\
& \qquad +\frac{1}{4M^2 -4M}\Big((1-p)(2M-1)\rho_1 - p\rho_0\Big) \nn\\
&= \frac{p}{4M^2-4M}\big((2M-1)^2-1\big)\rho_0 = p\rho_0. 
\end{align}
By symmetry we also get
\begin{align}
\frac{1}{2}\Big(\cE_1(\pi_M) +\cE_0(\sigma^{(1)}\pi_M\sigma^{(1)})\Big) = (1-p)\rho_1.
\end{align}
This proves the achievability
\begin{align}
\xi^\star_c(\rho_{XA}) \le \xi_{\max}^\star(\rho_{XA}).
\end{align}

For the other inequality, we first note that \begin{align}
\xi^\star_{\max}(\gamma^{(M)}_{XQ}) = \log M,
\end{align}
which follows from the observation $\xi^\star_{\max}(\gamma_{XQ}^{(M)}) = \xi_{\max}(\pi_M,\sigma^{(1)}\pi_M\sigma^{(1)})$ and the arguments in the case of free operations being ${\rm{CPTP}}_A$ maps (in particular consider the discussion of \eqref{eq:ximaxUnit}).
Let $M\ge1$ satisfy the constraint in the definition of $\xi^\star_c(\rho_{XA})$ in \eqref{eq:DiluteMiluteCDS}. By definition there exists a CDS map $\cN$ such that \begin{align}
\cN(\gamma_{XQ}^{(M)}) = \rho_{XA}.
\end{align}
We use monotonicity of $\xi^\star_{\max}$ under CDS maps (compare Lemma~\ref{lem:xistarmon}) to get that
\begin{align}
    \xi^\star_{\max}(\rho_{XA}) \le \xi^\star_{\max}(\gamma_{XQ}^{(M)}) = \log M .
\end{align}
As $M$ is arbitrary, we see that $\xi^\star_{\max}(\rho_{XA}) \le \xi^\star_c(\rho_{XA})$ and hence $\xi^\star_{\max}(\rho_{XA}) = \xi^\star_c(\rho_{XA}).$
\end{proof}

	\subsection{One-shot approximate SD-dilution}
	
	\label{sec:one-shot-approx-dilution}
	
	 We can now define the \emph{one-shot approximate SD-cost} for a general c-q state
	 \begin{align}
	  \rho_{XA} = p\kb{0}\otimes\rho_0 + (1-p)\kb{1}\otimes\rho_1.
	 \end{align}
  \begin{definition}
  For $\eps\ge0$ and golden unit $\gamma_{XQ}^{(M,q)}$, the \emph{one-shot approximate SD-cost} of the c-q state $\rho_{XA}$ is given by
  \begin{align}
  \label{eq:oneApproxDil}
\xi^{\FO,q,\eps}_{c}(\rho_{XA}) &\coloneqq \log\Big(\inf\left\{M\Big| d^\prime_{\operatorname{FO}}(\gamma_{XQ}^{(M,q)}\to\rho_{XA})\le \eps \right\}\Big)\nn \\
&=\log\Big(\inf\left\{M\Big|\,  D^\prime\!\left(\cA\left(\gamma_{XQ}^{(M,q)}\right), \rho_{XA}\right)\le \eps,\,\,\cA\in {\rm{FO}}\right\}\Big),
  \end{align}
  where the minimum conversion error $d^\prime_{\operatorname{FO}}$ is defined in Definition~\ref{def:conversiondistance}.
  	For the choice
  	\begin{align}
  	{\hbox{\FO}}\equiv \Big\{{\rm{id}} \otimes\cE\Big| \cE \text{ CPTP on system A}\Big\}\equiv {\rm{CPTP}}_A ,
  	\end{align}
  	the only sensible choice in \eqref{eq:oneApproxDistill} is $q=p$, as free operations of the form $\rm{id}\otimes\cE$ cannot change the prior in the c-q state. In that case we simply write
  	\begin{align}
  	  \xi^{\eps}_{c}(\rho_{XA}) \equiv \xi^{{\rm{CPTP}}_A,p,\eps}_{c}(\rho_{XA}).
  	\end{align}
  	Whereas for the choice ${\hbox{\FO}}\equiv {\hbox{\cds}}$ and $q=1/2$, we use the notation
  	\begin{align}
  	    \xi^{\star,\eps}_c(\rho_{XA})\equiv\xi^{{\rm{CDS}},1/2,\eps}_{c}(\rho_{XA}).
  	\end{align}
  \end{definition}
  
  In the case of the dilution task, the \emph{one-shot approximate SD-cost} can be directly obtained from the corresponding exact quantity.
  
  \begin{lemma}
  \label{lem:ApproxDilEquality}
  For $\FO$ being c-q state preserving and $\eps\ge0$, we have
\begin{align}
 \label{eq:ApproxDil}
\xi^{\FO,q,\eps}_{c}(\rho_{XA}) = \inf_{\widetilde\rho_{XA}\in B^\prime_\eps(\rho_{XA})} \xi^{\FO,q}_{c}(\widetilde\rho_{XA}),
\end{align}
where we have defined the ball of c-q states\footnote{Note that \eqref{eq:ApproxDil} is also true if we replace $B'_\eps(\rho_{XA})$ with the full $D'$-ball with radius $\eps$ of all linear operators and not just c-q states. The reason for that is that $\xi^{\FO,q}_{c}(\widetilde\rho_{XA})$ is infinite for $\widetilde\rho_{XA}$ not a c-q state because the set of free operations $\FO$ is assumed to be c-q state preserving.} around $\rho_{XA}$ with radius $\eps$ with respect to the scaled trace distance $D^\prime(\cdot, \cdot)$ 
\begin{align}
B^\prime_\eps(\rho_{XA}) = \Big\{\widetilde\rho_{XA}\text{ c-q state}\Big|  D^\prime(\widetilde\rho_{XA}, \rho_{XA})\le\eps\Big\}.
\end{align}
Hence, in particular we get for free operations being $\cptp_A$ or $\cds$ maps
\begin{align}
\xi^\eps_c(\rho_{XA}) &= \inf_{\widetilde\rho_{XA}\in B^\prime_\eps(\rho_{XA})}  \xi_c(\widetilde\rho_{XA}), \\
\xi^{\star,\eps}_c(\rho_{XA}) &= \inf_{\widetilde\rho_{XA}\in B^\prime_\eps(\rho_{XA})}  \xi^\star_c(\widetilde\rho_{XA}).
\end{align}
\end{lemma}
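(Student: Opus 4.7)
The plan is to prove the two inequalities separately, as the claim is essentially a reformulation of the definition of the approximate SD-cost: rather than choosing $M$ and a free operation directly, we can equivalently first choose an intermediate c-q target state within the $D'$-ball and then ask for its exact dilution cost.

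For the direction $\xi^{\FO,q,\eps}_{c}(\rho_{XA}) \le \inf_{\widetilde\rho_{XA}\in B^\prime_\eps(\rho_{XA})} \xi^{\FO,q}_{c}(\widetilde\rho_{XA})$, the idea is the following. Fix any $\widetilde\rho_{XA}\in B^\prime_\eps(\rho_{XA})$ and any $M$ satisfying the constraint in the definition of $\xi^{\FO,q}_{c}(\widetilde\rho_{XA})$. Then there exists $\cA\in\FO$ with $\cA(\gamma_{XQ}^{(M,q)})=\widetilde\rho_{XA}$, so that
\begin{equation*}
D'(\cA(\gamma_{XQ}^{(M,q)}),\rho_{XA})=D'(\widetilde\rho_{XA},\rho_{XA})\le\eps,
\end{equation*}
which means that this $M$ also satisfies the constraint in \eqref{eq:oneApproxDil}. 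Taking the infimum over $M$ and then over $\widetilde\rho_{XA}$ yields the inequality.

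For the reverse direction $\xi^{\FO,q,\eps}_{c}(\rho_{XA}) \ge \inf_{\widetilde\rho_{XA}\in B^\prime_\eps(\rho_{XA})} \xi^{\FO,q}_{c}(\widetilde\rho_{XA})$, fix any $M$ satisfying the constraint in \eqref{eq:oneApproxDil}, i.e., there exists $\cA\in\FO$ with $D'(\cA(\gamma_{XQ}^{(M,q)}),\rho_{XA})\le\eps$. Define $\widetilde\rho_{XA}\coloneqq \cA(\gamma_{XQ}^{(M,q)})$. Since $\FO$ is c-q state preserving by hypothesis, $\widetilde\rho_{XA}$ is a c-q state, and by construction $\widetilde\rho_{XA}\in B^\prime_\eps(\rho_{XA})$. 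The same $\cA$ achieves an exact dilution $\gamma_{XQ}^{(M,q)}\mapsto\widetilde\rho_{XA}$, so $\xi^{\FO,q}_{c}(\widetilde\rho_{XA})\le\log M$, giving
\begin{equation*}
\inf_{\widetilde\rho_{XA}\in B^\prime_\eps(\rho_{XA})} \xi^{\FO,q}_{c}(\widetilde\rho_{XA}) \le \log M,
\end{equation*}
and taking the infimum over admissible $M$ concludes the argument.

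There is no serious obstacle: the proof is essentially definition chasing. The only point that must be explicitly invoked is the hypothesis that $\FO$ preserves c-q states, so that the image $\cA(\gamma_{XQ}^{(M,q)})$ is a legitimate element of $B^\prime_\eps(\rho_{XA})$; both $\cptp_A$ and $\cds$ satisfy this property, which yields the two displayed corollaries at the end of the statement.
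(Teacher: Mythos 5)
Your proposal is correct and is essentially the same argument as the paper's: the paper writes the identity as a single chain of equalities between nested infima (with the c-q-state-preserving hypothesis invoked at exactly the step where $\cA(\gamma_{XQ}^{(M,q)})$ is identified as an element of $B'_\eps(\rho_{XA})$), whereas you unfold that chain into the two defining inequalities. The content and the key observation are identical.
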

\begin{proof}
The proof simply follows by
\begin{align}
\label{eq:ProofApproximSDdilute}
\xi^{\FO,q,\eps}_{c}(\rho_{XA}) & = \log\Big(\inf\left\{M\Big| d^\prime_{\operatorname{FO}}( \gamma_{XQ}^{(M,q)}\to\rho_{XA})\le \eps \right\}\Big)\nn \\
&=\log\Big(\inf\left\{M\Big|\,  D^\prime\!\left(\cA\left(\gamma_{XQ}^{(M,q)}\right), \rho_{XA}\right)\le \eps,\,\,\cA\in {\rm{FO}}\right\}\Big)\nn\\
&= \nn\log\Big(\inf\left\{M\Big|\, \cA\left(\gamma_{XQ}^{(M,q)}\right) =\widetilde\rho_{XA},\,\,\widetilde\rho_{XA}\in B^\prime_\eps(\rho_{XA}),\, \cA\in \FO\right\}\Big) \\
&= \inf_{\widetilde\rho_{XA}\in B^\prime_\eps(\rho_{XA})}\log\Big(\inf\left\{M\Big| \cA\left(\gamma_{XQ}^{(M,q)}\right) =\widetilde\rho_{XA},\,\,\cA\in \FO\right\}\Big)\nn \\
&= \inf_{\widetilde\rho_{XA}\in B^\prime_\eps(\rho_{XA})}\xi^{\FO,q}_c(\widetilde\rho_{XA}).
\end{align}
Here, for the third equality we have used that  $\widetilde\rho_{XA}=\cA\left(\gamma_{XQ}^{(M,q)}\right)$ is a c-q state because $\FO$ is c-q state preserving. Hence $D^\prime\!\left(\widetilde\rho_{XA}, \rho_{XA}\right)\le\eps$ already implies $\widetilde\rho_{XA}\in B^\prime_\eps(\rho_{XA})$.
\end{proof}

\subsection{Optimal asymptotic rates of exact and approximate SD-dilution}

\label{sec:asymptotic-dilution}

Consider the c-q state
$$
\rho_{XA}^{(n)}\coloneqq  p \kb{0} \otimes \rho_0^{\otimes n} + (1-p) \kb{1} \otimes \rho_1^{\otimes n}.
$$
Similarly, as in the case of distillation, we are interested in the asymptotic quantities
\begin{align}
\label{eq:Ascos}
\limsup_{n\to\infty}\frac{\xi_{c}(\rho_{XA}^{(n)})}{n}\quad ; \quad \limsup_{n\to\infty}\frac{\xi_{c}^\star(\rho_{XA}^{(n)})}{n}.
\end{align}
Using Theorem~\ref{theo-diluteCPTP} and Theorem~\ref{theo-diluteCDS} and the additivity of $D_{\max}$ and hence also of the Thompson metric $d_T$, we can directly read off that both limits in \eqref{eq:Ascos} exist and are given by the Thompson metric:
\begin{theorem}[Exact asymptotic SD-cost]\label{theo-exact-asympDil} For all $p\in(0,1)$, the optimal asymptotic rates of exact SD-dilution under $\cptp_A$ and $\cds$ maps is given by
\begin{align}
	\lim_{n\to\infty}\frac{\xi_{c}(\rho_{XA}^{(n)})}{n}= \lim_{n\to\infty}\frac{\xi_{c}^\star(\rho_{XA}^{(n)})}{n} = d_T(\rho_0,\rho_1).
	\label{eq:thompson-metric-op-int}
\end{align}
\end{theorem}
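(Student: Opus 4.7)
The plan is to read off both limits as essentially immediate corollaries of the one-shot characterizations in Theorem~\ref{theo-diluteCPTP} and Theorem~\ref{theo-diluteCDS}, combined with the additivity of the max-relative entropy $D_{\max}$ on tensor products.

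For the $\cptp_A$ rate, I would substitute $\rho_{XA}^{(n)}$ into Theorem~\ref{theo-diluteCPTP} and use the closed form in \eqref{eq:Qmax} to obtain $\xi_c(\rho_{XA}^{(n)}) = \log\frac{1}{2}\bigl(2^{d_T(\rho_0^{\otimes n},\rho_1^{\otimes n})}+1\bigr)$. Additivity of $D_{\max}$ on product states, together with the definition \eqref{Thompson} of $d_T$ as the maximum of two $D_{\max}$ quantities, gives $d_T(\rho_0^{\otimes n},\rho_1^{\otimes n}) = n\,d_T(\rho_0,\rho_1)$. Dividing by $n$ and letting $n\to\infty$ then yields $d_T(\rho_0,\rho_1)$, since the additive $-\log 2$ and the subdominant $+1$ inside the outer logarithm both disappear in the limit.

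For the $\cds$ rate, I would apply Theorem~\ref{theo-diluteCDS} to write $\xi_c^\star(\rho_{XA}^{(n)}) = \log\frac{1}{2}\bigl(2^{d_T(p\rho_0^{\otimes n},(1-p)\rho_1^{\otimes n})}+1\bigr)$. Using the expansion in \eqref{eq:QmaxCDS}, the relevant Thompson metric equals $\max\{nD_{\max}(\rho_0\|\rho_1) + \log\tfrac{p}{1-p},\,nD_{\max}(\rho_1\|\rho_0) + \log\tfrac{1-p}{p}\}$. For $p\in(0,1)$ the prior-dependent shifts are finite, so after dividing by $n$ they are asymptotically negligible and the limit again equals $d_T(\rho_0,\rho_1)$.

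No genuine obstacle arises; the argument reduces to a one-line limit computation in each case, with $D_{\max}$-additivity doing all the work. The only subtlety worth flagging is the finiteness of the prior-dependent shift $\log(p/(1-p))$ in the $\cds$ argument, which is exactly what the hypothesis $p\in(0,1)$ secures; the boundary cases $p\in\{0,1\}$ must be excluded since there the c-q state reduces to a single $\rho_i$ and the SD-cost is identically zero, so no informative rate exists.
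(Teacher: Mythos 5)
Your argument is correct and is essentially identical to the paper's own (very terse) proof: the paper likewise reads off both limits from Theorems~\ref{theo-diluteCPTP} and~\ref{theo-diluteCDS} together with the additivity of $D_{\max}$ on tensor products (hence of the Thompson metric $d_T$), with the prior-dependent shifts $\log(p/(1-p))$ and the $+1$ inside the outer logarithm vanishing after dividing by $n$. One minor correction to your closing aside: for $p\in\{0,1\}$ the SD-cost is identically zero only under $\cptp_A$ maps, whereas under $\cds$ maps the bound $Q^\star_{\max}(\rho_{XA})\ge\frac{1}{2}\max\{1/p,1/(1-p)\}$ forces $\xi^\star_c(\rho_{XA}^{(n)})=\infty$ there — but this does not affect the theorem, which assumes $p\in(0,1)$.
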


Hence, unlike the case of distillation, the optimal asymptotic rates in the case of exact dilution  do not match the quantum Chernoff bound, as they are too large. We also note here that \eqref{eq:thompson-metric-op-int} gives an operational interpretation of the Thompson metric in quantum information theory.

However, allowing errors with respect to the scaled trace distance $D^\prime$ in the conversion, the corresponding approximate quantities converge to the Chernoff divergence.
\begin{theorem}[Approximate asymptotic SD-cost]\label{theo-asympDil}
For all $\eps>0$ and $p\in(0,1)$, the optimal asymptotic rates of approximate SD-dilution under $\cptp_A$ and $\cds$ maps is given by
\begin{align}
	\label{eq_AsympResultDil}
	\lim_{n\to\infty}\frac{\xi^{\eps}_{c}(\rho_{XA}^{(n)})}{n}
	 =\lim_{n\to\infty}\frac{\xi^{\star,\eps}_{c}(\rho_{XA}^{(n)})}{n} = \xi(\rho_0,\rho_1),
	\end{align}
	where $\xi(\rho_0,\rho_1) = -\log\!\left(\min_{0\le s\le1}\Tr\!\left(\rho_0^s\rho_1^{1-s}\right)\right)$ denotes the quantum Chernoff divergence. In particular this gives
	\begin{align}
	\lim_{\eps\to 0}\lim_{n\to\infty}\frac{\xi^{\eps}_{c}(\rho_{XA}^{(n)})}{n}
	 =\lim_{\eps\to 0}\lim_{n\to\infty}\frac{\xi^{\star,\eps}_{c}(\rho_{XA}^{(n)})}{n} = \xi(\rho_0,\rho_1).
	\end{align}
\end{theorem}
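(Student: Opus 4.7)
The plan is to establish the equality as two matching bounds. Since $\cptp_A\subset\cds$, one has $\xi_c^{\star,\eps}(\rho_{XA}^{(n)}) \le \xi_c^{\eps}(\rho_{XA}^{(n)})$ for every $n$, so it suffices to prove two complementary inequalities: (i) the converse lower bound $\liminf_n \xi_c^{\star,\eps}(\rho_{XA}^{(n)})/n \ge \xi(\rho_0,\rho_1)$, and (ii) the achievability upper bound $\limsup_n \xi_c^{\eps}(\rho_{XA}^{(n)})/n \le \xi(\rho_0,\rho_1)$. The outer equalities $\lim_{\eps\to 0}\lim_n \cdots = \xi(\rho_0,\rho_1)$ then follow immediately once the inner limits are shown to equal $\xi(\rho_0,\rho_1)$ for every fixed $\eps>0$.

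For the converse, I would fix any $M$ satisfying the constraint in the definition of $\xi_c^{\star,\eps}(\rho_{XA}^{(n)})$, so that there exists $\cA\in\cds$ with $D'(\cA(\gamma_{XQ}^{(M)}),\rho_{XA}^{(n)}) \le \eps$. Lemma~\ref{lem:PerrDivBound} applied with the pair $(\cA(\gamma_{XQ}^{(M)}),\rho_{XA}^{(n)})$ yields $p_{\err}(\cA(\gamma_{XQ}^{(M)})) \le (\eps+1)\,p_{\err}(\rho_{XA}^{(n)})$, while Lemma~\ref{lem:MonoErrProbCDS} combined with $p_{\err}(\gamma_{XQ}^{(M)})=1/(2M)$ gives $p_{\err}(\cA(\gamma_{XQ}^{(M)}))\ge 1/(2M)$. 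Chaining these inequalities, taking logarithms, dividing by $n$, and invoking the quantum Chernoff bound~\cite{ACMBMAV07,nussbaum2009chernoff} ($-\log p_{\err}(\rho_{XA}^{(n)})/n \to \xi(\rho_0,\rho_1)$) delivers the converse, uniformly in $\eps$ up to an additive $o(1)$ term.

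For the achievability, by Lemma~\ref{lem:ApproxDilEquality} it suffices to exhibit a c-q state $\widetilde\rho_{XA}^{(n)}$ with $D'(\widetilde\rho_{XA}^{(n)},\rho_{XA}^{(n)}) \le \eps$ and $\xi_c(\widetilde\rho_{XA}^{(n)})/n \to \xi(\rho_0,\rho_1)$. I propose the mixing ansatz $\widetilde\rho_{XA}^{(n)} \coloneqq p|0\rangle\!\langle 0| \otimes \widetilde\rho_0^{(n)} + (1-p)|1\rangle\!\langle 1| \otimes \widetilde\rho_1^{(n)}$ with
\begin{align*}
\widetilde\rho_0^{(n)} = (1-\mu_n)\rho_0^{\otimes n} + \mu_n\rho_1^{\otimes n}, \qquad \widetilde\rho_1^{(n)} = \mu_n\rho_0^{\otimes n} + (1-\mu_n)\rho_1^{\otimes n},
\end{align*}
and $\mu_n \coloneqq \eps\,p_{\err}(\rho_{XA}^{(n)})$. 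A direct computation of $\widetilde\rho_{XA}^{(n)}-\rho_{XA}^{(n)}$ gives $\tfrac12\|\widetilde\rho_{XA}^{(n)}-\rho_{XA}^{(n)}\|_1 \le \mu_n$ and hence $D'(\widetilde\rho_{XA}^{(n)},\rho_{XA}^{(n)}) \le \eps$.

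The main technical step, and where I expect the bulk of the bookkeeping to live, is the Thompson-metric bound. Since $\widetilde\rho_0^{(n)}$ and $\widetilde\rho_1^{(n)}$ are convex combinations of the same two positive operators $\rho_0^{\otimes n}$ and $\rho_1^{\otimes n}$, a direct computation of $((1-\mu_n)/\mu_n)\widetilde\rho_1^{(n)}-\widetilde\rho_0^{(n)}$ shows it equals $((1-2\mu_n)/\mu_n)\rho_1^{\otimes n}\ge 0$ for $\mu_n<1/2$, and symmetrically for the reverse direction. Hence $d_T(\widetilde\rho_0^{(n)},\widetilde\rho_1^{(n)}) \le \log((1-\mu_n)/\mu_n)$, and Theorem~\ref{theo-diluteCPTP} then gives $\xi_c(\widetilde\rho_{XA}^{(n)}) = \log\tfrac12(2^{d_T}+1) \le d_T \le \log((1-\mu_n)/\mu_n)$. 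The Chernoff asymptotic rewrites $\log(1/\mu_n)/n = -\log(\eps)/n - \log p_{\err}(\rho_{XA}^{(n)})/n \to \xi(\rho_0,\rho_1)$, which completes the achievability for $\cptp_A$ maps and hence, via $\xi_c^{\star,\eps}\le\xi_c^{\eps}$, also for CDS maps.
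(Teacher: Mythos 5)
Your converse for CDS maps is exactly the right argument, and your mixing ansatz for the achievability under $\cptp_A$ is sound and in fact a cleaner, more explicit construction than the smoothed-Thompson-metric bound used in the paper's proof (Lemma~\ref{lem:ThompsonMetricSymBound}): the computation $\frac{1-\mu_n}{\mu_n}\widetilde\rho_1^{(n)}-\widetilde\rho_0^{(n)}=\frac{1-2\mu_n}{\mu_n}\rho_1^{\otimes n}\ge 0$ is correct, as is the bound $D'(\widetilde\rho_{XA}^{(n)},\rho_{XA}^{(n)})\le\eps$. However, the bridge you use to transfer these bounds between the two sets of free operations is broken. The inclusion $\cptp_A\subset\cds$ does \emph{not} give $\xi_c^{\star,\eps}(\rho_{XA}^{(n)})\le\xi_c^{\eps}(\rho_{XA}^{(n)})$, because the two costs are defined with respect to \emph{different golden units}: $\gamma_{XQ}^{(M,p)}$ for $\cptp_A$ and $\gamma_{XQ}^{(M,1/2)}$ for CDS. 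Since $(1/2,1/2)\prec(p,1-p)$ in majorisation order, Lemma~\ref{lem:GoldenUnitPreorder} only lets you convert $\gamma_{XQ}^{(M,p)}$ into $\gamma_{XQ}^{(M,1/2)}$ by a CDS map, not the reverse; so a $\cptp_A$ dilution protocol starting from $\gamma_{XQ}^{(M,p)}$ cannot be turned into a CDS protocol starting from the less valuable unit $\gamma_{XQ}^{(M,1/2)}$. The inequality is in fact false: taking $\rho_0=\rho_1$ and $p>1/2$, Theorems~\ref{theo-diluteCPTP} and \ref{theo-diluteCDS} give $\xi_c=0$ while $\xi_c^{\star}=\log\frac{1}{2(1-p)}>0$, and this gap persists for small $\eps$. (The paper makes this point explicitly in Section~\ref{sec:examples}: the SD-cost under $\cptp_A$ is \emph{smaller} than under CDS because one pays with a less valuable currency in the CDS case.) As written, your sandwich therefore only establishes the converse for CDS and the achievability for $\cptp_A$; the other two bounds are unproven.

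The repair is straightforward and keeps your constructions intact. For the converse under $\cptp_A$, run your Lemma~\ref{lem:PerrDivBound}/monotonicity argument verbatim with $\gamma_{XQ}^{(M,p)}$ in place of $\gamma_{XQ}^{(M)}$, using $p_{\err}(\gamma_{XQ}^{(M,p)})=\frac{1}{2M}$ for $2M\ge\max\{1/p,1/(1-p)\}$ (Lemma~\ref{lem:goldenunitscaling}). For the achievability under CDS, apply Theorem~\ref{theo-diluteCDS} to the same mixed state: $\xi_c^{\star}(\widetilde\rho_{XA}^{(n)})=\log\frac{1}{2}\bigl(2^{d_T(p\widetilde\rho_0^{(n)},(1-p)\widetilde\rho_1^{(n)})}+1\bigr)$, and since $d_T(p\widetilde\rho_0^{(n)},(1-p)\widetilde\rho_1^{(n)})\le d_T(\widetilde\rho_0^{(n)},\widetilde\rho_1^{(n)})+\bigl|\log\frac{p}{1-p}\bigr|$, your bound acquires only an additive constant that vanishes after dividing by $n$. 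With these two additions the proof is complete and genuinely simpler than the paper's route through the smoothed Thompson metric.
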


Here, the restriction to $p\in(0,1)$ is sensible, as for $p\in\{0,1\}$, we directly get $\xi^{\eps}_{c}(\rho_{XA}^{(n)})= 0$ and $\xi^{\star,\eps}_{c}(\rho_{XA}^{(n)})=\infty$ for all $n\in\N$ and $\eps\ge 0$.

\begin{remark}
The fact that the limits in \eqref{eq_AsympResultDil} hold without any restriction on the value of $\varepsilon > 0 $ implies that the strong converse holds for the asymptotic SD-cost. Another way of interpreting this statement is as follows: for a sequence of SD dilution protocols with rate below the asymptotic SD-cost, the error necessarily converges to infinity as $n\to\infty$.
\end{remark}

\begin{remark}
\label{rem:asymp-rev-1}
Given that the asymptotic distillable-SD and SD-cost are equal to the quantum Chernoff divergence, it follows that the resource theory of symmetric distinguishability is asymptotically reversible. This means that, in the asymptotic limit of large $n$, one can convert the  source state $\rho^{(n)}_{XA}$ to a target state $\sigma^{(m)}_{XA'}$ at a rate $\frac{m}{n}$ given by the ratio of the Chernoff divergences. Then one can go back to $\rho^{(n)}_{XA}$ at the inverse rate with no loss (in the asymptotic limit). The procedure to do so, for the first aforementioned state conversion, is to distill golden-unit states from $\rho^{(n)}_{XA}$ at a rate equal to the Chernoff divergence. Then we dilute these golden-unit states to $\sigma^{(m)}_{XA'}$, and the overall conversion rate is equal to the ratio of Chernoff divergences. Then we go back from $\sigma^{(m)}_{XA'}$ to $\rho^{(n)}_{XA}$ in a similar manner, and there is no loss in the asymptotic limit. We discuss these points in much more detail in Section~\ref{sec:asymp}.

We note here that other resource theories, such as pure-state bipartite entanglement \cite{BBPS96}, coherence \cite{Winter_2016}, thermodynamics \cite{BHORS13}, and asymmetric distinguishability \cite{Wang2019states} are all asymptotically reversible in a similar sense.
\end{remark}

\subsection{Proof of Theorem~\ref{theo-asympDil}}

\label{sec:proofs-asymptotic-dilution}
 
 \subsubsection{Lower bound for the asymptotic SD-cost in Eq.~\eqref{eq_AsympResultDil}}
 
Using an argument similar to that given in  the proof of Theorem~\ref{theo-asympAprDistil}, we establish  the following asymptotic lower bound on the approximate SD-cost:
\begin{lemma}
	\label{lem:LowerBound}
	For all $\eps \ge0$, 
	\begin{align}
	\label{eq:UpperBoundConv}
	  \liminf_{n\to\infty}\frac{\xi^\eps_{c}(\rho_{XA}^{(n)})}{n}&\ge\xi(\rho_0,\rho_1), \\
	  \liminf_{n\to\infty}\frac{\xi^{\star,\eps}_{c}(\rho_{XA}^{(n)})}{n}&\ge\xi(\rho_0,\rho_1).
\end{align}
\end{lemma}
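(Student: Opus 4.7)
The plan is to bootstrap the quantum Chernoff bound \cite{ACMBMAV07,nussbaum2009chernoff} via Lemma~\ref{lem:PerrDivBound} together with the monotonicity of $p_{\err}$ under CDS maps (Lemma~\ref{lem:MonoErrProbCDS}). The argument is a natural dual of the proof of Theorem~\ref{theo-asympAprDistil}: there the scaled-trace-distance ball of Lemma~\ref{lem:PerrDivBound} sat around the golden unit, whereas here it sits around $\rho_{XA}^{(n)}$, so the scaling factor $p_{\err}(\sigma_{XA})$ in the definition of $D'$ produces a bound on the cost $M$ rather than on the distillable-SD.

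Concretely, let $\FO\in\{\cptp_A,\cds\}$ with the associated prior $q\in\{p,1/2\}$, and let $M$ be feasible in the optimization defining $\xi_c^{\FO,q,\eps}(\rho_{XA}^{(n)})$, so that there exists $\cA\in\FO$ with $D'(\cA(\gamma_{XQ}^{(M,q)}),\rho_{XA}^{(n)})\le\eps$. Applying Lemma~\ref{lem:PerrDivBound} and then the monotonicity of $p_{\err}$ under $\cds$ (which covers $\cptp_A\subset\cds$ as well) yields
\begin{equation*}
p_{\err}(\gamma_{XQ}^{(M,q)}) \;\le\; p_{\err}(\cA(\gamma_{XQ}^{(M,q)})) \;\le\; (\eps+1)\,p_{\err}(\rho_{XA}^{(n)}).
\end{equation*}
Since $p_{\err}(\rho_{XA}^{(n)})\to 0$ as $n\to\infty$, this inequality forces any feasible $M$ to diverge with $n$, and in particular for $n$ large enough the condition $2M\ge\max\{1/q,1/(1-q)\}$ of Lemma~\ref{lem:goldenunitscaling} is met, so $p_{\err}(\gamma_{XQ}^{(M,q)})=1/(2M)$. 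Rearranging then gives
\begin{equation*}
\log M \;\ge\; -\log\!\big(2\,p_{\err}(\rho_{XA}^{(n)})\big) \;-\; \log(\eps+1).
\end{equation*}

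Taking the infimum over feasible $M$, dividing by $n$, and letting $n\to\infty$, the $\log(\eps+1)/n$ correction drops out, while the quantum Chernoff bound supplies
\begin{equation*}
\liminf_{n\to\infty}\frac{\xi_c^{\FO,q,\eps}(\rho_{XA}^{(n)})}{n}\;\ge\;\lim_{n\to\infty}\frac{-\log\!\big(2\,p_{\err}(\rho_{XA}^{(n)})\big)}{n}\;=\;\xi(\rho_0,\rho_1),
\end{equation*}
proving both stated inequalities simultaneously. The argument is uniform in the two choices of free operations, and there is no genuine obstacle, since Lemmas~\ref{lem:PerrDivBound} and~\ref{lem:MonoErrProbCDS} do the heavy lifting. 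The mildest technical point is justifying that the feasible $M$ is eventually large enough to invoke the golden-unit evaluation formula, which follows immediately from the first displayed inequality together with the exponential decay of $p_{\err}(\rho_{XA}^{(n)})$.
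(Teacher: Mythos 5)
Your proof is correct and follows essentially the same route as the paper's: take a feasible $M$, chain the monotonicity of $p_{\err}$ under CDS maps with Lemma~\ref{lem:PerrDivBound} to get $p_{\err}(\gamma_{XQ}^{(M,q)})\le(\eps+1)\,p_{\err}(\rho_{XA}^{(n)})$, and invoke the quantum Chernoff bound. Your explicit check that feasible $M$ eventually satisfies $2M\ge\max\{1/q,1/(1-q)\}$ (needed for the $\cptp_A$ case with prior $q=p$) is a small point the paper glosses over by treating only the CDS case in detail, but otherwise the two arguments coincide.
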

\begin{proof}
We only consider the lower bound for $\xi^{\star,\eps}_c$, as the one for $\xi^\eps_c$ exactly follows the same line of reasoning.
Let $M$ be such that it satisfies the constraint in the following optimization:
\begin{align}
\label{eq:xistarapproxinproof}
\xi^{\star,\eps}_c(\rho_{XA}^{(n)}) = \log\!\left(\inf\left\{M\Big|\,d^\prime_{\operatorname{CDS}}(\gamma_{XQ}^{(M)}\to\rho^{(n)}_{XA})\le\eps\right\}\right).
\end{align}
Hence, there exists a {\rm{CDS}} map  $\cN$ satisfying
$$
D^\prime\!\left(\cN(\gamma_{XQ}^{(M)}),\rho_{XA}^{(n)}\right)\le\eps.$$ Then, by the monotonicity of the minimum error probability under CDS maps and Lemma~\ref{lem:PerrDivBound}, we see
\begin{align}
 \frac{1}{2M} = p_{\operatorname{err}}\!\left(\gamma_{XQ}^{(M)}\right) \le p_{\operatorname{err}}\!\left(\cN(\gamma_{XQ}^{(M)})\right) \le (\eps+1)p_{\operatorname{err}}(\rho^{(n)}_{XA}).
\end{align}
As $M$ is arbitrary under the constraint in \eqref{eq:xistarapproxinproof}, we get
\begin{align}
\xi^{\star,\eps}_{c}(\rho_{XA}^{(n)}) \ge -\log(p_{\operatorname{err}}(\rho_{XA}^{(n)})) -\log(2(\eps+1)),
\end{align}
and hence
\begin{align}
\liminf_{n\to\infty}\frac{\xi^{\star,\eps}_{c}(\rho_{XA}^{(n)})}{n} \ge \lim_{n\to\infty} \frac{-\log(p_{\operatorname{err}}(\rho_{XA}^{(n)}))-\log(2(\eps+1))}{n} = \xi(\rho_0,\rho_1),
\end{align}
concluding the proof.
\end{proof}

\subsubsection{Upper bound for the asymptotic SD-cost in Eq.~\eqref{eq_AsympResultDil} and smoothed Thompson metric}

  Denote the set of sub-normalised states on a Hilbert space $\cH$ by
  \begin{align} 
  \cS_{\le}(\cH) \coloneqq \left\{\omega\in\cB(\cH)\,\Big|\,\,\omega \ge 0,\,\,\Tr(\omega)\le 1\right\},
  \end{align}
  with $\cB(\cH)$ the set of bounded operators on $\cH$.
  Moreover, for a sub-normalised state $\omega$ on $\cH$, we define the trace ball of sub-normalised states with radius $\eps\ge0$ around $\omega$ as 
  \begin{align}
  \cB_\eps(\omega) \coloneqq \left\{\widetilde\omega \in\cS_{\le}(\cH)\, \Big|\,\,\frac{1}{2}\|\,\omega-\widetilde\omega\|_1\le\eps\right\}.
  \end{align}
 
In order to prove the desired upper bound on the asymptotic dilution cost \eqref{eq_AsympResultDil}, we  consider the following smoothed quantity:
\begin{align}
d^\eps_T(\omega_0,\omega_1) & \coloneqq \inf_{\substack{\widetilde\omega_0 \in \cB_\eps(\omega_0),\\\widetilde\omega_1 \in \cB_\eps(\omega_1)}}d_T(\widetilde\omega_0,\widetilde\omega_1)\\&= \inf_{\substack{\widetilde\omega_0 \in \cB_\eps(\omega_0),\\\widetilde\omega_1 \in \cB_\eps(\omega_1)}}\max\Big\{D_{\max}(\widetilde\omega_0\|\widetilde\omega_1),D_{\max}(\widetilde\omega_1\|\widetilde\omega_0)\Big\}.
\end{align}
We get the following upper bound on the smoothed Thompson metric:

\begin{lemma}
	\label{lem:ThompsonMetricSymBound}
	For all $\omega_0,\omega_1\in\cS_{\le}(\cH)$ and $\eps\in(0,1]$, we have the bound 
	\begin{align}
	\label{eq:ThompsonBound}
	d^\eps_T(\omega_0,\omega_1)\le \log\!\left(\frac{4}{\eps} \right).
	\end{align}
	For the special case of $\Tr(\omega_0) + \Tr(\omega_1)=1$, we get the slightly stronger bound
	\begin{align}
	\label{eq:ThompsonBoundsum1}
	\inf_{\substack{\widetilde\omega_0 \in \cB_\eps(\omega_0),\,\widetilde\omega_1 \in \cB_\eps(\omega_1),\\\Tr(\widetilde\omega_0)+\Tr(\widetilde\omega_1)=1}}d_T(\widetilde\omega_0,\widetilde\omega_1)\le \log\!\left(\frac{4}{\eps}\right) ,
	\end{align}
	and for the case $\Tr(\omega_0)=1$ and $\Tr(\omega_1)=1$, we get
	\begin{align}
	\label{eq:ThompsonBound1}
	\inf_{\substack{\widetilde\omega_0 \in \cB_\eps(\omega_0),\,\widetilde\omega_1 \in \cB_\eps(\omega_1),\\\Tr(\widetilde\omega_0)=1,\,\Tr(\widetilde\omega_1)=1}}d_T(\widetilde\omega_0,\widetilde\omega_1)\le \log\!\left(\frac{2}{\eps}\right).
	\end{align}
\end{lemma}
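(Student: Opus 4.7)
The plan is to construct, for a mixing parameter $\beta \in (0, 1/2]$ to be chosen later, the smoothed candidates
\[
\widetilde\omega_0 := (1-\beta)\omega_0 + \beta\omega_1, \qquad \widetilde\omega_1 := \beta\omega_0 + (1-\beta)\omega_1.
\]
The motivation is that $d_T(\omega_0,\omega_1)$ blows up whenever the supports of $\omega_0$ and $\omega_1$ differ, so a natural way to smooth it is to mix in a small amount of the other operator, producing sub-normalized states that automatically share the same support and obey a clean $D_{\max}$ bound.

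Several properties of this construction are immediate. Sub-normalization follows from $\Tr\widetilde\omega_i = (1-\beta)\Tr\omega_i + \beta\Tr\omega_{1-i} \le 1$. The sum $\Tr\widetilde\omega_0+\Tr\widetilde\omega_1 = \Tr\omega_0+\Tr\omega_1$ is preserved, so the hypothesis of \eqref{eq:ThompsonBoundsum1} transfers automatically to the $\widetilde\omega_i$; likewise, $\Tr\omega_i = 1$ for $i\in\{0,1\}$ forces $\Tr\widetilde\omega_i = 1$, yielding the hypothesis of \eqref{eq:ThompsonBound1}. Finally, by the triangle inequality,
\[
\tfrac{1}{2}\|\widetilde\omega_i - \omega_i\|_1 = \tfrac{\beta}{2}\|\omega_{1-i} - \omega_i\|_1 \le \tfrac{\beta}{2}(\Tr\omega_0+\Tr\omega_1).
\]

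The core step is the operator inequality
\[
\widetilde\omega_0 = (1-\beta)\omega_0 + \beta\omega_1 \;\le\; \tfrac{1-\beta}{\beta}\bigl[\beta\omega_0 + (1-\beta)\omega_1\bigr] = \tfrac{1-\beta}{\beta}\,\widetilde\omega_1,
\]
valid whenever $\beta \le 1/2$ (since then $(1-\beta)^2/\beta \ge \beta$), and symmetric in $0 \leftrightarrow 1$; this yields $d_T(\widetilde\omega_0, \widetilde\omega_1) \le \log\bigl((1-\beta)/\beta\bigr) \le \log(1/\beta)$. It then suffices to pick $\beta$ as large as possible subject to the trace-distance constraint. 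For \eqref{eq:ThompsonBound}, where $\Tr\omega_0+\Tr\omega_1 \le 2$, take $\beta := \min(\eps, 1/2)$; for \eqref{eq:ThompsonBoundsum1}, where $\Tr\omega_0+\Tr\omega_1 = 1$, take $\beta := \min(2\eps, 1/2)$; for \eqref{eq:ThompsonBound1}, where both operators are normalized, take $\beta := \min(\eps, 1/2)$ again. In each case a brief check (separating $\beta = \eps$ or $\beta = 2\eps$ from the capped case $\beta = 1/2$, and using $\eps \in (0, 1]$ to absorb slack) yields $1/\beta \le 4/\eps$ in the first two settings and $1/\beta \le 2/\eps$ in the normalized setting, proving the three claimed bounds.

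No step is truly hard; the main insight is the symmetric mixing construction and the observation that it simultaneously preserves the two trace-sum invariants appearing in \eqref{eq:ThompsonBoundsum1} and \eqref{eq:ThompsonBound1}. The only mild subtlety is capping $\beta$ at $1/2$ to keep the operator inequality valid when $\eps$ is large, which introduces a constant-factor loss that is absorbed cleanly by the assumption $\eps \le 1$.
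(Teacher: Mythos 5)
Your proof is correct, and it takes a genuinely different route from the paper's. The paper perturbs each operator by adding only the \emph{positive part} of the difference, setting $\widetilde\omega_0 \propto \omega_0 + \tfrac{1}{\lambda_0}(\omega_1-\omega_0)_+$ and $\widetilde\omega_1 \propto \omega_1 + \tfrac{1}{\lambda_1}(\omega_0-\omega_1)_+$ with carefully tuned $\lambda_0,\lambda_1$, and then renormalizes; verifying the mutual domination $\widetilde\omega_0 \le (\lambda_0+\lambda_0/\lambda_1)\,\widetilde\omega_1$ requires the identity $A = A_+ - A_-$ and a separate, slightly modified construction for the normalized case \eqref{eq:ThompsonBound1}. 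Your symmetric convex mixing $\widetilde\omega_i = (1-\beta)\omega_i + \beta\omega_{1-i}$ replaces all of this with the one-line operator inequality $\beta\omega_1 \le \tfrac{(1-\beta)^2}{\beta}\omega_1$ (valid for $\beta\le 1/2$), and it handles all three cases uniformly because the mixing manifestly preserves $\Tr\widetilde\omega_0+\Tr\widetilde\omega_1 = \Tr\omega_0+\Tr\omega_1$ and individual normalization. What you give up is negligible here: the paper's positive-part construction perturbs each operator only by an amount controlled by $\Tr[(\omega_1-\omega_0)_\pm]$ rather than by $\beta\|\omega_1-\omega_0\|_1$, which could matter if one wanted the smoothing to be adaptive to how close $\omega_0$ and $\omega_1$ already are, but for the stated $\log(4/\eps)$ and $\log(2/\eps)$ bounds both constructions land on the same constants (yours is in fact marginally tighter, giving $\log((1-\beta)/\beta)$). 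All the case checks on $\beta=\min(\eps,1/2)$ versus $\beta=\min(2\eps,1/2)$ are sound, and the cap at $1/2$ together with $\eps\le 1$ absorbs the slack exactly as you claim.
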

\begin{proof}
	For $\eps\in(0,1]$, fix
	\begin{align}
	\lambda_0 &= \max\left\{\frac{2\Tr((\omega_1-\omega_0)_+)}{\eps},1\right\},\nn\\
	\lambda_1 &= \max\left\{\frac{2\Tr((\omega_0-\omega_1)_+)}{\eps},1\right\},
	\end{align}
	and write
	\begin{align}
	    0\le \eps_0 & \coloneqq  \Tr((\omega_1 -\omega_0)_+)/\lambda_0\le\eps/2 , \\
	     0\le  \eps_1 & \coloneqq  \Tr((\omega_0 -\omega_1)_+)/\lambda_1\le\eps/2 .
	\end{align}
	Define now the positive semi-definite operators
	\begin{align}
	\widetilde\omega_0 & \coloneqq \frac{\omega_0 + \frac{1}{\lambda_0}(\omega_1-\omega_0)_+}{1+\eps_0+\eps_1}, \\
	\label{eq:tildewomega1def}
	\widetilde\omega_1 & \coloneqq \frac{\omega_1 + \frac{1}{\lambda_1}(\omega_0-\omega_1)_+}{1+\eps_0+\eps_1},
	\end{align}
	where $A_+$ denotes the positive part of a self-adjoint operator $A.$
	Firstly, note that by definition 
	\begin{align}
	\Tr(\widetilde\omega_0) =\frac{\Tr(\omega_0) + \eps_0}{1+\eps_0+\eps_1}\le 1 ,
	\end{align}
	and analogously $\Tr(\widetilde\omega_1)\le1$, which gives $\widetilde\omega_0,\widetilde\omega_1\in\cS_{\le}(\cH)$.
	Moreover, note that $\widetilde\omega_0 \in\cB_{\eps}(\omega_0)$ and  $\widetilde\omega_1 \in\cB_{\eps}(\omega_1)$, which follows by
	\begin{align}
	\nonumber\frac{1}{2}\left\|\omega_0 - \widetilde\omega_0\right\|_1 &= \frac{1}{2(1+\eps_0+\eps_1)}\left\|(\eps_0+\eps_1)\omega_0-\frac{1}{\lambda_0}(\omega_1-\omega_0)_+\right\|_1 \\&\le\nn \frac{1}{2(1+\eps_0+\eps_1)}\left((\eps_0+\eps_1)\|\omega_0\|_1 + \frac{\|(\omega_1-\omega_0)_+\|_1}{\lambda_0}\right)\\& \le \frac{2\eps_0+\eps_1}{2(1+\eps_0+\eps_1)} \le \eps ,
	\end{align}
	and analogously for $\widetilde\omega_1$.
	
	We now note that
	\begin{align}
	\label{eq:omega0bound1}
	\lambda_0\widetilde\omega_0 \nn&=\frac{1}{1+\eps_0+\eps_1}\Big(\lambda_0\omega_0 + (\omega_1-\omega_0)_+\Big) \ge \frac{1}{1+\eps_0+\eps_1}\Big((\lambda_0-1)\omega_0 + \omega_1\Big) \\&\ge \frac{\omega_1}{1+\eps_0+\eps_1},
	\end{align}
	where we have used $\lambda_0\ge1$. Furthermore, we see
	\begin{align}
	\label{eq:omega0bound2}
	&\frac{\lambda_0}{\lambda_1}\widetilde\omega_0 - \frac{1}{\lambda_1(1+\eps_0+\eps_1)}(\omega_0-\omega_1)_+ \nn\\&\nn= \frac{1}{1+\eps_0+\eps_1}\left(\frac{\lambda_0}{\lambda_1}\omega_0 +\frac{1}{\lambda_1}\Big((\omega_1-\omega_0)_+ - (\omega_0-\omega_1)_+\Big)\right) \\&\nn=\frac{1}{1+\eps_0+\eps_1}\left(\frac{\lambda_0}{\lambda_1}\omega_0 +\frac{1}{\lambda_1}\Big((\omega_1-\omega_0)_+ - (\omega_1-\omega_0)_-\Big)\right)\\&\nn=\frac{1}{1+\eps_0+\eps_1}\left(\frac{\lambda_0}{\lambda_1}\omega_0 + \frac{1}{\lambda_1}(\omega_1-\omega_0)\right) \\&= \frac{1}{1+\eps_0+\eps_1}\left(\frac{\lambda_0-1}{\lambda_1}\omega_0 + \frac{1}{\lambda_1}\omega_1\right) \ge 0.
	\end{align}
	Here, in the third line, we have denoted the negative part of an self-adjoint operator $A$ by $A_-$ and used $(-A)_+ = A_-$ and in the fourth line we used $A = A_+ -A_-$.
	Hence, combining \eqref{eq:omega0bound1} and \eqref{eq:omega0bound2} together with the definition of $\widetilde\omega_1$ \eqref{eq:tildewomega1def}, this in total gives
	\begin{align}
	\label{eq:Dmaxessence0}
	\left(\lambda_0 +\frac{\lambda_0}{\lambda_1}\right) \widetilde\omega_0 \ge\widetilde\omega_1.
	\end{align}
	Analogously we get
	\begin{align}
	\label{eq:Dmaxessence1}
	\left(\lambda_1+\frac{\lambda_1}{\lambda_0}\right)\widetilde\omega_1 \ge \widetilde\omega_0.
	\end{align} 
	Therefore, using $\Tr((\omega_0-\omega_1)_+) \le \Tr(\omega_0) \le 1$ and $\Tr((\omega_1-\omega_0)_+) \le \Tr(\omega_1) \le 1$ and hence $1\le\lambda_0,\lambda_1 \le \max\{2/\eps,1\}=2 /\eps$, \eqref{eq:Dmaxessence0} and	\eqref{eq:Dmaxessence1} give
	 \begin{align}
	\label{eq:Dmaxomega1}D_{\max}\!\left(\widetilde\omega_1\|\widetilde\omega_0\right) \le \log\!\left(\lambda_0 +\frac{\lambda_0}{\lambda_1}\right)\le\log\!\left(\frac{4}{\eps}\right)\end{align} 
	and \begin{align}\label{eq:Dmaxomega2}D_{\max}(\widetilde\omega_0\|\widetilde\omega_1) \le \log\!\left(\lambda_1+\frac{\lambda_1}{\lambda_0}\right)\le\log\!\left(\frac{4}{\eps}\right). \end{align}
	Therefore, we obtain for the Thompson metric of $\widetilde\omega_0$ and $\widetilde\omega_1$
	\begin{align}
	\nn d_T(\widetilde\omega_0,\widetilde\omega_1) 
	&=\max\Big\{D_{\max}(\widetilde\omega_0\|\widetilde\omega_1),D_{\max}(\widetilde\omega_1\|\widetilde\omega_0)\Big\} \\&\le \log\!\left(\frac{4}{\eps}\right).
	\end{align}
	Hence, by definition of the smoothed Thompson metric this shows \eqref{eq:ThompsonBound}. Moreover, noting that in case $\Tr(\omega_0)+\Tr(\omega_1) =1$, we have by construction $$\Tr(\widetilde\omega_0)+\Tr(\widetilde\omega_1)=\frac{\Tr(\omega_0)+\Tr(\omega_1)+\eps_0+\eps_1}{1+\eps_0+\eps_1}=1,$$ which immediately also gives \eqref{eq:ThompsonBoundsum1}.
	
	In order to also conclude \eqref{eq:ThompsonBound1}, we slightly change the above construction in the following way: First note that in the case $\Tr(\omega_0) = \Tr(\omega_1) = 1$ we have $\Tr((\omega_0-\omega_1)_+) = \Tr((\omega_1-\omega_0)_+)$. Let now 
	\begin{align}
	\lambda = \max\left\{\frac{\Tr((\omega_0-\omega_1)_+)}{\eps},1\right\},
	\end{align}
	and write $0\le \eps^\prime = \Tr((\omega_0-\omega_1))_+)/\lambda= \Tr((\omega_1-\omega_0))_+)/\lambda \le \eps$. Define now the positive semi-definite operators
	\begin{align}
	\widetilde\omega_0 = \frac{\omega_0 + \frac{1}{\lambda}(\omega_1-\omega_0)_+}{1+\eps^\prime},
	\end{align}
	and
	\begin{align}
		\widetilde\omega_1 = \frac{\omega_1 + \frac{1}{\lambda}(\omega_0-\omega_1)_+}{1+\eps^\prime}.
	\end{align}
	Note that by $\Tr(\omega_0) = \Tr(\omega_1) = 1$ we get $\Tr(\widetilde\omega_0) = \Tr(\widetilde\omega_1) = 1$. Moreover, similarly to the above, we see $\widetilde\omega_0 \in \cB_\eps(\omega_0)$ and $\widetilde\omega_1 \in \cB_\eps(\omega_1).$ Lastly, by using the same arguments as in \eqref{eq:omega0bound1} and \eqref{eq:omega0bound2} we see that 
	\begin{align}
	\left(\lambda+1\right)\widetilde\omega_0 \ge \widetilde\omega_1
	\end{align}
	and
	\begin{align}
	\left(\lambda+1\right)\widetilde\omega_1 \ge \widetilde\omega_0,
	\end{align}
	which gives by using $\lambda \le1/\eps$
	 \begin{align}
D_{\max}\!\left(\widetilde\omega_1\|\widetilde\omega_0\right) \le \log\!\left(\lambda+1\right)\le\log\!\left(\frac{1+\eps}{\eps}\right) \le \log\!\left(\frac{2}{\eps}\right)\end{align} 
	and \begin{align}D_{\max}(\widetilde\omega_0\|\widetilde\omega_1) \le \log\!\left(c+1\right)\le\log\!\left(\frac{1+\eps}{\eps}\right) \le \log\!\left(\frac{2}{\eps}\right).
	 \end{align}
	 Then we conclude \eqref{eq:ThompsonBound1}.
\end{proof}

\bigskip 
The desired upper bound \eqref{eq_AsympResultDil} in Theorem~\ref{theo-asympDil} can now be deduced from the following lemma.

\begin{lemma}
For all $\eps>0$, we have
	\begin{align}
	\label{eq:AEPupCPTPA}
	\limsup_{n\to\infty}\frac{\xi^{\eps}_c(\rho^{(n)}_{XA})}{n} &\le \xi(\rho_0,\rho_1), \\
	\limsup_{n\to\infty}\frac{\xi^{\star,\eps}_c(\rho^{(n)}_{XA})}{n} &\le \xi(\rho_0,\rho_1). \label{eq:AEPupCDS}
	\end{align}
\end{lemma}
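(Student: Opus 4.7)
The plan is to reduce each inequality to the construction of a single near-optimal smoothed c-q state, then apply the exact one-shot cost formulas of Theorems~\ref{theo-diluteCPTP} and \ref{theo-diluteCDS} together with the smoothed Thompson-metric estimate of Lemma~\ref{lem:ThompsonMetricSymBound}. The first step is Lemma~\ref{lem:ApproxDilEquality}, which rewrites each approximate cost as an infimum of the exact cost over the $D^\prime$-ball $B^\prime_\eps(\rho^{(n)}_{XA})$. Because of the scaling by $p_{\err}$ in the definition of $D^\prime$, this ball has trace-distance radius $\eps\cdot p_{\err}(\rho^{(n)}_{XA})$, which by the quantum Chernoff bound \cite{ACMBMAV07,nussbaum2009chernoff} shrinks at exponential rate $\xi(\rho_0,\rho_1)$. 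The strategy is thus to build a perturbation whose Thompson metric grows at exactly this same rate, so that $\log(1/\eps_n)$ picks up the asymptotic rate while contributing only a vanishing additive constant per copy. I also use repeatedly the elementary estimate $\log\tfrac{1}{2}(2^x+1)\le x$ for $x\ge 0$ to pass from $Q_{\max}$, $Q^\star_{\max}$ to the Thompson metric.

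\textbf{The $\cptp_A$ bound.} I set $\eps_n \coloneqq \eps\cdot p_{\err}(\rho^{(n)}_{XA})$ and apply the normalised-case bound \eqref{eq:ThompsonBound1} of Lemma~\ref{lem:ThompsonMetricSymBound} to the pair $\omega_i = \rho_i^{\otimes n}$, $i\in\{0,1\}$, obtaining states $\widetilde\rho_0^{(n)},\widetilde\rho_1^{(n)}$ with $\tfrac{1}{2}\|\widetilde\rho_i^{(n)}-\rho_i^{\otimes n}\|_1\le\eps_n$ and $d_T(\widetilde\rho_0^{(n)},\widetilde\rho_1^{(n)})\le\log(2/\eps_n)$. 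Define
\begin{equation*}
\widetilde\rho^{(n)}_{XA}\coloneqq p\kb{0}\otimes\widetilde\rho_0^{(n)}+(1-p)\kb{1}\otimes\widetilde\rho_1^{(n)}.
\end{equation*}
The block-diagonal structure of c-q states gives $\tfrac{1}{2}\|\widetilde\rho^{(n)}_{XA}-\rho^{(n)}_{XA}\|_1\le\eps_n=\eps\,p_{\err}(\rho^{(n)}_{XA})$, so $\widetilde\rho^{(n)}_{XA}\in B^\prime_\eps(\rho^{(n)}_{XA})$. Theorem~\ref{theo-diluteCPTP} then yields
\begin{equation*}
\xi_c(\widetilde\rho^{(n)}_{XA})=\xi_{\max}(\widetilde\rho_0^{(n)},\widetilde\rho_1^{(n)})\le d_T(\widetilde\rho_0^{(n)},\widetilde\rho_1^{(n)})\le\log(2/\eps_n),
\end{equation*}
and hence $\xi^{\eps}_c(\rho^{(n)}_{XA})\le\log(2/\eps_n)$ by Lemma~\ref{lem:ApproxDilEquality}. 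Dividing by $n$ and invoking the quantum Chernoff bound for $\lim_n n^{-1}\log(1/p_{\err}(\rho^{(n)}_{XA}))=\xi(\rho_0,\rho_1)$ yields \eqref{eq:AEPupCPTPA}.

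\textbf{The $\cds$ bound.} Here $\xi^\star_c$ involves $d_T(p\rho_0^{\otimes n},(1-p)\rho_1^{\otimes n})$ via the formula in Theorem~\ref{theo-diluteCDS}, so I apply the sum-one variant \eqref{eq:ThompsonBoundsum1} of Lemma~\ref{lem:ThompsonMetricSymBound} to the subnormalised pair $\omega_0=p\rho_0^{\otimes n}$, $\omega_1=(1-p)\rho_1^{\otimes n}$, now with $\eps_n\coloneqq\eps\,p_{\err}(\rho^{(n)}_{XA})/2$. This produces $\widetilde\omega_0^{(n)},\widetilde\omega_1^{(n)}\ge 0$ with $\Tr\widetilde\omega_0^{(n)}+\Tr\widetilde\omega_1^{(n)}=1$, individual trace distances at most $\eps_n$, and $d_T(\widetilde\omega_0^{(n)},\widetilde\omega_1^{(n)})\le\log(4/\eps_n)$. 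The c-q state $\widetilde\rho^{(n)}_{XA}\coloneqq\kb{0}\otimes\widetilde\omega_0^{(n)}+\kb{1}\otimes\widetilde\omega_1^{(n)}$ satisfies $\tfrac{1}{2}\|\widetilde\rho^{(n)}_{XA}-\rho^{(n)}_{XA}\|_1\le 2\eps_n=\eps\,p_{\err}(\rho^{(n)}_{XA})$, so it lies in $B^\prime_\eps(\rho^{(n)}_{XA})$; the factor of $2$ above was chosen precisely to absorb the additivity of the two block trace-distance contributions. Theorem~\ref{theo-diluteCDS} and the inequality $\log\tfrac{1}{2}(2^x+1)\le x$ give $\xi^\star_c(\widetilde\rho^{(n)}_{XA})\le d_T(\widetilde\omega_0^{(n)},\widetilde\omega_1^{(n)})\le\log(4/\eps_n)$, and \eqref{eq:AEPupCDS} follows by dividing by $n$. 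The only genuinely delicate point in the argument --- and what makes the precise form of Lemma~\ref{lem:ThompsonMetricSymBound} essential --- is that the upper bound on the smoothed Thompson metric must be a logarithm of $1/\eps_n$ with no extra dimensional factor, so that the necessary choice $\eps_n\propto p_{\err}(\rho^{(n)}_{XA})$ balances exactly against the Chernoff exponent rather than overshooting it.
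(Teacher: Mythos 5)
Your proof is correct and follows essentially the same route as the paper's: reduce via Lemma~\ref{lem:ApproxDilEquality} to minimizing the exact cost over the $D'$-ball of trace-radius $\eps\,p_{\err}(\rho^{(n)}_{XA})$, perturb the blocks using Lemma~\ref{lem:ThompsonMetricSymBound} with $\widetilde\eps(\eps,n)\propto\eps\,p_{\err}(\rho^{(n)}_{XA})$, and let the quantum Chernoff bound convert $\log(1/\widetilde\eps(\eps,n))/n$ into $\xi(\rho_0,\rho_1)$. The only cosmetic difference is that you write out the $\cptp_A$ case explicitly via the normalized variant \eqref{eq:ThompsonBound1} (correctly keeping the prior equal to $p$), whereas the paper treats only the $\cds$ case via \eqref{eq:ThompsonBoundsum1} and asserts the other follows along the same lines.
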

\begin{proof}
We prove the statement only for $\xi^{\star,\eps}_c$, as the proof for $\xi^{\eps}_c$ follows exactly along the same lines.
Let $A_n$ denote the quantum system consisting of $n$ copies of the quantum system $A$.
Moreover, for a generic c-q state $\widetilde\rho_{XA_n}$, we use the notation
\begin{align}
\widetilde\rho_{XA_n} = \tilde p\kb{0}\otimes\tilde\rho_0 + (1-\tilde p) \kb{1}\otimes\tilde\rho_1.
\end{align}
By using Theorem~\ref{theo-diluteCDS} and Lemma~\ref{lem:ApproxDilEquality}, we see that
\begin{align}
\xi_c^{\eps,\star}(\rho_{XA}^{(n)}) &=\nn \inf_{\substack{\widetilde\rho_{XA_n}\,\text{c-q state},\\ D^\prime(\widetilde\rho_{XA_n}, \rho^{(n)}_{XA}) \le\eps}} \xi^\star_c(\widetilde\rho_{XA_n}) = \inf_{\substack{\widetilde\rho_{XA_n}\,\text{c-q state},\\ D^\prime(\widetilde\rho_{XA_n}, \rho^{(n)}_{XA}) \le\eps}} \xi^\star_{\max}(\widetilde\rho_{XA_n}) \\
&=   \inf_{\substack{\widetilde\rho_{XA_n}\,\text{c-q state},\\ D^\prime(\widetilde\rho_{XA_n}, \rho^{(n)}_{XA}) \le\eps}} \log\!\left(\,2^{d_T(\tilde p\tilde\rho_0,(1-\tilde p)\tilde\rho_1)}+1\,\right)-1.
\end{align}

Define now 
\begin{align}
\widetilde\eps(\eps,n) \coloneqq \frac{\eps p_{\operatorname{err}}(\rho_{XA}^{(n)})}{2}. 
\end{align}
Note that if a given $\widetilde\rho_{XA_n}\equiv(\tilde p,\tilde\rho_0,\tilde\rho_1)$ is such that $\tilde p\tilde\rho_0\in \cB_{\widetilde\eps(\eps,n)}(p\rho^{\otimes n}_0)$ and \\ $(1-\tilde p)\tilde\rho_1\in \cB_{\widetilde\eps(\eps,n)}((1-p)\rho^{\otimes n}_1)$, this implies that $\frac{1}{2}\|\widetilde\rho_{XA_n}-\rho^{(n)}_{XA}\|_1\le 2 \widetilde\eps(\eps,n)$ and hence 
\begin{align}
D^\prime(\widetilde\rho_{XA_n}, \rho^{(n)}_{XA}) \le\eps.
\end{align}
Using that, by writing $\widetilde\omega_0 = \tilde p\tilde\rho_0$ and $\widetilde\omega_1 = (1-\tilde p)\tilde\rho_1$, we see that
\begin{align}
\xi_c^{\eps,\star}(\rho_{XA}^{(n)})+1 &= \inf_{\substack{\widetilde\rho_{XA_n}\,\text{c-q state},\nn\\ D^\prime(\widetilde\rho_{XA_n}, \rho^{(n)}_{XA}) \le\eps}} \log\!\left(\,2^{d_T(\tilde p\tilde\rho_0,(1-\tilde p)\tilde\rho_1)}+1\,\right) \\&\le \nn \inf_{\substack{\widetilde\omega_0 \in \cB_{\widetilde\eps(\eps,n)}(p\rho^{\otimes n}_0),\,\widetilde\omega_1 \in \cB_{\widetilde\eps(\eps,n)}((1-p)\rho^{\otimes n}_1)\\\Tr(\widetilde\omega_0)+\Tr(\widetilde\omega_1)=1}}\log\!\left(\,2^{d_T(\widetilde\omega_0,\widetilde\omega_1)}+1\,\right) \\&
=\log\!\left(\inf_{\substack{\widetilde\omega_0 \in \cB_{\widetilde\eps(\eps,n)}(p\rho^{\otimes n}_0),\,\widetilde\omega_1 \in \cB_{\widetilde\eps(\eps,n)}((1-p)\rho^{\otimes n}_1)\\\Tr(\widetilde\omega_0)+\Tr(\widetilde\omega_1)=1}}\,2^{d_T(\widetilde\omega_0,\widetilde\omega_1)}+1\,\right) \nn\\&\le\log\!\left(\frac{4}{\widetilde\eps(\eps,n)} +1\right)\nn\\ &=\log\!\left(\frac{1}{\widetilde\eps(\eps,n)}\right) + \log\!\left(4+\widetilde\eps(\eps,n)\right) \nn\\&\le \log\!\left(\frac{1}{\widetilde\eps(\eps,n)}\right) + \log\!\left(4+\eps\right),
\end{align}
where we have used equation \eqref{eq:ThompsonBoundsum1} in Lemma~\ref{lem:ThompsonMetricSymBound} for the fourth line and $\widetilde\eps(\eps,n)\le\eps$ for the last inequality.
Hence, we finally see by definition of $\widetilde\eps(\eps,n)$
\begin{align}
\label{eq:finally}
\frac{\xi_c^{\eps,\star}(\rho_{XA}^{(n)})}{n} &\le \nn\frac{\log\!\left(\frac{1}{\widetilde\eps(\eps,n)}\right)}{n} + \frac{\log\!\left(4+\eps\right)-1}{n}  \\
&=  \frac{-\log(p_{\operatorname{err}}(\rho_{XA}^{(n)}))}{n} -\frac{ \log\!\left(\eps/2\right)}{n} +\frac{\log\!\left(4+\eps\right)-1}{n}.
\end{align}
Noting once again that $$\lim_{n\to\infty}
\frac{-\log(p_{\operatorname{err}}(\rho_{XA}^{(n)}))}{n} = \xi(\rho_0,\rho_1)$$ and  the other terms on the right-hand side of \eqref{eq:finally} are $\mathcal{O}(1)$ in $n$ finishes the proof. 
\end{proof}

\section{Examples}

\label{sec:examples}

In this section, we detail a few examples of the RTSD\ to illustrate some of
the key theoretical concepts developed in the previous sections. We begin with
a first example. Let $\rho_{0}$ and $\rho_{1}$ be the following states:%
\begin{align}
\rho_{0}  & \coloneqq \mathcal{A}^{\gamma,N}(|0\rangle\!\langle0|),\label{eq:example-1-state-1}\\
\rho_{1}  & \coloneqq \mathcal{A}^{\gamma,N}(|1\rangle\!\langle1|),
\label{eq:example-1-state-2}
\end{align}
where $\mathcal{A}^{\gamma,N}$ is the generalized amplitude damping channel,
defined as%
\begin{equation}
\mathcal{A}^{\gamma,N}(\omega)\coloneqq \sum_{i=0}^{3}A_{i}\omega A_{i}^{\dag},
\end{equation}
with $\gamma,N\in\left[  0,1\right]  $ and%
\begin{align}
A_{0}  & \coloneqq \sqrt{1-N}\left(  |0\rangle\!\langle0|+\sqrt{1-\gamma}|1\rangle\!
\langle1|\right)  ,\\
A_{1}  & \coloneqq \sqrt{\gamma\left(  1-N\right)  }|0\rangle\!\langle1|,\\
A_{2}  & \coloneqq \sqrt{N}\left(  \sqrt{1-\gamma}|0\rangle\!\langle0|+|1\rangle\!
\langle1|\right)  ,\\
A_{3}  & \coloneqq \sqrt{\gamma N}|1\rangle\!\langle0|.
\end{align}
The parameter $\gamma\in[0,1]$ is a damping parameter and $N \in [0,1]$ is a thermal noise parameter. The generalized amplitude
damping channel models the dynamics of a two-level system in contact with a
thermal bath at non-zero temperature \cite{NC10} and can be used as a
phenomenological model for relaxation noise in superconducting qubits \cite{CB08}.
See \cite{KSW20} for an in-depth study of the information-theoretic properties of
this channel and for a discussion of how this channel can be interpreted as a
qubit thermal attenuator channel. We choose the prior probabilities for the
states $\rho_{0}$ and $\rho_{1}$ to be $q$ and $1-q$ respectively, with $q=1/3$, so that the c-q state
describing the elementary quantum source is%
\begin{equation}
\rho_{XA}\coloneqq q|0\rangle\!\langle0|\otimes\rho_{0}+\left(  1-q\right)
|1\rangle\!\langle1|\otimes\rho_{1}.
\end{equation}

\begin{figure}
[ptb]
\begin{center}
\includegraphics[
width=4in
]%
{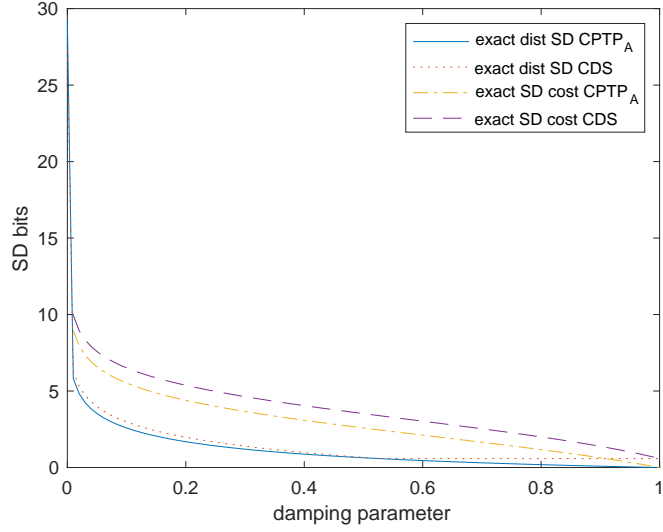}%
\caption{Various operational measures of symmetric distinguishability for the states in \eqref{eq:example-1-state-1}--\eqref{eq:example-1-state-2} with prior $q=1/3$, as a function of the damping parameter $\gamma$. The noise parameter $N=0.1$.}%
\label{fig:example-1}%
\end{center}
\end{figure}
In Figure~\ref{fig:example-1}, we set the thermal noise parameter $N=0.1$ and
plot the exact one-shot distillable-SD of $\rho_{XA}$ under CPTP$_{A}$ maps,
and under CDS maps, and the exact
one-shot SD-cost of $\rho_{XA}$ under CPTP$_{A}$ maps, and under CDS maps. As expected, when the damping parameter
$\gamma$ increases, each measure of SD decreases. The exact
distillable-SD\ and SD-cost under CDS maps do not decrease to zero due to
the non-uniform prior ($q=1/3$). However, they do decrease to zero
under CPTP$_{A}$ maps because
the prior $q$ does not play a role in this case. Additionally, the
SD-cost under CPTP$_{A}$ maps is strictly larger than the
distillable-SD\ under CPTP$_{A}$ maps for all $\gamma\in(0,1)$, demonstrating
that the RTSD\ is not reversible in this one-shot scenario. The same holds for
distillable-SD\ and SD-cost under CDS maps.

The SD-cost under CPTP$_{A}$ maps is smaller than that under CDS maps because
we use different golden units in these two cases. In this context,
recall Definition~\ref{def:exact-SD-cost}. It is not possible for CPTP$_{A}$ maps to change the
prior $q$. So we are forced to use the golden unit with prior $q$, i.e.,
$\gamma^{(M,q)}$, which in this case we chose to be $q=1/3$. CDS maps,
however, can change the prior, and as mentioned in Definition~\ref{def:exact-SD-cost}, we pick the
prior of the golden unit to be the canonical choice of $1/2$. Also, note that
$\gamma^{(M,1/3)}$ has more SD than $\gamma^{(M,1/2)}$; i.e., it dominates
$\gamma^{(M,1/2)}$ in the preorder of SD and hence can be transformed into the prior
$1/2$ golden unit via CDS (see Lemma~\ref{lem:GoldenUnitPreorder}). So the SD costs under CPTP$_{A}$ and CDS maps are
different, as we are paying with a less valuable currency in the case of CDS maps.

\begin{figure}
[ptb]
\begin{center}
\includegraphics[
width=4in
]%
{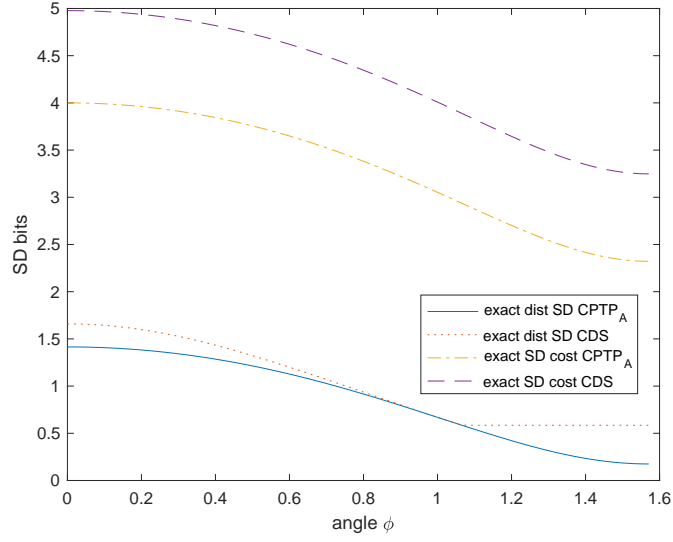}%
\caption{Various operational measures of symmetric distinguishability for the states in \eqref{eq:example-2-state-1}--\eqref{eq:example-2-state-2} with prior $q=1/3$, as a function of the angle parameter $\phi$. The damping parameter $\gamma=1/4$ and the noise parameter $N=0.1$.}%
\label{fig:example-2}%
\end{center}
\end{figure}

We next consider the following example:%
\begin{align}
\rho_{0}  & \coloneqq \mathcal{A}^{\gamma,N}(|0\rangle\!\langle0|), \label{eq:example-2-state-1}\\
\rho_{1}  & \coloneqq e^{i\phi\sigma^{(1)}}\mathcal{A}^{\gamma,N}(|1\rangle\!
\langle1|)e^{-i\phi\sigma^{(1)}},
\label{eq:example-2-state-2}
\end{align}
where the angle $\phi\in\left[  0,\pi/2\right]$, 
We choose the prior $q$ to be the same (i.e., $q=1/3$). All of the quantities
mentioned above are plotted in Figure~\ref{fig:example-2}. As the angle $\phi$
increases from zero to $\pi/2$, the states $\mathcal{A}^{\gamma,N}(|0\rangle\!\langle0|)$ and
$e^{i\phi\sigma^{(1)}}\mathcal{A}^{\gamma,N}(|1\rangle\!
\langle1|)e^{-i\phi\sigma^{(1)}}$ become less
distinguishable and become the same state when $\phi=\pi/2$. Thus, we expect
for the various measures of SD\ to decrease as $\phi$ increases from zero to
$\pi/2$. Similar statements as given above apply regarding the difference
between the SD\ quantities under CPTP$_{A}$ and CDS maps.%

\begin{figure}
[ptb]
\begin{center}
\includegraphics[
width=4in
]%
{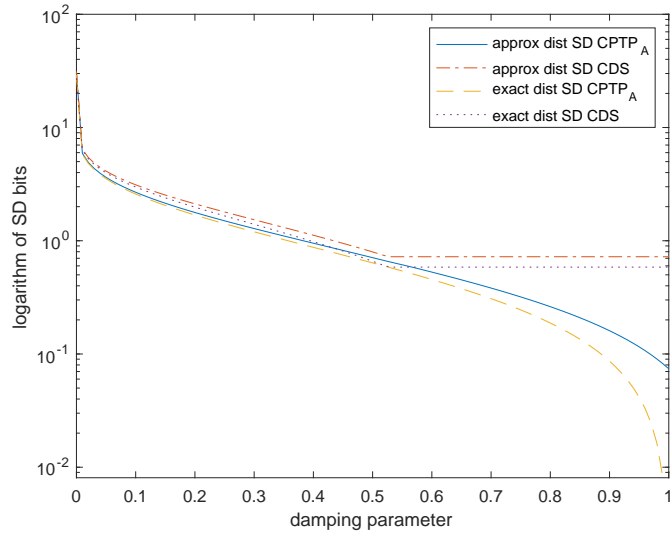}%
\caption{Logarithm of various operational measures of symmetric distinguishability for the states in \eqref{eq:example-1-state-1}--\eqref{eq:example-1-state-2} with prior $q=1/3$, as a function of the damping parameter $\gamma$. The noise parameter $N=0.1$, and the approximation error $\varepsilon = 0.1$. We have plotted the logarithm of the number of SD bits in order to distinguish the curves more clearly.}%
\label{fig:example-3}%
\end{center}
\end{figure}

As another example, we plot the logarithm of the one-shot approximate distillable-SD of the states in \eqref{eq:example-1-state-1}--\eqref{eq:example-1-state-2} under both CPTP$_A$ and CDS maps, as a function of the damping parameter~$\gamma$. We set the approximate error $\varepsilon = 0.1$, the prior probability $q=1/3$, and the noise parameter $N=0.1$. For reference, we also plot the logarithm of the exact distillable-SD under both CPTP$_A$ and CDS maps. See Figure~\ref{fig:example-3}. We have plotted the logarithm of the number of SD bits in order to distinguish the curves more clearly. The difference in the behavior of the curves has to do with the fact that CDS maps can change the prior probability while CPTP$_A$ maps cannot. Here, the distillable-SD under $\cds$ maps (both in the approximate and exact cases) flattens out for values of the damping parameter greater than $\gamma\approx 0.5$ as in this case symmetric distinguishability of the considered box is exclusively due to the non-uniform prior ($q=1/3$) and does not decrease further even if the quantum states themselves become less distinguishable.

\begin{figure}
[ptb]
\begin{center}
\includegraphics[
width=4in
]%
{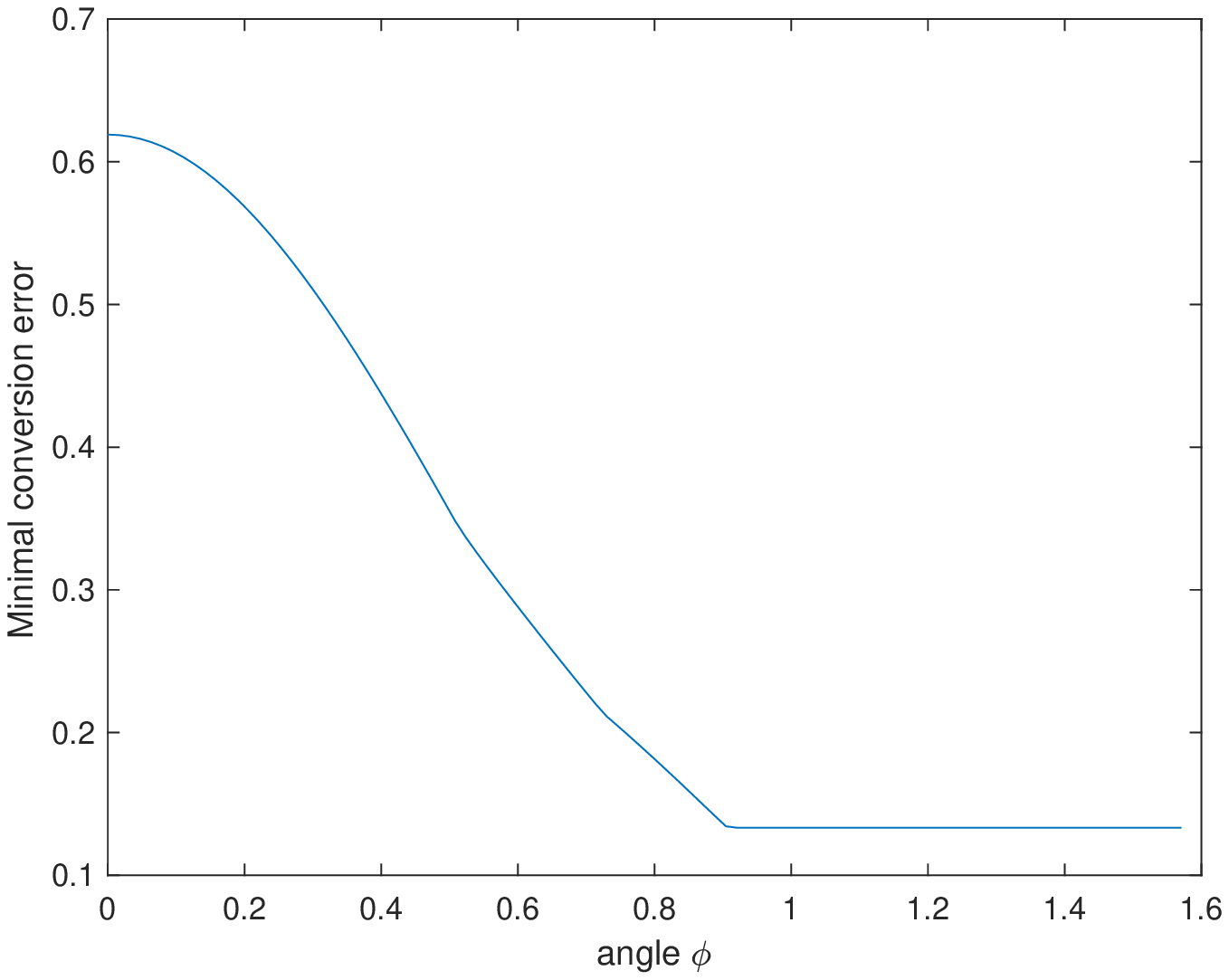}%
\caption{Minimum conversion error when transforming $(1/3, \mathcal{A}^{\gamma_1,N_1}(|0\rangle\!\langle0|),\mathcal{A}^{\gamma_1,N_1}(|1\rangle\!\langle1|))$ to $(1/4, \mathcal{A}^{\gamma_2,N_2}(|0\rangle\!\langle0|),e^{i\phi\sigma^{(1)}}\mathcal{A}^{\gamma_2,N_2}(|1\rangle\!\langle1|)e^{-i\phi\sigma^{(1)}})$ as a function of the angle $\phi \in [0,\pi/2]$, with $\gamma_1 = 0.5$, $N_1 = 0.3$, $\gamma_2 = 0.25$, and $N_2 = 0.1$.}%
\label{fig:example-4}%
\end{center}
\end{figure}

As a final example, we plot the minimum conversion error in \eqref{eq:opt-conv-error} when transforming the box $$(1/3, \mathcal{A}^{\gamma_1,N_1}(|0\rangle\!\langle0|),\mathcal{A}^{\gamma_1,N_1}(|1\rangle\!\langle1|))$$ to the box  $$(1/4, \mathcal{A}^{\gamma_2,N_2}(|0\rangle\!\langle0|),e^{i\phi\sigma^{(1)}}\mathcal{A}^{\gamma_2,N_2}(|1\rangle\!\langle1|)e^{-i\phi\sigma^{(1)}})$$ as a function of the angle $\phi \in [0,\pi/2]$, with $\gamma_1 = 0.5$, $N_1 = 0.3$, $\gamma_2 = 0.25$, and $N_2 = 0.1$. To do so, we make use of the semi-definite program from Proposition~\ref{prop:SDP-conv-err}. The minimum conversion error is plotted in Figure~\ref{fig:example-4} as a function of the angle $\phi$. Intuitively, for small values of the angle~$\phi$, it should be more difficult to perform the conversion because the states in the first box are less distinguishable than those in the second, and so we expect the error to be higher. However, as the angle $\phi$ increases, the states in the second box become less distinguishable and so the transformation becomes easier. The difference in the prior probabilities of the boxes is a fundamental limitation that cannot be overcome, even as $\phi$ becomes closer to $\pi/2$, so that the minimum conversion error plateaus for angle values greater than $\approx 0.9$.

All Matlab programs that generate the above plots (along with the semi-definite programs) are available with the arXiv ancillary files of this paper.

\section{Asymptotic transformation task}

\label{sec:asymp}

Let $\rho_{XA}$ and $\sigma_{XB}$ be c-q states explicitly given by
\begin{align}
\rho_{XA} &= p\kb{0}\otimes\rho_0 + (1-p)\kb{1}\otimes\rho_1 \label{r1} , \\
\sigma_{XB} &= q\kb{0}\otimes\sigma_0 + (1-q)\kb{1}\otimes\sigma_1, \label{o1}
\end{align}
with $\rho_0,\rho_1$ states of a quantum system $A$, and $\sigma_0,\sigma_1$ states of a quantum system~$B$. Moreover, we assume that $p,q\in[0,1]$. We use the short-hand notation $\vec{p}\coloneqq  (p, 1-p)$ and $\vec{q} \coloneqq  (q, 1-q)$ for the prior (distribution) of $\rho_{XA}$ and $\sigma_{XB}$ respectively and write $\vec{p} \succ \vec{q}$ if $\vec{p}$ majorizes~$\vec{q}$. Let $\rho_{XA}^{(n)}$ and $\sigma_{XB}^{(m)}$ be as follows:
\begin{align}
\rho_{XA}^{(n)} &= p\kb{0}\otimes\rho_0^{\otimes n} + (1-p)\kb{1}\otimes\rho_1^{\otimes n}  , \\
\sigma_{XB}^{(m)} &= q\kb{0}\otimes\sigma_0^{\otimes m} + (1-q)\kb{1}\otimes\sigma_1^{\otimes m}. 
\end{align}

\begin{definition}
	Let $\rho_{XA},\sigma_{XB}$ be c-q states, and let $\FO$ denote the set of free operations.
	For $n,m\in\N$ and $\eps>0$, we say that there exists a $(n,m,\eps)$ {\em{$\FO$-transformation protocol}} for the states $\rho_{XA}$ and $\sigma_{XB}$ if
	\begin{align}
	d^\prime_{\operatorname{FO}}(\rho^{(n)}_{XA}\mapsto\sigma^{(m)}_{XB})\le\eps.
	\end{align}
	 That is, there exists an $\cA\in \FO$ such that
	\begin{align}
	D^\prime\!\left(\cA(\rho_{XA}^{(n)}),\sigma^{(m)}_{XB}\right) \le \eps.
	\end{align}
    We denote such a transformation protocol in short by the notation $\rho_{XA} \mapsto \sigma_{XB}$.
\end{definition}

\begin{definition}
	\label{def:AchRate}
	The rate $R\ge 0$ is an achievable rate for the transformation $\rho_{XA} \mapsto \sigma_{XB}$ under free operations $\FO$, if for all $\eps,\delta>0$ and $n\in\N$ large enough there exists an $(n,\lfloor n(R-\delta)\rfloor,\eps)$ $\FO$-transformation protocol. The optimal rate is given by the supremum over all achievable rates, and is denoted by
	\begin{align}
	R_{\operatorname{FO}}(\rho_{XA} \mapsto \sigma_{XB}) = \sup\left\{R\ge 0\Big| R\text{ achievable rate under $\FO$}\right\}.
	\end{align}
	In particular, in the case of free operations being ${\rm{CPTP}}_A$ we write
	\begin{align}
	\RA(\rho_{XA} \mapsto \sigma_{XB}) \equiv \RA_{\operatorname{CPTP}_A}(\rho_{XA} \mapsto \sigma_{XB}),
	\end{align}
	and in the case of free operations being ${\rm{CDS}}$
	we write
	\begin{align}
	\RC (\rho_{XA} \mapsto \sigma_{XB}) \equiv \RA_{\operatorname{CDS}}(\rho_{XA} \mapsto \sigma_{XB}).
	\end{align}
\end{definition}
Note that by the inclusion ${\rm{CPTP}}_A \subset {\rm{CDS}}$ we immediately get the inequality
\begin{align}
\RA(\rho_{XA} \mapsto \sigma_{XB}) \le \RC(\rho_{XA} \mapsto \sigma_{XB}).
\end{align}
\begin{definition}[Strong converse rate]
	\label{def:StrongConvRate}
	The rate $R\ge0$ is a strong converse rate for the transformation $\rho_{XA} \mapsto \sigma_{XB}$ under free operations $\FO$, if for all $\eps,\delta>0$ and $n\in\N$ large enough there does not exist an $(n, \lceil n(R+\delta)\rceil,\eps)$ $\FO$-transformation protocol. The optimal strong converse rate is given by the infimum over all strong converse rates, and is denoted by
	\begin{align}
	\widetilde R_{\operatorname{FO}}(\rho_{XA} \mapsto \sigma_{XB})=\inf\left\{R\ge 0\,\Big|\, R\text{ strong converse rate under $FO$}\right\}. 
	\end{align}	
	In particular, in the case of free operations being ${\rm{CPTP}}_A$ we write 
	\begin{align}
	\widetilde \RA(\rho_{XA} \mapsto \sigma_{XB})\equiv\widetilde R_{\operatorname{CPTP}_A}(\rho_{XA} \mapsto \sigma_{XB}),
	\end{align}
	and in the case of free operations being ${\rm{CDS}}$
	we write
	\begin{align}
	\widetilde \RC(\rho_{XA} \mapsto \sigma_{XB})\equiv\widetilde R_{\operatorname{CDS}}(\rho_{XA} \mapsto \sigma_{XB}).
	\end{align}
\end{definition}

By definition, we have
\begin{align}
 R_{\operatorname{FO}}(\rho_{XA} \mapsto \sigma_{XB}) \le\widetilde R_{\operatorname{FO}}(\rho_{XA} \mapsto \sigma_{XB}) ,
\end{align}
and, moreover, by again using the fact that ${\rm{CPTP}}_A\subset {\rm{CDS}}$, we get the inequality
\begin{align}
\widetilde \RA (\rho_{XA} \mapsto \sigma_{XB}) \le \widetilde \RC (\rho_{XA} \mapsto \sigma_{XB}).
\end{align}
The following theorem gives expressions for the optimal achievable and strong converse rates for the transformation $\rho_{XA} \mapsto \sigma_{XB}$ under both ${\rm{CDS}}$ and ${\rm{CPTP}}_A$. Note that in the following
we interpret $\frac{\infty}{\infty}$ as $\infty$.
\begin{theorem}
	\label{thm:transformationrate}
	Let $\rho_{XA}$ and $\sigma_{XB}$ be the c-q states defined through \eqref{r1} and \eqref{o1} with $p,q\in(0,1)$. 
	\smallskip
	
	\noindent
	For free operations being $\cds$ we have: for $ \xi(\sigma_0,\sigma_1)>0$ 
	\begin{align}
	\RC (\rho_{XA} \mapsto \sigma_{XB}) = \widetilde \RC(\rho_{XA} \mapsto \sigma_{XB}) = \frac{\xi(\rho_0,\rho_1)}{\xi(\sigma_0,\sigma_1)}.
	\end{align}
	For $\xi(\sigma_0,\sigma_1)=0$ and $\xi(\rho_0,\rho_1)>0$ we have 
	\begin{align}
	  \RC(\rho_{XA} \mapsto \sigma_{XB}) = \widetilde \RC(\rho_{XA} \mapsto \sigma_{XB}) = \infty.
	\end{align}
	For $ \xi(\sigma_0,\sigma_1)=\xi(\rho_0,\rho_1)=0 $, we have 
	\begin{align}
	    \RC(\rho_{XA} \mapsto \sigma_{XB}) &= \widetilde \RC(\rho_{XA} \mapsto \sigma_{XB}) = \infty, \quad\quad\quad\,\,\,\,\, {\hbox{ if}} \,\, \vec{p} \succ \vec{q},\,\, \\
	    {\hbox{whereas}} \quad \,\, \RC(\rho_{XA} \mapsto \sigma_{XB}) &=0 \,\,\, {\hbox{and}} \,\,\, \widetilde \RC(\rho_{XA} \mapsto \sigma_{XB}) =\infty, \quad{\hbox{else.}}
	\end{align}
	
	\noindent For free operations being $\cptp_A$ we have: in the case of $\rho_{XA}$ and $\sigma_{XB}$ having equal priors
	\begin{align}
	\label{eq:cooliCPTPA}
	\RA(\rho_{XA} \mapsto \sigma_{XB}) = \widetilde \RA(\rho_{XA} \mapsto \sigma_{XB}) = \frac{\xi(\rho_0,\rho_1)}{\xi(\sigma_0,\sigma_1)}.
	\end{align}
	Here, we interpreted $\frac{0}{0}$ as $\infty$.

	In the case of the priors being different we get 
	\begin{align}
	\label{eq:CPTPAspecial1}\RA(\rho_{XA} \mapsto \sigma_{XB}) &= \widetilde \RA(\rho_{XA} \mapsto \sigma_{XB}) = 0, \quad\quad\quad\text{if } \xi(\sigma_0,\sigma_1)>0, \\ \label{eq:CPTPAspecial2}	\RA(\rho_{XA} \mapsto \sigma_{XB}) &=0,\,\,\, \widetilde \RA(\rho_{XA} \mapsto \sigma_{XB}) = \infty,\quad \text{if } \xi(\sigma_0,\sigma_1)=0.
	\end{align}
\end{theorem}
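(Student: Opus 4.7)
The strategy is to reduce the transformation problem in the main regime (where $\xi(\sigma_0,\sigma_1)>0$) to a distill-then-dilute protocol, combining the asymptotic achievability results of Theorems~\ref{theo-asympAprDistil} and~\ref{theo-asympDil} for the direct part, and to use the CDS monotonicity of $p_{\err}$ (Lemma~\ref{lem:MonoErrProbCDS}) together with the perturbation bound of Lemma~\ref{lem:PerrDivBound} for the converse. The remaining degenerate cases (vanishing $\xi(\sigma_0,\sigma_1)$) and CPTP$_A$ prior-mismatch cases will follow from direct structural arguments.

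For the achievability under $\cds$ in the main regime with $\xi(\rho_0,\rho_1)>0$, fix $R<\xi(\rho_0,\rho_1)/\xi(\sigma_0,\sigma_1)$ and choose an intermediate $R'$ with $R\,\xi(\sigma_0,\sigma_1)<R'<\xi(\rho_0,\rho_1)$. For any $\eps_1,\eps_2>0$ and $n$ large, Theorem~\ref{theo-asympAprDistil} supplies $\cN_1\in\cds$ distilling $\rho_{XA}^{(n)}$ to the golden unit $\gamma^{(2^{nR'})}$ with $D'$-error at most $\eps_1$, and Theorem~\ref{theo-asympDil} supplies $\cN_2\in\cds$ diluting $\gamma^{(2^{nR'})}$ to $\sigma_{XB}^{(\lfloor nR\rfloor)}$ with $D'$-error at most $\eps_2$, feasible because $R'>R\,\xi(\sigma_0,\sigma_1)$. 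Translating these $D'$-bounds back to trace-distance estimates via the definition of $D'$, applying the triangle inequality for $\|\cdot\|_1$ together with data processing under $\cN_2$, and renormalising by $p_{\err}(\sigma_{XB}^{(\lfloor nR\rfloor)})$ yields
\begin{align*}
D'\bigl(\cN_2(\cN_1(\rho_{XA}^{(n)})),\sigma_{XB}^{(\lfloor nR\rfloor)}\bigr)\le \eps_2+\eps_1\cdot\frac{p_{\err}(\gamma^{(2^{nR'})})}{p_{\err}(\sigma_{XB}^{(\lfloor nR\rfloor)})},
\end{align*}
where the last ratio decays as $2^{-n(R'-R\,\xi(\sigma_0,\sigma_1))}$ by the quantum Chernoff theorem, so $R$ is achievable; the case $\xi(\rho_0,\rho_1)=0$ gives ratio zero and is trivially achievable. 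The same argument works under $\cptp_A$ whenever $p=q$ via the $\cptp_A$ versions of Theorems~\ref{theo-asympAprDistil}--\ref{theo-asympDil}.

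For the strong converse, any $\cA\in\cds$ with $D'(\cA(\rho_{XA}^{(n)}),\sigma_{XB}^{(m)})\le\eps$ combines via Lemmas~\ref{lem:MonoErrProbCDS} and~\ref{lem:PerrDivBound} to give $p_{\err}(\rho_{XA}^{(n)})\le(\eps+1)\,p_{\err}(\sigma_{XB}^{(m)})$, and taking $-\log(\cdot)/n$ with the quantum Chernoff theorem yields $m/n\le\xi(\rho_0,\rho_1)/\xi(\sigma_0,\sigma_1)+o(1)$ for every fixed $\eps>0$; the $\cptp_A$ strong converse with $p=q$ follows from $\cptp_A\subset\cds$. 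The degenerate cases are dispatched as follows. When $\xi(\sigma_0,\sigma_1)=0$ (so $\sigma_0=\sigma_1$) and $\xi(\rho_0,\rho_1)>0$, distillation produces arbitrarily large SD from which the target (prior $\otimes$ fixed state) can be prepared exactly at any rate via preparation channels and Lemma~\ref{lem:InfToInf}, giving $\RC=\widetilde\RC=\infty$; when additionally $\xi(\rho_0,\rho_1)=0$, both $p_{\err}$'s are constant in $n$ and CDS-transformability reduces to classical doubly-stochastic compatibility, i.e.\ $\vec p\succ\vec q$, yielding $\RC=\widetilde\RC=\infty$ in that case and $\RC=0,\widetilde\RC=\infty$ otherwise, since trace distance is bounded below by $\min(p,1-p)-\min(q,1-q)>0$ and large-$\eps$ protocols exist at any rate while small-$\eps$ protocols do not. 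Finally, under $\cptp_A$ with $p\ne q$ the prior is preserved, forcing $\tfrac12\|\cA(\rho_{XA}^{(n)})-\sigma_{XB}^{(m)}\|_1\ge|p-q|$; for $\xi(\sigma_0,\sigma_1)>0$ the exponential decay of $p_{\err}(\sigma_{XB}^{(m)})$ precludes any positive rate for small $\eps$, giving $\RA=\widetilde\RA=0$, while for $\xi(\sigma_0,\sigma_1)=0$ the resulting $D'\ge|p-q|/\min(q,1-q)$ is a fixed positive constant that rules out small-$\eps$ protocols but accommodates any rate at large $\eps$, giving $\RA=0,\widetilde\RA=\infty$. The main technical hurdle I anticipate is the careful bookkeeping of the scaling factor in $D'$ across the composition in the achievability argument, since $D'$ lacks a clean triangle inequality and one must descend to trace distance before the final renormalisation.
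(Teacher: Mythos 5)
Your proposal is correct and follows the same overall architecture as the paper: a distill-then-dilute protocol for achievability, the chain ``monotonicity of $p_{\err}$ under CDS $+$ Lemma~\ref{lem:PerrDivBound} $+$ quantum Chernoff theorem'' for the strong converse (which is verbatim the paper's argument), and direct measure-and-prepare constructions for the degenerate cases. The one genuine difference is in the achievability composition: you concatenate \emph{approximate} distillation with approximate dilution and must therefore control how the two $D'$-errors combine, descending to trace distance, using the triangle inequality and data processing, and exploiting that $p_{\err}(\gamma^{(2^{nR'})})/p_{\err}(\sigma_{XB}^{(\lfloor nR\rfloor)})\to 0$ exponentially when $R'>R\,\xi(\sigma_0,\sigma_1)$. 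This bookkeeping is correct as you wrote it, but the paper sidesteps it entirely by using \emph{exact} distillation (Theorem~\ref{theo-asymp}) to reach $\gamma^{(M_n)}$ with zero error, so that the only error incurred is the $\eps$ from the dilution step; it then chooses $m_n$ adaptively via the sandwich $\xi_c^{\star,\eps}(\sigma_{XB}^{(m_n)})\le\log M_n\le\xi_c^{\star,\eps}(\sigma_{XB}^{(m_n+1)})$ rather than fixing $M=2^{nR'}$ in advance. Your route buys nothing extra here and costs the extra estimate, but both are valid; note that to target the specific golden unit $\gamma^{(2^{nR'})}$ you also implicitly need the (easy) fact that $\gamma^{(M)}$ can be exactly downconverted to $\gamma^{(M')}$ for $M'\le M$ by mixing the identity with $\mathrm{id}_X\otimes\cF_Q$, combined with the DPI for $D'$ (Lemma~\ref{lem:DPI-D-prime}).

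One small imprecision in the degenerate case $\xi(\sigma_0,\sigma_1)=0$, $\xi(\rho_0,\rho_1)>0$: Lemma~\ref{lem:InfToInf} applies only to genuinely infinite-resource states, whereas distillation only ever yields a finite golden unit $\gamma^{(M_n)}$ with uniform prior, so it cannot be cited as stated. The gap is easily closed: when $\sigma_0=\sigma_1$, Theorem~\ref{theo-diluteCDS} gives $\xi_c^{\star}(\sigma_{XB}^{(m)})=\log\bigl(\tfrac12\max\{1/q,1/(1-q)\}\bigr)$, a constant independent of $m$, so once $\log M_n$ exceeds this constant the exact dilution $\gamma^{(M_n)}\mapsto\sigma_{XB}^{(m)}$ is possible for every $m$; alternatively, the paper's own construction bypasses the golden unit and uses the optimal Helstrom POVM to prepare a state with prior $(p_{\err}(\rho_{XA}^{(n)}),1-p_{\err}(\rho_{XA}^{(n)}))$, which majorises $\vec q$ for large $n$. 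The rest of your degenerate-case analysis (majorisation dichotomy when both Chernoff divergences vanish, and the CPTP$_A$ prior-mismatch cases via the lower bound $\tfrac12\|\cA(\rho_{XA}^{(n)})-\sigma_{XB}^{(m)}\|_1\ge|p-q|$) matches the paper's reasoning.
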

\begin{remark}
For simplicity we excluded in Theorem~\ref{thm:transformationrate} the case of singular priors, i.e. $p\in\{0,1\}$ or $q\in\{0,1\}$. For completeness we now state the corresponding results on optimal and strong converse rates in these cases:

For free operations being CDS we have
\begin{align}
	\RC (\rho_{XA} \mapsto \sigma_{XB}) = \widetilde \RC(\rho_{XA} \mapsto \sigma_{XB}) = \infty,\quad\text{if }p\in\{0,1\} ,
\end{align}
and for $p\in(0,1)$ and $q\in\{0,1\}$ we have 
	\begin{align}
	    \RC(\rho_{XA} \mapsto \sigma_{XB}) &= \widetilde \RC(\rho_{XA} \mapsto \sigma_{XB}) = \infty, \quad {\hbox{if}} \,\, \xi(\rho_0,\rho_1)=\infty,\,\, \\
	    {\hbox{whereas}} \quad \,\, \RC(\rho_{XA} \mapsto \sigma_{XB}) &= \widetilde \RC(\rho_{XA} \mapsto \sigma_{XB}) = 0, \quad\,\,\, {\hbox{if}} \,\, \xi(\rho_0,\rho_1)<\infty.
	\end{align}
For free operations being $\cptp_A$ and $q\in\{0,1\}$ we have
\begin{align}
	R (\rho_{XA} \mapsto \sigma_{XB}) &= \widetilde R(\rho_{XA} \mapsto \sigma_{XB}) = \infty,\quad\text{if }p=q,\\
	R (\rho_{XA} \mapsto \sigma_{XB}) &= \widetilde R(\rho_{XA} \mapsto \sigma_{XB}) = 0,\quad\,\,\,\text{if }p\neq q.
\end{align}
And for $p\in\{0,1\}$ and $q\in(0,1)$ we have similarly to \eqref{eq:CPTPAspecial1} and \eqref{eq:CPTPAspecial2}
\begin{align}
	R (\rho_{XA} \mapsto \sigma_{XB}) &= \widetilde R(\rho_{XA} \mapsto \sigma_{XB}) = 0,\quad\quad\,\,\,\,\,\,\text{if }\xi(\sigma_0,\sigma_1)>0,\\
	R (\rho_{XA} \mapsto \sigma_{XB}) &=0,\,\,\, \widetilde R(\rho_{XA} \mapsto \sigma_{XB}) = \infty,\quad\text{if }\xi(\sigma_0,\sigma_1)=0.
\end{align}
\end{remark}
\begin{remark}
Further to what was already stated in Remark~\ref{rem:asymp-rev-1}, Theorem~\ref{thm:transformationrate} expresses the fact that the resource theory of symmetric distinguishability is asymptotically reversible. Indeed, the optimal asymptotic rate at which one can convert $\rho_{XA}$ to $\sigma_{XB}$ is equal to the ratio of quantum Chernoff divergences. The rate at which one can convert back is thus equal to the reciprocal of the forward rate. Since the product of these two rates is equal to one, we conclude that the RTSD is asymptotically reversible.
\end{remark}

\subsection{Proof of Theorem~\ref{thm:transformationrate}}

\subsubsection{Achievability}

We start the proof of Theorem~\ref{thm:transformationrate} by proving the achievability part. In particular, we show the following lemma.
\begin{lemma}
	Let $\rho_{XA}$ and $\sigma_{XB}$ be the c-q states defined through \eqref{r1} and \eqref{o1} with $p,q\in(0,1)$.
	For $ \xi(\sigma_0,\sigma_1)>0$, we have
	\begin{align}
	\label{eq:AchCDS}
	\RC(\rho_{XA} \mapsto \sigma_{XB}) &\ge \frac{\xi(\rho_0,\rho_1)}{\xi(\sigma_0,\sigma_1)}.
	\end{align}
	    Moreover, for $\xi(\sigma_0,\sigma_1)=0$ and $\xi(\rho_0,\rho_1)>0$, we get $\RC(\rho_{XA} \mapsto \sigma_{XB}) = \infty$ and in the case $\xi(\sigma_0,\sigma_1)=\xi(\rho_0,\rho_1)=0$ we get $\RC(\rho_{XA} \mapsto \sigma_{XB}) = \infty$  if $\vec{p}\succ \vec{q}$, and $\RC(\rho_{XA} \mapsto \sigma_{XB}) =0$ otherwise.

	Moreover, in the case of $\rho_{XA}$ and $\sigma_{XB}$ having the same priors we have
	\begin{align}
	\label{eq:AchCPTPA}
	\RA(\rho_{XA} \mapsto \sigma_{XB}) &\ge \frac{\xi(\rho_0,\rho_1)}{\xi(\sigma_0,\sigma_1)}.
	\end{align}	
	Here, we interpreted  $\frac{\infty}{\infty}$ and $\frac{0}{0}$ as $\infty$. In the case of priors being different we have $\RA(\rho_{XA} \mapsto \sigma_{XB})=0.$
\end{lemma}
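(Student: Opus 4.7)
The plan is to establish the achievability bounds via a two-stage \emph{distill-then-dilute} strategy that composes the asymptotic distillation result (Theorem~\ref{theo-asympAprDistil}) with the asymptotic dilution result (Theorem~\ref{theo-asympDil}). Both are already phrased in terms of the scaled trace distance $D^\prime$, which is essential here: the data-processing inequality (Lemma~\ref{lem:DPI-D-prime}) and the $p_{\err}$-comparison bound (Lemma~\ref{lem:PerrDivBound}) provide exactly the right tools to compose the two protocols with controlled error.

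In the generic CDS case $0<\xi(\rho_0,\rho_1)$ and $0<\xi(\sigma_0,\sigma_1)<\infty$, fix $\eps,\delta>0$ and small $\delta_1,\delta_2,\eps_1,\eps_2>0$, and set $M_n\coloneqq 2^{\lfloor n(\xi(\rho_0,\rho_1)-\delta_1)\rfloor}$ and $m_n\coloneqq \lfloor \log_2 M_n/(\xi(\sigma_0,\sigma_1)+\delta_2)\rfloor$; shrinking $\delta_1,\delta_2$ guarantees $m_n\ge\lfloor n(R-\delta)\rfloor$ with $R=\xi(\rho_0,\rho_1)/\xi(\sigma_0,\sigma_1)$. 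Theorem~\ref{theo-asympAprDistil} produces a CDS map $\mathcal{N}_1$ with $D^\prime(\mathcal{N}_1(\rho_{XA}^{(n)}),\gamma_{XQ}^{(M_n)})\le\eps_1$ for $n$ large; Theorem~\ref{theo-asympDil} produces a CDS map $\mathcal{N}_2$ with $D^\prime(\mathcal{N}_2(\gamma_{XQ}^{(M_n)}),\sigma_{XB}^{(m_n)})\le\eps_2$. Writing $\tau_1\coloneqq\mathcal{N}_2\circ\mathcal{N}_1(\rho_{XA}^{(n)})$, $\tau_2\coloneqq\mathcal{N}_2(\gamma_{XQ}^{(M_n)})$, and $\tau_3\coloneqq\sigma_{XB}^{(m_n)}$, Lemma~\ref{lem:DPI-D-prime} gives $D^\prime(\tau_1,\tau_2)\le\eps_1$ while Lemma~\ref{lem:PerrDivBound} gives $p_{\err}(\tau_2)\le(1+\eps_2)\,p_{\err}(\tau_3)$; combining with the trace-distance triangle inequality yields
\[
D^\prime(\tau_1,\tau_3)\le D^\prime(\tau_1,\tau_2)\,\frac{p_{\err}(\tau_2)}{p_{\err}(\tau_3)}+D^\prime(\tau_2,\tau_3)\le \eps_1(1+\eps_2)+\eps_2,
\]
which can be made $\le\eps$ by tuning $\eps_1,\eps_2$. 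The same reasoning applies verbatim to $\cptp_A$ when the priors agree, using the prior-$p$ golden unit $\gamma_{XQ}^{(M,p)}$ in place of $\gamma_{XQ}^{(M)}$.

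The degenerate cases need tailored arguments. If $\xi(\sigma_0,\sigma_1)=0<\xi(\rho_0,\rho_1)$, then $\sigma_0=\sigma_1$ forces $p_{\err}(\sigma_{XB}^{(m)})=\min(q,1-q)$ to be $m$-independent; by Lemma~\ref{lem:InfToInf} pick a CDS map $\mathcal{M}$ with $\mathcal{M}(\gamma_{XQ}^{(\infty)})=\sigma_{XB}^{(m)}$ exactly, apply it to the distilled $\gamma_{XQ}^{(M_n)}$, and the DPI for trace distance gives $\tfrac{1}{2}\|\mathcal{M}(\gamma_{XQ}^{(M_n)})-\sigma_{XB}^{(m)}\|_1\le 1/(2M_n)$, so the $D^\prime$-error decays like $1/M_n\to 0$ regardless of $m$; composing as before yields an infinite achievable rate. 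When both Chernoff divergences vanish, both boxes reduce to $\vec p\otimes\rho$ and $\vec q\otimes\sigma$, and the CDS transformation question collapses to the classical majorization $\vec p\succ\vec q$ (essentially Birkhoff--von Neumann applied to the doubly stochastic image of the prior, in the spirit of Lemma~\ref{lem:GoldenUnitPreorder}): if it holds, the prior can be converted exactly and any $m$ copies of $\sigma$ can be freely prepared, giving $\RC=\infty$; otherwise $\tfrac{1}{2}\|\vec p^{\,\prime}-\vec q\|_1$ is bounded below over all doubly stochastic images $\vec p^{\,\prime}$ while $p_{\err}(\sigma_{XB}^{(m)})$ stays constant, so $D^\prime$ cannot be driven below some positive constant, forcing $\RC=0$. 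Under $\cptp_A$ with $p\ne q$ the rate is trivially zero, since such maps preserve the prior.

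The main technical obstacle is the error composition: $D^\prime$ is not a metric, and its target-dependent scaling might in principle amplify small errors across composed protocols. Lemma~\ref{lem:PerrDivBound} is exactly the device that controls the scaling mismatch $p_{\err}(\tau_2)/p_{\err}(\tau_3)$, reducing the composition to the simple arithmetic displayed above. The remaining ingredients---the rate arithmetic and the transcription to $\cptp_A$ when priors match---are routine.
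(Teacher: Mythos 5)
Your proof is correct, and it follows the same distill-then-dilute architecture as the paper's, with one genuine deviation in the first stage. The paper distills \emph{exactly} via Theorem~\ref{theo-asymp}, obtaining $\cN_1(\rho_{XA}^{(n)})=\gamma_{XQ}^{(M_n)}$ with no error, so that the only approximation enters through the dilution stage and no error composition is ever needed: the final error is just the single dilution error $\eps$. You instead invoke the \emph{approximate} distillation result (Theorem~\ref{theo-asympAprDistil}), which forces you to compose two $D^\prime$-errors; your composition inequality $D^\prime(\tau_1,\tau_3)\le D^\prime(\tau_1,\tau_2)\,p_{\err}(\tau_2)/p_{\err}(\tau_3)+D^\prime(\tau_2,\tau_3)$, combined with Lemma~\ref{lem:DPI-D-prime} and Lemma~\ref{lem:PerrDivBound}, is valid (note that $p_{\err}(\tau_2)>0$ is guaranteed by Lemma~\ref{lem:MonoErrProbCDS} since $p_{\err}(\gamma_{XQ}^{(M_n)})=1/(2M_n)>0$), and it is a reusable observation that makes $D^\prime$-protocols composable despite the target-dependent normalization. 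What the paper's route buys is that this entire composition step is unnecessary; what your route costs, beyond the bookkeeping, is that the sub-case $\xi(\rho_0,\rho_1)=\infty$ falls outside your construction (your $M_n=2^{\lfloor n(\xi(\rho_0,\rho_1)-\delta_1)\rfloor}$ is then meaningless) and must be dispatched separately via Lemma~\ref{lem:InfToInf}, exactly as the paper does --- you invoke that lemma elsewhere, so this is a one-line patch rather than a gap. The degenerate cases are handled in essentially the same way as in the paper, with minor variations: for $\xi(\sigma_0,\sigma_1)=0<\xi(\rho_0,\rho_1)$ the paper gets an exact protocol by a measure-and-prepare map followed by an exact prior conversion, whereas you get a vanishing-error protocol by routing through $\gamma_{XQ}^{(\infty)}$ and bounding $\tfrac12\|\gamma_{XQ}^{(M_n)}-\gamma_{XQ}^{(\infty)}\|_1=1/(2M_n)$; and for $\vec p\not\succ\vec q$ you lower-bound the trace distance of the classical marginals directly, where the paper applies Lemma~\ref{lem:PerrDivBound} --- both are correct and equivalent for binary priors.
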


\begin{proof}
We prove the result for free operations being ${\rm{CDS}}$, since for free operations being ${\rm{CPTP}}_A$ the proof follows the same lines. Let us first consider the case $\xi(\sigma_0,\sigma_1)=0$, in which case necessarily $\sigma_0 = \sigma_1\equiv\sigma$. Moreover, first assume that $\xi(\rho_0,\rho_1)=0$ (which implies that $\rho_0 = \rho_1\equiv\rho$) and $\vec{p} \succ \vec{q}$. In that case, there exists a $\lambda\in[0,1]$ such that 
\begin{align*}
\lambda p + (1-\lambda)(1-p) = q ,
\end{align*}
and consequently 
\begin{align*}
\lambda (1-p) + (1-\lambda)p = 1-q.
\end{align*}
Therefore, considering for every $m\in\N$
\begin{align*}
\cE^{(m)}_0(\cdot) = \lambda\Tr(\cdot)\sigma^{\otimes m},\quad\quad  \cE^{(m)}_1(\cdot) = (1-\lambda)\Tr(\cdot)\sigma^{\otimes m},
\end{align*}
which are quantum operations summing to a CPTP map, and the corresponding CDS map 
\begin{align}
\label{eq:MoreSingularPriorconstruction}
\cN^{(m)} = {\rm{id}}\otimes\cE^{(m)}_0 + \cF\otimes\cE^{(m)}_1,
\end{align}
we get 
\begin{align*}
\cN^{(m)}(\rho_{XA}) &= \left(\lambda p+(1-\lambda)(1-p)\right)\kb{0}\otimes\sigma^{\otimes m}  +\left(\lambda (1-p)+(1-\lambda)p\right)\kb{1}\otimes\sigma^{\otimes m} \\&= \sigma_{XB}^{(m)}.
\end{align*}
Consequently, $R^{\star}(\rho_{XA}\mapsto\sigma_{XB}) = \infty$, since $\rho_{XA}$ can be transformed to $\sigma_{BX}^{(m)}$ without error via a CDS map for an arbitrary $m \in {\mathbb{N}}$.
\smallskip

Consider now the case $\xi(\rho_0,\rho_1)=\xi(\sigma_0,\sigma_1)=0$ and 
$\vec{p}\not\succ \vec{q}$. In this case, using Lemma~\ref{lem:PerrDivBound}, we see that for all $n,m\in\N$
\begin{align*}
\min\{p,(1-p)\}&=p_{\operatorname{err}}(\rho_{XA})=p_{\operatorname{err}}(\rho^{(n)}_{XA}) \le \left(d^\prime_{\operatorname{CDS}}(\rho^{(n)}_{XA}\mapsto\sigma^{(m)}_{XB})+1\right)p_{\operatorname{err}}(\sigma^{(m)}_{XB}) \\& =\left(d^\prime_{\operatorname{CDS}}(\rho^{(n)}_{XA}\mapsto\sigma^{(m)}_{XB})+1\right)p_{\operatorname{err}}(\sigma_{XB}) \nonumber\\
&= \left(d^\prime_{\operatorname{CDS}}(\rho^{(n)}_{XA}\mapsto\sigma^{(m)}_{XB})+1\right)\min\{q,1-q\} ,
	\end{align*}
and since $\min\{p,(1-p)\} = (c+1)\min\{q,(1-q)\}$ for some $c>0$, we get
\begin{align*}
d^\prime_{\operatorname{CDS}}(\rho^{(n)}_{XA}\mapsto\sigma^{(m)}_{XB}) \ge c >0
\end{align*}
for all $n,m\in\N$, and hence $R^\star(\rho_{XA}\mapsto\sigma_{XB}) = 0$.

 Now consider the case in which $\xi(\sigma_0,\sigma_1)=0$  but $\xi(\rho_0,\rho_1)>0$. Assume without loss of generality that $q\le 1-q$, and for all $n\in\N$, let $\{\Lambda^{(n)}_0,\Lambda^{(n)}_1\}$ denote the optimal POVM for discriminating the quantum states $\rho_0^{\otimes n}$ and $\rho_1^{\otimes n}$ of the c-q state $\rho^{(n)}_{XA}$. Moreover, for all $m\in\N$, let
\begin{align}
\cE^{(n,m)}_0(\cdot) = \Tr(\Lambda^{(n)}_1\cdot)\sigma^{\otimes m},\quad\quad  \cE^{(n,m)}_1(\cdot) = \Tr(\Lambda^{(n)}_0\cdot)\sigma^{\otimes m},
\end{align}
which are quantum operations summing to a CPTP map, and define the corresponding CDS map
\begin{align*}
\cN^{(n,m)} = {\rm{id}}\otimes\cE^{(n,m)}_0 + \cF \otimes\cE^{(n,m)}_1.
\end{align*}
This gives 
\begin{align*}
\cN^{(n,m)}(\rho_{XA}^{(n)}) = p_{\operatorname{err}}(\rho_{XA}^{(n)})\kb{0}\otimes\sigma^{\otimes m} +  (1-p_{\operatorname{err}}(\rho_{XA}^{(n)}))\kb{1}\otimes\sigma^{\otimes m}.
\end{align*}
Now choose $n$ large enough such that $p_{\operatorname{err}}(\rho_{XA}^{(n)})\le q$, which gives that the prior of $\cN^{(n,m)}(\rho_{XA}^{(n)})$ majorises the prior of $\sigma_{XB}^{(m)}$. Using the above, i.e., the construction around \eqref{eq:MoreSingularPriorconstruction}, we can find a CDS map transforming $\cN^{(n,m)}(\rho_{XA}^{(n)})$ to $\sigma_{XB}^{(m)}$ for all $m$ without error. This implies that $R^\star(\rho_{XA}\mapsto\sigma_{XB}) = \infty$.
\smallskip

 Next consider the case in which $\xi(\sigma_0,\sigma_1)>0$. We can assume without loss of generality that $\xi(\rho_0,\rho_1)> 0$, since otherwise \eqref{eq:AchCPTPA} is trivially satisfied. 
 The case $\xi(\rho_0,\rho_1)=\infty$ follows by Lemma~\ref{lem:InfToInf}, i.e., by the fact that we can transform any infinite-resource state to any other c-q state via CDS maps without error. Note here that $\xi(\rho_0,\rho_1)=\infty$ if and only if $\rho_0$ and $\rho_1$ have orthogonal supports.
Furthermore, the case in which $\xi(\rho_0,\rho_1)<\infty$ and $\xi(\sigma_0,\sigma_1) = \infty$ follows from the fact that any transformation from a finite resource to an infinite-resource has infinite error with respect to the scaled trace distance $D^\prime$.
 \smallskip
 
 Let us therefore now finally consider the case in which $0<\xi(\rho_0,\rho_1),\,\xi(\sigma_0,\sigma_1)<\infty$, and fix $\eps,\delta>0$.
	By Theorem~\ref{theo-asymp} we can find for all $\delta_1>0$, an $N_1\in\N$ such that for all $n\ge N_1$ there exists a CDS map $\cN_1$ such that
	\begin{align*}
	\cN_1(\rho^{(n)}_{XA}) = \gamma_{XQ}^{(M_n)},
	\end{align*}
	where $\gamma_{XQ}^{(M_n)}$ is the $M_n$-golden unit defined in \eqref{golden1/2}
	with 
	\begin{align}
	\label{eq:GoodDistill}
	\frac{\log M_n}{n} \ge \xi(\rho_0,\rho_1) -\delta_1.
	\end{align}
	Since $\xi(\rho_0,\rho_1)>0$, without loss of generality even $\xi(\rho_0,\rho_1)>\delta_1$ by picking $\delta_1$ small enough, we also see that the sequence $(M_n)_{n\ge N}$ goes to infinity for $n\to\infty$. For each $M_n$ consider $m_n\in\N$ to be the unique number such that
	\begin{align}
	\label{eq:M_nbounds}
	\xi^{\star,\eps}_c(\sigma_{XB}^{(m_n)}) \le \log M_n \le \xi^{\star,\eps}_c(\sigma_{XB}^{(m_n+1)}).
	\end{align}
	First, note that  we also get $m_n\xrightarrow[n\to\infty]{}\infty$ because $M_n\xrightarrow[n\to\infty]{}\infty$.
 	Also note that as each $\xi^{\star,\eps}_c(\sigma_{XB}^{(m_n)})$ is finite there exists a CDS map $\cN_2$ such that
	\begin{align*}
	D^\prime\!\left(\cN_2(\gamma_{XQ}^{(M_n)}),\sigma_{XB}^{(m_n)}\right)\le \eps
	\end{align*}
	and hence in total
	\begin{align*}
	D^\prime\!\left(\cN_2\circ\cN_1(\rho_{XA'}^{(n)}),\sigma_{XB}^{(m_n)}\right)\le \eps.
	\end{align*}
	Moreover, we have by \eqref{eq:GoodDistill} and \eqref{eq:M_nbounds}
	\begin{align}
	\frac{\xi^{\star,\eps}_c(\sigma^{(m_n+1)}_{XB})}{n} \ge \frac{\log M_n}{n} \ge \xi(\rho_0,\rho_1) -\delta_1.
	\end{align}
	By Theorem~\ref{theo-asympDil} we can pick for all $\delta_2 >0$, an $N_2\in\N$ such that for all $n\ge\max\{N_1,N_2\}$ we have $$\frac{\xi^{\star,\eps}_c(\sigma^{(m_n+1)}_{XB})}{m_n+1}\le \xi(\sigma_0,\sigma_1) + \delta_2$$
	and therefore
	\begin{align}
	\frac{m_n+1}{n}\left(\xi(\sigma_0,\sigma_1) + \delta_2\right)\ge  \frac{\log M_n}{n} \ge \xi(\rho_0,\rho_1) -\delta_1.
	\end{align}
	As $\delta_1,\delta_2>0$ are arbitrary, we can choose them small enough and moreover possibly $n$ even larger such that
	\begin{align*}
	\frac{\xi(\rho_0,\rho_1) -\delta_1}{\xi(\sigma_0,\sigma_1) + \delta_2} - \frac{1}{n}\ge 	\frac{\xi(\rho_0,\rho_1)}{\xi(\sigma_0,\sigma_1)} -\delta.
	\end{align*}
	In summary, we have proven that for all $\eps,\delta>0$ and $n$ large enough, there exists an $(n,m_n,\eps)$ ${\rm{CDS}}$-transformation protocol with
	\begin{align}
	\frac{m_n}{n} \ge \frac{\xi(\rho_0,\rho_1)}{\xi(\sigma_0,\sigma_1)} -\delta,
	\end{align}
	which shows that $R\coloneqq  \frac{\xi(\rho_0,\rho_1)}{\xi(\sigma_0,\sigma_1)}$ is an achievable rate and hence proves \eqref{eq:AchCDS}.
\end{proof}

\subsubsection{Strong converse}

We now prove the strong converse part of Theorem~\ref{thm:transformationrate}. The proof  follows a similar idea as the one for proving the strong converse in the resource theory of asymmetric distinguishability \cite{Wang2019states}. There, the key insight was the pseudo continuity bound for either the sandwiched R\'enyi relative entropy \cite[Lemma 1]{Wang2019states} or Petz-R\'enyi relative entropy \cite[Lemma 3]{Wang2019states}. However, here we instead use a pseudo continuity bound for the operational quantity $-\log(p_{\operatorname{err}}(\cdot))$, involving the scaled trace distance $D^\prime$, and holding for two arbitrary cq-states $\rho_{XA}$ and $\sigma_{XA}$:
\begin{align}
\label{eq:pseud}
-\log(p_{\operatorname{err}}(\rho_{XA})) + \log(p_{\operatorname{err}}(\sigma_{XA})) \ge -\log\!\left(1+D^\prime(\rho_{XA}, \sigma_{XA})\right).
\end{align}
Note that \eqref{eq:pseud} directly follows from Lemma~\ref{lem:PerrDivBound}. With that we prove the following lemma:
\begin{lemma}
		Let $\rho_{XA}$ and $\sigma_{XB}$ be the c-q states defined through \eqref{r1} and \eqref{o1} with $p,q\in(0,1)$. Then we have
	\begin{align}
	\label{eq:ConvCDS}
	\widetilde \RC(\rho_{XA} \mapsto \sigma_{XB}) &\le \frac{\xi(\rho_0,\rho_1)}{\xi(\sigma_0,\sigma_1)},
	\end{align}
	and
	\begin{align}
	\label{eq:ConvCPTPA}
	\widetilde \RA(\rho_{XA} \mapsto \sigma_{XB}) &\le \frac{\xi(\rho_0,\rho_1)}{\xi(\sigma_0,\sigma_1)}.
	\end{align}	
	Here, we interpreted  $\frac{\infty}{\infty}$ and $\frac{0}{0}$ as $\infty$.
	
	Furthermore, in the case in which $\xi(\sigma_0,\sigma_1)=0$ we get 
	\begin{align}
	  \widetilde \RA(\rho_{XA} \mapsto \sigma_{XB})=\widetilde \RC(\rho_{XA} \mapsto \sigma_{XB})=\infty.  
	\end{align}
\end{lemma}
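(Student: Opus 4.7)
The plan is to reduce the asymptotic question to the single-shot pseudo-continuity bound \eqref{eq:pseud} and then extract the inequality via the quantum Chernoff theorem of~\cite{ACMBMAV07,nussbaum2009chernoff}. Fix $\eps,\delta>0$ and suppose there exists an $(n,m,\eps)$ $\cds$-transformation protocol realised by $\cA\in\cds$, i.e.~$D'(\cA(\rho_{XA}^{(n)}),\sigma_{XB}^{(m)})\le\eps$. Applying Lemma~\ref{lem:PerrDivBound} to the pair $(\cA(\rho_{XA}^{(n)}),\sigma_{XB}^{(m)})$ and combining with the monotonicity of $p_{\err}$ under CDS (Lemma~\ref{lem:MonoErrProbCDS}) yields
\begin{align*}
p_{\err}(\rho_{XA}^{(n)})\le p_{\err}(\cA(\rho_{XA}^{(n)}))\le(1+\eps)\,p_{\err}(\sigma_{XB}^{(m)}).
\end{align*}
Taking negative logarithms and normalising by $n$ gives the key estimate
\begin{align*}
\frac{-\log p_{\err}(\rho_{XA}^{(n)})}{n}+\frac{\log(1+\eps)}{n}\ge\frac{m}{n}\cdot\frac{-\log p_{\err}(\sigma_{XB}^{(m)})}{m}.
\end{align*}

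Assume first that $\xi(\sigma_0,\sigma_1)>0$, and fix any $R>\xi(\rho_0,\rho_1)/\xi(\sigma_0,\sigma_1)$; suppose for contradiction that $R$ is not a strong converse rate, so there exist $\eps,\delta>0$ and a subsequence $n_k\to\infty$ with $(n_k,m_k,\eps)$-protocols satisfying $m_k\ge\lceil n_k(R+\delta)\rceil$. Since $\delta>0$ we have $m_k\to\infty$, so the asymptotic Chernoff bound applies to both the LHS and the inner factor of the RHS of the key estimate, producing in the limit $\xi(\rho_0,\rho_1)\ge(R+\delta)\,\xi(\sigma_0,\sigma_1)$. But $R\,\xi(\sigma_0,\sigma_1)>\xi(\rho_0,\rho_1)$ by choice of $R$, which together with $\xi(\sigma_0,\sigma_1)>0$ gives $(R+\delta)\,\xi(\sigma_0,\sigma_1)>\xi(\rho_0,\rho_1)$, a contradiction. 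Hence every such $R$ is a strong converse rate, so $\widetilde\RC\le\xi(\rho_0,\rho_1)/\xi(\sigma_0,\sigma_1)$; the bound \eqref{eq:ConvCPTPA} follows verbatim via the inclusion $\cptp_A\subset\cds$. The conventions $\tfrac{\infty}{\infty}=\tfrac{0}{0}=\infty$ cause no trouble: when $\xi(\rho_0,\rho_1)=\infty$ the right-hand side is vacuously infinite, and when $\xi(\rho_0,\rho_1)=0$ with $\xi(\sigma_0,\sigma_1)>0$ the same contradiction forces $\widetilde R\le 0$.

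For the final clause, when $\xi(\sigma_0,\sigma_1)=0$ we have $\sigma_0=\sigma_1$, so $p_{\err}(\sigma_{XB}^{(m)})=\min\{q,1-q\}$ is independent of $m$ and $D'(\,\cdot\,,\sigma_{XB}^{(m)})$ is uniformly bounded by $1/\min\{q,1-q\}$. Consequently, for any $\eps\ge 1/\min\{q,1-q\}$, a trivial measure-and-prepare channel in $\cptp_A\subset\cds$ realises an $(n,m,\eps)$-protocol for arbitrary $n$ and $m$, so no finite rate can be a strong converse rate; this yields $\widetilde\RA=\widetilde\RC=\infty$, matching the trivial upper bound. The main subtlety is purely at the level of handling the edge cases cleanly; once the pseudo-continuity bound is in place, the core argument is short, and the only point requiring care is passing to the limit in the product $\frac{m_k}{n_k}\cdot\frac{-\log p_{\err}(\sigma_{XB}^{(m_k)})}{m_k}$, which is immediate from the Chernoff theorem together with $m_k\to\infty$.
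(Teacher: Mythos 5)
Your proposal is correct and follows essentially the same route as the paper: monotonicity of $p_{\operatorname{err}}$ under CDS maps combined with the pseudo-continuity bound of Lemma~\ref{lem:PerrDivBound}, the quantum Chernoff theorem applied to both source and target, the inclusion $\cptp_A\subset\cds$, and the trivial preparation channel for the $\xi(\sigma_0,\sigma_1)=0$ case. The only cosmetic difference is that you argue by contradiction and invoke the asymptotic Chernoff limit on the target side (needing $m_k\to\infty$), whereas the paper uses the non-asymptotic bound $-\log p_{\operatorname{err}}(\sigma^{(m)}_{XB})\ge m\,\xi(\sigma_0,\sigma_1)$ from \cite{ACMBMAV07} directly; both are valid.
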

\begin{proof}
As $\widetilde \RA(\rho_{XA} \mapsto \sigma_{XB})\le\widetilde \RC(\rho_{XA} \mapsto \sigma_{XB})$ we only need to show \eqref{eq:ConvCDS} as \eqref{eq:ConvCPTPA} directly follows. Moreover, consider $\xi(\sigma_0,\sigma_1)>0$ as otherwise \eqref{eq:ConvCDS} is trivially true.
Let $n,m\in\N$ and $\eps>0$ be such that there exists an $(n,m,\eps)$ CDS-transformation protocol. Hence, there exists a CDS map $\cN$ such that
\begin{align}
D^\prime\!\left(\cN(\rho_{XA}^{(n)}),\sigma_{XB}^{(m)}\right) \le \eps.
\end{align}
Using first monotonicity of the minimum error probability under ${\rm{CDS}}$ maps and then \eqref{eq:pseud} we get
\begin{align}
\nonumber
-\log\!\left(p_{\operatorname{err}}(\rho_{XA}^{(n)})\right) &\ge -\log\!\left(p_{\operatorname{err}}(\cN(\rho_{XA}^{(n)}))\right) \ge -\log\!\left(p_{\operatorname{err}}(\sigma^{(m)}_{XB})\right) - \log(1+\eps)  \\&\ge m\xi(\sigma_0,\sigma_1)  - \log(1+\eps),
\end{align} 
where we have used \cite[Theorem 1]{ACMBMAV07} for the last inequality.
This gives
\begin{align}
\label{eq:AlmostStrong}
\frac{-\log\!\left(p_{\operatorname{err}}(\rho_{XA}^{(n)})\right)}{\xi(\sigma_0,\sigma_1)}\ge m - \frac{\log(1+\eps)}{\xi(\sigma_0,\sigma_1)}.
\end{align}
Let now $\delta_1>0$ and consider $n\in\N$ large enough such that 
\begin{align}
\frac{-\log\!\left(p_{\operatorname{err}}(\rho_{XA}^{(n)})\right)}{n} \le \xi(\rho_0,\rho_1) +\delta_1,
\end{align}
which is possible by the results of \cite{nussbaum2009chernoff}.
Substituting this into \eqref{eq:AlmostStrong} we get
\begin{align}
\frac{\xi(\rho_0,\rho_1)}{\xi(\sigma_0,\sigma_1)} \ge \frac{m}{n} - \frac{\log(1+\eps)}{n\xi(\sigma_0,\sigma_1)} - \frac{\delta_1}{\xi(\sigma_0,\sigma_1)}.
\end{align}
Hence, by picking $\delta_1>0$ small enough, we see that for all $\eps,\delta>0$ and $n$ large enough a possible $(n,m,\eps)$ ${\rm{CDS}}$-transformation protocol satisfies
 \begin{align}
 \frac{m}{n}< \frac{\xi(\rho_0,\rho_1)}{\xi(\sigma_0,\sigma_1)} + \delta,
 \end{align}
 which shows that $R\coloneqq  \frac{\xi(\rho_0,\rho_1)}{\xi(\sigma_0,\sigma_1)}$ is a strong converse rate and finishes the proof.
 
 It remains to discuss the case $\xi(\sigma_0,\sigma_1)=0$, in which case $\sigma_0=\sigma_1\equiv\sigma$, and to show that then $\widetilde \RA(\rho_{XA} \mapsto \sigma_{XB})=\widetilde \RC(\rho_{XA} \mapsto \sigma_{XB})=\infty.$ Consider for all $n,m\in\N$ the CPTP map $\cE^{(m)} = \Tr(\cdot)\sigma^{\otimes m}$  and note that 
 \begin{align*}
 \left({\rm{id}}\otimes\cE^{(m)}\right)(\rho^{(n)}_{XA}) = p \kb{0}\otimes\sigma^{\otimes m}+ (1-p)\kb{1}\otimes\sigma^{\otimes m}.
 \end{align*}
Hence,
 \begin{align*}
 D^\prime\!\left(\left({\rm{id}}\otimes\cE^{(m)}\right)(\rho^{(n)}_{XA}),\sigma^{(m)}_{XB}\right) = \frac{\left\|\left({\rm{id}}\otimes\cE^{(m)}\right)(\rho^{(n)}_{XA})-\sigma^{(m)}_{XB}\right\|_1}{2p_{\operatorname{err}}(\sigma^{(m)}_{XB})} = \frac{|p-q|}{\min\{q,1-q\}} < \infty.
 \end{align*}
 Therefore, for $\eps\coloneqq \frac{|p-q|}{\min\{q,1-q\}}$ fixed and all $n,m\in\N$ we have found a $(n,m,\eps)$ transformation protocol (under both ${\rm{CPTP}}_A$ and ${\rm{CDS}}$) which implies that \begin{equation}
     \widetilde \RA(\rho_{XA} \mapsto \sigma_{XB})=\widetilde \RC(\rho_{XA} \mapsto \sigma_{XB})=\infty,
 \end{equation}
 concluding the proof.
\end{proof}

\section{Summary and open questions}
\label{sec:conclusion}

In summary, we have introduced the resource theory of symmetric distinguishability (RTSD) and have answered many of the fundamental questions associated with it. In particular, we have developed an axiomatic approach to the RTSD, which led to the conclusion that CDS maps are the natural choice for free operations, with CPTP$_A$ maps being a special case. We then introduced the golden units of the RTSD and argued why a particular scaled trace distance is a more appropriate figure of merit for approximate transformations, instead of the standard 
trace distance.
We finally defined and studied the tasks of dilution, distillation, and transformation, in the exact and approximate cases, both in the one-shot and asymptotic scenarios. We proved that the rate at which asymptotic transformations are possible is equal to the ratio of quantum Chernoff divergences of the elementary information sources, and we thus concluded that the RTSD is asymptotically reversible.

Going forward from here, it would be interesting to generalize the RTSD that we developed in this paper for elementary information sources to more general
information sources, i.e., to c-q states for which the classical alphabet has a size greater than two. We note here that many of the concepts considered in our paper, such as the basic axioms for the RTSD, CDS maps, and the scaled trace distance $D'(\cdot, \cdot)$ already apply to this more general setting. In light of the seminal result in \cite{Li16}, it is a tantalizing possibility that the optimal conversion rate between quantum information sources would be equal to a ratio of multiple-state Chernoff divergences, as a generalization of Theorem~\ref{thm:transformationrate}, but it remains open to determine if it is the case. It is also interesting to determine expressions for the one-shot distillable-SD and SD-cost, as generalizations of $\xi_{\min}$ and  $\xi_{\max}$. As an additional open direction, it is worth exploring whether there is an operational interpretation of the scaled trace distance $D'(\cdot, \cdot)$ that we introduced in Section~\ref{sec:conv-dist}. Finally, it is an open question to determine if the one-shot approximate SD-cost can be evaluated by a semi-definite program. We prove in Appendix~\ref{app:opts-approx-cost-not-SDPs} that the two variants of approximate SD-cost (based on CPTP$_A$ and CDS maps) can be evaluated by means of bilinear programs, so that the methods of \cite{HKT20} can be used to evaluate these quantities. However, it is not clear to us if these bilinear programs can be simplified further to semi-definite programs.

\bigskip

\textbf{Acknowledgements}---We thank Sumeet Khatri for insightful discussions and contributions at the early stage of this project.  MMW
 acknowledges support from the National Science
Foundation under Grant No.~1907615, as well as Stanford QFARM and AFOSR (FA9550-19-1-0369). GG acknowledges support from the Natural Sciences and Engineering Research Council of Canada (NSERC). RS acknowledges support from the Cambridge Commonwealth, European and International Trust.

\appendix

\section{General properties of the axiomatic framework of the RTSD}

\label{sec:AppendAxiom}
In this appendix, we discuss a number of consequences of Axioms I-V, and the preorder of SD, introduced in Section~\ref{sec:axioms}. We also define general resource measures to quantify symmetric distinguishability, and we provide several examples of such measures.

\begin{lemma}
The preorder of SD satisfies the following properties:
\begin{enumerate}
\item \textbf{Minimal Elements}. For all $\rho_{XA}\in\mathfrak{D}(XA)$ and $\pi_{X'}\otimes\omega_{A'}\in\md(X'A')$ we have
\be\label{210}
\pi_{X'}\otimes\omega_{A'}\prec\rho_{XA}\;.
\ee

\item \textbf{Maximal Elements}. For all $\rho_{XA}\in\mathfrak{D}(XA)$, $\{p_x\}_{x=1}^{|X|}$ a probability distribution, and $A'$ a quantum system with $|A'|\geq |X|$, we have
\be
\rho_{XA}\prec |1\rangle\!\langle 1|_X\sim\sum_{x=1}^{|X|}p_x|x\rangle\!\langle x|_X\otimes |x\rangle\!\langle x|_{A'}\;.
\ee
\item \textbf{Reduction to Majorisation}. 
For $\rho_{X},\sigma_{X}\in\cD(X)$ classical states of the same dimension, the preorder of SD is equivalent to the majorisation preorder. 
\end{enumerate}
\end{lemma}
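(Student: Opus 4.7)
I would prove the three properties in sequence, with the third being the main challenge.

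For the \emph{Minimal Elements} property, Axiom~III gives $\rho_{XA}\sim\rho_{XA}\otimes(\pi_{X'}\otimes\omega_{A'})$, while Axiom~V applied to discard the $XA$ register yields $\pi_{X'}\otimes\omega_{A'}=\Tr_{XA}(\rho_{XA}\otimes\pi_{X'}\otimes\omega_{A'})\prec\rho_{XA}\otimes\pi_{X'}\otimes\omega_{A'}$. Transitivity of $\prec$ then closes the argument.

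For the \emph{Maximal Elements} property, I would handle the equivalence and the inequality separately. The right-hand side $\tau^{(p)}_{XA'}\coloneqq\sum_x p_x|x\rangle\langle x|_X\otimes|x\rangle\langle x|_{A'}$ fits the template of Axiom~IV with $A'$ the distinguishable label and conditional-on-$A'=x$ state $|x\rangle\langle x|_X$; Axiom~I (transposition swapping $1$ and $x$) identifies each $|x\rangle\langle x|_X$ with $|1\rangle\langle 1|_X$, so Axiom~IV collapses $\tau^{(p)}_{XA'}$ to $|1\rangle\langle 1|_X\otimes\omega_{A'}$ with $\omega_{A'}\coloneqq\sum_x p_x|x\rangle\langle x|$, and Axiom~III applied with trivial $X''$ (for which $\pi_{X''}=1$) strips off $\omega_{A'}$, giving $\tau^{(p)}_{XA'}\sim|1\rangle\langle 1|_X$. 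To establish $\rho_{XA}\prec|1\rangle\langle 1|_X$ for an arbitrary $\rho_{XA}=\sum_x q_x|x\rangle\langle x|_X\otimes\rho_x^A$, I would pass through the coupled state $\tilde\tau_{XA'A}\coloneqq\sum_x q_x|x\rangle\langle x|_X\otimes|x\rangle\langle x|_{A'}\otimes\rho_x^A$. Each conditional-on-$A'=x$ state $|x\rangle\langle x|_X\otimes\rho_x^A$ is equivalent, via Axiom~III (trivial $X''$), to $|x\rangle\langle x|_X\otimes\rho^A$ for any fixed $\rho^A$, so Axiom~IV yields $\tilde\tau_{XA'A}\sim\tau^{(q)}_{XA'}\otimes\rho^A$, and another application of Axiom~III removes $\rho^A$. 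Axiom~V finally gives $\rho_{XA}=\Tr_{A'}\tilde\tau_{XA'A}\prec\tilde\tau_{XA'A}\sim\tau^{(q)}_{XA'}\sim|1\rangle\langle 1|_X$.

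For the \emph{Reduction to Majorisation}, forward direction, I would exploit the Hardy--Littlewood--P\'olya theorem: every doubly stochastic matrix acting on a finite alphabet is a finite product of T-transforms $(1-t)I+tP$ with $P$ a transposition and $t\in[0,1]$. For a single T-transform applied to a classical state $\sigma_X$, form the Alice-side coupled state $\mu_{XA'}\coloneqq(1-t)\sigma_X\otimes|0\rangle\langle 0|_{A'}+t(P\sigma)_X\otimes|1\rangle\langle 1|_{A'}$. Since $\sigma_X\sim P\sigma_X$ by Axiom~I, Axiom~IV collapses $\mu_{XA'}$ to $\sigma_X\otimes((1-t)|0\rangle\langle 0|+t|1\rangle\langle 1|)_{A'}$, and Axiom~III removes the quantum ancilla, leaving $\mu_{XA'}\sim\sigma_X$. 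Axiom~V then yields $(1-t)\sigma_X+t(P\sigma)_X=\Tr_{A'}\mu_{XA'}\prec\sigma_X$. Chaining this inequality along the T-transform decomposition gives $\rho\prec_{\mathrm{SD}}\sigma$ whenever $\rho$ is majorized by $\sigma$.

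The converse direction is the main obstacle, because $\prec_{\mathrm{SD}}$ is defined abstractly as the smallest preorder satisfying Axioms~I--V, so majorisation must be recovered by an indirect argument. My plan is to exhibit a sufficiently rich family of \emph{Schur-convex SD-monotones}: real-valued functions $F:\md_{cq}\to\R$ that are non-decreasing along $\prec_{\mathrm{SD}}$ (equivalently, satisfy the natural monotonicity/invariance versions of Axioms~I--V) and whose restrictions to classical states are Schur-convex. A natural candidate family is $F_\alpha(\rho_{XA})\coloneqq\log|X|-H_\alpha(X|A)_\rho$ with $\alpha\in(1,\infty)$, where the $\log|X|$ shift is exactly what absorbs the alphabet enlargement under Axiom~III (since for a product state $\rho_{XA}\otimes\pi_{X'}\otimes\omega_{A'}$ one has $H_\alpha(XX'|AA')=H_\alpha(X|A)+\log|X'|$). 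Axioms~I and II are routine, Axiom~IV reduces to the chain rule of conditional Rényi entropy for classical labels, and Axiom~V reduces to the standard bound $H_\alpha(XX'|AA')\geq H_\alpha(X|A)-\log|X'|$. Because classical majorisation on states of equal dimension is characterised by the collection of all Schur-convex functions (equivalently, by partial sums or by Rényi entropies for all $\alpha$), monotonicity of this family under $\prec_{\mathrm{SD}}$ forces $\rho\prec_{\mathrm{SD}}\sigma\Rightarrow\rho\prec_{\mathrm{maj}}\sigma$. The most delicate step will be pinning down a concrete conditional Rényi entropy (e.g.\ the Arimoto or sandwiched variant) for which all five axiomatic properties can be checked simultaneously.
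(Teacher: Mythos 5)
Your treatments of \emph{Minimal Elements}, of \emph{Maximal Elements}, and of the forward direction of \emph{Reduction to Majorisation} are all correct, though the latter two take a different route from the paper. For maximal elements the paper simply invokes Lemma~\ref{FO-CDS} (free operations with fixed classical system $=$ CDS maps) together with the fact that $|1\rangle\!\langle 1|_X\otimes\omega_A$ can be mapped by a CDS map onto any c-q state; your flagged-state construction $\tilde\tau_{XA'A}$ achieves the same thing directly from Axioms I, III, IV, V, which is a perfectly valid (if longer) alternative. For the direction ``majorisation $\Rightarrow$ SD-preorder'' the paper decomposes the doubly stochastic matrix via Birkhoff into a convex combination of permutations and builds an explicit noisy operation (with a rational-weights approximation step), whereas you chain T-transforms, each implemented by an Axiom~I~$+$~IV~$+$~III~$+$~V argument; your version is clean and even sidesteps the paper's approximation caveat.

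The genuine gap is in the converse direction, ``$\rho\prec_{\mathrm{SD}}\sigma\Rightarrow\rho\prec_{\mathrm{maj}}\sigma$.'' Your plan rests on the claim that majorisation of equal-dimension classical states is characterised by the ordering of the R\'enyi entropies $H_\alpha$ for all $\alpha$. This is false: the family $\{H_\alpha\}_\alpha$ characterises \emph{catalytic} majorisation (trumping), which is strictly weaker than majorisation. There exist pairs $p,q$ with $H_\alpha(q)\geq H_\alpha(p)$ for every $\alpha$ while $p$ does not majorise $q$ (this is exactly the phenomenon behind entanglement catalysis). Consequently, even if you verify Axioms I--V for every $F_\alpha=\log|X|-H_\alpha(X|A)$, monotonicity of this family under $\prec_{\mathrm{SD}}$ only yields trumping of the classical spectra, not majorisation. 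Replacing the R\'enyi family by the Ky Fan partial sums does characterise majorisation, but then the ``total preorder induced by a monotone'' device breaks on Axiom~IV: the partial sums of a flagged mixture $\sum_y q_y\rho^y\otimes|y\rangle\!\langle y|$ are not determined by the individual partial sums of the $\rho^y$, so the induced relation need not satisfy the axioms and cannot be used to upper-bound the smallest preorder. The paper avoids monotones entirely: it shows that, restricted to classical states of a fixed dimension, the relations generated by Axioms I, III, and V are exactly the noisy operations $\rho_X=\tr_{X'}\circ\mP_{XX'}(\sigma_X\otimes\pi_{X'})$, and these are doubly stochastic, whence majorisation follows. Some structural argument of this kind (exhibiting a concrete preorder satisfying all five axioms whose restriction to classical states is majorisation) is needed; your monotone family as proposed does not close the converse.
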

\begin{proof}
The first property  follows immediately from Axiom~III and Axiom~V. For the second property, first observe that Axiom~III implies that $|1\rangle\!\langle 1|_X\sim|1\rangle\!\langle 1|_X\otimes\omega_A$ for some $\omega_{A}$. From the form of CDS maps we get that $|1\rangle\!\langle 1|_X\otimes\omega_A$ can be converted to any cq-state in $\md(XA')$. Hence, since $\mathfrak{F}(XA\to XA')=\cds(XA\to XA')$ (see Lemma~\ref{FO-CDS}) the assertion follows. For the third property, note that by Axioms I, III, and V,
\begin{align}
\label{eq:SDMajo}
\rho_{X}\prec\sigma_X
\end{align} 
with respect to the preorder of SD if and only if
\begin{align}
\label{eq:DoubleWhopper}
\rho_{X} = \tr_{X'}\circ\mP_{XX'} \left(\sigma_{X}\otimes\pi_{X'}\right),
\end{align}
for some classical system $X'$ and permutation channel $\mP_{XX'}$ on the joint classical system $XX'$.
The above channels are known as noisy operations~\cite{HHO03}.
First, observe that noisy operations are doubly stochastic, so that if such a permutation channel $\mP_{XX'}$ exists, then $\sigma_{X}$ majorizes $\rho_{X}$. 

Conversely,
suppose $\rho=\sum_{y=1}^{m}t_yU_y\sigma U_y$ where $\{t_y\}_{y=1}^m$ are the components of a probability distribution, $U_y$ are permutation matrices on system $X$, and for convenience of the exposition here we removed the subscript $X$. It is well known that such $\{t_y\}$ and $\{U_y\}$ exist iff $\rho\prec\sigma$. Suppose that $t_y=\frac{n_y}{n}$ are rational components where $n=\sum_{y=1}^mn_y$ is the common denominator and each $n_y\in\mathbb{N}$. Let $X'$ be a classical system of dimension $|X'|=n$. Define a permutation matrix $P_{XX'}$ by its action on the basis elements $$P_{XX'}\left(|x\ra_X\otimes|x'\ra_{X'}\right)=\left(U_{y_{x'}}|x\ra_X\right)\otimes|x'\ra_{X'}\quad\quad\forall\;x\in[|X|]\;,\;x'\in[n],$$
where $y_{x'}\in[m]$ is the index satisfying
\be
\sum_{z=1}^{y_{x'}-1}n_z \leq x'\leq\sum_{z=1}^{y_{x'}}n_z ,
\ee
and we used the convention that the left-hand side of the above inequality is zero for $y_{x'}=1$. With this definition we have
\begin{align*}
\tr_{X'}\circ\mP_{XX'} \left(\sigma_{X}\otimes\pi_{X'}\right)=\frac1m\sum_{x'=1}^nU_{y_{x'}}\sigma_X U_{y_{x'}}^\dag
=\frac1m\sum_{y=1}^mn_yU_y\sigma U_y^\dag=\sum_{y=1}^{m}t_yU_y\sigma_X U_y=\rho_X\;.
\end{align*}
Therefore, noisy operations can approximate any mixture of unitaries arbitrarily well. 
\end{proof}

Note that the lemma above indicates that $|1\rangle\!\langle 1|_X$ is the maximal resource in the fixed dimension of $X$. If, for example, $X'$ is another system with a higher dimension $|X'|>|X|$ then 
\be
\label{eq:XiaFactorNoooo}
|1\rangle\!\langle 1|_X\prec|1\rangle\!\langle 1|_{X'}\;.
\ee
That is, the `embedding' of $X$ into $X'$ by adding zero components to matrices/vectors is not allowed in this resource theory since it can increase the value of the resource. To get the intuition behind it, consider Xiao possessing either one of the two classical states $|1\rangle\!\langle 1|_{X\tilde{X}}$ or $|1\rangle\!\langle 1|_X\otimes\pi_{\tilde{X}}$. In the first case, Xiao has complete information of the state in her possession since she knows the values of \emph{both} the random variables ${X}$ and $\tilde{X}$. On the other hand, in the second case Xiao has no information about $\tilde{X}$, since it is in a uniform state. Therefore, the first state is more distinguishable than the second one and we get 
\be
\label{eq:Xiafactor}
|1\rangle\!\langle 1|_X\sim|1\rangle\!\langle 1|_X\otimes\pi_{\tilde X}\prec|1\rangle\!\langle 1|_{X\tilde X}.
\ee
Hence, for the case $X' = X\tilde X$, \eqref{eq:Xiafactor} reduces \eqref{eq:XiaFactorNoooo}.
More generally, the SD of a cq-state $\rho_{XA}$ represents the ability of Alice to distinguish the elements in Xiao's system. Therefore, the greater $|X|$ is, the more elements there are to distinguish, and consequently, the maximal resource has greater SD. This, in particular, applies to the minimum error probability $p_{\operatorname{err}}(\rho_{XA})$. That is, suppose for example that two states $\rho_{XA},\sigma_{X'A'}\in\mathfrak{D}_{cq}$, with $|X|<|X'|$ satisfy
\be
p_{\operatorname{err}}(\rho_{XA})=p_{\operatorname{err}}(\sigma_{X'A'})\;.
\ee
Then, we can expect that $\sigma_{X'A'}$ has more SD since Alice is able to distinguish among $|X'|>|X|$ elements with the same error as she would have if she held $\rho_{XA}$. This means in particular that if we consider inter-conversions among c-q states with different classical dimensions then the minimum error probability is not a good measure of SD. In the following subsection we show how the minimum error probability needs to be re-scaled with the classical dimension so that it becomes a proper measure of SD.

\subsection{Quantification of SD}

SD is quantified with functions that preserve the preorder of SD.

\begin{definition}
A function $f:\mathfrak{D}_{cq}\to\mathbb{R}$ is called a \emph{measure of SD} if:
\begin{enumerate}
\item For any $\rho_{XA},\sigma_{X'A'}\in\mathfrak{D}_{cq}$ we have
\be
\rho_{XA}\prec\sigma_{X'A'}\quad\Rightarrow\quad f\left(\rho_{XA}\right)\leq f\left(\sigma_{X'A'}\right)\;.
\ee
\item For the trivial state $1$ (i.e. $|X|=|A|=1$) we have
\be
f(1)=0\;.
\ee
\end{enumerate}
\end{definition}

Note that from Axiom~III and the second condition above, all measures of SD vanish on free states. Combining this with the property in~\eqref{210} we conclude that measures of SD are non-negative.
\begin{example}
Let $$\mathbb{D}\;:\;\bigcup_A\cD(A)\times\cD(A)\to\mathbb{R}\;:\;(\rho,\sigma)\mapsto\mathbb{D}(\rho\|\sigma)$$ be a relative entropy; i.e. it satisfies the DPI, additivity (under tensor products), and the normalization $\mathbb{D}(1\|1)=0$. Then, the function
\be
\min_{\omega\in\cD(A)}\mathbb{D}\left(\rho_{XA}\|\pi_X\otimes\omega_A\right)\eqdef\log|X|-\mathbb{H}(X|A)_\rho
\ee
is a measure of SD, since in any quantum resource theory, a function of the form $\min_{\omega\in\mathfrak{F}(A)}\mathbb{D}(\rho_{A}\|\omega_A)$ is a measure of a resource~\cite{CG18}.
\end{example}

\begin{example}[Normalized guessing probability]
A special example of the above family of measures of SD is obtained when setting $\mathbb{D}$ to be the max-relative entropy $D_{\max}$.
Specifically, in \cite{KRS08} it was shown that the guessing probability can be written as
\begin{align*}
     p_{\rm{guess}}(X|A)_\rho = 2^{\min_{\omega\in\cD(A)} D_{\max} (\rho_{XA}\| I_X \otimes \omega_A)}\;.
\end{align*}
Therefore, replacing $I_X$ with the maximally mixed state $\pi_X$ we get that the function 
\be\label{235}
\log\left(|X|p_{\rm{guess}}(X|A)_\rho\right)=\min_{\omega\in\cD(A)} D_{\max} (\rho_{XA}\| \pi_X \otimes \omega_A)
\ee
is a measure of SD. Note that the dimension of the classical system is included on the left-side so that the expression remains invariant under replacement of $\rho_{XA}$ with $\rho_{XA}\otimes\pi_{X'}$.
\end{example}}

In this paper we have focused on the RTSD for the particular case in which the dimension of the classical system $X$ is fixed to $|X|=2$. In this case, as mentioned in the main text, it suffices to
consider measures of SD that behave monotonically under CDS but not necessarily under conditional noisy operations (i.e., under free operations that change the dimension of $X$). The measure of SD that we have chosen in the paper is given by Definition~\ref{def:SD}. As shown in Theorem~\ref{theo-distilCDS} it has the particularly pleasing feature of having an operational meaning in the context of SD distillation.

\section{Proof of Eq.~\eqref{eq:err-prob-TD-to-inf-resource} --- Minimum trace distance to infinite-resource states under free operations}

\label{app:proof-of-err-ratio-interpretation}

Let $\left(  p\rho^{0},\left(  1-p\right)  \rho^{1}\right)  $ be a pair of
subnormalized states, with $p\in(0,1)$, and $\rho^{0}$ and $\rho^{1}$ states.
Then this pair is in one-to-one correspondence with the following
classical--quantum state:%
\begin{equation}
\rho_{XB} \coloneqq p|0\rangle\!\langle0|_{X}\otimes\rho_{B}^{0}+\left(  1-p\right)
|1\rangle\!\langle1|_{X}\otimes\rho_{B}^{1}. \label{eq:basic-cq-state-app}%
\end{equation}
Let $(q\sigma^{0},\left(  1-q\right)  \sigma^{1})$ be another pair of
subnormalized states, with $q\in(0,1)$, and $\sigma^{0}$ and $\sigma^{1}$
states. Then this pair is in one-to-one correspondence with the following
classical--quantum state:
\begin{equation}
\sigma_{XB^{\prime}} \coloneqq q|0\rangle\!\langle0|_{X}\otimes\sigma_{B^{\prime}}%
^{0}+\left(  1-q\right)  |1\rangle\!\langle1|_{X}\otimes\sigma_{B^{\prime}}^{1}.
\end{equation}
The trace-distance conversion error of $\rho_{XB}$ to $\sigma_{XB^{\prime}}$\ is
defined as follows:%
\begin{equation}
\min_{\mathcal{N}_{XB\rightarrow XB^{\prime}}\in\text{$\operatorname{CDS}$}%
}\frac{1}{2}\left\Vert \mathcal{N}_{XB\rightarrow XB^{\prime}}(\rho
_{XB})-\sigma_{XB^{\prime}}\right\Vert _{1}, \label{eq:conv-distance}%
\end{equation}
where CDS is the set of conditional doubly stochastic (CDS)\ maps.

We first show that the trace-distance conversion error can be computed by means of a
semi-definite program.

\begin{proposition}
\label{prop:CDS-conv-dist-SDPs}The trace-distance conversion error
\begin{equation}
\min_{\mathcal{N}_{XB\rightarrow XB^{\prime}}\in\operatorname{CDS}}\frac{1}%
{2}\left\Vert \mathcal{N}_{XB\rightarrow X^{\prime}B^{\prime}}(\rho
_{XB})-\sigma_{XB^{\prime}}\right\Vert _{1}%
\end{equation}
between the initial pair $(p\rho_{B}^{0},\left(  1-p\right)  \rho_{B}^{1})$
and the target pair $(q\sigma_{B^{\prime}}^{0},\left(  1-q\right)
\sigma_{B^{\prime}}^{1})$ can be calculated by means of the following
semi-definite program:%
\begin{equation}
\min_{Y_{B^{\prime}}^{0},Y_{B^{\prime}}^{1},\Gamma_{BB^{\prime}}%
^{\mathcal{N}^{0}},\Gamma_{BB^{\prime}}^{\mathcal{N}^{1}}\geq0}%
\operatorname{Tr}[Y_{B^{\prime}}^{0}+Y_{B^{\prime}}^{1}]
\end{equation}
subject to%
\begin{equation}
\operatorname{Tr}_{B^{\prime}}[\Gamma_{BB^{\prime}}^{\mathcal{N}^{0}}%
+\Gamma_{BB^{\prime}}^{\mathcal{N}^{1}}]=I_{B},
\end{equation}%
\begin{align}
Y_{B^{\prime}}^{0} &  \geq\operatorname{Tr}_{B}[(p\rho_{B}^{0})^{T}%
\Gamma_{BB^{\prime}}^{\mathcal{N}^{0}}+(\left(  1-p\right)  \rho_{B}^{1}%
)^{T}\Gamma_{BB^{\prime}}^{\mathcal{N}^{1}}]-q\sigma_{B^{\prime}}^{0},\\
Y_{B^{\prime}}^{1} &  \geq\operatorname{Tr}_{B}[(p\rho_{B}^{0})^{T}%
\Gamma_{BB^{\prime}}^{\mathcal{N}^{1}}+(\left(  1-p\right)  \rho_{B}^{1}%
)^{T}\Gamma_{BB^{\prime}}^{\mathcal{N}^{0}}]-\left(  1-q\right)
\sigma_{B^{\prime}}^{1}.
\end{align}
The dual program is given by%
\begin{equation}
\max_{Y_{B}\in\operatorname{Herm},W_{B^{\prime}},Z_{B^{\prime}}\geq
0}\operatorname{Tr}[Y_{B}]-q\operatorname{Tr}[W_{B^{\prime}}\sigma_{B^{\prime
}}^{0}]-\left(  1-q\right)  \operatorname{Tr}[Z_{B^{\prime}}\sigma_{B^{\prime
}}^{1}],
\end{equation}
subject to%
\begin{align}
W_{B^{\prime}},Z_{B^{\prime}} &  \leq I_{B^{\prime}},\\
Y_{B}\otimes I_{B^{\prime}} &  \leq p\rho_{B}^{0}\otimes W_{B^{\prime}%
}+\left(  1-p\right)  \rho_{B}^{1}\otimes Z_{B^{\prime}},\\
Y_{B}\otimes I_{B^{\prime}} &  \leq\left(  1-p\right)  \rho_{B}^{1}\otimes
W_{B^{\prime}}+p\rho_{B}^{0}\otimes Z_{B^{\prime}}.
\end{align}

\end{proposition}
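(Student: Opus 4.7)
The plan is to reduce the optimization over CDS maps to an optimization over pairs of CP maps and then invoke the standard semi-definite representation of the trace norm. Since $|X|=2$, every CDS map $\mathcal{N}_{XB\to XB'}$ has the explicit form $\mathcal{N} = \mathrm{id}_X\otimes\mathcal{E}^0_{B\to B'} + \mathcal{F}_X\otimes\mathcal{E}^1_{B\to B'}$, where $\mathcal{F}_X$ is the bit-flip channel and $(\mathcal{E}^0,\mathcal{E}^1)$ is a pair of CP\ maps whose sum is trace-preserving. A direct computation then gives $\mathcal{N}(\rho_{XB})-\sigma_{XB'}$ as a c-q operator whose two diagonal blocks are
\begin{align}
\Delta^0_{B'}&=p\,\mathcal{E}^0(\rho^0)+(1-p)\mathcal{E}^1(\rho^1)-q\sigma^0,\\
\Delta^1_{B'}&=p\,\mathcal{E}^1(\rho^0)+(1-p)\mathcal{E}^0(\rho^1)-(1-q)\sigma^1,
\end{align}
and by the block structure $\tfrac12\|\mathcal{N}(\rho_{XB})-\sigma_{XB'}\|_1=\tfrac12\|\Delta^0\|_1+\tfrac12\|\Delta^1\|_1$.

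Next, I would replace each CP\ map $\mathcal{E}^i$ by its Choi matrix $\Gamma^{\mathcal{N}^i}_{BB'}\geq 0$, so that $\mathcal{E}^i(\rho)=\operatorname{Tr}_B[\rho_B^T\,\Gamma^{\mathcal{N}^i}_{BB'}]$ and the TP constraint on $\mathcal{E}^0+\mathcal{E}^1$ becomes $\operatorname{Tr}_{B'}[\Gamma^{\mathcal{N}^0}_{BB'}+\Gamma^{\mathcal{N}^1}_{BB'}]=I_B$. The blocks $\Delta^0,\Delta^1$ are then linear in these Choi matrices and exactly match the right-hand sides appearing in the claimed primal SDP. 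Finally, for a Hermitian operator $A$ with $\operatorname{Tr}[A]=0$, the identity $\tfrac12\|A\|_1=\min\{\operatorname{Tr}[Y]:Y\geq A,\,Y\geq 0\}$ (since $\operatorname{Tr}[A_+]=\tfrac12(\|A\|_1+\operatorname{Tr}[A])$) gives a semi-definite representation; the key observation is that while $\operatorname{Tr}[\Delta^0]$ and $\operatorname{Tr}[\Delta^1]$ need not individually vanish, the TP\ constraint forces $\operatorname{Tr}[\Delta^0]+\operatorname{Tr}[\Delta^1]=0$, and so $\operatorname{Tr}[Y^0+Y^1]\geq\operatorname{Tr}[\Delta^0_+]+\operatorname{Tr}[\Delta^1_+]=\tfrac12\|\Delta^0\|_1+\tfrac12\|\Delta^1\|_1$ with equality attainable by choosing $Y^i=\Delta^i_+$. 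This establishes the primal SDP.

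For the dual, I would form the Lagrangian by introducing dual variables $W_{B'},Z_{B'}\geq 0$ for the two inequality constraints on $Y^0,Y^1$, and $Y_B\in\mathrm{Herm}$ for the equality constraint $\operatorname{Tr}_{B'}[\Gamma^{\mathcal{N}^0}+\Gamma^{\mathcal{N}^1}]=I_B$. Minimizing over $Y^0_{B'}$ and $Y^1_{B'}$ gives the conditions $W_{B'},Z_{B'}\leq I_{B'}$; minimizing over the two Choi matrices $\Gamma^{\mathcal{N}^0},\Gamma^{\mathcal{N}^1}\geq 0$ produces, after a standard manipulation that pushes $\operatorname{Tr}_B[(\cdot)^T\,\cdot]$ to the dual side via the swap $\operatorname{Tr}[A^T B]=\operatorname{Tr}[A B^T]$ (equivalently, via $(pp\rho^0)\otimes W+(1-p)\rho^1\otimes Z$ acting on the $BB'$ system), the two matrix inequalities $Y_B\otimes I_{B'}\leq p\rho^0_B\otimes W_{B'}+(1-p)\rho^1_B\otimes Z_{B'}$ and $Y_B\otimes I_{B'}\leq(1-p)\rho^1_B\otimes W_{B'}+p\rho^0_B\otimes Z_{B'}$, as written. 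Strong duality follows from Slater's condition: the primal is strictly feasible (take $\mathcal{E}^0=\mathcal{E}^1=\tfrac12\mathrm{Tr}(\cdot)\tau$ for any full-rank state $\tau$, and $Y^i$ strictly above $\Delta^i$).

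The main obstacle I anticipate is the care needed in the last step above: because the trace-norm SDP $\tfrac12\|A\|_1=\min\{\operatorname{Tr}Y:Y\geq A,Y\geq 0\}$ in general yields $\operatorname{Tr}[A_+]$ rather than $\tfrac12\|A\|_1$, one must cleanly argue that the \emph{joint} optimum over the two blocks recovers $\tfrac12\|\Delta^0\|_1+\tfrac12\|\Delta^1\|_1$ by exploiting $\operatorname{Tr}[\Delta^0]+\operatorname{Tr}[\Delta^1]=0$; and in the dualization, keeping track of the transpose convention relating the Choi matrix to the action of the channel requires bookkeeping to land on exactly the asymmetric form of the two dual inequalities involving $W$ and $Z$ swapped between the $\rho^0$ and $\rho^1$ slots.
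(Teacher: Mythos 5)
Your proposal is correct and follows essentially the same route as the paper's proof: decompose the CDS map into a pair of CP maps summing to a TP map, pass to Choi matrices with the constraint $\operatorname{Tr}_{B^{\prime}}[\Gamma^{\mathcal{N}^0}+\Gamma^{\mathcal{N}^1}]=I_B$, represent the trace norm via $\min\{\operatorname{Tr}[Y]:Y\geq A,\,Y\geq0\}$ exploiting the block-diagonal structure, and dualize by standard conjugate/Lagrangian machinery with the transpose absorbed at the end. Your blockwise tracelessness observation and the Slater-point argument are slightly more explicit than the paper's treatment (which applies the trace-norm SDP to the full c-q operator and only invokes weak duality downstream), but they do not constitute a different method.
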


\begin{proof}
Recall that an arbitrary CDS channel has the following form:%
\begin{equation}
\operatorname{id}_{X}\otimes\mathcal{N}_{B\rightarrow B^{\prime}}%
^{0}+\mathcal{P}_{X}\otimes\mathcal{N}_{B\rightarrow B^{\prime}}^{1},
\label{eq:app:CDS-SDP-1}
\end{equation}
where $\mathcal{N}_{B\rightarrow B^{\prime}}^{0}$ and $\mathcal{N}%
_{B\rightarrow B^{\prime}}^{1}$ are completely positive maps such that
$\mathcal{N}_{B\rightarrow B^{\prime}}^{0}+\mathcal{N}_{B\rightarrow
B^{\prime}}^{1}$ is trace preserving, and $\mathcal{P}_{X}$ is a unitary
channel that flips $|0\rangle\!\langle0|$ and $|1\rangle\!\langle1|$. This means
that its action on an input%
\begin{equation}
p|0\rangle\!\langle0|_{X}\otimes\rho_{B}^{0}+\left(  1-p\right)  |1\rangle\!\langle1|_{X}\otimes\rho_{B}^{1}%
\end{equation}
is as follows:%
\begin{align}
&  \left(  \operatorname{id}_{X}\otimes\mathcal{N}_{B\rightarrow B^{\prime}%
}^{0}+\mathcal{P}_{X}\otimes\mathcal{N}_{B\rightarrow B^{\prime}}^{1}\right)
\left(  p|0\rangle\!\langle0|_{X}\otimes\rho_{B}^{0}+\left(  1-p\right)
|1\rangle\!\langle1|_{X}\otimes\rho_{B}^{1}\right)  \nonumber\\
&  =\left(  \operatorname{id}_{X}\otimes\mathcal{N}_{B\rightarrow B^{\prime}%
}^{0}\right)  \left(  p|0\rangle\!\langle0|_{X}\otimes\rho_{B}^{0}+\left(
1-p\right)  |1\rangle\!\langle1|_{X}\otimes\rho_{B}^{1}\right)  \nonumber\\
&  \qquad+\left(  \mathcal{P}_{X}\otimes\mathcal{N}_{B\rightarrow B^{\prime}%
}^{1}\right)  \left(  p|0\rangle\!\langle0|_{X}\otimes\rho_{B}^{0}+\left(
1-p\right)  |1\rangle\!\langle1|_{X}\otimes\rho_{B}^{1}\right)  \\
&  =p|0\rangle\!\langle0|_{X}\otimes\mathcal{N}_{B\rightarrow B^{\prime}}%
^{0}(\rho_{B}^{0})+\left(  1-p\right)  |1\rangle\!\langle1|_{X}\otimes
\mathcal{N}_{B\rightarrow B^{\prime}}^{0}(\rho_{B}^{1})\nonumber\\
&  \qquad+p\mathcal{P}_{X}(|0\rangle\!\langle0|_{X})\otimes\mathcal{N}%
_{B\rightarrow B^{\prime}}^{1}(\rho_{B}^{0})\nonumber\\
&  \qquad+\left(  1-p\right)  \mathcal{P}_{X}(|1\rangle\!\langle1|_{X}%
)\otimes\mathcal{N}_{B\rightarrow B^{\prime}}^{1}(\rho_{B}^{1})\\
&  =p|0\rangle\!\langle0|_{X}\otimes\mathcal{N}_{B\rightarrow B^{\prime}}%
^{0}(\rho_{B}^{0})+\left(  1-p\right)  |1\rangle\!\langle1|_{X}\otimes
\mathcal{N}_{B\rightarrow B^{\prime}}^{0}(\rho_{B}^{1})\nonumber\\
&  \qquad+p|1\rangle\!\langle1|_{X}\otimes\mathcal{N}_{B\rightarrow B^{\prime}%
}^{1}(\rho_{B}^{0})+\left(  1-p\right)  |0\rangle\!\langle0|_{X}\otimes
\mathcal{N}_{B\rightarrow B^{\prime}}^{1}(\rho_{B}^{1})\\
&  =|0\rangle\!\langle0|_{X}\otimes\left[  \mathcal{N}_{B\rightarrow B^{\prime}%
}^{0}(p\rho_{B}^{0})+\mathcal{N}_{B\rightarrow B^{\prime}}^{1}(\left(
1-p\right)  \rho_{B}^{1})\right]  \nonumber\\
&  \qquad+|1\rangle\!\langle1|_{X}\otimes\left[  \mathcal{N}_{B\rightarrow
B^{\prime}}^{1}(p\rho_{B}^{0})+\mathcal{N}_{B\rightarrow B^{\prime}}%
^{0}(\left(  1-p\right)  \rho_{B}^{1})\right]  .
\end{align}
The semi-definite specifications for the completely positive maps
$\mathcal{N}_{B\rightarrow B^{\prime}}^{0}$ and $\mathcal{N}_{B\rightarrow
B^{\prime}}^{1}$ are as follows:%
\begin{align}
\Gamma_{BB^{\prime}}^{\mathcal{N}^{0}},\Gamma_{BB^{\prime}}^{\mathcal{N}^{1}}
&  \geq0,\\
\operatorname{Tr}_{B^{\prime}}[\Gamma_{BB^{\prime}}^{\mathcal{N}^{0}}%
+\Gamma_{BB^{\prime}}^{\mathcal{N}^{1}}] &  =I_{B}.
\end{align}
Furthermore, the output state is given by%
\begin{multline}
|0\rangle\!\langle0|_{X}\otimes\left[  \operatorname{Tr}_{B}[(p\rho_{B}^{0}%
)^{T}\Gamma_{BB^{\prime}}^{\mathcal{N}^{0}}]+\operatorname{Tr}_{B}[(\left(
1-p\right)  \rho_{B}^{1})^{T}\Gamma_{BB^{\prime}}^{\mathcal{N}^{1}}]\right]
\\
+|1\rangle\!\langle1|_{X}\otimes\left[  \operatorname{Tr}_{B}[(p\rho_{B}%
^{0})^{T}\Gamma_{BB^{\prime}}^{\mathcal{N}^{1}}]+\operatorname{Tr}%
_{B}[(\left(  1-p\right)  \rho_{B}^{1})^{T}\Gamma_{BB^{\prime}}^{\mathcal{N}%
^{0}}]\right]  .
\label{eq:app:CDS-SDP-last}
\end{multline}
Recall that the dual semi-definite program for computing the normalized trace
distance of two quantum states $\rho$ and $\sigma$ is as follows (see, e.g., \cite{Wang2019states}):%
\begin{equation}
\frac{1}{2}\left\Vert \rho-\sigma\right\Vert _{1}=\min_{Y\geq0}\left\{
\operatorname{Tr}[Y]:Y\geq\rho-\sigma\right\}  .
\end{equation}
So in this case, it follows that%
\begin{multline}
\min_{\mathcal{N}_{XB\rightarrow XB^{\prime}}\in\text{$\operatorname{CDS}$}%
}\frac{1}{2}\left\Vert \mathcal{N}_{XB\rightarrow XB^{\prime}}(\rho
_{XB})-\sigma_{XB^{\prime}}\right\Vert _{1}\\
=\min_{Y_{XB^{\prime}}\geq0}\left\{  \operatorname{Tr}[Y_{XB^{\prime}%
}]:Y_{XB^{\prime}}\geq\mathcal{N}_{XB\rightarrow XB^{\prime}}(\rho
_{XB})-\sigma_{XB^{\prime}}\right\}  ,
\end{multline}
where we have now called the output system $X$ for clarity. Then the SDP\ for the
trace-distance conversion error is given by%
\begin{equation}
\min_{Y_{XB^{\prime}}\geq0}\operatorname{Tr}[Y_{XB^{\prime}}]
\end{equation}
subject to%
\begin{align}
\Gamma_{BB^{\prime}}^{\mathcal{N}^{0}},\Gamma_{BB^{\prime}}^{\mathcal{N}^{1}}
&  \geq0,\\
\operatorname{Tr}_{B^{\prime}}[\Gamma_{BB^{\prime}}^{\mathcal{N}^{0}}%
+\Gamma_{BB^{\prime}}^{\mathcal{N}^{1}}] &  =I_{B},
\end{align}%
\begin{multline}
Y_{XB^{\prime}}\geq|0\rangle\!\langle0|_{X}\otimes\left[  \operatorname{Tr}%
_{B}[(p\rho_{B}^{0})^{T}\Gamma_{BB^{\prime}}^{\mathcal{N}^{0}}%
]+\operatorname{Tr}_{B}[(\left(  1-p\right)  \rho_{B}^{1})^{T}\Gamma
_{BB^{\prime}}^{\mathcal{N}^{1}}]-q\sigma_{B^{\prime}}^{0}\right]  \\
+|1\rangle\!\langle1|_{X}\otimes\left[  \operatorname{Tr}_{B}[(p\rho_{B}%
^{0})^{T}\Gamma_{BB^{\prime}}^{\mathcal{N}^{1}}]+\operatorname{Tr}%
_{B}[(\left(  1-p\right)  \rho_{B}^{1})^{T}\Gamma_{BB^{\prime}}^{\mathcal{N}%
^{0}}]-\left(  1-q\right)  \sigma_{B^{\prime}}^{1}\right]  .
\end{multline}
It is clear that the optimal $Y_{XB^{\prime}}$ respects the classical--quantum
structure. So this means that the final SDP\ can be written as follows:%
\begin{equation}
\min_{Y_{B^{\prime}}^{0},Y_{B^{\prime}}^{1},\Gamma_{BB^{\prime}}%
^{\mathcal{N}^{0}},\Gamma_{BB^{\prime}}^{\mathcal{N}^{1}}\geq0}%
\operatorname{Tr}[Y_{B^{\prime}}^{0}+Y_{B^{\prime}}^{1}]
\end{equation}
subject to%
\begin{equation}
\operatorname{Tr}_{B^{\prime}}[\Gamma_{BB^{\prime}}^{\mathcal{N}^{0}}%
+\Gamma_{BB^{\prime}}^{\mathcal{N}^{1}}]=I_{B},
\end{equation}%
\begin{align}
Y_{B^{\prime}}^{0} &  \geq\operatorname{Tr}_{B}[(p\rho_{B}^{0})^{T}%
\Gamma_{BB^{\prime}}^{\mathcal{N}^{0}}]+\operatorname{Tr}_{B}[(\left(
1-p\right)  \rho_{B}^{1})^{T}\Gamma_{BB^{\prime}}^{\mathcal{N}^{1}}%
]-q\sigma_{B^{\prime}}^{0},\\
Y_{B^{\prime}}^{1} &  \geq\operatorname{Tr}_{B}[(p\rho_{B}^{0})^{T}%
\Gamma_{BB^{\prime}}^{\mathcal{N}^{1}}]+\operatorname{Tr}_{B}[(\left(
1-p\right)  \rho_{B}^{1})^{T}\Gamma_{BB^{\prime}}^{\mathcal{N}^{0}}]-\left(
1-q\right)  \sigma_{B^{\prime}}^{1}.
\end{align}

Now we compute the dual of the semi-definite program above. Recall the
standard form of primal and dual\ SDPs:%
\begin{align}
&  \max_{X\geq0}\left\{  \operatorname{Tr}[AX]:\Phi(X)\leq B\right\}  ,\\
&  \min_{Y\geq0}\left\{  \operatorname{Tr}[BY]:\Phi^{\dag}(Y)\geq A\right\}  .
\end{align}
From inspecting the above, we see that%
\begin{align}
Y  &  =%
\begin{bmatrix}
Y_{B^{\prime}}^{0} & 0 & 0 & 0\\
0 & Y_{B^{\prime}}^{1} & 0 & 0\\
0 & 0 & \Gamma_{BB^{\prime}}^{\mathcal{N}^{0}} & 0\\
0 & 0 & 0 & \Gamma_{BB^{\prime}}^{\mathcal{N}^{1}}%
\end{bmatrix}
,\qquad B=%
\begin{bmatrix}
I & 0 & 0 & 0\\
0 & I & 0 & 0\\
0 & 0 & 0 & 0\\
0 & 0 & 0 & 0
\end{bmatrix}
,\\
\Phi^{\dag}(Y)  &  =%
\begin{bmatrix}
\operatorname{Tr}_{B^{\prime}}[\Gamma_{BB^{\prime}}^{\mathcal{N}^{0}}%
+\Gamma_{BB^{\prime}}^{\mathcal{N}^{1}}] & 0 & 0 & 0\\
0 & -\operatorname{Tr}_{B^{\prime}}[\Gamma_{BB^{\prime}}^{\mathcal{N}^{0}%
}+\Gamma_{BB^{\prime}}^{\mathcal{N}^{1}}] & 0 & 0\\
0 & 0 & Y_{B^{\prime}}^{0}-Z_{B^{\prime}}^{0} & 0\\
0 & 0 & 0 & Y_{B^{\prime}}^{1}-Z_{B^{\prime}}^{1}%
\end{bmatrix}
,\\
A  &  =%
\begin{bmatrix}
I_{B} & 0 & 0 & 0\\
0 & -I_{B} & 0 & 0\\
0 & 0 & -q\sigma_{B^{\prime}}^{0} & 0\\
0 & 0 & 0 & -\left(  1-q\right)  \sigma_{B^{\prime}}^{1}%
\end{bmatrix}
.
\end{align}
where%
\begin{align}
Z_{B^{\prime}}^{0}  &   \coloneqq \operatorname{Tr}_{B}[(p\rho_{B}^{0})^{T}%
\Gamma_{BB^{\prime}}^{\mathcal{N}^{0}}+(\left(  1-p\right)  \rho_{B}^{1}%
)^{T}\Gamma_{BB^{\prime}}^{\mathcal{N}^{1}}],\\
Z_{B^{\prime}}^{1}  &   \coloneqq \operatorname{Tr}_{B}[(p\rho_{B}^{0})^{T}%
\Gamma_{BB^{\prime}}^{\mathcal{N}^{1}}+(\left(  1-p\right)  \rho_{B}^{1}%
)^{T}\Gamma_{BB^{\prime}}^{\mathcal{N}^{0}}].
\end{align}
Now we need to find the adjoint map of $\Phi^{\dag}$. Consider that%
\begin{align}
&  \operatorname{Tr}[\Phi^{\dag}(Y)X]\nonumber\\
&  =\operatorname{Tr}[\operatorname{Tr}_{B^{\prime}}[\Gamma_{BB^{\prime}%
}^{\mathcal{N}^{0}}+\Gamma_{BB^{\prime}}^{\mathcal{N}^{1}}](X_{B}^{1}%
-X_{B}^{2})]\nonumber\\
&  \qquad+\operatorname{Tr}[(Y_{B^{\prime}}^{0}-\operatorname{Tr}_{B}%
[(p\rho_{B}^{0})^{T}\Gamma_{BB^{\prime}}^{\mathcal{N}^{0}}+(\left(
1-p\right)  \rho_{B}^{1})^{T}\Gamma_{BB^{\prime}}^{\mathcal{N}^{1}%
}])X_{B^{\prime}}^{3}]\nonumber\\
&  \qquad+\operatorname{Tr}[(Y_{B^{\prime}}^{1}-\operatorname{Tr}_{B}%
[(p\rho_{B}^{0})^{T}\Gamma_{BB^{\prime}}^{\mathcal{N}^{1}}+(\left(
1-p\right)  \rho_{B}^{1})^{T}\Gamma_{BB^{\prime}}^{\mathcal{N}^{0}%
}])X_{B^{\prime}}^{4}]\\
&  =\operatorname{Tr}[(\Gamma_{BB^{\prime}}^{\mathcal{N}^{0}}+\Gamma
_{BB^{\prime}}^{\mathcal{N}^{1}})((X_{B}^{1}-X_{B}^{2})\otimes I_{B^{\prime}%
})]\nonumber\\
&  \qquad+\operatorname{Tr}[Y_{B^{\prime}}^{0}X_{B^{\prime}}^{3}%
]-\operatorname{Tr}[\Gamma_{BB^{\prime}}^{\mathcal{N}^{0}}((p\rho_{B}^{0}%
)^{T}\otimes X_{B^{\prime}}^{3})]\nonumber\\
&  \qquad-\operatorname{Tr}[(\Gamma_{BB^{\prime}}^{\mathcal{N}^{1}})(\left(
1-p\right)  \rho_{B}^{1})^{T}\otimes X_{B^{\prime}}^{3})]\nonumber\\
&  \qquad+\operatorname{Tr}[Y_{B^{\prime}}^{1}X_{B^{\prime}}^{4}%
]-\operatorname{Tr}[\Gamma_{BB^{\prime}}^{\mathcal{N}^{1}}((p\rho_{B}^{0}%
)^{T}\otimes X_{B^{\prime}}^{4})]\nonumber\\
&  \qquad-\operatorname{Tr}[\Gamma_{BB^{\prime}}^{\mathcal{N}^{0}}((\left(
1-p\right)  \rho_{B}^{1})^{T}\otimes X_{B^{\prime}}^{4})]\\
&  =\operatorname{Tr}[Y_{B^{\prime}}^{0}X_{B^{\prime}}^{3}]+\operatorname{Tr}%
[Y_{B^{\prime}}^{1}X_{B^{\prime}}^{4}]\nonumber\\
&  \qquad+\operatorname{Tr}[\Gamma_{BB^{\prime}}^{\mathcal{N}^{0}}((X_{B}%
^{1}-X_{B}^{2})\otimes I_{B^{\prime}}-p(\rho_{B}^{0})^{T}\otimes X_{B^{\prime
}}^{3}-\left(  1-p\right)  (\rho_{B}^{1})^{T}\otimes X_{B^{\prime}}%
^{4})]\nonumber\\
&  \qquad+\operatorname{Tr}[\Gamma_{BB^{\prime}}^{\mathcal{N}^{1}}((X_{B}%
^{1}-X_{B}^{2})\otimes I_{B^{\prime}}-\left(  1-p\right)  (\rho_{B}^{1}%
)^{T}\otimes X_{B^{\prime}}^{3}-p(\rho_{B}^{0})^{T}\otimes X_{B^{\prime}}%
^{4})].
\end{align}
So this means that%
\begin{align}
\Phi(X)  &  =%
\begin{bmatrix}
X_{B^{\prime}}^{3} & 0 & 0 & 0\\
0 & X_{B^{\prime}}^{4} & 0 & 0\\
0 & 0 & W_{B}^{0} & 0\\
0 & 0 & 0 & W_{B}^{1}%
\end{bmatrix}
,\\
W_{B}^{0}  &   \coloneqq (X_{B}^{1}-X_{B}^{2})\otimes I_{B^{\prime}}-p(\rho_{B}%
^{0})^{T}\otimes X_{B^{\prime}}^{3}-\left(  1-p\right)  (\rho_{B}^{1}%
)^{T}\otimes X_{B^{\prime}}^{4},\\
W_{B}^{1}  &   \coloneqq (X_{B}^{1}-X_{B}^{2})\otimes I_{B^{\prime}}-\left(
1-p\right)  (\rho_{B}^{1})^{T}\otimes X_{B^{\prime}}^{3}-p(\rho_{B}^{0}%
)^{T}\otimes X_{B^{\prime}}^{4}.
\end{align}
Then the dual program is given by%
\begin{equation}
\max_{X_{B}^{1},X_{B}^{2},X_{B^{\prime}}^{3},X_{B^{\prime}}^{4}\geq
0}\operatorname{Tr}\left[
\begin{bmatrix}
I_{B} & 0 & 0 & 0\\
0 & -I_{B} & 0 & 0\\
0 & 0 & -q\sigma_{B^{\prime}}^{0} & 0\\
0 & 0 & 0 & -\left(  1-q\right)  \sigma_{B^{\prime}}^{1}%
\end{bmatrix}%
\begin{bmatrix}
X_{B}^{1} & 0 & 0 & 0\\
0 & X_{B}^{2} & 0 & 0\\
0 & 0 & X_{B^{\prime}}^{3} & 0\\
0 & 0 & 0 & X_{B^{\prime}}^{4}%
\end{bmatrix}
\right]
\end{equation}
subject to%
\begin{equation}
X_{B^{\prime}}^{3}\leq I_{B^{\prime}},\quad X_{B^{\prime}}^{4}\leq
I_{B^{\prime}},
\end{equation}%
\begin{align}
(X_{B}^{1}-X_{B}^{2})\otimes I_{B^{\prime}}-p(\rho_{B}^{0})^{T}\otimes
X_{B^{\prime}}^{3}-\left(  1-p\right)  (\rho_{B}^{1})^{T}\otimes X_{B^{\prime
}}^{4}  &  \leq0,\\
(X_{B}^{1}-X_{B}^{2})\otimes I_{B^{\prime}}-\left(  1-p\right)  (\rho_{B}%
^{1})^{T}\otimes X_{B^{\prime}}^{3}-p(\rho_{B}^{0})^{T}\otimes X_{B^{\prime}%
}^{4}  &  \leq0.
\end{align}
This can be simplified to the following:%
\begin{equation}
\max_{Y_{B}\in\operatorname{Herm},X_{B^{\prime}}^{3},X_{B^{\prime}}^{4}\geq
0}\operatorname{Tr}[Y_{B}]-q\operatorname{Tr}[X_{B^{\prime}}^{3}%
\sigma_{B^{\prime}}^{0}]-\left(  1-q\right)  \operatorname{Tr}[X_{B^{\prime}%
}^{4}\sigma_{B^{\prime}}^{1}],
\end{equation}
subject to%
\begin{align}
X_{B^{\prime}}^{3}  &  \leq I_{B^{\prime}},\quad X_{B^{\prime}}^{4}\leq
I_{B^{\prime}},\\
Y_{B}\otimes I_{B^{\prime}}  &  \leq p(\rho_{B}^{0})^{T}\otimes X_{B^{\prime}%
}^{3}+\left(  1-p\right)  (\rho_{B}^{1})^{T}\otimes X_{B^{\prime}}^{4},\\
Y_{B}\otimes I_{B^{\prime}}  &  \leq\left(  1-p\right)  (\rho_{B}^{1}%
)^{T}\otimes X_{B^{\prime}}^{3}+p(\rho_{B}^{0})^{T}\otimes X_{B^{\prime}}^{4}.
\end{align}
Then we can set $X_{B^{\prime}}^{3}=W_{B^{\prime}}$ and $X_{B^{\prime}}%
^{4}=Z_{B^{\prime}}$ to get%
\begin{equation}
\max_{Y_{B}\in\operatorname{Herm},W_{B^{\prime}},Z_{B^{\prime}}\geq
0}\operatorname{Tr}[Y_{B}]-q\operatorname{Tr}[W_{B^{\prime}}\sigma_{B^{\prime
}}^{0}]-\left(  1-q\right)  \operatorname{Tr}[Z_{B^{\prime}}\sigma_{B^{\prime
}}^{1}],
\end{equation}
subject to%
\begin{align}
W_{B^{\prime}},Z_{B^{\prime}}  &  \leq I_{B^{\prime}},\\
Y_{B}\otimes I_{B^{\prime}}  &  \leq p(\rho_{B}^{0})^{T}\otimes W_{B^{\prime}%
}+\left(  1-p\right)  (\rho_{B}^{1})^{T}\otimes Z_{B^{\prime}},\\
Y_{B}\otimes I_{B^{\prime}}  &  \leq\left(  1-p\right)  (\rho_{B}^{1}%
)^{T}\otimes W_{B^{\prime}}+p(\rho_{B}^{0})^{T}\otimes Z_{B^{\prime}}.
\end{align}
We can finally make the substitution $Y_{B}\rightarrow Y_{B}^{T}$ and the
optimal value is unchanged. Since the operators on the right-hand side of the
inequalities just above are separable, the partial transpose has no effect and
can be removed. This concludes the proof.
\end{proof}

\subsection{Minimum error probability and minimum conversion error (in terms of trace distance) to
infinite-resource states}

\comment{The guessing probability of the state $\rho_{XB}$ is defined as follows:%
\begin{equation}
p_{g}(\rho_{XB}) \coloneqq \max_{\left\{  \Lambda_{B}^{x}\right\}  _{x\in\left\{
0,1\right\}  }}\sum_{x\in\left\{  0,1\right\}  }p(x)\operatorname{Tr}%
[\Lambda_{B}^{x}\rho_{B}^{x}],
\end{equation}
where $p(0)=p$ and $p(1)=1-p$, and the optimization is over all POVMs. The
minimum error probability is then given by%
\begin{equation}
p_{\operatorname{err}}(\rho_{XB}) \coloneqq 1-p_{g}(\rho_{XB}).
\end{equation}
It is a well known result that the guessing probability is equal to%
\begin{equation}
p_{g}(\rho_{XB})=\frac{1}{2}\left(  1+\left\Vert p\rho_{B}^{0}-\left(
1-p\right)  \rho_{B}^{1}\right\Vert _{1}\right)  ,
\end{equation}
which implies that}
As mentioned in the main text, the minimum error probability is given by
\begin{equation}
p_{\operatorname{err}}(\rho_{XB})=\frac{1}{2}\left(  1-\left\Vert p\rho
_{B}^{0}-\left(  1-p\right)  \rho_{B}^{1}\right\Vert _{1}\right)  .
\end{equation}
An alternative expression for it is given by
\begin{equation}
p_{\operatorname{err}}(\rho_{XB})=\max_{Y_{B}\in\operatorname{Herm}}\left\{
\operatorname{Tr}[Y_{B}]:Y_{B}\leq p\rho_{B}^{0},Y_{B}\leq\left(  1-p\right)
\rho_{B}^{1}\right\}, \label{eq:GLB-err-prob}%
\end{equation}
where $\operatorname{Herm}$ denotes the set of Hermitian operators acting on the system $B$.
Note that the maximising operator $Y_B$ on the right hand side of the above equation
is called the \textquotedblleft greatest lower bound (GLB) operator\textquotedblright\ of
the operators $p\rho_{B}^{0}$ and $ ( 1-p)  \rho_{B}^{1}$.
\comment{It seems less well known that the minimum error probability is equal to the trace of the
\textquotedblleft greatest lower bound (GLB) operator\textquotedblright\ of
$p\rho_{B}^{0}$ and $\left(  1-p\right)  \rho_{B}^{1}$. This is the following
statement:}%
The GLB operator is defined in Eq.~(84)\ of \cite{AM14},
and the above result was established as Lemma~A.7 of the same paper.

Consider the infinite-resource state
\begin{equation}
\gamma^{(\infty,q)}_{XQ}  \coloneqq q|0\rangle\!\langle0|_{X}\otimes|0\rangle\!\langle0|_{Y}+\left(
1-q\right)  |1\rangle\!\langle1|_{X}\otimes|1\rangle\!\langle1|_{Y},
\end{equation}
which is the $(M,q)$-golden unit (Definition~\ref{golden-unit}) with $M=\infty$.
It follows from Lemma~\ref{lem:InfToInf} that for all $q,q^{\prime}\in\left[  0,1\right]$, it is possible to perform the
transformation $\mathcal{N}_{XQ}(\gamma^{(\infty,q)}_{XQ})=\gamma^{(\infty,q')}_{XQ}$,
where $\mathcal{N}_{XQ}$ is a CDS map.

\begin{lemma}
\label{lem:max-resource-conv}Let $\rho_{XB}\equiv(p, \rho^0, \rho^1)$ be a c-q state. Then the following equality holds for
all $q,q^{\prime}\in\left[  0,1\right]  $:%
\begin{multline}
\min_{\mathcal{N}_{XB\rightarrow XQ}\in\operatorname{CDS}}\frac{1}%
{2}\left\Vert \mathcal{N}_{XB\rightarrow XQ}(\rho_{XB})-\gamma^{(\infty,q)}_{XQ}\right\Vert _{1}\\
=\min_{\mathcal{N}_{XB\rightarrow XQ}\in\operatorname{CDS}}\frac{1}%
{2}\left\Vert \mathcal{N}_{XB\rightarrow XQ}(\rho_{XB})-\gamma^{(\infty,q')}_{XQ}\right\Vert _{1}.
\end{multline}

\end{lemma}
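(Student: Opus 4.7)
The plan is to prove this by showing that the two minimizations on either side of the claimed equality are related via a bijective correspondence obtained by post-composing with a CDS map that connects $\gamma^{(\infty,q)}_{XQ}$ and $\gamma^{(\infty,q')}_{XQ}$. Since $\gamma^{(\infty,q)}_{XQ}$ is an infinite-resource state, Lemma~\ref{lem:InfToInf} guarantees the existence of a CDS map $\mathcal{M}_{XQ\to XQ}$ such that $\mathcal{M}_{XQ\to XQ}(\gamma^{(\infty,q)}_{XQ})=\gamma^{(\infty,q')}_{XQ}$; symmetrically, there is a CDS map $\mathcal{M}'_{XQ\to XQ}$ in the reverse direction.

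The next step is to observe that the composition of two CDS maps is again a CDS map. This follows directly by inspecting the canonical form
\[
\mathcal{N}=\sum_{z}\mathcal{P}^{z}_{X\to X}\otimes\mathcal{E}^{z}_{A\to A'}
\]
from \eqref{mainform2}: composing two such maps yields terms of the form $\mathcal{P}^{z'}\circ\mathcal{P}^{z}\otimes\mathcal{E}^{z'}\circ\mathcal{E}^{z}$, and since the composition of two permutation channels is again a permutation channel, regrouping terms by the resulting permutation produces a CDS\ decomposition of $\mathcal{M}\circ\mathcal{N}$.

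Given these two ingredients, the bound in one direction is straightforward: for any CDS map $\mathcal{N}_{XB\to XQ}$ used to approximate $\gamma^{(\infty,q)}_{XQ}$, the composition $\mathcal{M}\circ\mathcal{N}$ is a CDS map $XB\to XQ$, and by the data-processing inequality for the trace norm under CPTP\ maps,
\[
\tfrac{1}{2}\bigl\Vert (\mathcal{M}\circ\mathcal{N})(\rho_{XB})-\gamma^{(\infty,q')}_{XQ}\bigr\Vert_{1}
=\tfrac{1}{2}\bigl\Vert \mathcal{M}(\mathcal{N}(\rho_{XB}))-\mathcal{M}(\gamma^{(\infty,q)}_{XQ})\bigr\Vert_{1}
\leq\tfrac{1}{2}\bigl\Vert \mathcal{N}(\rho_{XB})-\gamma^{(\infty,q)}_{XQ}\bigr\Vert_{1},
\]
which after taking infima yields that the right-hand side of the claimed equality is upper bounded by the left-hand side. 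The reverse inequality is obtained by the completely symmetric argument using $\mathcal{M}'$, after which the two minima coincide.

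The proof is essentially immediate once one has Lemma~\ref{lem:InfToInf} and closure of CDS\ under composition, so I do not anticipate a genuine obstacle; the only subtlety worth spelling out is the verification that post-composing with $\mathcal{M}$ preserves the CDS\ structure, which is just a matter of bookkeeping on the canonical form~\eqref{mainform2}.
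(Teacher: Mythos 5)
Your proposal is correct and follows essentially the same route as the paper: both use the CDS map converting $\gamma^{(\infty,q)}_{XQ}$ to $\gamma^{(\infty,q')}_{XQ}$ (from Lemma~\ref{lem:InfToInf}), note that its composition with any candidate CDS map is again CDS, apply the data-processing inequality for the trace distance, and conclude by symmetry. The only difference is that you explicitly verify closure of CDS maps under composition, which the paper asserts without proof.
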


\begin{proof}
We first establish the inequality $\geq$. Let $\mathcal{N}_{XQ}^{\prime}$ be the
CDS\ channel that converts $\gamma^{(\infty,q)}_{XQ}$ to $\gamma^{(\infty,q')}_{XQ}$. Then
consider that%
\begin{align}
&  \min_{\mathcal{N}_{XB\rightarrow XQ}\in\text{$\operatorname{CDS}$}}\frac
{1}{2}\left\Vert \mathcal{N}_{XB\rightarrow XQ}(\rho_{XB})-\gamma^{(\infty,q')}_{XQ}\right\Vert _{1}\nonumber\\
&  =\min_{\mathcal{N}_{XB\rightarrow XQ}\in\text{$\operatorname{CDS}$}}%
\frac{1}{2}\left\Vert \mathcal{N}_{XB\rightarrow XQ}(\rho_{XB})-\mathcal{N}%
_{XQ}^{\prime}(\gamma^{(\infty,q)}_{XQ})\right\Vert \\
&  \leq\min_{\mathcal{N}_{XB\rightarrow XQ}\in\text{$\operatorname{CDS}$}%
}\frac{1}{2}\left\Vert (\mathcal{N}_{XQ}^{\prime}\circ\mathcal{N}%
_{XB\rightarrow XQ})(\rho_{XB})-\mathcal{N}_{XQ}^{\prime}(\gamma^{(\infty,q)}_{XQ})\right\Vert _{1}\\
&  \leq\min_{\mathcal{N}_{XB\rightarrow XQ}\in\text{$\operatorname{CDS}$}%
}\frac{1}{2}\left\Vert \mathcal{N}_{XB\rightarrow XQ}(\rho_{XB})-\gamma^{(\infty,q)}_{XQ}\right\Vert _{1}.
\end{align}
The first inequality follows because $\mathcal{N}_{XQ}^{\prime}\circ
\mathcal{N}_{XB\rightarrow XQ}$ is a member of the set of CDS channels. The
second inequality follows from the DPI under the trace distance. We can
then apply the same argument to arrive at the opposite inequality.
\end{proof}

\begin{lemma}
\label{thm:CDS-dist-max-res-err-prob}
Let $\rho_{XB}\equiv(p, \rho^0, \rho^1)$ be a c-q state. Then the following equality holds%
\begin{equation}
\min_{\substack{q\in\left[  0,1\right]  ,\\\mathcal{N}_{XB\rightarrow XQ}%
\in\text{$\operatorname{CDS}$}}}\frac{1}{2}\left\Vert \mathcal{N}%
_{XB\rightarrow XQ}(\rho_{XB})-\gamma^{(\infty,q)}_{XQ}\right\Vert _{1}%
=p_{\operatorname{err}}(\rho_{XB}). \label{eq:conv-dist-to-err-prob}%
\end{equation}

\end{lemma}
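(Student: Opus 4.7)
The strategy is to establish the two inequalities separately. The achievability direction (``$\leq$'') is a concrete construction using the optimal measurement for Bayesian state discrimination, while the converse direction (``$\geq$'') follows from monotonicity of the minimum error probability under CDS maps combined with a triangle-type bound that has already appeared in the proof of Lemma~\ref{lem:PerrDivBound}.

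For the converse (``$\geq$''), I would fix an arbitrary CDS map $\mathcal{N}_{XB\to XQ}$ and an arbitrary $q\in[0,1]$. Since the states $|0\rangle\!\langle 0|_Q$ and $|1\rangle\!\langle 1|_Q$ are orthogonal, the Helstrom--Holevo theorem (Theorem~\ref{thm:HH-state-disc}) gives $p_{\err}(\gamma^{(\infty,q)}_{XQ})=0$. Now the key algebraic step from the proof of Lemma~\ref{lem:PerrDivBound} shows that for any two c-q states $\tau,\omega$ on a common system,
\begin{equation}
p_{\err}(\tau)\;\leq\;p_{\err}(\omega)+\tfrac{1}{2}\|\tau-\omega\|_1.
\end{equation}
Applying this with $\tau=\mathcal{N}_{XB\to XQ}(\rho_{XB})$ and $\omega=\gamma^{(\infty,q)}_{XQ}$, and using Lemma~\ref{lem:MonoErrProbCDS} which states that $p_{\err}(\rho_{XB})\le p_{\err}(\mathcal{N}_{XB\to XQ}(\rho_{XB}))$, one obtains
\begin{equation}
p_{\err}(\rho_{XB})\;\leq\;\tfrac{1}{2}\|\mathcal{N}_{XB\to XQ}(\rho_{XB})-\gamma^{(\infty,q)}_{XQ}\|_1.
\end{equation}
Taking the infimum over $\mathcal{N}\in\cds$ and $q\in[0,1]$ yields the lower bound.

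For achievability (``$\leq$''), I would choose $q=p$ and construct an explicit ${\rm{CPTP}}_A\subset\cds$ map that realises the bound. Let $\Lambda_{\min}$ attain the minimum in \eqref{op-err}, and define the measure-and-prepare channel $\mathcal{E}_{B\to Q}(\omega) \coloneqq \tr(\Lambda_{\min}\omega)|0\rangle\!\langle 0|_Q+\tr((\1-\Lambda_{\min})\omega)|1\rangle\!\langle 1|_Q$. Then $({\rm{id}}_X\otimes\mathcal{E})(\rho_{XB})$ is classical on $XQ$ and diagonal in the same basis as $\gamma^{(\infty,p)}_{XQ}=p|00\rangle\!\langle 00|+(1-p)|11\rangle\!\langle 11|$, so the trace distance is simply the $\ell_1$-distance of the four diagonal entries. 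A direct computation gives
\begin{equation}
\tfrac{1}{2}\left\|({\rm{id}}_X\otimes\mathcal{E})(\rho_{XB})-\gamma^{(\infty,p)}_{XQ}\right\|_1 = p\,\tr((\1-\Lambda_{\min})\rho_0)+(1-p)\,\tr(\Lambda_{\min}\rho_1) = p_{\err}(\rho_{XB}),
\end{equation}
which proves the upper bound and completes the proof.

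The only subtlety is verifying that the four diagonal terms combine correctly in the trace-norm computation; this is routine once one exploits the identities $\tr(\Lambda_{\min}\rho_i)+\tr((\1-\Lambda_{\min})\rho_i)=1$. No deep obstacle is anticipated, as both ingredients (monotonicity of $p_{\err}$ under CDS and the basic triangle-type bound) are already in hand from earlier sections.
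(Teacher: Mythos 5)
Your argument is correct in substance, but the lower-bound (``$\geq$'') direction takes a genuinely different route from the paper's. The paper proves the converse by invoking the SDP for the trace-distance conversion error (Proposition~\ref{prop:CDS-conv-dist-SDPs}), exhibiting a feasible point of the dual program, and recognizing the resulting optimization as the greatest-lower-bound-operator characterization of $p_{\err}$ in \eqref{eq:GLB-err-prob}. You instead combine monotonicity of $p_{\err}$ under CDS maps (Lemma~\ref{lem:MonoErrProbCDS}) with the intermediate inequality $p_{\err}(\tau)\le p_{\err}(\omega)+\tfrac{1}{2}\|\tau-\omega\|_1$ extracted from the proof of Lemma~\ref{lem:PerrDivBound} (which, as you correctly note, holds without the finiteness hypothesis in that lemma's statement), together with $p_{\err}(\gamma^{(\infty,q)}_{XQ})=0$. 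This is more elementary, handles all $q$ uniformly without needing Lemma~\ref{lem:max-resource-conv}, and avoids duality entirely; what the paper's route buys in exchange is the explicit dual SDP machinery, which it sets up anyway for other purposes. Your achievability direction coincides with the paper's.

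One slip in the achievability step: with $\Lambda_{\min}$ defined as the minimizer in \eqref{op-err}, the quantity $\Tr(\Lambda_{\min}\rho_0)$ is the (small) type~I error, so your channel $\mathcal{E}(\omega)=\Tr(\Lambda_{\min}\omega)\kb{0}+\Tr((\1-\Lambda_{\min})\omega)\kb{1}$ sends $\rho_0$ predominantly to $\kb{1}$, the wrong target. Your trace-distance computation then correctly yields $p\,\Tr((\1-\Lambda_{\min})\rho_0)+(1-p)\Tr(\Lambda_{\min}\rho_1)$, but under your conventions this equals $1-p_{\err}(\rho_{XB})$ rather than $p_{\err}(\rho_{XB})$. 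Swapping the two output labels of $\mathcal{E}$ (prepare $\kb{0}$ on outcome $\1-\Lambda_{\min}$ and $\kb{1}$ on outcome $\Lambda_{\min}$) fixes this; the rest of the argument is unaffected.
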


\begin{proof}
We first establish the inequality%
\begin{equation}
\min_{\substack{q\in\left[  0,1\right]  ,\\\mathcal{N}_{XB\rightarrow XQ}%
\in\text{$\operatorname{CDS}$}}}\frac{1}{2}\left\Vert \mathcal{N}%
_{XB\rightarrow XQ}(\rho_{XB})-\gamma^{(\infty,q)}_{XQ}\right\Vert _{1}\leq
p_{\operatorname{err}}(\rho_{XB}),
\end{equation}
by demonstrating the existence of a value of $q\in\left[  0,1\right]  $ and a
CDS\ channel for which the left-hand side is equal to $p_{\operatorname{err}%
}(\rho_{XB})$. From Lemma~\ref{lem:max-resource-conv}, it follows that the
left-hand side of \eqref{eq:conv-dist-to-err-prob} is independent of
$q\in\left[  0,1\right]  $. So we can pick $q=p$, and the value is unchanged.
Now consider that the channel used in state discrimination is a simple local
channel of the following form:%
\begin{equation}
\mathcal{M}_{B\rightarrow Q}(\sigma_{B}) \coloneqq \operatorname{Tr}[\Lambda_{B}%
\sigma_{B}]|0\rangle\!\langle0|_{Q}+\operatorname{Tr}[(I_{B}-\Lambda_{B}%
)\sigma_{B}]|1\rangle\!\langle1|_{Q}, \label{eq:state-disc-ch-1}%
\end{equation}
and so $\operatorname{id}_{X}\otimes\mathcal{M}_{B\rightarrow Q}$ is a CDS
channel. Acting with it on $\rho_{XB}$ leads to the following state:
\begin{align}
&  \mathcal{M}_{B\rightarrow Q}(\rho_{XB})\nonumber\\
&  =p|0\rangle\!\langle0|_{X}\otimes\left(  \operatorname{Tr}[\Lambda_{B}%
\rho_{B}^{0}]|0\rangle\!\langle0|_{Q}+\operatorname{Tr}[(I_{B}-\Lambda_{B}%
)\rho_{B}^{0}]|1\rangle\!\langle1|_{Q}\right) \nonumber\\
&  \quad+\left(  1-p\right)  |1\rangle\!\langle1|_{X}\otimes\left(
\operatorname{Tr}[\Lambda_{B}\rho_{B}^{1}]|0\rangle\!\langle0|_{Q}%
+\operatorname{Tr}[(I_{B}-\Lambda_{B})\rho_{B}^{1}]|1\rangle\!\langle
1|_{Q}\right) \\
&  =p\operatorname{Tr}[\Lambda_{B}\rho_{B}^{0}]|0\rangle\!\langle0|_{X}%
\otimes|0\rangle\!\langle0|_{Q}+p\operatorname{Tr}[(I_{B}-\Lambda_{B})\rho
_{B}^{0}]|0\rangle\!\langle0|_{X}\otimes|1\rangle\!\langle1|_{Q}\nonumber\\
&  \quad+\left(  1-p\right)  \operatorname{Tr}[\Lambda_{B}\rho_{B}%
^{1}]|1\rangle\!\langle1|_{X}\otimes|0\rangle\!\langle0|_{Q}+\left(  1-p\right)
\operatorname{Tr}[(I_{B}-\Lambda_{B})\rho_{B}^{1}]|1\rangle\!\langle
1|_{X}\otimes|1\rangle\!\langle1|_{Q}. \label{eq:state-disc-ch-2}%
\end{align}
Since this is a particular choice, it follows that%
\begin{equation}
\min_{\substack{q\in\left[  0,1\right]  ,\\\mathcal{N}_{XB\rightarrow XQ}%
\in\text{$\operatorname{CDS}$}}}\frac{1}{2}\left\Vert \mathcal{N}%
_{XB\rightarrow XQ}(\rho_{XB})-\gamma^{(\infty,q)}_{XQ}\right\Vert _{1}
\leq\frac{1}{2}\left\Vert \mathcal{M}_{B\rightarrow Q}(\rho_{XB})-\gamma^{(\infty,p)}_{XQ}\right\Vert _{1}.
\end{equation}
Now let us compute the trace distance between $\mathcal{M}_{B\rightarrow
Q}(\rho_{XB})$ and the simple state $\gamma^{(\infty,p)}_{XQ}$:%
\begin{align}
&  \left\Vert \mathcal{M}_{B\rightarrow Q}(\rho_{XB})-\gamma^{(\infty,p)}_{XQ}\right\Vert _{1}\nonumber\\
&  =\left\Vert
\begin{array}
[c]{c}%
p\operatorname{Tr}[\Lambda_{B}\rho_{B}^{0}]|0\rangle\!\langle0|_{X}%
\otimes|0\rangle\!\langle0|_{Q}+p\operatorname{Tr}[(I_{B}-\Lambda_{B})\rho
_{B}^{0}]|0\rangle\!\langle0|_{X}\otimes|1\rangle\!\langle1|_{Q}\\
+\left(  1-p\right)  \operatorname{Tr}[\Lambda_{B}\rho_{B}^{1}]|1\rangle
\!\langle1|_{X}\otimes|0\rangle\!\langle0|_{Q}+\left(  1-p\right)
\operatorname{Tr}[(I_{B}-\Lambda_{B})\rho_{B}^{1}]|1\rangle\!\langle
1|_{X}\otimes|1\rangle\!\langle1|_{Q}\\
-p|0\rangle\!\langle0|_{X}\otimes|0\rangle\!\langle0|_{Q}-\left(  1-p\right)
|1\rangle\!\langle1|_{X}\otimes|1\rangle\!\langle1|_{Q}%
\end{array}
\right\Vert _{1}\\
&  =\left\Vert
\begin{array}
[c]{c}%
\left(  p\operatorname{Tr}[\Lambda_{B}\rho_{B}^{0}]-p\right)  |0\rangle\!
\langle0|_{X}\otimes|0\rangle\!\langle0|_{Q}+p\operatorname{Tr}[(I_{B}%
-\Lambda_{B})\rho_{B}^{0}]|0\rangle\!\langle0|_{X}\otimes|1\rangle\!\langle
1|_{Q}\\
+\left(  1-p\right)  \operatorname{Tr}[\Lambda_{B}\rho_{B}^{1}]|1\rangle\!
\langle1|_{X}\otimes|0\rangle\!\langle0|_{Q}\\
+\left[  \left(  1-p\right)  \operatorname{Tr}[(I_{B}-\Lambda_{B})\rho_{B}%
^{1}]-\left(  1-p\right)  \right]  |1\rangle\!\langle1|_{X}\otimes
|1\rangle\!\langle1|_{Q}%
\end{array}
\right\Vert _{1}\\
&  =\left\vert p\operatorname{Tr}[\Lambda_{B}\rho_{B}^{0}]-p\right\vert
+p\operatorname{Tr}[(I_{B}-\Lambda_{B})\rho_{B}^{0}]\nonumber\\
&  \qquad+\left(  1-p\right)  \operatorname{Tr}[\Lambda_{B}\rho_{B}%
^{1}]+\left\vert \left(  1-p\right)  \operatorname{Tr}[(I_{B}-\Lambda_{B}%
)\rho_{B}^{1}]-\left(  1-p\right)  \right\vert \\
&  =\left\vert p\operatorname{Tr}[(I_{B}-\Lambda_{B})\rho_{B}^{0}]\right\vert
+p\operatorname{Tr}[(I_{B}-\Lambda_{B})\rho_{B}^{0}]\nonumber\\
&  \qquad+\left(  1-p\right)  \operatorname{Tr}[\Lambda_{B}\rho_{B}%
^{1}]+\left\vert \left(  1-p\right)  \operatorname{Tr}[\Lambda_{B}\rho_{B}%
^{1}]\right\vert \\
&  =2\left(  p\operatorname{Tr}[(I_{B}-\Lambda_{B})\rho_{B}^{0}]+\left(
1-p\right)  \operatorname{Tr}[\Lambda_{B}\rho_{B}^{1}]\right) \\
&  =2p_{\operatorname{err}}(\rho_{XB}).
\end{align}

We now establish the opposite inequality. Since the value of $q$ does not
matter, let us set it to $1/2$, so that
\begin{equation}
\omega_{XY}(q)=\frac{1}{2}|0\rangle\!\langle0|_{X}\otimes|0\rangle\!\langle
0|_{Q}+\frac{1}{2}|1\rangle\!\langle1|_{X}\otimes|1\rangle\!\langle1|_{Q}.
\end{equation}
By applying Proposition~\ref{prop:CDS-conv-dist-SDPs}\ and weak duality of
semi-definite programming, we conclude that the trace-distance conversion error
\begin{equation}
\min_{\substack{q\in\left[  0,1\right]  ,\\\mathcal{N}_{XB\rightarrow XQ}%
\in\text{$\operatorname{CDS}$}}}\frac{1}{2}\left\Vert \mathcal{N}%
_{XB\rightarrow XQ}(\rho_{XB})-\omega_{XQ}(q)\right\Vert _{1}%
\end{equation}
is not smaller than the optimal value of the following SDP:%
\begin{equation}
\max_{Y_{B}\in\operatorname{Herm},W_{B^{\prime}},Z_{B^{\prime}}\geq
0}\operatorname{Tr}[Y_{B}]-\frac{1}{2}\left(  \operatorname{Tr}[W_{B^{\prime}%
}|0\rangle\!\langle0|]+\operatorname{Tr}[Z_{B^{\prime}}|1\rangle\!\langle
1|]\right)  ,
\end{equation}
subject to%
\begin{align}
W_{B^{\prime}},Z_{B^{\prime}}  &  \leq I_{B^{\prime}},\\
Y_{B}\otimes I_{B^{\prime}}  &  \leq p\rho_{B}^{0}\otimes W_{B^{\prime}%
}+\left(  1-p\right)  \rho_{B}^{1}\otimes Z_{B^{\prime}},\\
Y_{B}\otimes I_{B^{\prime}}  &  \leq\left(  1-p\right)  \rho_{B}^{1}\otimes
W_{B^{\prime}}+p\rho_{B}^{0}\otimes Z_{B^{\prime}}.
\end{align}
Let us then pick $W_{B^{\prime}}=|1\rangle\!\langle1|_{B^{\prime}}$ and
$Z_{B^{\prime}}=|0\rangle\!\langle0|_{B^{\prime}}$. Then the SDP\ simplifies as
follows:%
\begin{equation}
\max_{Y_{B}\in\operatorname{Herm}}\operatorname{Tr}[Y_{B}],
\end{equation}
subject to%
\begin{align}
Y_{B}\otimes I_{B^{\prime}}  &  \leq p\rho_{B}^{0}\otimes|1\rangle\!
\langle1|_{B^{\prime}}+\left(  1-p\right)  \rho_{B}^{1}\otimes|0\rangle\!
\langle0|_{B^{\prime}},\\
Y_{B}\otimes I_{B^{\prime}}  &  \leq\left(  1-p\right)  \rho_{B}^{1}%
\otimes|1\rangle\!\langle1|_{B^{\prime}}+p\rho_{B}^{0}\otimes|0\rangle\!
\langle0|_{B^{\prime}}.
\end{align}
Since we can write%
\begin{equation}
Y_{B}\otimes I_{B^{\prime}}=Y_{B}\otimes|0\rangle\!\langle0|_{B^{\prime}}%
+Y_{B}\otimes|1\rangle\!\langle1|_{B^{\prime}},
\end{equation}
the above constraints are equivalent to the following:%
\begin{equation}
Y_{B}\leq p\rho_{B}^{0},\quad Y_{B}\leq\left(  1-p\right)  \rho_{B}^{1}.
\end{equation}
This SDP\ is thus equal to the following one:%
\begin{equation}
\max_{Y_{B}\in\operatorname{Herm}}\left\{  \operatorname{Tr}[Y_{B}]:Y_{B}\leq
p\rho_{B}^{0},\quad Y_{B}\leq\left(  1-p\right)  \rho_{B}^{1}\right\}  .
\end{equation}
This quantity is precisely the trace of the greatest lower bound operator, and
so we conclude by applying \eqref{eq:GLB-err-prob}.
\end{proof}

\begin{lemma}
\label{thm:CPTP-A-dist-max-res-err-prob}
Let $\rho_{XB}\equiv\left(  p,\rho^{0}_B,
\rho^{1}_B\right)$ be a c-q state. Then the following equality holds%
\begin{equation}
\min_{\mathcal{N}_{B\rightarrow Q}%
\in\text{$\operatorname{CPTP}_A$}}\frac{1}{2}\left\Vert \mathcal{N}%
_{B\rightarrow Q}(\rho_{XB})-\gamma^{(\infty,p)}_{XQ}\right\Vert _{1}%
=p_{\operatorname{err}}(\rho_{XB}). \label{eq:conv-dist-to-err-prob-CPTP}%
\end{equation}

\end{lemma}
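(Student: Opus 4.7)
The strategy is to mirror the argument used for Lemma~\ref{thm:CDS-dist-max-res-err-prob}, with two simplifications: the prior $p$ is now fixed by the ambient $\operatorname{CPTP}_A$ restriction (so no minimization over $q$ is needed), and the analysis reduces essentially to optimizing over measurement channels on $B$.

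First I would establish the upper bound $\leq p_{\operatorname{err}}(\rho_{XB})$. Let $\Lambda^\star$ be an optimal POVM element for symmetric discrimination of $\rho^0$ vs.\ $\rho^1$ with prior $(p,1-p)$, so that $p_{\operatorname{err}}(\rho_{XB}) = p\operatorname{Tr}[(I-\Lambda^\star)\rho^0] + (1-p)\operatorname{Tr}[\Lambda^\star\rho^1]$. Define the measure-and-prepare channel
\[
\mathcal{M}_{\Lambda^\star}(\sigma) \coloneqq \operatorname{Tr}[\Lambda^\star\sigma]\,|0\rangle\!\langle 0|_Q + \operatorname{Tr}[(I-\Lambda^\star)\sigma]\,|1\rangle\!\langle 1|_Q,
\]
and observe that $\operatorname{id}_X\otimes\mathcal{M}_{\Lambda^\star} \in \operatorname{CPTP}_A$. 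Plugging this channel into the objective and repeating verbatim the direct trace-norm calculation given in the proof of Lemma~\ref{thm:CDS-dist-max-res-err-prob} (Eqs.~\eqref{eq:state-disc-ch-1}--\eqref{eq:state-disc-ch-2} and the subsequent computation, using the fact that the involved diagonal structure in the $XQ$ basis reduces the trace norm to a sum of absolute values) yields
\[
\tfrac{1}{2}\bigl\|(\operatorname{id}_X\otimes\mathcal{M}_{\Lambda^\star})(\rho_{XB}) - \gamma^{(\infty,p)}_{XQ}\bigr\|_1 = p_{\operatorname{err}}(\rho_{XB}),
\]
which establishes the $\leq$ direction.

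For the lower bound, I would use data processing together with the observation that $\gamma^{(\infty,p)}_{XQ}$ is already diagonal in the computational basis of $Q$. Let $\mathcal{N}\in\operatorname{CPTP}_A$, so $\mathcal{N} = \operatorname{id}_X\otimes\mathcal{E}$ for some CPTP map $\mathcal{E}:B\to Q$. Let $\mathcal{D}_Q$ denote the completely dephasing channel on $Q$ in the computational basis. Then $(\operatorname{id}_X\otimes\mathcal{D}_Q)(\gamma^{(\infty,p)}_{XQ}) = \gamma^{(\infty,p)}_{XQ}$, so by the data-processing inequality for the trace distance,
\[
\tfrac{1}{2}\bigl\|(\operatorname{id}_X\otimes\mathcal{E})(\rho_{XB}) - \gamma^{(\infty,p)}_{XQ}\bigr\|_1 \;\geq\; \tfrac{1}{2}\bigl\|(\operatorname{id}_X\otimes(\mathcal{D}_Q\circ\mathcal{E}))(\rho_{XB}) - \gamma^{(\infty,p)}_{XQ}\bigr\|_1.
\]
Setting $\Lambda\coloneqq \mathcal{E}^\dagger(|0\rangle\!\langle 0|)$, which is a valid POVM element since $0\leq \Lambda \leq I$, one has $\mathcal{D}_Q\circ\mathcal{E} = \mathcal{M}_{\Lambda}$. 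Applying the same explicit trace-norm computation as in the upper-bound step evaluates the right-hand side to $p\operatorname{Tr}[(I-\Lambda)\rho^0] + (1-p)\operatorname{Tr}[\Lambda\rho^1]$, which is at least $p_{\operatorname{err}}(\rho_{XB})$ by the definition~\eqref{op-err}.

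There is no real obstacle here: the two bounds together complete the proof, and the only content beyond routine manipulation is the diagonal-invariance of $\gamma^{(\infty,p)}_{XQ}$ under $\operatorname{id}_X\otimes\mathcal{D}_Q$, which reduces the optimization over arbitrary $\operatorname{CPTP}_A$ channels to an optimization over binary POVMs on $B$ — precisely the object whose optimum defines $p_{\operatorname{err}}(\rho_{XB})$.
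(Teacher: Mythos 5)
Your proof is correct and follows essentially the same route as the paper's: the achievability direction uses the optimal-measurement channel exactly as in the proof of Lemma~\ref{thm:CDS-dist-max-res-err-prob}, and the converse uses invariance of $\gamma^{(\infty,p)}_{XQ}$ under dephasing of $Q$ together with data processing to reduce any $\operatorname{CPTP}_A$ channel to a binary measurement channel, whose error is at least $p_{\operatorname{err}}(\rho_{XB})$. You have merely written out explicitly (e.g.\ the identification $\Lambda=\mathcal{E}^{\dagger}(|0\rangle\!\langle 0|)$) what the paper states in compressed form.
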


\begin{proof}
The inequality $\leq$ follows by the same reasoning given at the beginning of the proof of the previous theorem. The opposite inequality follows because we can apply a completely dephasing channel to the $Q$ system and the state $\gamma^{(\infty,p)}_{XQ}$ remains invariant, while the channel $\mathcal{N}$ is transformed to a measurement channel. The trace distance does not increase under such a channel and evaluating it leads to an expression for the error probability under a particular measurement.
\end{proof}

\comment{
\section{Alternate distance measure for approximate transformation tasks}

\label{sec:conv-distRel}
	
In this section, we introduce the \emph{relative-error divergence} $D_{\operatorname{rel}}(\cdot\|\cdot)$ for pairs of c-q states of the form defined in~\eqref{rho} and a corresponding minimum conversion error. We also discuss some of their properties. These quantities are then  used to define the \emph{one-shot approximate distillable-SD} and the \emph{one-shot approximate SD-cost} in the following sections. 

For a fixed binary POVM $\{\Lambda,\1-\Lambda\}$, we denote the error probability for a given c-q state $\rho_{XA}\equiv (p,\rho_0,\rho_1)$ by
\begin{equation}
p_{\operatorname{err}}(\rho_{XA};\Lambda) \coloneqq  p\Tr(\Lambda\rho_0) + (1-p)\Tr((\1 -\Lambda)\rho_1 ).
\end{equation} 
\begin{definition}
The relative-error divergence between c-q states $\rho_{XA}\equiv(p,\rho_0,\rho_1)$ and $\sigma_{XA}\equiv(q,\sigma_0,\sigma_1)$ is defined by
\begin{align}
\label{eq:ErrorDivergenceRel}
D_{\operatorname{rel}}(\rho_{XA}\|\sigma_{XA}) &\coloneqq \nn \sup_{0\le\Lambda\le \1}\Bigg|\frac{p_{\operatorname{err}}(\rho_{XA};\Lambda) - p_{\operatorname{err}}(\sigma_{XA};\Lambda)}{p_{\operatorname{err}}(\sigma_{XA};\Lambda)}\Bigg|\\\nn&=
\sup_{0\le\Lambda\le \1}\Bigg|\frac{\Tr\Big(\big(\kb{0}\otimes\Lambda + \kb{1}\otimes(\1-\Lambda)\big)(\rho_{XA}-\sigma_{XA})\Big)}{\Tr\Big(\big(\kb{0}\otimes\Lambda + \kb{1}\otimes(\1-\Lambda)\big)\sigma_{XA}\Big)}\Bigg|\\&=\sup_{0\le\Lambda\le \1}\Bigg|\frac{\Tr\big(\Lambda\left(p\rho_0 - q\sigma_0\right)\big) + \Tr\big((\1-\Lambda)\left((1-p)\rho_1 - (1-q) \sigma_1\right)\big)}{q\Tr\big(\Lambda\sigma_0\big) + (1-q)\Tr\big((\1-\Lambda) \sigma_1\big)}\Bigg|,
\end{align}
where we interpret $\frac{0}{0}$ as $0$, which is justified by a limiting argument.
\end{definition}

We now discuss the physical motivation behind $D_{\text{rel}}$:
Consider the scenario that Alice thinks she has the source $\sigma_{XA}\equiv(q,\sigma_0,\sigma_1)$ but actually holds $\rho_{XA}\equiv(p,\rho_0,\rho_1)$ instead.
\smallskip

\noindent
[{\em{Remark:}} Such a scenario arises in the fundamental task of transformation between sources that we consider in our paper. Suppose Alice has a source which she would like to transform to a desired target source $(q, \sigma_0, \sigma_1)$. She acts on her source with an appropriate CDS map and obtains $(p, \rho_0, \rho_1)$ but she erroneously thinks that she has
obtained the desired target source.]
\smallskip

\noindent
Suppose that Alice then uses a POVM $\{\Lambda,\1-\Lambda\}$ to infer whether the source emits the state 0 or state 1. Here Alice makes
\begin{align}
\frac{p_{\text{err}}(\rho_{XA};\Lambda) - p_{\text{err}}(\sigma_{XA};\Lambda)}{p_{\text{err}}(\sigma_{XA};\Lambda)}
\end{align}
relatively more mistakes by holding $\rho_{XA}$ instead of $\sigma_{XA}$.
Therefore, taking the absolute value and optimising over all POVMs, this relative error is given by
\begin{align}
\label{eq:RelErr}
D_{\text{rel}}(\rho_{XA}\|\sigma_{XA}) &= \sup_{0\le\Lambda\le \1}\Bigg|\frac{p_{\text{err}}(\rho_{XA};\Lambda) - p_{\text{err}}(\sigma_{XA};\Lambda)}{p_{\text{err}}(\sigma_{XA};\Lambda)}\Bigg|.
\end{align}
It is sensible to use the relative error instead of an absolute error of the form 
\begin{equation}
    |p_{\text{err}}(\rho_{XA};\Lambda)-p_{\text{err}}(\sigma_{XA};\Lambda)|.
\end{equation}
Note that, for example, the pairs of error probabilities  $\left(\,p_{\text{err}}(\rho_{XA};\Lambda)=0.5,\,p_{\text{err}}(\sigma_{XA};\Lambda)=0.55\,\right)$ and $\left(\,p_{\text{err}}(\rho_{XA};\Lambda)=0.1,\,p_{\text{err}}(\sigma_{XA};\Lambda)=0.05\,\right)$ lead exactly to the same absolute error. However, from a statistical point of view, $\rho_{XA}$ is easier to distinguish from $\sigma_{XA}$ in the latter example as here $p_{\text{err}}(\rho_{XA};\Lambda)$ is twice $p_{\text{err}}(\sigma_{XA};\Lambda)$. This difference is exactly captured by the relative error.

  \begin{definition}
  \label{def:conversiondistanceRel}
   For a set of free operations denoted by $\FO$, we define the minimum conversion error corresponding to the divergence $D_{\operatorname{rel}}(\cdot\|\cdot)$ as follows:
   \begin{align}
   \label{eq:ConversionDistanceDefRel}
  d^{\operatorname{\FO}}_{\operatorname{rel}}\left(\rho_{XA}\mapsto\sigma_{XB}\right) = \inf_{\cA\in\,\FO}D_{\operatorname{rel}}\!\left(\cA(\rho_{XA})\|\sigma_{XB}\right),
  \end{align}
  with $\rho_{XA}$ and $\sigma_{XB}$ being general c-q states on the classical system $X$ and the quantum systems $A$ and $B$, respectively.
  \end{definition}

  \begin{lemma}
  \label{lem:ElemPropDrel}
    Let $\rho_{XA}$ and $\sigma_{XA}$ be c-q states. Then we have the upper bound
    \begin{align}
    \label{eq:boundDrelbyD'}
    D_{\operatorname{rel}}(\rho_{XA}, \sigma_{XA}) \le \frac{\|\rho_{XA}-\sigma_{XA}\|_1}{2p_{\operatorname{err}}(\sigma_{XA})}.
    \end{align}
    Moreover, if $\sigma_{XA}$ is such that $p_{\operatorname{err}}(\sigma_{XA})>0$ then the map
    \begin{align}
    \omega \mapsto D_{\operatorname{rel}}(\omega\|\sigma_{XA})
    \end{align} 
    is continuous. In particular this gives that for $\FO$ being compact (with respect to norm-topology), e.g. $\operatorname{CPTP}_A$ or $\operatorname{CDS}$, the infimum in \eqref{eq:ConversionDistanceDefRel} is actually a minimum, i.e.
    \begin{align}
    d^{\operatorname{\FO}}_{\operatorname{rel}}\left(\rho_{XA}\mapsto\sigma_{XB}\right) = \min_{\cA\in\,\FO}D_{\operatorname{rel}}\!\left(\cA(\rho_{XA})\|\sigma_{XB}\right).
    \end{align}
  \end{lemma}
  \begin{proof}
  We start with the proof of \eqref{eq:boundDrelbyD'}:
\begin{align*}
\nn D_{\operatorname{rel}}(\rho_{XA}\|\sigma_{XA}) &= \sup_{0\le\Lambda\le \1}\Bigg|\frac{\Tr\Big((\kb{0}\otimes\Lambda + \kb{1}\otimes(\1-\Lambda))(\rho_{XA}-\sigma_{XA})\Big)}{\Tr\Big((\kb{0}\otimes\Lambda + \kb{1}\otimes(\1-\Lambda))\sigma_{XA}\Big)}\Bigg| \\&\le 
\frac{\max_{0\le\Lambda\le\1}\Big|\Tr\Big(\kb{0}\otimes\Lambda + \kb{1}\otimes(\1-\Lambda)(\rho_{XA}-\sigma_{XA})\Big)\Big|}{\min_{0\le\Lambda\le\1}\Big|\Tr\Big(\kb{0}\otimes\Lambda + \kb{1}\otimes(\1-\Lambda)\sigma_{XA}\Big)\Big|} \nn \\&
\le \frac{\left\|\rho_{XA} -\sigma_{XA}\right\|_1}{2p_{\operatorname{err}}(\sigma_{XA})}.
\end{align*}
Now let $\sigma_{XA}$ be such that $p_{\err}(\sigma_{XA})>0$. 
Denoting for any linear operator $\omega$ on system $XA$ and $0\le\Lambda\le\1$ POVM on system $A$
\begin{align*}
f(\omega,\sigma_{XA},\Lambda) = \Bigg|\frac{\Tr\Big((\kb{0}\otimes\Lambda + \kb{1}\otimes(\1-\Lambda))(\omega-\sigma_{XA})\Big)}{\Tr\Big((\kb{0}\otimes\Lambda + \kb{1}\otimes(\1-\Lambda))\sigma_{XA}\Big)}\Bigg|,
\end{align*}
this gives for any linear operators $\omega,\widetilde\omega$ on
system $XA$
\begin{align*}
&\Big|D_{\rel}(\omega\|\sigma_{XA})-D_{\rel}(\widetilde\omega\|\sigma_{XA})\Big| = \Big|\sup_{0\le\Lambda\le\1}f(\omega,\sigma_{XA},\Lambda) - \sup_{0\le\Lambda\le\1}f(\widetilde\omega,\sigma_{XA},\Lambda)\Big|\\&\le \sup_{0\le\Lambda\le\1}\Big|f(\omega,\sigma_{XA},\Lambda) - f(\widetilde\omega,\sigma_{XA},\Lambda) \Big|\le\sup_{0\le\Lambda\le\1}\Bigg|\frac{\Tr\Big((\kb{0}\otimes\Lambda + \kb{1}\otimes(\1-\Lambda))(\omega-\widetilde\omega)\Big)}{\Tr\Big((\kb{0}\otimes\Lambda + \kb{1}\otimes(\1-\Lambda))\sigma_{XA}\Big)}\Bigg| \\&\le \frac{\|\omega -\widetilde\omega\|_1}{p_{\err}(\sigma_{XA})},
\end{align*}
which establishes continuity of $\omega\mapsto D_{\rel}(\omega\|\sigma_{XA}).$
  \end{proof}
  
  The relative-error divergence $D_{\operatorname{rel}}$ satisfies the data-processing inequality under CDS maps:
  \begin{lemma} [DPI for $D_{\operatorname{rel}}$ under CDS maps]
Let $\rho_{XA}$ and $\sigma_{XA}$ be c-q states and $\cN$ a CDS map. Then
\begin{align}
D_{\operatorname{rel}}(\cN(\rho_{XA})\|\cN(\sigma_{XA})) \le D_{\operatorname{rel}}(\rho_{XA}\|\sigma_{XA}). 
\end{align}
\end{lemma}
\begin{proof}
 The first observation is that the dual of a CDS map leaves the set of POVMs of the form $\kb{0}\otimes\Lambda + \kb{1}\otimes(\1-\Lambda)$ invariant. To see this, let $\cN$ be a CDS map, and note that the dual channel can be written as
 \begin{align*}
\cN^* = {\rm{id}}_X\otimes\cE_0^* + {\cF_X}\otimes\cE_1^*
 \end{align*}
 with $\cF_X$ being the flip channel on the classical register and $\cE_0^*$ and $\cE_1^*$ completely positive and summing to an unital map.
This gives for any $0\le \Lambda \le \1$ 
 \begin{align*}
 &\cN^*(\kb{0}\otimes\Lambda + \kb{1}\otimes(\1-\Lambda))= \kb{0}\otimes\Big(\cE_0^*(\Lambda) + \cE_1^*(\1- \Lambda)\Big) + \kb{1}\otimes\Big(\cE_1^*(\Lambda) + \cE_0^*(\1- \Lambda)\Big).
 \end{align*}
 As $\cE_0^*(\Lambda) + \cE_1^*(\1- \Lambda) +\cE_1^*(\Lambda) + \cE_0^*(\1- \Lambda) = \cE_0^*(\1) + \cE_1^*(\1) = \1$ we get
 \begin{align*}
 \cN^*(\kb{0}\otimes\Lambda + \kb{1}\otimes(\1-\Lambda)) = \kb{0}\otimes\hat\Lambda + \kb{1}\otimes\left(
 \1- \hat\Lambda\right)
 \end{align*}
 with $0\le\hat\Lambda\coloneqq \cE_0^*(\Lambda) + \cE_1^*(\1- \Lambda) \le\1$.
 
 Now we see
 \begin{align*}
  D_{\operatorname{rel}}(\cN(\rho_{XA})\|\cN(\sigma_{XA})) &= \sup_{0\le\Lambda\le \1}\Bigg|\frac{\Tr\Big((\kb{0}\otimes\Lambda + \kb{1}\otimes(\1-\Lambda))(\cN(\rho_{XA})-\cN(\sigma_{XA}))\Big)}{\Tr\Big((\kb{0}\otimes\Lambda + \kb{1}\otimes(\1-\Lambda))\cN(\sigma_{XA})\Big)}\Bigg| \\& = \sup_{0\le\Lambda\le \1}\Bigg|\frac{\Tr\Big(\cN^*(\kb{0}\otimes\Lambda + \kb{1}\otimes(\1-\Lambda))(\rho_{XA}-\sigma_{XA})\Big)}{\Tr\Big(\cN^*(\kb{0}\otimes\Lambda + \kb{1}\otimes(\1-\Lambda))\sigma_{XA}\Big)}\Bigg| \\ &\le \sup_{0\le\Lambda\le \1}\Bigg|\frac{\Tr\Big(\kb{0}\otimes\Lambda + \kb{1}\otimes(\1-\Lambda)(\rho_{XA}-\sigma_{XA})\Big)}{\Tr\Big(\kb{0}\otimes\Lambda + \kb{1}\otimes(\1-\Lambda)\sigma_{XA}\Big)}\Bigg| \\&=D_{\operatorname{rel}}(\rho_{XA}\|\sigma_{XA}),
 \end{align*}
 where we have used the above discussion for the last inequality.
\end{proof}

\bigskip 
 The following lemma now establishes a bound relating the minimum error probabilities of two c-q states $\rho_{XA}$ and $\sigma_{XA}$,  involving a multiplicative term related to their distance with respect to the divergence $D_{\operatorname{rel}}$. This lemma will be the key ingredient for proving all converses in approximate asymptotic SD-distillation, SD-dilution and transformation of general elementary quantum sources.
 \begin{lemma}
		\label{lem:PerrDivBoundRel}
		For c-q states $\rho_{XA}$ and $\sigma_{XA}$ we have
		\begin{align}
		p_{\operatorname{err}}(\rho_{XA}) \le \Big(D_{\operatorname{rel}}(\rho_{XA}\|\sigma_{XA})+1\Big)p_{\operatorname{err}}(\sigma_{XA}).
		\end{align}
	\end{lemma}
	\begin{proof}
	Let $\rho_{XA} \equiv (p, \rho_0, \rho_1)$ and $\sigma_{XA} \equiv (q, \sigma_0, \sigma_1)$. Further, let $\Lambda_*$ be the optimiser in
	\begin{align}
	p_{\operatorname{err}}(\sigma_{XA}) &= \min_{0\le\Lambda\le\1}\Big(q\Tr\!\left(\Lambda\sigma_0\right) + (1-q)\Tr\!\left((\1 -\Lambda)\sigma_1\right)\Big)
	.
	\end{align}
	This gives
	\begin{align}
	p_{\operatorname{err}}(\rho_{XA}) \nn&= \min_{0\le\Lambda\le\1}\Big(p\Tr\!\left(\Lambda\rho_0\right) + (1-p)\Tr\!\left((\1 -\Lambda)\rho_1\right)\Big) \le p\Tr\!\left(\Lambda_*\rho_0\right) + (1-p)\Tr\!\left((\1 -\Lambda_*)\rho_1\right) \\\nn&= q\Tr\!\left(\Lambda_*\sigma_0\right) + (1-q)\Tr\!\left((\1 -\Lambda_*)\sigma_1\right) \\&\qquad\,+ \Tr\!\left(\Lambda_*\left(p\rho_0 - q\sigma_0\right)\right) + \Tr\!\left((\1-\Lambda_*)\left((1-p)\rho_1 - (1-q) \sigma_1\right)\right) \nn\\&= 
	p_{\operatorname{err}}(\sigma_{XA}) + \Tr\!\left(\Lambda_*\left(p\rho_0 - q\sigma_0\right)\right) + \Tr\!\left((\1-\Lambda_*)\left((1-p)\rho_1 - (1-q) \sigma_1\right)\right)\nn \\&=
	 \nn\Bigg(1+ \frac{\Tr\big(\Lambda_*\left(p\rho_0 - q\sigma_0\right)\big) + \Tr\big((\1-\Lambda_*)\left((1-p)\rho_1 - (1-q) \sigma_1\right)\big)}{q\Tr\big(\Lambda_*\sigma_0\big) + (1-q)\Tr\big((\1-\Lambda_*) \sigma_1\big)}\Bigg)p_{\operatorname{err}}(\sigma_{XA})
	\\&\le
	\Big(1+D_{\operatorname{rel}}(\rho_{XA}\|\sigma_{XA})\Big)p_{\operatorname{err}}(\sigma_{XA}),
	\end{align}
	concluding the proof.
	\end{proof}
}

\section{Derivation of the SDPs for scaled trace distance $D'$ and minimum conversion error in Propositions~\ref{prop:SDP-div-err} and~\ref{prop:SDP-conv-err}}

\label{app:dual-derivation}
\begin{proof}[Proof of Proposition~\ref{prop:SDP-div-err}]
We begin by rewriting the scaled trace distance $D^{\prime}(\rho_{X A}, \sigma_{X A})$ as follows:
\begin{align}
D^{\prime}\left(\rho_{X A}, \sigma_{X A}\right) & = \frac{ \frac{1}{2}\|\rho_{XA}-\sigma_{XA}\|_1}{\frac{1}{2}\left(1-\left\|q \sigma_{0}-(1-q) \sigma_{1}\right\|_{1}\right)}\\
& = \frac{ \|\rho_{XA}-\sigma_{XA}\|_1}{1-\left\|q \sigma_{0}-(1-q) \sigma_{1}\right\|_{1}}\\
& = \frac{\max_{-I_{XA}\le L_{XA}\le I_{XA}}\tr L_{XA}(\rho_{XA}-\sigma_{XA})}{1-\max_{-I_A\le P_A\le I_A}\tr P_A(q\sigma_{0}-(1-q)\sigma_{1})} \\
& = \max_{-I_{XA} \le L_{XA}\leq I_{XA},-I_A \leq P_A \le I_A}\frac{\tr L_{XA}(\rho_{XA}-\sigma_{XA})}{1-\tr P_A(q\sigma_{0}-(1-q)\sigma_{1})}.
\end{align}
Furthermore, let us introduce $t=\frac{\tr L_{XA}(\rho_{XA}-\sigma_{XA})}{1-\tr P_A(q\sigma_{0}-(1-q)\sigma_{1})}$ and obtain
\begin{multline}
D^{\prime}\left(\rho_{X A},  \sigma_{X A}\right)  =
\{\max t : t-t\tr P_A(q\sigma_{0}-(1-q)\sigma_{1})=\tr L_{XA}(\rho_{XA}-\sigma_{XA}),\\
-I_{XA}\le L_{XA} \leq I_{XA},-I_A\leq P_A\le I_A\}.
\end{multline}
The constraints in the optimization above still have bilinear conditions. However, we can absorb $tP_A$ into a single variable and obtain the simplified SDP in \eqref{eq:SDP-primal-div-err}.

We now continue with the derivation of the dual SDP stated in \eqref{eq:SDP-dual-div-err}.
The standard form of primal and dual SDPs is as follows \cite{Wat18}:
\begin{align}
& \sup_{X\geq0}\left\{  \operatorname{Tr}[AX]:\Phi(X)\leq B\right\}  ,\\
& \inf_{Y\geq0}\left\{  \operatorname{Tr}[BY]:\Phi^{\dag}(Y)\geq A\right\}  .
\end{align}
The SDP for the scaled trace distance can be written as%
\begin{equation}
\sup_{t,L_{XA}^{0},L_{XA}^{1},P_{A}^{0},P_{A}^{1}\geq0}\left\{
\begin{array}
[c]{c}%
t:-I_{XA}\leq L_{XA}^{1}-L_{XA}^{0}\leq I_{XA},\\
-tI_{A}\leq P_{A}^{1}-P_{A}^{0}\leq tI_{A},\\
t-\operatorname{Tr}[\left(  P_{A}^{1}-P_{A}^{0}\right)  \left(  q\sigma
_{0}-\left(  1-q\right)  \sigma_{1}\right)  ]=\operatorname{Tr}[\left(
L_{XA}^{1}-L_{XA}^{0}\right)  \left(  \rho_{XA}-\sigma_{XA}\right)  ]
\end{array}
\right\}  .
\end{equation}
In standard form, this SDP\ is as follows:%
\begin{align}
X  & =\text{diag}(t,L_{XA}^{0},L_{XA}^{1},P_{A}^{0},P_{A}^{1}),\\
A  & =\text{diag}(1,0,0,0,0),\\
\Phi(X)  & =\text{diag}(L_{XA}^{1}-L_{XA}^{0},L_{XA}^{0}-L_{XA}^{1},P_{A}%
^{1}-P_{A}^{0}-tI_{A},P_{A}^{0}-P_{A}^{1}-tI_{A},\notag \\
& \qquad\operatorname{Tr}[\left(  L_{XA}^{1}-L_{XA}^{0}\right)  \left(
\rho_{XA}-\sigma_{XA}\right)  ]-\left(  t-\operatorname{Tr}[\left(  P_{A}%
^{1}-P_{A}^{0}\right)  \left(  q\sigma_{0}-\left(  1-q\right)  \sigma
_{1}\right)  ]\right)  ,\notag \\
& \qquad t-\operatorname{Tr}[\left(  P_{A}^{1}-P_{A}^{0}\right)  \left(
q\sigma_{0}-\left(  1-q\right)  \sigma_{1}\right)  ]-\operatorname{Tr}[\left(
L_{XA}^{1}-L_{XA}^{0}\right)  \left(  \rho_{XA}-\sigma_{XA}\right)  ]),\\
B  & =\text{diag}(I_{XA},I_{XA},0,0,0,0).
\end{align}
So we need to derive the adjoint map $\Phi^{\dag}$, satisfying
$\operatorname{Tr}[Y\Phi(X)]=\operatorname{Tr}[\Phi^{\dag}(Y)X]$. Consider
that the dual variables are given as%
\begin{equation}
Y=\text{diag}(B_{XA},C_{XA},D_{A},E_{A},s_{1},s_{2}).
\end{equation}
Then we find that%
\begin{align}
& \operatorname{Tr}[Y\Phi(X)]\nonumber\\
& =\operatorname{Tr}[B_{XA}\left(  L_{XA}^{1}-L_{XA}^{0}\right)
]+\operatorname{Tr}[C_{XA}\left(  L_{XA}^{0}-L_{XA}^{1}\right)  ]\nonumber\\
& \qquad+\operatorname{Tr}[D_{A}\left(  P_{A}^{1}-P_{A}^{0}-tI_{A}\right)
]+\operatorname{Tr}[E_{A}\left(  P_{A}^{0}-P_{A}^{1}-tI_{A}\right)
]\nonumber\\
& \qquad+\left(  s_{1}-s_{2}\right)  \left(  \operatorname{Tr}[\left(
L_{XA}^{1}-L_{XA}^{0}\right)  \left(  \rho_{XA}-\sigma_{XA}\right)  ]-\left(
t-\operatorname{Tr}[\left(  P_{A}^{1}-P_{A}^{0}\right)  \left(  q\sigma
_{0}-\left(  1-q\right)  \sigma_{1}\right)  ]\right)  \right)  \\
& =\operatorname{Tr}[\left(  B_{XA}-C_{XA}+\left(  s_{1}-s_{2}\right)  \left(
\rho_{XA}-\sigma_{XA}\right)  \right)  L_{XA}^{1}]\nonumber\\
& \qquad-\operatorname{Tr}[\left(  B_{XA}-C_{XA}+\left(  s_{1}-s_{2}\right)
\left(  \rho_{XA}-\sigma_{XA}\right)  \right)  L_{XA}^{0}]\nonumber\\
& \qquad+\operatorname{Tr}[\left(  D_{A}-E_{A}+\left(  s_{1}-s_{2}\right)
\left(  q\sigma_{0}-\left(  1-q\right)  \sigma_{1}\right)  \right)  P_{A}%
^{1}]\nonumber\\
& \qquad-\operatorname{Tr}[\left(  D_{A}-E_{A}+\left(  s_{1}-s_{2}\right)
\left(  q\sigma_{0}-\left(  1-q\right)  \sigma_{1}\right)  \right)  P_{A}%
^{0}]\nonumber\\
& \qquad-\left(  \left(  s_{1}-s_{2}\right)  +\operatorname{Tr}[D_{A}%
+E_{A}]\right)  t.
\end{align}
So we conclude that%
\begin{align}
\Phi^{\dag}(Y)  & =\text{diag}(-\left(  \left(  s_{1}-s_{2}\right)
+\operatorname{Tr}[D_{A}+E_{A}]\right)  ,-\left(  B_{XA}-C_{XA}+\left(
s_{1}-s_{2}\right)  \left(  \rho_{XA}-\sigma_{XA}\right)  \right)
,\nonumber\\
& \qquad\left(  B_{XA}-C_{XA}+\left(  s_{1}-s_{2}\right)  \left(  \rho
_{XA}-\sigma_{XA}\right)  \right)  ,\nonumber\\
& \qquad-\left(  D_{A}-E_{A}+\left(  s_{1}-s_{2}\right)  \left(  q\sigma
_{0}-\left(  1-q\right)  \sigma_{1}\right)  \right)  ,\nonumber\\
& \qquad\left(  D_{A}-E_{A}+\left(  s_{1}-s_{2}\right)  \left(  q\sigma
_{0}-\left(  1-q\right)  \sigma_{1}\right)  \right)  ).
\end{align}
So then $\Phi^{\dag}(Y)\geq A$ is equivalent to the following constraints:%
\begin{align}
-\left(  \left(  s_{1}-s_{2}\right)  +\operatorname{Tr}[D_{A}+E_{A}]\right)
& \geq1,\\
-\left(  B_{XA}-C_{XA}+\left(  s_{1}-s_{2}\right)  \left(  \rho_{XA}%
-\sigma_{XA}\right)  \right)    & \geq0,\\
\left(  B_{XA}-C_{XA}+\left(  s_{1}-s_{2}\right)  \left(  \rho_{XA}%
-\sigma_{XA}\right)  \right)    & \geq0,\\
-\left(  D_{A}-E_{A}+\left(  s_{1}-s_{2}\right)  \left(  q\sigma_{0}-\left(
1-q\right)  \sigma_{1}\right)  \right)    & \geq0,\\
\left(  D_{A}-E_{A}+\left(  s_{1}-s_{2}\right)  \left(  q\sigma_{0}-\left(
1-q\right)  \sigma_{1}\right)  \right)    & \geq0,
\end{align}
which is the same as%
\begin{align}
-\left(  s_{1}-s_{2}\right)  -1  & \geq\operatorname{Tr}[D_{A}+E_{A}],\\
B_{XA}-C_{XA}  & =-\left(  s_{1}-s_{2}\right)  \left(  \rho_{XA}-\sigma
_{XA}\right)  ,\\
D_{A}-E_{A}  & =-\left(  s_{1}-s_{2}\right)  \left(  q\sigma_{0}-\left(
1-q\right)  \sigma_{1}\right)  .
\end{align}
Then the dual SDP\ is as follows:%
\begin{equation}
\inf_{B_{XA},C_{XA},D_{A},E_{A},s_{1},s_{2}\geq0}\left\{
\begin{array}
[c]{c}%
\operatorname{Tr}[B_{XA}+C_{XA}]:\\
-\left(  s_{1}-s_{2}\right)  -1\geq\operatorname{Tr}[D_{A}+E_{A}],\\
B_{XA}-C_{XA}=-\left(  s_{1}-s_{2}\right)  \left(  \rho_{XA}-\sigma
_{XA}\right)  ,\\
D_{A}-E_{A}=-\left(  s_{1}-s_{2}\right)  \left(  q\sigma_{0}-\left(
1-q\right)  \sigma_{1}\right)
\end{array}
\right\}  .
\end{equation}
Now setting $s=-\left(  s_{1}-s_{2}\right)  $, this can be rewritten as%
\begin{equation}
\inf_{\substack{B_{XA},C_{XA},D_{A},E_{A}\geq0,\\s\in\mathbb{R}}}\left\{
\begin{array}
[c]{c}%
\operatorname{Tr}[B_{XA}+C_{XA}]:\\
s-1\geq\operatorname{Tr}[D_{A}+E_{A}],\\
B_{XA}-C_{XA}=s\left(  \rho_{XA}-\sigma_{XA}\right)  ,\\
D_{A}-E_{A}=s\left(  q\sigma_{0}-\left(  1-q\right)  \sigma_{1}\right)
\end{array}
\right\}  .
\end{equation}

We now prove that strong duality holds under the following conditions:
\begin{align}
\left\Vert \rho_{XA}-\sigma_{XA}\right\Vert _{1}  &
>0,\label{eq:conditions-for-strong-duality-SDp-err-div}\\
\left\Vert q\sigma_{0}-\left(  1-q\right)  \sigma_{1}\right\Vert _{1}  &
<1.\label{eq:conditions-for-strong-duality-SDp-err-div-2}%
\end{align}
Then strong duality holds by picking the primal variables as%
\begin{align}
t  & =\frac{1}{2}\left\Vert \rho_{XA}-\sigma_{XA}\right\Vert _{1},\\
L_{XA}^{1}  & =\frac{1}{2}\Pi_{\rho_{XA}\geq\sigma_{XA}}+\frac{1}{2}I_{XA},\\
L_{XA}^{0}  & =\frac{1}{2}\Pi_{\rho_{XA}<\sigma_{XA}}+\frac{1}{2}I_{XA},\\
P_{A}^{1}  & =P_{A}^{0}=I_{A},
\end{align}
where $\Pi_{\rho_{XA}\geq\sigma_{XA}}$ is the projection onto the non-negative eigenspace of $\rho_{XA}\geq\sigma_{XA}$ and $\Pi_{\rho_{XA}<\sigma_{XA}}$ is the projection onto the strictly negative eigenspace.
Then all of the following constraints are satisfied with strict inequality
(except for the final equality):
\begin{align}
t,L_{XA}^{0},L_{XA}^{1},P_{A}^{0},P_{A}^{1}  & \geq0,\\
-I_{XA}  & \leq L_{XA}^{1}-L_{XA}^{0}\leq I_{XA},\\
-tI_{A}  & \leq P_{A}^{1}-P_{A}^{0}\leq tI_{A},\\
t-\operatorname{Tr}[\left(  P_{A}^{1}-P_{A}^{0}\right)  \left(  q\sigma
_{0}-\left(  1-q\right)  \sigma_{1}\right)  ]  & =\operatorname{Tr}[\left(
L_{XA}^{1}-L_{XA}^{0}\right)  \left(  \rho_{XA}-\sigma_{XA}\right)  ].
\end{align}

For the dual program, we pick%
\begin{align}
s  & =\frac{1}{1-\left\Vert q\sigma_{0}-\left(  1-q\right)  \sigma
_{1}\right\Vert _{1}},\\
B_{XA}  & =I_{XA}+s\rho_{XA},\\
C_{XA}  & =I_{XA}+s\sigma_{XA},\\
D_{A}  & =sP,\\
E_{A}  & =sN,
\end{align}
where $P$ is the positive part of $q\sigma_{0}-\left(  1-q\right)  \sigma_{1}$
and $N$ is the negative part of $q\sigma_{0}-\left(  1-q\right)  \sigma_{1}$.
Under these choices, we find that the constraints from the dual program are
met, i.e., as follows:%
\begin{align}
B_{XA},C_{XA},D_{A},E_{A}  & \geq0,s\in\mathbb{R},\\
s-1  & =\operatorname{Tr}[D_{A}+E_{A}],\\
B_{XA}-C_{XA}  & =s\left(  \rho_{XA}-\sigma_{XA}\right)  ,\\
D_{A}-E_{A}  & =s\left(  q\sigma_{0}-\left(  1-q\right)  \sigma_{1}\right)  .
\end{align}
Thus, strong duality holds under the conditions given in \eqref{eq:conditions-for-strong-duality-SDp-err-div}--\eqref{eq:conditions-for-strong-duality-SDp-err-div-2}.

Finally, suppose now that $\left\Vert \rho_{XA} - \sigma_{XA}\right\Vert_1 = 0 $ and $\left\Vert q\sigma_{0}-\left(  1-q\right)  \sigma_{1}\right\Vert _{1}  
<1$. Then the choices $t=0$, $L_{XA} = 0$, and $P_A = 0$ are feasible for the primal and lead to a value of zero for the objective function. Also, setting $B_{XA}$ to be the positive part of  $s( \rho_{XA} - \sigma_{XA})$ and $C_{XA}$ to be the negative part of $s( \rho_{XA} - \sigma_{XA})$, with the same choices for $s$, $D_A$, and $E_A$ as given above, leads to feasible choices for the dual, for which the objective function also evaluates to zero. So strong duality holds in this case also.
\end{proof}

\medskip

\begin{proof} [Proof of Proposition~\ref{prop:SDP-conv-err}]
Using \eqref{eq:SDP-dual-div-err}, the scaled trace distance $D'(\cdot, \cdot)$ for states $\tau_{XA} \coloneqq t|0\rangle\!\langle0|\otimes
\tau_{0}+\left(  1-t\right)  |1\rangle\!\langle1|\otimes\tau_{1}$ and
$\omega_{XA} \coloneqq w|0\rangle\!\langle0|\otimes\omega_{0}+\left(  1-w\right)
|1\rangle\!\langle1|\otimes\omega_{1}$ can be written as the following
semi-definite program%
\begin{equation}
D^{\prime}(\tau_{XA},\omega_{XA})=\min_{B_{XA},C_{XA},D_{A},E_{A}%
\geq0,s\in\mathbb{R}}\left\{
\begin{array}
[c]{c}%
\operatorname{Tr}[B_{XA}+C_{XA}]:\\
B_{XA}-C_{XA}=s\left(  \tau_{XA}-\omega_{XA}\right)  ,\\
D_{A}-E_{A}=s(w\omega_{0}-\left(  1-w\right)  \omega_{1}),\\
\operatorname{Tr}[D_{A}+E_{A}]\leq s-1
\end{array}
\right\}  .
\label{eq-app:SDP-D-prime}
\end{equation}
\comment{
Recall that an arbitrary CDS channel has the following form:%
\begin{equation}
\operatorname{id}_{X}\otimes\mathcal{N}_{A\rightarrow A^{\prime}}%
^{0}+\mathcal{P}_{X}\otimes\mathcal{N}_{A\rightarrow A^{\prime}}^{1},
\end{equation}
where $\mathcal{N}_{A\rightarrow A^{\prime}}^{0}$ and $\mathcal{N}%
_{A\rightarrow A^{\prime}}^{1}$ are completely positive maps such that
$\mathcal{N}_{A\rightarrow A^{\prime}}^{0}+\mathcal{N}_{A\rightarrow
A^{\prime}}^{1}$ is trace preserving, and $\mathcal{P}_{X}$ is a unitary
channel that flips $|0\rangle\!\langle0|$ and $|1\rangle\!\langle1|$. This means
that its action on an input%
\begin{equation}
p|0\rangle\!\langle0|_{X}\otimes\rho_{A}^{0}+\left(  1-p\right)  |1\rangle
\langle1|_{X}\otimes\rho_{A}^{1}%
\end{equation}
is as follows:%
\begin{align}
&  \left(  \operatorname{id}_{X}\otimes\mathcal{N}_{A\rightarrow A^{\prime}%
}^{0}+\mathcal{P}_{X}\otimes\mathcal{N}_{A\rightarrow A^{\prime}}^{1}\right)
\left(  p|0\rangle\!\langle0|_{X}\otimes\rho_{A}^{0}+\left(  1-p\right)
|1\rangle\!\langle1|_{X}\otimes\rho_{A}^{1}\right)  \nonumber\\
&  =\left(  \operatorname{id}_{X}\otimes\mathcal{N}_{A\rightarrow A^{\prime}%
}^{0}\right)  \left(  p|0\rangle\!\langle0|_{X}\otimes\rho_{A}^{0}+\left(
1-p\right)  |1\rangle\!\langle1|_{X}\otimes\rho_{A}^{1}\right)  \nonumber\\
&  \qquad+\left(  \mathcal{P}_{X}\otimes\mathcal{N}_{A\rightarrow A^{\prime}%
}^{1}\right)  \left(  p|0\rangle\!\langle0|_{X}\otimes\rho_{A}^{0}+\left(
1-p\right)  |1\rangle\!\langle1|_{X}\otimes\rho_{A}^{1}\right)  \\
&  =p|0\rangle\!\langle0|_{X}\otimes\mathcal{N}_{A\rightarrow A^{\prime}}%
^{0}(\rho_{A}^{0})+\left(  1-p\right)  |1\rangle\!\langle1|_{X}\otimes
\mathcal{N}_{A\rightarrow A^{\prime}}^{0}(\rho_{A}^{1})\nonumber\\
&  \qquad+p\mathcal{P}_{X}(|0\rangle\!\langle0|_{X})\otimes\mathcal{N}%
_{A\rightarrow A^{\prime}}^{1}(\rho_{A}^{0})\nonumber\\
&  \qquad+\left(  1-p\right)  \mathcal{P}_{X}(|1\rangle\!\langle1|_{X}%
)\otimes\mathcal{N}_{A\rightarrow A^{\prime}}^{1}(\rho_{A}^{1})\\
&  =p|0\rangle\!\langle0|_{X}\otimes\mathcal{N}_{A\rightarrow A^{\prime}}%
^{0}(\rho_{A}^{0})+\left(  1-p\right)  |1\rangle\!\langle1|_{X}\otimes
\mathcal{N}_{A\rightarrow A^{\prime}}^{0}(\rho_{A}^{1})\nonumber\\
&  \qquad+p|1\rangle\!\langle1|_{X}\otimes\mathcal{N}_{A\rightarrow A^{\prime}%
}^{1}(\rho_{A}^{0})+\left(  1-p\right)  |0\rangle\!\langle0|_{X}\otimes
\mathcal{N}_{A\rightarrow A^{\prime}}^{1}(\rho_{A}^{1})\\
&  =|0\rangle\!\langle0|_{X}\otimes\left[  \mathcal{N}_{A\rightarrow A^{\prime}%
}^{0}(p\rho_{A}^{0})+\mathcal{N}_{A\rightarrow A^{\prime}}^{1}(\left(
1-p\right)  \rho_{A}^{1})\right]  \nonumber\\
&  \qquad+|1\rangle\!\langle1|_{X}\otimes\left[  \mathcal{N}_{A\rightarrow
A^{\prime}}^{1}(p\rho_{A}^{0})+\mathcal{N}_{A\rightarrow A^{\prime}}%
^{0}(\left(  1-p\right)  \rho_{A}^{1})\right]  .
\end{align}
The semi-definite specifications for the completely positive maps
$\mathcal{N}_{A\rightarrow A^{\prime}}^{0}$ and $\mathcal{N}_{A\rightarrow
A^{\prime}}^{1}$ are as follows:%
\begin{align}
\Gamma_{AA^{\prime}}^{\mathcal{N}^{0}},\Gamma_{AA^{\prime}}^{\mathcal{N}^{1}}
&  \geq0,\\
\operatorname{Tr}_{A^{\prime}}[\Gamma_{AA^{\prime}}^{\mathcal{N}^{0}}%
+\Gamma_{AA^{\prime}}^{\mathcal{N}^{1}}] &  =I_{A}.
\end{align}
Furthermore, the output state is given by%
\begin{multline}
|0\rangle\!\langle0|_{X}\otimes\left[  \operatorname{Tr}_{A}[(p\rho_{A}^{0}%
)^{T}\Gamma_{AA^{\prime}}^{\mathcal{N}^{0}}]+\operatorname{Tr}_{A}[(\left(
1-p\right)  \rho_{A}^{1})^{T}\Gamma_{AA^{\prime}}^{\mathcal{N}^{1}}]\right]
\\
+|1\rangle\!\langle1|_{X}\otimes\left[  \operatorname{Tr}_{A}[(p\rho_{A}%
^{0})^{T}\Gamma_{AA^{\prime}}^{\mathcal{N}^{1}}]+\operatorname{Tr}%
_{A}[(\left(  1-p\right)  \rho_{A}^{1})^{T}\Gamma_{AA^{\prime}}^{\mathcal{N}%
^{0}}]\right]  .
\end{multline}
}
Following the development in \eqref{eq:app:CDS-SDP-1}--\eqref{eq:app:CDS-SDP-last} and combining with \eqref{eq-app:SDP-D-prime}, we conclude the following form for the optimization task%
\begin{equation}
\min_{\substack{B_{XA^{\prime}},C_{XA^{\prime}},D_{A^{\prime}},E_{A^{\prime}%
}\geq0,\\\Gamma_{AA^{\prime}}^{\mathcal{N}^{0}},\Gamma_{AA^{\prime}%
}^{\mathcal{N}^{1}}\geq0,s\in\mathbb{R}}}\left\{
\begin{array}
[c]{c}%
\operatorname{Tr}[B_{XA^{\prime}}+C_{XA^{\prime}}]:\\
B_{XA^{\prime}}-C_{XA^{\prime}}=s\left(  \tau_{XA^{\prime}}-\sigma
_{XA^{\prime}}\right)  ,\\
D_{A^{\prime}}-E_{A^{\prime}}=s(q\sigma_{0}-\left(  1-q\right)  \sigma_{1}),\\
\operatorname{Tr}[D_{A^{\prime}}+E_{A^{\prime}}]\leq s-1,\\
\operatorname{Tr}_{A^{\prime}}[\Gamma_{AA^{\prime}}^{\mathcal{N}^{0}}%
+\Gamma_{AA^{\prime}}^{\mathcal{N}^{1}}]=I_{A},\\
\langle0|_{X}\tau_{XA^{\prime}}|0\rangle_{X}=\operatorname{Tr}_{A}[(p\rho
_{A}^{0})^{T}\Gamma_{AA^{\prime}}^{\mathcal{N}^{0}}]+\operatorname{Tr}%
_{A}[(\left(  1-p\right)  \rho_{A}^{1})^{T}\Gamma_{AA^{\prime}}^{\mathcal{N}%
^{1}}],\\
\langle1|_{X}\tau_{XA^{\prime}}|1\rangle_{X}=\operatorname{Tr}_{A}[(p\rho
_{A}^{0})^{T}\Gamma_{AA^{\prime}}^{\mathcal{N}^{1}}]+\operatorname{Tr}%
_{A}[(\left(  1-p\right)  \rho_{A}^{1})^{T}\Gamma_{AA^{\prime}}^{\mathcal{N}%
^{0}}],\\
\langle1|_{X}\tau_{XA^{\prime}}|0\rangle_{X}=\langle0|_{X}\tau_{XA^{\prime}%
}|1\rangle_{X}=0
\end{array}
\right\}  .
\end{equation}
As written, this is not a semi-definite program, due to the bilinear term
$s\tau_{XA^{\prime}}$ in the second line above, given that $s$ is an
optimization variable and $\tau_{XA^{\prime}}$ includes the optimization
variables $\Gamma_{AA^{\prime}}^{\mathcal{N}^{0}}$ and $\Gamma_{AA^{\prime}%
}^{\mathcal{N}^{1}}$. However, we observe that $s\geq1$, due to the
constraints $\operatorname{Tr}[D_{A^{\prime}}+E_{A^{\prime}}]\leq s-1$ and
$D_{A^{\prime}},E_{A^{\prime}}\geq0$. We can then make the reassignments
$s\Gamma_{AA^{\prime}}^{\mathcal{N}^{0}}\rightarrow\Omega_{AA^{\prime}}^{0}$
and $s\Gamma_{AA^{\prime}}^{\mathcal{N}^{1}}\rightarrow\Omega_{AA^{\prime}%
}^{1}$ to rewrite the above optimization as follows:%
\begin{equation}
\min_{\substack{B_{XA^{\prime}},C_{XA^{\prime}},D_{A^{\prime}},E_{A^{\prime}%
}\geq0,\\\Omega_{AA^{\prime}}^{0},\Omega_{AA^{\prime}}^{1}\geq0,s\geq
1}}\left\{
\begin{array}
[c]{c}%
\operatorname{Tr}[B_{XA^{\prime}}+C_{XA^{\prime}}]:\\
B_{XA^{\prime}}-C_{XA^{\prime}}=\tau_{XA^{\prime}}-s\sigma_{XA^{\prime}},\\
D_{A^{\prime}}-E_{A^{\prime}}=s(q\sigma_{0}-\left(  1-q\right)  \sigma_{1}),\\
\operatorname{Tr}[D_{A^{\prime}}+E_{A^{\prime}}]\leq s-1,\\
\operatorname{Tr}_{A^{\prime}}[\Omega_{AA^{\prime}}^{0}+\Omega_{AA^{\prime}%
}^{1}]=sI_{A},\\
\langle0|_{X}\tau_{XA^{\prime}}|0\rangle_{X}=\operatorname{Tr}_{A}[(p\rho
_{A}^{0})^{T}\Omega_{AA^{\prime}}^{0}]+\operatorname{Tr}_{A}[(\left(
1-p\right)  \rho_{A}^{1})^{T}\Omega_{AA^{\prime}}^{1}],\\
\langle1|_{X}\tau_{XA^{\prime}}|1\rangle_{X}=\operatorname{Tr}_{A}[(p\rho
_{A}^{0})^{T}\Omega_{AA^{\prime}}^{1}]+\operatorname{Tr}_{A}[(\left(
1-p\right)  \rho_{A}^{1})^{T}\Omega_{AA^{\prime}}^{0}],\\
\langle1|_{X}\tau_{XA^{\prime}}|0\rangle_{X}=\langle0|_{X}\tau_{XA^{\prime}%
}|1\rangle_{X}=0
\end{array}
\right\}  .
\end{equation}
This concludes the proof.
\end{proof}

\section{Proof of Proposition~\ref{prop:alt-SDP-one-shot-distillable-SD} --- SDP for approximate one-shot distillable-SD under CPTP$_A$ maps}

\label{app:proof-one-shot-distillable-SD-SDP}

Let us make the substitution $M\rightarrow1/r$, setting%
\begin{align}
\pi_{1/r} &   \coloneqq \left(  1-\frac{r}{2}\right)  |0\rangle\!\langle0|+\frac{r}%
{2}|1\rangle\!\langle1|,\label{eq:approx-distill-SDP-begin-1}\\
\gamma_{XQ}^{(1/r,q)} &   \coloneqq q|0\rangle\!\langle0|_{X}\otimes\pi_{r}+\left(
1-q\right)  |1\rangle\!\langle1|_{X}\otimes\sigma^{(1)}\pi_{r}\sigma^{(1)},\\
q &  =p,\label{eq:approx-distill-SDP-begin-3}%
\end{align}
and employing Proposition~\ref{prop:SDP-conv-err}, we find that the optimization in
\eqref{eq:start-point-SDP-distill-CPTP-A} is equal to the negative logarithm
of the following:%
\begin{multline}
\inf_{\mathcal{A}\in\text{CPTP}_{A}}\left\{  r\ |\ D^{\prime}(\mathcal{A}%
(\rho_{XA}),\gamma_{XQ}^{(1/r,p)})\leq\varepsilon\right\}
\label{eq:approx-distill-SDP-begin-other}\\
=\inf_{\substack{B_{XQ},C_{XQ},D_{Q},\\E_{Q},\Gamma_{AQ}^{\mathcal{N}}%
\geq0,s\in\mathbb{R},r\in\left[  0,1\right]  }}\left\{
\begin{array}
[c]{c}%
r:\\
\operatorname{Tr}[B_{XQ}+C_{XQ}]\leq\varepsilon,\\
B_{XQ}-C_{XQ}=s\left(  \operatorname{Tr}_{A}[\rho_{XA}^{T_{A}}\Gamma
_{AQ}^{\mathcal{N}}]-\gamma_{XQ}^{(1/r,p)}\right)  ,\\
D_{Q}-E_{Q}=s(p\pi_{1/r}-\left(  1-p\right)  \sigma^{(1)}\pi_{1/r}\sigma^{(1)}),\\
\operatorname{Tr}[D_{Q}+E_{Q}]\leq s-1,\\
\operatorname{Tr}_{Q}[\Gamma_{AQ}^{\mathcal{N}}]=I_{A}%
\end{array}
\right\}  .
\end{multline}
As written, this is not an SDP. However, through the substitutions
$B_{XQ}\rightarrow sB_{XQ}$, $C_{XQ}\rightarrow sC_{XQ}$, $D_{Q}\rightarrow
sD_{Q}$, and $E_{Q}\rightarrow sE_{Q}$ and observing that $s\geq1$, we arrive
at the following SDP:%
\begin{equation}
\inf_{\substack{B_{XQ},C_{XQ},D_{Q},\\E_{Q},\Gamma_{AQ}^{\mathcal{N}}%
\geq0,s\geq1,r\in\left[  0,1\right]  }}\left\{
\begin{array}
[c]{c}%
r:\\
\operatorname{Tr}[B_{XQ}+C_{XQ}]\leq\frac{\varepsilon}{s},\\
B_{XQ}-C_{XQ}=\operatorname{Tr}_{A}[\rho_{XA}^{T_{A}}\Gamma_{AQ}^{\mathcal{N}%
}]-\gamma_{XQ}^{(1/r,p)},\\
D_{Q}-E_{Q}=p\pi_{1/r}-\left(  1-p\right)  \sigma^{(1)}\pi_{1/r}\sigma^{(1)},\\
\operatorname{Tr}[D_{Q}+E_{Q}]\leq1-\frac{1}{s},\\
\operatorname{Tr}_{Q}[\Gamma_{AQ}^{\mathcal{N}}]=I_{A}%
\end{array}
\right\}  .
\end{equation}
We then make a final substitution of $s\rightarrow\frac{1}{t}$ to arrive at%
\begin{equation}
\inf_{\substack{B_{XQ},C_{XQ},D_{Q},\\E_{Q},\Gamma_{AQ}^{\mathcal{N}}%
\geq0,t\geq0,r\in\left[  0,1\right]  }}\left\{
\begin{array}
[c]{c}%
r:\\
\operatorname{Tr}[B_{XQ}+C_{XQ}]\leq\varepsilon t,\\
B_{XQ}-C_{XQ}=\operatorname{Tr}_{A}[\rho_{XA}^{T_{A}}\Gamma_{AQ}^{\mathcal{N}%
}]-\gamma_{XQ}^{(1/r,p)},\\
D_{Q}-E_{Q}=p\pi_{1/r}-\left(  1-p\right)  \sigma^{(1)}\pi_{1/r}\sigma^{(1)},\\
\operatorname{Tr}[D_{Q}+E_{Q}]\leq1-t,\\
t\leq1,\\
\operatorname{Tr}_{Q}[\Gamma_{AQ}^{\mathcal{N}}]=I_{A}%
\end{array}
\right\}  .
\end{equation}
We can actually drop the constraint $t\leq1$ and the variable $t$ itself
because it is redundant, leaving us with%
\begin{equation}
\inf_{\substack{B_{XQ},C_{XQ},D_{Q},\\E_{Q},\Gamma_{AQ}^{\mathcal{N}}%
\geq0,r\in\left[  0,1\right]  }}\left\{
\begin{array}
[c]{c}%
r:\\
\operatorname{Tr}[B_{XQ}+C_{XQ}]\leq\varepsilon\left(  1-\operatorname{Tr}%
[D_{A}+E_{A}]\right)  ,\\
B_{XQ}-C_{XQ}=\operatorname{Tr}_{A}[\rho_{XA}^{T_{A}}\Gamma_{AQ}^{\mathcal{N}%
}]-\gamma_{XQ}^{(1/r,p)},\\
D_{Q}-E_{Q}=p\pi_{1/r}-\left(  1-p\right)  \sigma^{(1)}\pi_{1/r}\sigma^{(1)},\\
\operatorname{Tr}_{Q}[\Gamma_{AQ}^{\mathcal{N}}]=I_{A}%
\end{array}
\right\}  .
\end{equation}
It suffices to take the channel $\mathcal{N}$ to be a measurement channel of
the following form:%
\begin{equation}
\mathcal{N}(\omega)=\operatorname{Tr}[\Lambda\omega]|0\rangle\!\langle
0|+\operatorname{Tr}[\left(  I-\Lambda\right)  \omega]|1\rangle\!\langle1|.
\end{equation}
Then consider that%
\begin{align}
\operatorname{Tr}_{A}[\rho_{XA}^{T_{A}}\Gamma_{AQ}^{\mathcal{N}}] &
=\operatorname{Tr}_{A}[\Lambda_{A}\rho_{XA}]|0\rangle\!\langle
0|+\operatorname{Tr}_{A}[\left(  I_{A}-\Lambda_{A}\right)  \rho_{XA}%
]|1\rangle\!\langle1|\\
&  =|0\rangle\!\langle0|_{X}\otimes p\operatorname{Tr}[\Lambda\rho_{0}%
]|0\rangle\!\langle0|_{Q}+|1\rangle\!\langle1|_{X}\otimes\left(  1-p\right)
\operatorname{Tr}[\Lambda\rho_{1}]|0\rangle\!\langle0|_{Q}\nonumber\\
&  \quad+|0\rangle\!\langle0|_{X}\otimes p\operatorname{Tr}[\left(
I-\Lambda\right)  \rho_{0}]|1\rangle\!\langle1|_{Q}+|1\rangle\!\langle
1|_{X}\otimes\left(  1-p\right)  \operatorname{Tr}[\left(  I-\Lambda\right)
\rho_{1}]|1\rangle\!\langle1|_{Q}.
\end{align}
Then given that%
\begin{multline}
\gamma_{XQ}^{(1/r,p)}=p\left(  1-\frac{r}{2}\right)  |0\rangle\!\langle
0|_{X}\otimes|0\rangle\!\langle0|_{Q}+p\frac{r}{2}|0\rangle\!\langle0|_{X}%
\otimes|1\rangle\!\langle1|_{Q}\\
+\left(  1-p\right)  \frac{r}{2}|1\rangle\!\langle1|_{X}\otimes|0\rangle
\langle0|_{Q}+\left(  1-p\right)  \left(  1-\frac{r}{2}\right)  |1\rangle
\langle1|_{X}\otimes|1\rangle\!\langle1|_{Q}.
\end{multline}
we find that%
\begin{align}
  \operatorname{Tr}_{A}[\rho_{XA}^{T_{A}}\Gamma_{AQ}^{\mathcal{N}}%
]-\gamma_{XQ}^{(1/r,p)}
&  =p\left(  \operatorname{Tr}[\Lambda\rho_{0}]-\left(  1-\frac{r}{2}\right)
\right)  |0\rangle\!\langle0|_{X}\otimes|0\rangle\!\langle0|_{Q}\nonumber\\
&  \qquad+p\left(  \operatorname{Tr}[\left(  I-\Lambda\right)  \rho_{0}%
]-\frac{r}{2}\right)  |0\rangle\!\langle0|_{X}\otimes|1\rangle\!\langle
1|_{Q}\nonumber\\
&  \qquad+\left(  1-p\right)  \left(  \operatorname{Tr}[\Lambda\rho_{1}%
]-\frac{r}{2}\right)  |1\rangle\!\langle1|_{X}\otimes|0\rangle\!\langle
0|_{Q}\nonumber\\
&  \qquad+\left(  1-p\right)  \left(  \operatorname{Tr}[\left(  I-\Lambda
\right)  \rho_{1}]-\left(  1-\frac{r}{2}\right)  \right)  |1\rangle
\langle1|_{X}\otimes|1\rangle\!\langle1|_{Q}\\
&  =p\left(  \operatorname{Tr}[\Lambda\rho_{0}]-\left(  1-\frac{r}{2}\right)
\right)  |0\rangle\!\langle0|_{X}\otimes|0\rangle\!\langle0|_{Q}\nonumber\\
&  \qquad-p\left(  \operatorname{Tr}[\Lambda\rho_{0}]-\left(  1-\frac{r}%
{2}\right)  \right)  |0\rangle\!\langle0|_{X}\otimes|1\rangle\!\langle
1|_{Q}\nonumber\\
&  \qquad+\left(  1-p\right)  \left(  \operatorname{Tr}[\Lambda\rho_{1}%
]-\frac{r}{2}\right)  |1\rangle\!\langle1|_{X}\otimes|0\rangle\!\langle
0|_{Q}\nonumber\\
&  \qquad-\left(  1-p\right)  \left(  \operatorname{Tr}[\Lambda\rho_{1}%
]-\frac{r}{2}\right)  |1\rangle\!\langle1|_{X}\otimes|1\rangle\!\langle1|_{Q}.
\end{align}
It suffices to take $B_{XQ}$ and $C_{XQ}$ to have the following form:%
\begin{equation}
B_{XQ}=\sum_{i,j\in\left\{  0,1\right\}  }b^{i,j}|i\rangle\!\langle
i|_{X}\otimes|j\rangle\!\langle j|_{Q},\qquad C_{XQ}=\sum_{i,j\in\left\{
0,1\right\}  }c^{i,j}|i\rangle\!\langle i|_{X}\otimes|j\rangle\!\langle j|_{Q},
\end{equation}
with $b^{i,j},c^{i,j}\geq0$. Also, it suffices to take%
\begin{equation}
D_{Q}=\sum_{i\in\left\{  0,1\right\}  }d^{i}|i\rangle\!\langle i|_{Q},\qquad
E_{Q}=\sum_{i\in\left\{  0,1\right\}  }e^{i}|i\rangle\!\langle i|_{Q},
\end{equation}
with $d^{i},e^{i}\geq0$. We also have that%
\begin{align}
&  p\pi_{1/r}-\left(  1-p\right)  \sigma^{(1)}\pi_{1/r}\sigma^{(1)}\nonumber\\
&  =p\left(  1-\frac{r}{2}\right)  |0\rangle\!\langle0|+p\frac{r}{2}%
|1\rangle\!\langle1|-\left(  1-p\right)  \left(  \left(  1-\frac{r}{2}\right)
|1\rangle\!\langle1|+\frac{r}{2}|0\rangle\!\langle0|\right)  \\
&  =\left(  p-\frac{r}{2}\right)  |0\rangle\!\langle0|+\left(  \frac{r}%
{2}-\left(  1-p\right)  \right)  |1\rangle\!\langle1|.
\end{align}
Then the SDP\ simplifies as follows:%
\begin{equation}
\inf_{\substack{b^{i,j},c^{i,j}\geq0,\\d^{i},e^{i}\geq0,\Lambda\geq
0,r\in\left[  0,1\right]  }}\left\{
\begin{array}
[c]{c}%
r:\\
\sum_{i,j\in\left\{  0,1\right\}  }b^{i,j}+c^{i,j}\leq\varepsilon\left(
1-\left(  d^{0}+d^{1}+e^{0}+e^{1}\right)  \right)  ,\\
b^{0,0}-c^{0,0}=p\left(  \operatorname{Tr}[\Lambda\rho_{0}]-\left(  1-\frac
{r}{2}\right)  \right)  ,\\
b^{0,1}-c^{0,1}=-p\left(  \operatorname{Tr}[\Lambda\rho_{0}]-\left(
1-\frac{r}{2}\right)  \right)  ,\\
b^{1,0}-c^{1,0}=\left(  1-p\right)  \left(  \operatorname{Tr}[\Lambda\rho
_{1}]-\frac{r}{2}\right)  ,\\
b^{1,1}-c^{1,1}=-\left(  1-p\right)  \left(  \operatorname{Tr}[\Lambda\rho
_{1}]-\frac{r}{2}\right)  ,\\
d^{0}-e^{0}=p-\frac{r}{2},\\
d^{1}-e^{1}=\frac{r}{2}-\left(  1-p\right)  ,\\
\Lambda\leq I
\end{array}
\right\}
\end{equation}
We can eliminate the $b^{i,j}$ variables and the $d^{i}$ variables, and this
simplifies as follows:%
\begin{equation}
\inf_{\substack{c^{i,j}\geq0,\\e^{i}\geq0,\Lambda\geq0,r\in\left[  0,1\right]
}}\left\{
\begin{array}
[c]{c}%
r:\\
2\sum_{i,j\in\left\{  0,1\right\}  }c^{i,j}\leq\varepsilon\left(  1-p-\left(
e^{0}+e^{1}\right)  \right)  ,\\
c^{0,0}\geq-p\left(  \operatorname{Tr}[\Lambda\rho_{0}]-\left(  1-\frac{r}%
{2}\right)  \right)  ,\\
c^{0,1}\geq p\left(  \operatorname{Tr}[\Lambda\rho_{0}]-\left(  1-\frac{r}%
{2}\right)  \right)  ,\\
c^{1,0}\geq-\left(  1-p\right)  \left(  \operatorname{Tr}[\Lambda\rho
_{1}]-\frac{r}{2}\right)  ,\\
c^{1,1}\geq\left(  1-p\right)  \left(  \operatorname{Tr}[\Lambda\rho
_{1}]-\frac{r}{2}\right)  ,\\
e^{0}\geq\frac{r}{2}-p,\\
e^{1}\geq\left(  1-p\right)  -\frac{r}{2},\\
\Lambda\leq I
\end{array}
\right\}  \label{eq-approx-distill-SDP-correct}%
\end{equation}
So then the one-shot approximate distillable-SD\ is equal to%
\begin{equation}
\log \left(  \frac{1}{r^{\ast}}\right)  ,
\end{equation}
where $r^{\ast}$ is an optimal solution for \eqref{eq-approx-distill-SDP-correct}.

\section{Proof of Proposition~\ref{prop:one-shot-distillable-SD-CDS-SDP} --- SDP for approximate one-shot distillable-SD under CDS maps}

\label{app:approx-one-shot-distill-CDS}

We begin with the following SDP, which follows by applying similar reasoning
given around
\eqref{eq:approx-distill-SDP-begin-1}--\eqref{eq:approx-distill-SDP-begin-3}
and \eqref{eq:approx-distill-SDP-begin-other}:%
\begin{equation}
\inf_{\substack{B_{XQ},C_{XQ},D_{Q},\\E_{Q},\Gamma_{AQ}^{\mathcal{N}^{0}%
},\Gamma_{AQ}^{\mathcal{N}^{1}}\geq0}}\left\{
\begin{array}
[c]{c}%
r:\\
\operatorname{Tr}[B_{XQ}+C_{XQ}]\leq\varepsilon\left(  1-\operatorname{Tr}%
[D_{Q}+E_{Q}]\right)  ,\\
B_{XQ}-C_{XQ}=\tau_{XQ}-\gamma_{XQ}^{(1/r,1/2)},\\
D_{Q}-E_{Q}=\frac{1}{2}(\pi_{1/r}-\sigma^{(1)}\pi_{1/r}\sigma^{(1)}),\\
\operatorname{Tr}_{Q}[\Gamma_{AQ}^{\mathcal{N}^{0}}+\Gamma_{AQ}^{\mathcal{N}%
^{1}}]=I_{A},\\
\tau_{XQ}=|0\rangle\!\langle0|_{X}\otimes\left[  \operatorname{Tr}_{A}%
[(p\rho_{A}^{0})^{T}\Gamma_{AQ}^{\mathcal{N}^{0}}]+\operatorname{Tr}%
_{A}[(\left(  1-p\right)  \rho_{A}^{1})^{T}\Gamma_{AQ}^{\mathcal{N}^{1}%
}]\right]  \\
+|1\rangle\!\langle1|_{X}\otimes\left[  \operatorname{Tr}_{A}[(p\rho_{A}%
^{0})^{T}\Gamma_{AQ}^{\mathcal{N}^{1}}]+\operatorname{Tr}_{A}[(\left(
1-p\right)  \rho_{A}^{1})^{T}\Gamma_{AQ}^{\mathcal{N}^{0}}]\right]
\end{array}
\right\}  .
\end{equation}
Then by applying a completely dephasing channel to the $Q$ system, the state
$\gamma_{XQ}^{(1/r,1/2)}$ does not change, whereas the CDS\ channel becomes a
measurement channel of the following form:%
\begin{equation}
\mathcal{N}_{XA\rightarrow XQ} \coloneqq \operatorname{id}_{X}\otimes\mathcal{N}%
_{A\rightarrow Q}^{0}+\mathcal{F}_{X}\otimes\mathcal{N}_{A\rightarrow Q}^{1},
\end{equation}
where%
\begin{align}
\mathcal{N}_{A\rightarrow Q}^{0}(\omega) &   \coloneqq \operatorname{Tr}[\Lambda
_{A}^{0,0}\omega]|0\rangle\!\langle0|_{Q}+\operatorname{Tr}[\Lambda_{A}%
^{0,1}\omega]|1\rangle\!\langle1|_{Q},\\
\mathcal{N}_{A\rightarrow Q}^{1}(\omega) &   \coloneqq \operatorname{Tr}[\Lambda
_{A}^{1,0}\omega]|0\rangle\!\langle0|_{Q}+\operatorname{Tr}[\Lambda_{A}%
^{1,1}\omega]|1\rangle\!\langle1|_{Q},\\
I_{A} &  =\Lambda_{A}^{0,0}+\Lambda_{A}^{0,1}+\Lambda_{A}^{1,0}+\Lambda
_{A}^{1,1}.
\end{align}
The effect of the CDS channel $\mathcal{N}_{XA\rightarrow XQ}$ on the c-q
state $\rho_{XA}$ is as follows:%
\begin{align}
&  \mathcal{N}_{XA\rightarrow XQ}(\rho_{XA})\nonumber\\
&  =\mathcal{N}_{XA\rightarrow XQ}(|0\rangle\!\langle0|_{X}\otimes p\rho
_{0}+|1\rangle\!\langle1|_{X}\otimes\left(  1-p\right)  \rho_{1})\\
&  =|0\rangle\!\langle0|_{X}\otimes p\operatorname{Tr}[\Lambda_{A}^{0,0}\rho
_{0}]|0\rangle\!\langle0|_{Q}+|0\rangle\!\langle0|_{X}\otimes p\operatorname{Tr}%
[\Lambda_{A}^{0,1}\rho_{0}]|1\rangle\!\langle1|_{Q}\nonumber\\
&  \quad+|1\rangle\!\langle1|_{X}\otimes p\operatorname{Tr}[\Lambda_{A}%
^{1,0}\rho_{0}]|0\rangle\!\langle0|_{Q}+|1\rangle\!\langle1|_{X}\otimes
p\operatorname{Tr}[\Lambda_{A}^{1,1}\rho_{0}]|1\rangle\!\langle1|_{Q}\nonumber\\
&  \quad+|1\rangle\!\langle1|_{X}\otimes\left(  1-p\right)  \operatorname{Tr}%
[\Lambda_{A}^{0,0}\rho_{1}]|0\rangle\!\langle0|_{Q}+|1\rangle\!\langle
1|_{X}\otimes\left(  1-p\right)  \operatorname{Tr}[\Lambda_{A}^{0,1}\rho
_{1}]|1\rangle\!\langle1|_{Q}\nonumber\\
&  \quad+|0\rangle\!\langle0|_{X}\otimes\left(  1-p\right)  \operatorname{Tr}%
[\Lambda_{A}^{1,0}\rho_{1}]|0\rangle\!\langle0|_{Q}+|0\rangle\!\langle
0|_{X}\otimes\left(  1-p\right)  \operatorname{Tr}[\Lambda_{A}^{1,1}\rho
_{1}]|1\rangle\!\langle1|_{Q}\\
&  =\left(  p\operatorname{Tr}[\Lambda_{A}^{0,0}\rho_{0}]+\left(  1-p\right)
\operatorname{Tr}[\Lambda_{A}^{1,0}\rho_{1}]\right)  |0\rangle\!\langle
0|_{X}\otimes|0\rangle\!\langle0|_{Q}\nonumber\\
&  \quad+\left(  p\operatorname{Tr}[\Lambda_{A}^{0,1}\rho_{0}]+\left(
1-p\right)  \operatorname{Tr}[\Lambda_{A}^{1,1}\rho_{1}]\right)
|0\rangle\!\langle0|_{X}\otimes|1\rangle\!\langle1|_{Q}\nonumber\\
&  \quad+\left(  p\operatorname{Tr}[\Lambda_{A}^{1,0}\rho_{0}]+\left(
1-p\right)  \operatorname{Tr}[\Lambda_{A}^{0,0}\rho_{1}]\right)
|1\rangle\!\langle1|_{X}\otimes|0\rangle\!\langle0|_{Q}\nonumber\\
&  \quad+\left(  p\operatorname{Tr}[\Lambda_{A}^{1,1}\rho_{0}]+\left(
1-p\right)  \operatorname{Tr}[\Lambda_{A}^{0,1}\rho_{1}]\right)
|1\rangle\!\langle1|_{X}\otimes|1\rangle\!\langle1|_{Q}.
\end{align}
Thus, the state $\tau_{XQ}$ has the form%
\begin{align}
\tau_{XQ} &  =\left(  p\operatorname{Tr}[\Lambda_{A}^{0,0}\rho_{0}]+\left(
1-p\right)  \operatorname{Tr}[\Lambda_{A}^{1,0}\rho_{1}]\right)
|0\rangle\!\langle0|_{X}\otimes|0\rangle\!\langle0|_{Q}\nonumber\\
&  +\left(  p\operatorname{Tr}[\Lambda_{A}^{0,1}\rho_{0}]+\left(  1-p\right)
\operatorname{Tr}[\Lambda_{A}^{1,1}\rho_{1}]\right)  |0\rangle\!\langle
0|_{X}\otimes|1\rangle\!\langle1|_{Q}\nonumber\\
&  +\left(  p\operatorname{Tr}[\Lambda_{A}^{1,0}\rho_{0}]+\left(  1-p\right)
\operatorname{Tr}[\Lambda_{A}^{0,0}\rho_{1}]\right)  |1\rangle\!\langle
1|_{X}\otimes|0\rangle\!\langle0|_{Q}\nonumber\\
&  +\left(  p\operatorname{Tr}[\Lambda_{A}^{1,1}\rho_{0}]+\left(  1-p\right)
\operatorname{Tr}[\Lambda_{A}^{0,1}\rho_{1}]\right)  |1\rangle\!\langle
1|_{X}\otimes|1\rangle\!\langle1|_{Q},
\end{align}
where $I_{A}=\Lambda_{A}^{0,0}+\Lambda_{A}^{0,1}+\Lambda_{A}^{1,0}+\Lambda
_{A}^{1,1}$ and each $\Lambda_{A}^{i,j}\geq0$. Consider that%
\begin{multline}
\gamma_{XQ}^{(1/r,1/2)}=\frac{1}{2}\left(  1-\frac{r}{2}\right)
|0\rangle\!\langle0|_{X}\otimes|0\rangle\!\langle0|_{Q}+\frac{r}{4}|0\rangle
\langle0|_{X}\otimes|1\rangle\!\langle1|_{Q}\\
+\frac{r}{4}|1\rangle\!\langle1|_{X}\otimes|0\rangle\!\langle0|_{Q}+\frac{1}%
{2}\left(  1-\frac{r}{2}\right)  |1\rangle\!\langle1|_{X}\otimes|1\rangle
\langle1|_{Q}.
\end{multline}
So we find that%
\begin{multline}
\tau_{XQ}-\gamma_{XQ}^{(1/r,1/2)}=\left[  p\operatorname{Tr}[\Lambda_{A}%
^{0,0}\rho_{0}]+\left(  1-p\right)  \operatorname{Tr}[\Lambda_{A}^{1,0}%
\rho_{1}]-\frac{1}{2}\left(  1-\frac{r}{2}\right)  \right]  |0\rangle
\langle0|_{X}\otimes|0\rangle\!\langle0|_{Q}\\
+\left[  p\operatorname{Tr}[\Lambda_{A}^{0,1}\rho_{0}]+\left(  1-p\right)
\operatorname{Tr}[\Lambda_{A}^{1,1}\rho_{1}]-\frac{r}{4}\right]
|0\rangle\!\langle0|_{X}\otimes|1\rangle\!\langle1|_{Q}\\
+\left[  p\operatorname{Tr}[\Lambda_{A}^{1,0}\rho_{0}]+\left(  1-p\right)
\operatorname{Tr}[\Lambda_{A}^{0,0}\rho_{1}]-\frac{r}{4}\right]
|1\rangle\!\langle1|_{X}\otimes|0\rangle\!\langle0|_{Q}\\
+\left[  p\operatorname{Tr}[\Lambda_{A}^{1,1}\rho_{0}]+\left(  1-p\right)
\operatorname{Tr}[\Lambda_{A}^{0,1}\rho_{1}]-\frac{1}{2}\left(  1-\frac{r}%
{2}\right)  \right]  |1\rangle\!\langle1|_{X}\otimes|1\rangle\!\langle1|_{Q}.
\end{multline}
Also, we have that%
\begin{align}
\pi_{1/r}-\sigma^{(1)}\pi_{1/r}\sigma^{(1)} &  =\left(  1-\frac{r}{2}\right)
|0\rangle\!\langle0|+\frac{r}{2}|1\rangle\!\langle1|-\left(  \left(  1-\frac{r}%
{2}\right)  |1\rangle\!\langle1|+\frac{r}{2}|0\rangle\!\langle0|\right)  \\
&  =\left(  1-r\right)  \left(  |0\rangle\!\langle0|-|1\rangle\!\langle1|\right)
.
\end{align}
It thus suffices to take%
\begin{equation}
B_{XQ}=\sum_{i,j\in\left\{  0,1\right\}  }b^{i,j}|i\rangle\!\langle
i|_{X}\otimes|j\rangle\!\langle j|_{Q},\qquad C_{XQ}=\sum_{i,j\in\left\{
0,1\right\}  }c^{i,j}|i\rangle\!\langle i|_{X}\otimes|j\rangle\!\langle j|_{Q},
\end{equation}
with $b^{i,j},c^{i,j}\geq0$. Also, it suffices to take%
\begin{equation}
D_{Q}=\sum_{i\in\left\{  0,1\right\}  }d^{i}|i\rangle\!\langle i|_{Q},\qquad
E_{Q}=\sum_{i\in\left\{  0,1\right\}  }e^{i}|i\rangle\!\langle i|_{Q},
\end{equation}
with $d^{i},e^{i}\geq0$. Then the SDP\ simplifies to the following:%
\begin{equation}
\inf_{\substack{b^{i,j},c^{i,j},\Lambda_{A}^{i,j}\geq0,\\d^{i},e^{i}\geq
0,r\in\left[  0,1\right]  }}\left\{
\begin{array}
[c]{c}%
r:\\
\sum_{i,j\in\left\{  0,1\right\}  }\left(  b^{i,j}+c^{i,j}\right)
\leq\varepsilon\left(  1-\left(  d^{0}+e^{0}+d^{1}+e^{1}\right)  \right)  ,\\
b^{0,0}-c^{0,0}=p\operatorname{Tr}[\Lambda_{A}^{0,0}\rho_{0}]+\left(
1-p\right)  \operatorname{Tr}[\Lambda_{A}^{1,0}\rho_{1}]-\frac{1}{2}\left(
1-\frac{r}{2}\right)  ,\\
b^{0,1}-c^{0,1}=p\operatorname{Tr}[\Lambda_{A}^{0,1}\rho_{0}]+\left(
1-p\right)  \operatorname{Tr}[\Lambda_{A}^{1,1}\rho_{1}]-\frac{r}{4},\\
b^{1,0}-c^{1,0}=p\operatorname{Tr}[\Lambda_{A}^{1,0}\rho_{0}]+\left(
1-p\right)  \operatorname{Tr}[\Lambda_{A}^{0,0}\rho_{1}]-\frac{r}{4},\\
b^{1,1}-c^{1,1}=p\operatorname{Tr}[\Lambda_{A}^{1,1}\rho_{0}]+\left(
1-p\right)  \operatorname{Tr}[\Lambda_{A}^{0,1}\rho_{1}]-\frac{1}{2}\left(
1-\frac{r}{2}\right)  ,\\
d^{0}-e^{0}=\frac{1}{2}\left(  1-r\right)  ,\\
d^{1}-e^{1}=-\frac{1}{2}\left(  1-r\right)  ,\\
I_{A}=\Lambda_{A}^{0,0}+\Lambda_{A}^{0,1}+\Lambda_{A}^{1,0}+\Lambda_{A}^{1,1}%
\end{array}
\right\}
\end{equation}
We can eliminate the $b^{i,j}$ and $d^{i}$ variables to arrive at%
\begin{equation}
\inf_{\substack{c^{i,j},\Lambda_{A}^{i,j}\geq0,\\e^{i}\geq0,r\in\left[
0,1\right]  }}\left\{
\begin{array}
[c]{c}%
r:\\
2\sum_{i,j\in\left\{  0,1\right\}  }c^{i,j}\leq\varepsilon\left(  1-2\left(
e^{0}+e^{1}\right)  \right)  ,\\
p\operatorname{Tr}[\Lambda_{A}^{0,0}\rho_{0}]+\left(  1-p\right)
\operatorname{Tr}[\Lambda_{A}^{1,0}\rho_{1}]-\frac{1}{2}\left(  1-\frac{r}%
{2}\right)  +c^{0,0}\geq0,\\
p\operatorname{Tr}[\Lambda_{A}^{0,1}\rho_{0}]+\left(  1-p\right)
\operatorname{Tr}[\Lambda_{A}^{1,1}\rho_{1}]-\frac{r}{4}+c^{0,1}\geq0,\\
p\operatorname{Tr}[\Lambda_{A}^{1,0}\rho_{0}]+\left(  1-p\right)
\operatorname{Tr}[\Lambda_{A}^{0,0}\rho_{1}]-\frac{r}{4}+c^{1,0}\geq0,\\
p\operatorname{Tr}[\Lambda_{A}^{1,1}\rho_{0}]+\left(  1-p\right)
\operatorname{Tr}[\Lambda_{A}^{0,1}\rho_{1}]-\frac{1}{2}\left(  1-\frac{r}%
{2}\right)  +c^{1,1}\geq0,\\
\frac{1}{2}\left(  1-r\right)  +e^{0}\geq0,\\
-\frac{1}{2}\left(  1-r\right)  +e^{1}\geq0,\\
I_{A}=\Lambda_{A}^{0,0}+\Lambda_{A}^{0,1}+\Lambda_{A}^{1,0}+\Lambda_{A}^{1,1}%
\end{array}
\right\}  .
\end{equation}
Now using the fact that $r\in\left[  0,1\right]  $, this simplifies to%
\begin{equation}
\inf_{\substack{c^{i,j},\Lambda_{A}^{i,j}\geq0,\\e^{i}\geq0,r\in\left[
0,1\right]  }}\left\{
\begin{array}
[c]{c}%
r:\\
2\sum_{i,j\in\left\{  0,1\right\}  }c^{i,j}\leq\varepsilon r,\\
p\operatorname{Tr}[\Lambda_{A}^{0,0}\rho_{0}]+\left(  1-p\right)
\operatorname{Tr}[\Lambda_{A}^{1,0}\rho_{1}]-\frac{1}{2}\left(  1-\frac{r}%
{2}\right)  +c^{0,0}\geq0,\\
p\operatorname{Tr}[\Lambda_{A}^{0,1}\rho_{0}]+\left(  1-p\right)
\operatorname{Tr}[\Lambda_{A}^{1,1}\rho_{1}]-\frac{r}{4}+c^{0,1}\geq0,\\
p\operatorname{Tr}[\Lambda_{A}^{1,0}\rho_{0}]+\left(  1-p\right)
\operatorname{Tr}[\Lambda_{A}^{0,0}\rho_{1}]-\frac{r}{4}+c^{1,0}\geq0,\\
p\operatorname{Tr}[\Lambda_{A}^{1,1}\rho_{0}]+\left(  1-p\right)
\operatorname{Tr}[\Lambda_{A}^{0,1}\rho_{1}]-\frac{1}{2}\left(  1-\frac{r}%
{2}\right)  +c^{1,1}\geq0,\\
I_{A}=\Lambda_{A}^{0,0}+\Lambda_{A}^{0,1}+\Lambda_{A}^{1,0}+\Lambda_{A}^{1,1}%
\end{array}
\right\}  .
\end{equation}
This concludes the proof.

\section{Optimizations for approximate one-shot SD-cost}

\label{app:opts-approx-cost-not-SDPs}

In this appendix, we detail the optimization problem for approximate one-shot SD-cost under CPTP$_A$ and CDS maps. We prove that both of these operational quantities can be calculated by means of bilinear programs.

Consider from Lemma~\ref{lem:ApproxDilEquality} that the approximate one-shot SD-cost is equal to%
\begin{equation}
\xi_{c}^{\text{FO},q,\varepsilon}(\rho_{XA})=\inf_{\widetilde{\rho}_{XA}\in B_{\varepsilon}^{\prime
}(\rho_{XA})}\xi_{c}^{\text{FO},q}(\widetilde{\rho}_{XA}),
\end{equation}
where $\xi_{c}^{\text{FO},q}(\widetilde{\rho}_{XA})$ is the exact cost. Under
CPTP$_{A}$ maps, we know from Theorem~\ref{theo-diluteCPTP} that%
\begin{equation}
\xi_{c}^{\text{CPTP}_{A},p}(\widetilde{\rho}_{XA})=\log\inf_{M\geq1}\left\{
M:\widetilde{\rho}_{0}\leq\left(  2M-1\right)  \widetilde{\rho}_{1}%
,\ \widetilde{\rho}_{1}\leq\left(  2M-1\right)  \widetilde{\rho}_{0}\right\}
.
\end{equation}
Combining with the SDP\ for the minimum conversion error (from Proposition~\ref{prop:SDP-conv-err}), we conclude that%
\begin{align}
\xi_{c}^{\varepsilon}(\rho_{XA})  & =\xi_{c}^{\text{CPTP}_{A},p,\varepsilon
}(\rho_{XA})\label{eq:cost-CPTP-attempt-at-SDP}\\
& =\log\inf_{\substack{M\geq1,B_{XA},C_{XA},\\D_{A},E_{A}\geq0,s\in
\mathbb{R},\\\tilde{p}\in\left[  0,1\right]  ,\widetilde{\rho}_{0}%
,\widetilde{\rho}_{1}\geq0}}\left\{
\begin{array}
[c]{c}%
M:\\
\widetilde{\rho}_{0}\leq\left(  2M-1\right)  \widetilde{\rho}_{1},\\
\widetilde{\rho}_{1}\leq\left(  2M-1\right)  \widetilde{\rho}_{0},\\
\operatorname{Tr}[B_{XA}+C_{XA}]\leq\varepsilon,\\
s-1\geq\operatorname{Tr}[D_{A}+E_{A}],\\
B_{XA}-C_{XA}=s\left(  \rho_{XA}-\widetilde{\rho}_{XA}\right)  ,\\
D_{A}-E_{A}=s\left(  \tilde{p}\widetilde{\rho}_{0}-\left(  1-\tilde{p}\right)
\widetilde{\rho}_{1}\right)  ,\\
\widetilde{\rho}_{XA}=|0\rangle\!\langle0|\otimes\tilde{p}\widetilde{\rho}%
_{0}+|1\rangle\!\langle1|\otimes\left(  1-\tilde{p}\right)  \widetilde{\rho}%
_{1},\\
\operatorname{Tr}[\widetilde{\rho}_{0}]=\operatorname{Tr}[\widetilde{\rho}%
_{1}]=1
\end{array}
\right\}  \\
& =\log\inf_{\substack{M\geq1,B_{XA},C_{XA},\\D_{A},E_{A}\geq0,s\geq 1,\\\tilde{p}\in\left[  0,1\right]  ,\widetilde{\rho}_{0}%
,\widetilde{\rho}_{1}\geq0}}\left\{
\begin{array}
[c]{c}%
M:\\
\widetilde{\rho}_{0}\leq\left(  2M-1\right)  \widetilde{\rho}_{1},\\
\widetilde{\rho}_{1}\leq\left(  2M-1\right)  \widetilde{\rho}_{0},\\
\operatorname{Tr}[B_{XA}+C_{XA}]\leq\varepsilon,\\
s-1\geq\operatorname{Tr}[D_{A}+E_{A}],\\
B_{XA}-C_{XA}=s  \rho_{XA}-\widetilde{\rho}_{XA}  ,\\
D_{A}-E_{A}=  \tilde{p}\widetilde{\rho}_{0}-\left(  1-\tilde{p}\right)
\widetilde{\rho}_{1}  ,\\
\widetilde{\rho}_{XA}=|0\rangle\!\langle0|\otimes\tilde{p}\widetilde{\rho}%
_{0}+|1\rangle\!\langle1|\otimes\left(  1-\tilde{p}\right)  \widetilde{\rho}%
_{1},\\
\operatorname{Tr}[\widetilde{\rho}_{0}]=\operatorname{Tr}[\widetilde{\rho}%
_{1}]=s
\end{array}
\right\}  .
\end{align}
This is a bilinear program, due to terms like $\left(  2M-1\right)  \widetilde{\rho}_{1}$, $\left(  2M-1\right)  \widetilde{\rho}_{0}$, $\tilde{p}\widetilde{\rho}
_{0}$, and $\left(  1-\tilde{p}\right)  \widetilde{\rho}
_{1}$ appearing in the optimization.

Under CDS maps, we know from Theorem~\ref{theo-diluteCDS} that%
\begin{equation}
\xi_{c}^{\text{CDS},\frac{1}{2}}(\widetilde{\rho}_{XA})=\log\inf_{M\geq
1}\left\{
\begin{array}
[c]{c}%
M:\tilde{p}\widetilde{\rho}_{0}\leq\left(  2M-1\right)  \left(  1-\tilde
{p}\right)  \widetilde{\rho}_{1},\\
\left(  1-\tilde{p}\right)  \widetilde{\rho}_{1}\leq\left(  2M-1\right)
\tilde{p}\widetilde{\rho}_{0}%
\end{array}
\right\}  .
\end{equation}
Combining with the SDP\ for the minimum conversion error (from Proposition~\ref{prop:SDP-conv-err}), we conclude that%
\begin{align}
\xi_{c}^{\star,\varepsilon}(\rho_{XA})  & =\xi_{c}^{\text{CDS},\frac{1}%
{2},\varepsilon}(\rho_{XA})\\
& =\log\inf_{\substack{M\geq1,B_{XA},C_{XA},\\D_{A},E_{A}\geq0,s\in
\mathbb{R},\\\tilde{p}\in\left[  0,1\right]  ,\widetilde{\rho}_{0}%
,\widetilde{\rho}_{1}\geq0}}\left\{
\begin{array}
[c]{c}%
M:\\
\tilde{p}\widetilde{\rho}_{0}\leq\left(  2M-1\right)  \left(  1-\tilde
{p}\right)  \widetilde{\rho}_{1},\\
\left(  1-\tilde{p}\right)  \widetilde{\rho}_{1}\leq\left(  2M-1\right)
\tilde{p}\widetilde{\rho}_{0},\\
\operatorname{Tr}[B_{XA}+C_{XA}]\leq\varepsilon,\\
s-1\geq\operatorname{Tr}[D_{A}+E_{A}],\\
B_{XA}-C_{XA}=s\left(  \rho_{XA}-\widetilde{\rho}_{XA}\right)  ,\\
D_{A}-E_{A}=s\left(  \tilde{p}\widetilde{\rho}_{0}-\left(  1-\tilde{p}\right)
\widetilde{\rho}_{1}\right)  ,\\
\widetilde{\rho}_{XA}=|0\rangle\!\langle0|\otimes\tilde{p}\widetilde{\rho}%
_{0}+|1\rangle\!\langle1|\otimes\left(  1-\tilde{p}\right)  \widetilde{\rho}%
_{1},\\
\operatorname{Tr}[\widetilde{\rho}_{0}]=\operatorname{Tr}[\widetilde{\rho}%
_{1}]=1
\end{array}
\right\}  \\
& =\log\inf_{\substack{M\geq1,B_{XA},C_{XA},\\D_{A},E_{A}\geq0,s\in
\mathbb{R},\\\widehat{\rho}_{0},\widehat{\rho}_{1}\geq0}}\left\{
\begin{array}
[c]{c}%
M:\\
\widehat{\rho}_{0}\leq\left(  2M-1\right)  \widehat{\rho}_{1},\\
\widehat{\rho}_{1}\leq\left(  2M-1\right)  \widehat{\rho}_{0},\\
\operatorname{Tr}[B_{XA}+C_{XA}]\leq\varepsilon,\\
s-1\geq\operatorname{Tr}[D_{A}+E_{A}],\\
B_{XA}-C_{XA}=s\left(  \rho_{XA}-\widetilde{\rho}_{XA}\right)  ,\\
D_{A}-E_{A}=s\left(  \widehat{\rho}_{0}-\widehat{\rho}_{1}\right)  ,\\
\widetilde{\rho}_{XA}=|0\rangle\!\langle0|\otimes\widehat{\rho}_{0}%
+|1\rangle\!\langle1|\otimes\widehat{\rho}_{1},\\
\operatorname{Tr}[\widehat{\rho}_{0}]+\operatorname{Tr}[\widehat{\rho}_{1}]=1
\end{array}
\right\}  \\
& =\log\inf_{\substack{M\geq1,B_{XA},C_{XA},\\D_{A},E_{A}\geq0,s\geq 0,\\\widehat{\rho}_{0},\widehat{\rho}_{1}\geq0}}\left\{
\begin{array}
[c]{c}%
M:\\
\widehat{\rho}_{0}\leq\left(  2M-1\right)  \widehat{\rho}_{1},\\
\widehat{\rho}_{1}\leq\left(  2M-1\right)  \widehat{\rho}_{0},\\
\operatorname{Tr}[B_{XA}+C_{XA}]\leq\varepsilon,\\
s-1\geq\operatorname{Tr}[D_{A}+E_{A}],\\
B_{XA}-C_{XA}= s \rho_{XA}-\widetilde{\rho}_{XA}  ,\\
D_{A}-E_{A}=  \widehat{\rho}_{0}-\widehat{\rho}_{1}  ,\\
\widetilde{\rho}_{XA}=|0\rangle\!\langle0|\otimes\widehat{\rho}_{0}%
+|1\rangle\!\langle1|\otimes\widehat{\rho}_{1},\\
\operatorname{Tr}[\widehat{\rho}_{0}]+\operatorname{Tr}[\widehat{\rho}_{1}]=s
\end{array}
\right\}  .\label{eq:cost-CDS-attempt-at-SDP}%
\end{align}
This is also a bilinear program, due to terms like
 $\left(  2M-1\right)  \widetilde{\rho}_{1}$ and
$\left(  2M-1\right)  \widehat{\rho}_{1}$ appearing in the optimization.

\bibliography{Ref}

\newcommand{\etalchar}[1]{$^{#1}$}
\begin{thebibliography}{ACMnT{\etalchar{+}}07}

\bibitem[ACMnT{\etalchar{+}}07]{ACMBMAV07}
Koenraad M.~R. Audenaert, John Calsamiglia, Ramon Mu\~noz Tapia, Emilio Bagan,
  Lluis Masanes, Antonio Acin, and Frank Verstraete.
\newblock Discriminating states: The quantum {Chernoff} bound.
\newblock {\em Physical Review Letters}, 98(16):160501, April 2007.
\newblock arXiv:quant-ph/0610027.

\bibitem[AM14]{AM14}
Koenraad M.~R. Audenaert and Milan Mosonyi.
\newblock Upper bounds on the error probabilities and asymptotic error
  exponents in quantum multiple state discrimination.
\newblock {\em Journal of Mathematical Physics}, 55(10):102201, October 2014.
\newblock arXiv:1401.7658.

\bibitem[BaHO{\etalchar{+}}13]{BHORS13}
Fernando G. S.~L. Brand\~ao, Micha\l{} Horodecki, Jonathan Oppenheim, Joseph~M.
  Renes, and Robert~W. Spekkens.
\newblock Resource theory of quantum states out of thermal equilibrium.
\newblock {\em Physical Review Letters}, 111(25):250404, December 2013.
\newblock arXiv:1111.3882.

\bibitem[BBPS96]{BBPS96}
Charles~H. Bennett, Herbert~J. Bernstein, Sandu Popescu, and Benjamin
  Schumacher.
\newblock Concentrating partial entanglement by local operations.
\newblock {\em Physical Review A}, 53(4):2046--2052, April 1996.
\newblock arXiv:quant-ph/9511030.

\bibitem[Bel75a]{Belavkin75}
Viacheslav Belavkin.
\newblock Optimal distinction of non-orthogonal quantum signals.
\newblock {\em Radio Engineering and Electronic Physics}, 20:39--47, 1975.

\bibitem[Bel75b]{Belavkin75a}
Viacheslav Belavkin.
\newblock Optimal multiple quantum statistical hypothesis testing.
\newblock {\em Stochastics}, 1:315--345, 1975.

\bibitem[BK02]{BK02}
Howard Barnum and Emanuel Knill.
\newblock Reversing quantum dynamics with near-optimal quantum and classical
  fidelity.
\newblock {\em Journal of Mathematical Physics}, 43(5):2097--2106, May 2002.
\newblock arXiv:quant-ph/0004088.

\bibitem[BST19]{BST19}
Francesco Buscemi, David Sutter, and Marco Tomamichel.
\newblock An information-theoretic treatment of quantum dichotomies.
\newblock {\em {Quantum}}, 3:209, December 2019.
\newblock arXiv:1907.08539.

\bibitem[CB08]{CB08}
Luca Chirolli and Guido Burkard.
\newblock Decoherence in solid-state qubits.
\newblock {\em Advances in Physics}, 57(3):225--285, 2008.
\newblock arXiv:0809.4716.

\bibitem[CG19]{CG18}
Eric Chitambar and Gilad Gour.
\newblock Quantum resource theories.
\newblock {\em Reviews of Modern Physics}, 91(2):025001, April 2019.
\newblock arXiv:1806.06107.

\bibitem[CMTM{\etalchar{+}}08]{CaTaMaABa07}
John Calsamiglia, Ramon Munoz-Tapia, Lluis Masanes, Antonio Acin, and Emilio
  Bagan.
\newblock Quantum {Chernoff} bound as a measure of distinguishability between
  density matrices: Application to qubit and {Gaussian} states.
\newblock {\em Physical Review A}, 77(3):032311, 2008.
\newblock arXiv:0708.2343.

\bibitem[Dat09]{D09}
Nilanjana Datta.
\newblock Min- and max-relative entropies and a new entanglement monotone.
\newblock {\em IEEE Transactions on Information Theory}, 55(6):2816--2826, June
  2009.
\newblock arXiv:0803.2770.

\bibitem[GGH{\etalchar{+}}18]{GGH+2018}
Gilad Gour, Andrzej Grudka, Micha\l{} Horodecki, Waldemar K\l{}obus, Justyna
  \L{}odyga, and Varun Narasimhachar.
\newblock Conditional uncertainty principle.
\newblock {\em Physical Review A}, 97(4):042130, April 2018.
\newblock arXiv:1506.07124.

\bibitem[Hau93]{Hausladen93bach}
Paul Hausladen.
\newblock {\em On the Quantum Mechanical Channel Capacity as a Function of the
  Density Matrix}.
\newblock Bachelor's thesis, Williams College, Williamstown, Massachusetts,
  1993.

\bibitem[Hel67]{Hel67}
Carl~W. Helstrom.
\newblock Detection theory and quantum mechanics.
\newblock {\em Information and Control}, 10(3):254--291, 1967.

\bibitem[Hel69]{Hel69}
Carl~W. Helstrom.
\newblock Quantum detection and estimation theory.
\newblock {\em Journal of Statistical Physics}, 1(2):231--252, 1969.

\bibitem[HHO03]{HHO03}
Micha\l{} Horodecki, Pawe\l{} Horodecki, and Jonathan Oppenheim.
\newblock Reversible transformations from pure to mixed states and the unique
  measure of information.
\newblock {\em Physical Review A}, 67(6):062104, June 2003.
\newblock arXiv:quant-ph/0212019.

\bibitem[HKT20]{HKT20}
Stefan {Huber}, Robert {K\"onig}, and Marco {Tomamichel}.
\newblock Jointly constrained semidefinite bilinear programming with an
  application to {D}obrushin curves.
\newblock {\em IEEE Transactions on Information Theory}, 66(5):2934--2950, May
  2020.
\newblock arXiv:1808.03182.

\bibitem[Hol72]{Hol72}
Alexander~S. Holevo.
\newblock An analogue of statistical decision theory and noncommutative
  probability theory.
\newblock {\em Trudy Moskovskogo Matematicheskogo Obshchestva}, 26:133--149,
  1972.

\bibitem[Hol79]{HolPGM78}
Alexander~S. Holevo.
\newblock On asymptotically optimal hypothesis testing in quantum statistics.
\newblock {\em Theory of Probability \& Its Applications}, 23(2):411--415,
  1979.

\bibitem[HP91]{HP91}
Fumio Hiai and D{\'e}nes Petz.
\newblock The proper formula for relative entropy and its asymptotics in
  quantum probability.
\newblock {\em Communications in Mathematical Physics}, 143(1):99--114,
  December 1991.

\bibitem[HW94]{hausladen94}
Paul Hausladen and William~K. Wootters.
\newblock A `pretty good' measurement for distinguishing quantum states.
\newblock {\em Journal of Modern Optics}, 41(12):2385--2390, 1994.

\bibitem[HW12]{HaWi06}
Aram~W. Harrow and Andreas Winter.
\newblock How many copies are needed for state discrimination?
\newblock {\em IEEE Transactions on Information Theory}, 58(1):1--2, January
  2012.
\newblock arXiv:quant-ph/0606131.

\bibitem[KRS09]{KRS08}
Robert Koenig, Renato Renner, and Christian Schaffner.
\newblock The operational meaning of min- and max-entropy.
\newblock {\em IEEE Transactions on Information Theory}, 55(9):4337--4347,
  September 2009.
\newblock arXiv:0807.1338.

\bibitem[KSW20]{KSW20}
Sumeet Khatri, Kunal Sharma, and Mark~M. Wilde.
\newblock Information-theoretic aspects of the generalized amplitude-damping
  channel.
\newblock {\em Physical Review A}, 102(1):012401, July 2020.
\newblock arXiv:1903.07747.

\bibitem[Li16]{Li16}
Ke~Li.
\newblock {Discriminating quantum states: The multiple Chernoff distance}.
\newblock {\em The Annals of Statistics}, 44(4):1661--1679, 2016.
\newblock arXiv:1508.06624.

\bibitem[Mat10]{Matsumoto10}
Keiji Matsumoto.
\newblock Reverse test and characterization of quantum relative entropy.
\newblock 2010.
\newblock arXiv:1010.1030.

\bibitem[Mat11]{Matsumoto11}
Keiji Matsumoto.
\newblock Reverse test and characterization of quantum relative entropy.
\newblock 2011.
\newblock Slides available at
  \url{https://sites.google.com/site/nww2011/home/talks-slides}.

\bibitem[Mor09]{Morris09}
Ryan Morris.
\newblock Topics in quantum foundations: Ontological models, and
  distinguishability as a resource.
\newblock Master's thesis, University of Waterloo, Applied Mathematics, 2009.

\bibitem[NC10]{NC10}
Michael~A. Nielsen and Isaac~L. Chuang.
\newblock {\em Quantum Computation and Quantum Information: 10th Anniversary
  Edition}.
\newblock Cambridge University Press, Cambridge, UK, 2010.

\bibitem[NG17]{NG17}
Varun Narasimhachar and Gilad Gour.
\newblock Resource theory under conditioned thermal operations.
\newblock {\em Physical Review A}, 95(1):012313, January 2017.
\newblock arXiv:1604.03185.

\bibitem[NS09]{nussbaum2009chernoff}
Michael Nussbaum and Arleta Szko{\l}a.
\newblock The {Chernoff} lower bound for symmetric quantum hypothesis testing.
\newblock {\em The Annals of Statistics}, 37(2):1040--1057, April 2009.
\newblock arXiv:quant-ph/0607216.

\bibitem[ON00]{ON00}
Tomohiro Ogawa and Hiroshi Nagaoka.
\newblock Strong converse and {Stein's} lemma in quantum hypothesis testing.
\newblock {\em IEEE Transactions on Information Theory}, 46(7):2428--2433,
  November 2000.
\newblock arXiv:quant-ph/9906090.

\bibitem[Tho63]{Thomp63}
A.~C. Thompson.
\newblock On certain contraction mappings in a partially ordered vector space.
\newblock {\em Proceedings of the American Mathematical Society},
  14(3):438--443, June 1963.

\bibitem[VB96]{Vandenberghe1996}
Lieven Vandenberghe and Stephen Boyd.
\newblock {Semidefinite Programming}.
\newblock {\em SIAM Review}, 38(1):49--95, March 1996.

\bibitem[Wat18]{Wat18}
John Watrous.
\newblock {\em The Theory of Quantum Information}.
\newblock Cambridge University Press, 2018.

\bibitem[WW19a]{Wang2019states}
Xin Wang and Mark~M. Wilde.
\newblock Resource theory of asymmetric distinguishability.
\newblock {\em Physical Review Research}, 1(3):033170, December 2019.
\newblock arXiv:1905.11629.

\bibitem[WW19b]{Wang2019channels}
Xin Wang and Mark~M. Wilde.
\newblock Resource theory of asymmetric distinguishability for quantum
  channels.
\newblock {\em Physical Review Research}, 1(3):033169, December 2019.
\newblock arXiv:1907.06306.

\bibitem[WY16]{Winter_2016}
Andreas Winter and Dong Yang.
\newblock Operational resource theory of coherence.
\newblock {\em Physical Review Letters}, 116(12):120404, March 2016.
\newblock arXiv:1506.07975.

\end{thebibliography}
\bibliographystyle{alpha}

\end{document}